\tikzstyle{vertex}=[shape=circle, fill=black, minimum size=6pt, inner sep=0pt, outer sep=0pt]
\tikzstyle{open boundary}=[-, dashed]
\tikzstyle{fill helper}=[-, draw=none, tikzit draw=green]
\tikzstyle{edge}=[->]
\definecolor{zx_red}{RGB}{255, 136, 136}
\definecolor{zx_green}{RGB}{136, 255, 136}
\definecolor{zx_blue}{RGB}{136, 136, 255}
\newcommand{\colora}{zx_green}
\newcommand{\colorb}{zx_red}
\newcommand{\colorc}{zx_blue}
\newcommand{\colord}{yellow}
\tikzstyle{vertex}=[fill=black,draw=black,circle,minimum size=4pt,inner sep=0]
\tikzstyle{compedge}=[
\tikzstyle{->-}=[compedge,
\tikzstyle{-g>-}=[compedge,
\tikzstyle{-g>>-}=[compedge,
\tikzstyle{-g>>>-}=[compedge,
\tikzstyle{walk}=[
\tikzstyle{walk->-}=[
\tikzstyle{crawl}=[
\tikzstyle{erroredge}=[red, decorate,
\tikzstyle{erroredgesubtle}=[red, decorate,
\tikzstyle{errorvertexsubtle}=[vertex, red]
\tikzstyle{errorvertex}=[fill=red, minimum size=3mm]
\tikzstyle{walk-g>-}=[
\tikzstyle{walk-g>>-}=[
\tikzstyle{walk-g<-}=[
\tikzstyle{walk-g<3-}=[
\tikzstyle{errorcirc}=[circle, draw=black, fill=red, minimum size=4mm, inner sep=0pt, line width=0.7pt]
\newcommand{\gsNoglue}{\begin{tikzpicture}
    \node[vertex, label={45+90*1:$e^0_1$}] at (45+90*1:1.5) (v1) {};
    \node[vertex, label={45+90*2:$e^0_2$}] at (45+90*2:1.5) (v2) {};
    \node[vertex, label={45+90*3:$e^0_3$}] at (45+90*3:1.5) (v3) {};
    \node[vertex, label={45+90*4:$e^0_4$}] at (45+90*4:1.5) (v4) {};

    \draw[->-] (v1) -- (v2);
    \node at (90*2:1.6) {$e^1_1$};

    \draw[->-] (v2) -- (v3);
    \node at (90*3:1.6) {$e^1_2$};

    \draw[->-] (v3) -- (v4);
    \node at (90*4:1.6) {$e^1_3$};

    \draw[->-] (v4) -- (v1);
    \node at (90*1:1.6) {$e^1_4$};

    \node (f) at (0,0) {$e^2$};
    \node[scale=3, rotate=-45] at (f.center) {$\circlearrowleft$};
    \begin{pgfonlayer}{background}
      \fill[\colora, fill opacity=0.5] (v1.center) -- (v2.center) -- (v3.center) -- (v4.center) -- cycle;
    \end{pgfonlayer}
  \end{tikzpicture}
}
\newcommand{\gsCylinder}{\begin{tikzpicture}
    \node[vertex, label={45+90*1:$e^0_1$}] at (45+90*1:1.5) (v1) {};
    \node[vertex, label={45+90*2:$e^0_2$}] at (45+90*2:1.5) (v2) {};
    \node[vertex, label={45+90*3:$e^0_2$}] at (45+90*3:1.5) (v3) {};
    \node[vertex, label={45+90*4:$e^0_1$}] at (45+90*4:1.5) (v4) {};

    \draw[-g>-] (v1) -- (v2);
    \node at (90*2:1.6) {$e^1_1$};

    \draw[->-] (v2) -- (v3);
    \node at (90*3:1.6) {$e^1_2$};

    \draw[-g>-] (v4) -- (v3);
    \node at (90*4:1.6) {$e^1_1$};

    \draw[->-] (v4) -- (v1);
    \node at (90*1:1.6) {$e^1_4$};

    \node (f) at (0,0) {$e^2$};
    \node[scale=3, rotate=-45] at (f.center) {$\circlearrowleft$};
    \begin{pgfonlayer}{background}
      \fill[\colora, fill opacity=0.5] (v1.center) -- (v2.center) -- (v3.center) -- (v4.center) -- cycle;
    \end{pgfonlayer}
  \end{tikzpicture}
}
\newcommand{\gsMobius}{\begin{tikzpicture}
    \node[vertex, label={45+90*1:$e^0_1$}] at (45+90*1:1.5) (v1) {};
    \node[vertex, label={45+90*2:$e^0_2$}] at (45+90*2:1.5) (v2) {};
    \node[vertex, label={45+90*3:$e^0_1$}] at (45+90*3:1.5) (v3) {};
    \node[vertex, label={45+90*4:$e^0_2$}] at (45+90*4:1.5) (v4) {};

    \draw[-g>-] (v1) -- (v2);
    \node at (90*2:1.6) {$e^1_1$};

    \draw[->-] (v2) -- (v3);
    \node at (90*3:1.6) {$e^1_2$};

    \draw[-g>-] (v3) -- (v4);
    \node at (90*4:1.6) {$e^1_1$};

    \draw[->-] (v4) -- (v1);
    \node at (90*1:1.6) {$e^1_4$};

    \node (f) at (0,0) {$e^2$};
    \node[scale=3, rotate=-45] at (f.center) {$\circlearrowleft$};
    \begin{pgfonlayer}{background}
      \fill[\colora, fill opacity=0.5] (v1.center) -- (v2.center) -- (v3.center) -- (v4.center) -- cycle;
    \end{pgfonlayer}
  \end{tikzpicture}
}
\newcommand{\gsTorus}{\begin{tikzpicture}
    \node[vertex, label={45+90*1:$e^0_1$}] at (45+90*1:1.5) (v1) {};
    \node[vertex, label={45+90*2:$e^0_1$}] at (45+90*2:1.5) (v2) {};
    \node[vertex, label={45+90*3:$e^0_1$}] at (45+90*3:1.5) (v3) {};
    \node[vertex, label={45+90*4:$e^0_1$}] at (45+90*4:1.5) (v4) {};

    \draw[-g>-] (v1) -- (v2);
    \node at (90*2:1.6) {$e^1_1$};

    \draw[-g>>-] (v2) -- (v3);
    \node at (90*3:1.6) {$e^1_2$};

    \draw[-g>-] (v4) -- (v3);
    \node at (90*4:1.6) {$e^1_1$};

    \draw[-g>>-] (v1) -- (v4);
    \node at (90*1:1.6) {$e^1_2$};

    \node (f) at (0,0) {$e^2$};
    \node[scale=3, rotate=-45] at (f.center) {$\circlearrowleft$};
    \begin{pgfonlayer}{background}
      \fill[\colora, fill opacity=0.5] (v1.center) -- (v2.center) -- (v3.center) -- (v4.center) -- cycle;
    \end{pgfonlayer}
  \end{tikzpicture}
}
\newcommand{\gsKlein}{\begin{tikzpicture}
    \node[vertex, label={45+90*1:$e^0_1$}] at (45+90*1:1.5) (v1) {};
    \node[vertex, label={45+90*2:$e^0_1$}] at (45+90*2:1.5) (v2) {};
    \node[vertex, label={45+90*3:$e^0_1$}] at (45+90*3:1.5) (v3) {};
    \node[vertex, label={45+90*4:$e^0_1$}] at (45+90*4:1.5) (v4) {};

    \draw[-g>-] (v1) -- (v2);
    \node at (90*2:1.6) {$e^1_1$};

    \draw[-g>>-] (v2) -- (v3);
    \node at (90*3:1.6) {$e^1_2$};

    \draw[-g>-] (v4) -- (v3);
    \node at (90*4:1.6) {$e^1_1$};

    \draw[-g>>-] (v4) -- (v1);
    \node at (90*1:1.6) {$e^1_2$};

    \node (f) at (0,0) {$e^2$};
    \node[scale=3, rotate=-45] at (f.center) {$\circlearrowleft$};
    \begin{pgfonlayer}{background}
      \fill[\colora, fill opacity=0.5] (v1.center) -- (v2.center) -- (v3.center) -- (v4.center) -- cycle;
    \end{pgfonlayer}
  \end{tikzpicture}
}
\newcommand{\gsProj}{\begin{tikzpicture}
    \node[vertex, label={45+90*1:$e^0_1$}] at (45+90*1:1.5) (v1) {};
    \node[vertex, label={45+90*2:$e^0_1$}] at (45+90*2:1.5) (v2) {};
    \node[vertex, label={45+90*3:$e^0_1$}] at (45+90*3:1.5) (v3) {};
    \node[vertex, label={45+90*4:$e^0_1$}] at (45+90*4:1.5) (v4) {};

    \draw[-g>-] (v1) -- (v2);
    \node at (90*2:1.6) {$e^1_1$};

    \draw[-g>>-] (v2) -- (v3);
    \node at (90*3:1.6) {$e^1_2$};

    \draw[-g>-] (v3) -- (v4);
    \node at (90*4:1.6) {$e^1_1$};

    \draw[-g>>-] (v4) -- (v1);
    \node at (90*1:1.6) {$e^1_2$};

    \node (f) at (0,0) {$e^2$};
    \node[scale=3, rotate=-45] at (f.center) {$\circlearrowleft$};
    \begin{pgfonlayer}{background}
      \fill[\colora, fill opacity=0.5] (v1.center) -- (v2.center) -- (v3.center) -- (v4.center) -- cycle;
    \end{pgfonlayer}
  \end{tikzpicture}
}
\newcommand*{\gate}[3]{\node[box] at (m-#1-#2) {$#3$};}
\newcommand*{\cgate}[4]{
  \node[vertex] at (m-#1-#3) {};
  \node[box] at (m-#2-#3) {$#4$};
  \begin{pgfonlayer}{background}
    \draw (m-#1-#3) -- (m-#2-#3);
  \end{pgfonlayer}
}
\newenvironment{qcirc}[4][1]{
  \begin{tikzpicture}[scale=#1, every node/.style={scale=#1}]
    \tikzstyle{box}=[fill=white, draw=black, minimum size=4pt, inner sep=2pt]
    \tikzstyle{measure}=[fill=white, draw=black, shape=semicircle, minimum size=0.8em, inner sep=0pt]
    \tikzstyle{textonly}=[fill=white, draw=none]

    \newcommand*{\maxx}{#2}
    \newcommand*{\maxy}{#3}
    \newcommand*{\phys}{#4}

    \foreach \x in {0,...,\maxx}{
      \foreach \y in {0,...,\maxy}{
        \coordinate (m-\y-\x) at (0.8*\x, -0.7*\y);
      }}

    \foreach \y [evaluate={\i=int(\y-\phys+1); \m=\maxx-1}] in {\phys,...,\maxy}{
      \node[textonly] at ($(m-\y-1)-(4mm,0)$) {$\ket 0_\i$};
      \node[measure] (meas\y) at (m-\y-\m) {};
      \draw[-{Latex[scale=0.7]}] (meas\y.south) -- ($(meas\y)+(60:3.5mm)$);
      \node[textonly] at ($(m-\y-\maxx)+(4mm,0)$) {$s_\i$};
      \begin{pgfonlayer}{background}
        \draw (m-\y-0) -- (m-\y-\m);
        \draw[double] (m-\y-\m) -- (m-\y-\maxx);
      \end{pgfonlayer}
    }
  }{
    \begin{pgfonlayer}{background}
      \pgfmathparse{int(\phys-1)}
      \foreach \y in {0,...,\pgfmathresult}{
        \draw (m-\y-0) -- ($(m-\y-\maxx)+(0.7,0)$);
      }
    \end{pgfonlayer}
  \end{tikzpicture}
}
\setlist{nosep,
  listparindent=\parindent,
  parsep=0pt
}
\declaretheoremstyle[
  headfont=\normalfont\bfseries,
  numberwithin=section,
  bodyfont=\normalfont,
  headformat={$\vartriangleright$ \NAME\ \NUMBER \NOTE},
  qed={$\vartriangleleft$},
]{mydefstyle}
\declaretheorem{definition}[style=mydefstyle, title={Definition}, refname={definition,definitions}, Refname={Definition,Definitions}]
\declaretheorem{notation}[style=mydefstyle, title={Notation}, refname={notation,notations}, Refname={Notation,Notations}, sibling=definition]
\declaretheorem{convention}[style=mydefstyle, title={Convention}, refname={convention,conventions}, Refname={Convention,Conventions}, sibling=definition]
\declaretheoremstyle[
  headfont=\normalfont\bfseries,
  sibling=definition,
  bodyfont=\normalfont,
  headformat={$\blacktriangleright$ \NAME\ \NUMBER \NOTE},
  qed={$\blacktriangleleft$},
]{mythmstyle}
\declaretheorem{theorem}[style=mythmstyle, title=Theorem, refname={theorem,theorems}, Refname={Theorem,Theorems}]
\declaretheorem{lemma}[style=mythmstyle, title=Lemma, refname={lemma,lemmas}, Refname={Lemma,Lemmas}]
\declaretheorem{corollary}[style=mythmstyle, title=Corollary, refname={corollary,corollaries}, Refname={Corollary,Corollaries}]
[style=mythmstyle, title=Proposition, refname={proposition,propositions}, Refname={Proposition,Propositions}]
\declaretheoremstyle[headfont=\normalfont\bfseries,sibling=definition,bodyfont=\normalfont]{mynotestyle}
\declaretheorem{note}[style=mynotestyle, title=Note, refname={note,notes}, Refname={Note,Notes}]
\declaretheorem{example}[style=mynotestyle, title=Example, refname={example,examples}, Refname={Example,Examples}]
[style=mynotestyle, title={Non-example}, refname={{non-example},{non-examples}}, Refname={{Non-example},{Non-examples}}]
\newlength{\currentparindent}
\newcommand{\diagramonright}[4]{%
  \setlength{\currentparindent}{\parindent}
  \noindent
  \begin{minipage}{#1\linewidth}
    \baselineskip=18pt plus1pt
    \setlength{\parindent}{\currentparindent}
    #3
  \end{minipage}%
  \hfill %
  \begin{minipage}{#2\linewidth}
    #4
  \end{minipage}\\
}
\newcommand{\hiddenparbreak}{{\parfillskip=0pt\parskip=0pt\par}}
\newenvironment{caution}{%
  \setlength{\currentparindent}{\parindent}
  \vspace{1ex}
  \noindent
  \begin{minipage}{0.9\linewidth}
    \baselineskip=18pt plus1pt
    \setlength{\parindent}{\currentparindent}
  }{
  \end{minipage}%
  \hfill %
  \begin{minipage}{0.09\linewidth}
    \centering
    \begin{tikzpicture}
      \node at (90:0.5mm) {\Huge !};
      \node (v0) at (90:0.7cm) {};
      \node (v1) at (90+120:0.7cm) {};
      \node (v2) at (90+2*120:0.7cm) {};
      \begin{pgfonlayer}{background}
        \draw[line width=1mm, red, fill=yellow, rounded corners] (v0.center) -- (v1.center) -- (v2.center) -- cycle;
      \end{pgfonlayer}
    \end{tikzpicture}
  \end{minipage}
  \\
}
\renewcommand{\vec}[1]{\underline{#1}}
\newcommand{\cvert}{\mathbf{v}}
\newcommand{\cedge}{\mathbf{e}}
\newcommand{\cface}{\mathbf{f}}
\newcommand{\ket}[1]{\left|#1\right\rangle}
\newcommand{\neket}[1]{|#1\rangle}
\newcommand{\nebra}[1]{\langle#1|}
\newcommand{\braket}[2]{\left\langle#1\middle|#2\right\rangle}
\newcommand{\nebraket}[2]{\langle#1|#2\rangle}
\newcommand{\deq}{\coloneqq}
\newcommand{\eqd}{\eqqcolon}
\newcommand{\idealeq}{\trianglelefteq}
\newcommand{\epito}{\twoheadrightarrow}
\newcommand{\monoto}{\rightarrowtail}
\newcommand{\oset}[3][0ex]{%
  \mathrel{\mathop{#3}\limits^{
    \vbox to#1{\kern-2\ex@
    \hbox{$\scriptstyle#2$}\vss}}}}
\newcommand{\isoto}{\oset[-0.6ex]{\!\!\sim}{\to}}
\DeclareMathOperator{\End}{End}
\DeclareMathOperator{\Stab}{Stab}
\DeclareMathOperator{\Tor}{Tor}
\DeclareMathOperator{\id}{id}
\DeclareMathOperator{\dom}{dom}
\DeclareMathOperator{\im}{im}
\DeclareMathOperator{\rk}{rk}
\newcommand{\R}{\mathbb{R}}
\newcommand{\C}{\mathbb{C}}
\newcommand{\N}{\mathbb{N}}
\newcommand{\Z}{\mathbb{Z}}
\newcommand{\Pauli}{\mathfrak{P}}
\newcommand{\RMod}[1]{\ensuremath{\prescript{}{#1}{\mathbf{Mod}}}}
\newcommand{\Ch}[1]{\ensuremath{\mathbf{Ch}_\bullet(#1)}}
\newcommand{\genR}[2]{\left\langle #2 \right\rangle_{#1}}
\newsavebox{\@brx}
\newcommand{\llangle}[1][]{\savebox{\@brx}{\(\m@th{#1\langle}\)}%
  \mathopen{\copy\@brx\kern-0.5\wd\@brx\usebox{\@brx}}}
\newcommand{\rrangle}[1][]{\savebox{\@brx}{\(\m@th{#1\rangle}\)}%
  \mathclose{\copy\@brx\kern-0.5\wd\@brx\usebox{\@brx}}}
\DeclareSymbolFont{largesymbolsstix}{LS2}{stixex}{m}{n}
\DeclareMathDelimiter{\lBrace}{\mathopen}{largesymbolsstix}{"E8}{largesymbolsstix}{"0E}
\DeclareMathDelimiter{\rBrace}{\mathclose}{largesymbolsstix}{"E9}{largesymbolsstix}{"0F}
\DeclareMathDelimiter{\lParen}{\mathopen}{largesymbolsstix}{"DE}{largesymbols}{"02}
\DeclareMathDelimiter{\rParen}{\mathclose}{largesymbolsstix}{"DF}{largesymbols}{"03}
\title{Homological\\[0.5ex] Quantum Error Correction\\[1ex] with Torsion}
\author{Samo Nov\'ak}
\begin{document}

\baselineskip=18pt plus1pt

\setcounter{secnumdepth}{3}
\setcounter{tocdepth}{3}

\maketitle                  
\begin{abstract}
  Homological quantum error correction uses tools of algebraic topology and homological algebra to derive Calderbank-Shor-Steane quantum error correcting codes from cellulations of topological spaces. This work is an exploration of the relevant topics, a journey from classical error correction, through homology theory, to CSS codes acting on qudit systems. Qudit codes have torsion in their logical spaces. This is interesting to study because it gives us extra logical qudits, of possibly different dimension.

  Apart from examples and comments on the topic, we prove an original result, the \emph{Structure Theorem for the Qudit Logical Space}, an application of the Universal Coefficient Theorem from homological algebra, which gives us information about the logical space when torsion is involved, and that improves on a previous result in the literature. Furthermore, this work introduces our own abstracted and restricted version of the general notion of a cell complex, suited exactly to our needs.
\end{abstract}
\begin{acknowledgements}
  First and foremost, I would like to thank my supervisor, Alex Cowtan, without whom this project would not have been possible. Thank you for your support and much advice during this work. I would also like to thank Cole Comfort for his helpful input at the beginning of the project.

  This year would never have been as great as it was without my friends in MFoCS: Ari, Ben, Clarice, Edwin, Isi, Jean, Mira, No\'e, Pali, Satya, Seb, and Zoe. I will always remember our times together, lunches and pub nights, Tins \& Talks, running and board games, and especially our trip to Cornwall. Thank you for being my home away from home. In~addition, I would like to acknowledge the MFoCS Counting Society, current number: 6004.

  I would like to thank the members of my college rowing club, with whom we shared early mornings of splashy rowing, as well as days of racing: Lorenzo, James, Tomas, Karim, Conor, Ky, Jude, Albi, Sam, Amir, Fergus, Kat, Bruno, James, Johanna, Maritsa, and Emma.

  I would also like to thank my friends outside Oxford who always have my back: Mart\!'us, Mišo, Debča, Kika, Kaca, Klára, Katka, Katka, Tia, Táňa, Marek, Mat\!'a, Pet\!'o, Šlojma, Lucretia, Paula, Carina, Orges, Pablo, Alex, and Alex. A particular thank you goes to Štefka, who set me off on my academic path.

  Na záver sa chcem pod\!'akovat\!' tým, ktorí ma poznajú najdlhšie -- svojej rodine. Ďakujem mamke, ockovi, babke, Ferovi, Martine, Tomášovi a Lucii za Vašu neustálu lásku a podporu počas môjho roka na Oxforde i~mimo neho.
\end{acknowledgements}  
{
  \thispagestyle{empty}
  \vspace*{\stretch{1}}
  \itshape             
  \centering          
  Babke.
  \vspace{\stretch{3}} 
  \clearpage           
}

\begin{romanpages}          
  \tableofcontents            
  \listoffigures              
\end{romanpages}            

\chapter{Introduction}

Computation in the real world is an inherently physical process. Our computers, be they classical or quantum, are devices made of physical components, and it is possible for them to go wrong due to outside interference. An example of this may be an electrically charged cosmic ray impacting a transistor in a computer, flipping a bit from $0$ to $1$ or vice-versa. In the setting of quantum computation, many architectures are so fragile that they have to be cooled down to extremely low temperatures and be very well shielded from the environment because they might get unintentionally entangled with the outside world. What is more, even if it were physically possible, we cannot completely cut ties between the environment and the device, because we need a way to control it and extract results.
In short, the influence of the environment is inevitable, and errors are a fact of life. In this work, we turn to detecting, and if possible correcting the errors that do happen.

We focus on a class of error detection and correction protocols, called \emph{codes}, acting on quantum computing systems. These are \emph{Calderbank-Shor-Steane (CSS)} codes, a subclass of \emph{stabilizer codes} which are defined using an abelian group of operators, called \emph{stabilizers}, acting on the quantum system in such a way that \mbox{error-free} states are fixed. A key characteristic of CSS codes is that the generators of the stabilizer group can be split into two types corresponding to the complementary bases of a quantum system.\cite{aaronson_intro_to_qis2} The fact that these have to commute allows us to define a \emph{chain complex} to represent the code. This is a structure from homological algebra, and we call codes that can be constructed this way \emph{homological}.\cite{bravyi2013homological}
One way of constructing homological codes is to look at various topological spaces and divide them into cells. Such \emph{cellulations} also correspond to chain complexes, and this forms a connection between algebraic topology, homological algebra, and quantum error correction that allows us to graphically construct CSS codes with good properties.\cite{10.1063/1.1499754}

Furthermore, we are interested not only in qubit-based computers, where the basic building blocks can be in a superposition of two distinct basis states but more generally in qudit systems, with an arbitrary $d \ge 2$.\cite{sarkar2023qudit} One reason to think about qudits is that our ability to build large quantum computers is limited, so we might as well stuff as much information into the small computers we can build.
The case of qudits is, however, particularly challenging when $d$ is not prime, because this breaks many nice properties we are used to. However, this introduces \emph{torsion} into the logical space of the code. This means, essentially, that we may get more encoded logical qudits than we normally would. However, these extra qudits are different, in that their dimension is lower than $d$.\cite{vuillot2023homological}

Our goal is to study homological quantum CSS codes in the qudit systems, with arbitrary $d\ge2$, such that torsion is possible. As mentioned, this requires tools from several different areas of mathematics, particularly the theory of modules, homological algebra, and algebraic topology. These fields were new to the author, so this is also an exploration of those topics, with the aim of using the material learned to study homological codes. We invite the reader on this journey with us.

Throughout the work, we explore and introduce the respective topics. We provide original examples and commentary, and as a useful tool, we define our own abstract topological construction to help reason about cellulations -- this is the \emph{abstract $2$-dimensional cell complex} from \Cref{def:abstract-2-cell-complex}. This is an abstraction and restriction of the general concept of a cell complex in the literature.

Furthermore, we prove an original result in \Cref{thm:structure-theorem-for-qudit-logical-space} which we call the \emph{Structure Theorem for the Qudit Logical Space} in analogy to the Structure Theorem of Modules over a principal ideal domain (PID). We choose this name because our theorem is essentially the same as the latter transported to the case of a qudit logical space, which is a $\Z_d$-module. Note that $\Z_d$ is not necessarily a PID, and this result is only possible due to the structure of a homological CSS code derived from a cellulation of a topological space. The proof requires connecting several non-trivial lemmas and theorems that we collect along the way. Our structure theorem extends upon a result from ref. \cite{vuillot2023homological} about the decomposition of the logical space, transporting it from rotor codes to qudits. and it allows us to precisely find out how many logical qudits a code has, and what their dimensions are.

\section{Structure of the dissertation}

In \Cref{part:intr-error-corr}, we lay down the basics of error correction. First, we explore classical codes, specifically those defined using the language of linear algebra, in \Cref{chap:classical-linear-codes}. Then we move on to the quantum setting, and we introduce stabilizer and CSS codes in \Cref{chap:quantum-errors}.

Then in \Cref{part:alg-top-hom-alg}, we introduce the mathematical tools needed to reason about homological CSS codes on qudits. In \Cref{chap:rings-and-modules}, we review some theory of rings and modules. We follow with an introduction to homological algebra in \Cref{chap:homol-algebra}, where we study chain complexes and homology in the abstract, without any topological meaning. Finally, in \Cref{cha:cell-compl}, we define our original notion of the \emph{abstract $2$-dimensional cell complex}, and explore how it relates to other topological complexes in the literature, and to homology.

In \Cref{part:homol-quant-error}, we use those tools to construct homological quantum error correcting codes from cellulations of spaces. In \Cref{chap:CSS-codes-from-cellulations}, we explain this first in the case of qubits by constructing a simple example of a toric code. Then we prove that generalizing to qudits still allows us to use the same methodology, and this leads into \Cref{chap:torsion}, where we cellulate the real projective plane to obtain codes with torsion in their logical space. We provide a practical guide on how to construct a cellulation. This is where we also prove our original \Cref{thm:structure-theorem-for-qudit-logical-space}.

Finally, in \Cref{chap:conclusion}, we conclude and reflect on related work that one might study in the future. Outside of the main body, we give a brief introduction to merging two homological codes in \Cref{chap:joining-codes}.

As a supplement to one of the computations in \Cref{chap:torsion}, we provide an excerpt from a Jupyter\cite{Kluyver:2016aa} notebook, using Sage\cite{sagemath}, a Computer Algebra System, to help us compute properties of matrices in a chain complex. This can be found in \Cref{cha:jupyter-sage}.

\part[Introduction to Error Correction]{Introduction to Error Correction \\[4cm] \begin{tikzpicture}
  \tikzstyle{point}=[draw=black, line width=0.5mm, circle, minimum size=4mm, inner sep=0]
  \tikzstyle{bullet}=[point, fill=white]
  \tikzstyle{circ}=[point, fill=black]
  \renewcommand*{\bullet}{\node[bullet] {};}
  \renewcommand*{\circ}{\node[circ] {};}

  \pgfdeclarelayer{background}
  \pgfdeclarelayer{foreground}
  \pgfsetlayers{background,main,foreground}

  \begin{pgfonlayer}{foreground}
    \matrix [matrix of nodes, column sep=3mm, row sep=3mm] (mat)
    {
      \bullet & \bullet & \circ   & \bullet &[3mm] \bullet & \circ   & \circ   \\
      \bullet & \circ   & \bullet & \bullet & \circ   & \bullet & \circ   \\
      \circ   & \bullet & \bullet & \bullet & \circ   & \circ   & \bullet \\
    };
  \end{pgfonlayer}
  \foreach \i in {0,1,2}{
    \ifnum \i=0 \tikzstyle{cl}=[fill=magenta]
    \else \ifnum \i=1 \tikzstyle{cl}=[fill=\colora]
      \else \tikzstyle{cl}=[fill=\colorb]
      \fi \fi
    \begin{pgfonlayer}{background}
      \fill [cl, fill opacity=0.5] ($(mat.center)+(90+120*\i:1cm)$) circle[radius=2cm];
    \end{pgfonlayer}
    \begin{pgfonlayer}{main}
      \draw ($(mat.center)+(90+120*\i:1cm)$) circle[radius=2cm];
    \end{pgfonlayer}
  }
\end{tikzpicture}}
\label{part:intr-error-corr}
\chapter[Classical Linear Codes]{Classical Linear Codes \hspace*{\fill} \begin{tikzpicture}[baseline=-0.6ex]
  \tikzstyle{point}=[draw=black, line width=0.7pt, circle, minimum size=1.5mm, inner sep=0]
  \tikzstyle{circ}=[point, fill=white]
  \tikzstyle{bullet}=[point, fill=black]
  \renewcommand*{\bullet}{\node[bullet] {};}
  \renewcommand*{\circ}{\node[circ] {};}

  \matrix (mat)
  [
  column sep=0.7mm,
  row sep=0.7mm]
  {
    \node[circ] (m1) {}; & \circ & \bullet \\
    \circ & \bullet & \node[circ] (m2) {}; \\
  };
  \useasboundingbox (m1.north west) rectangle (m2.south east);
\end{tikzpicture}}
\label{chap:classical-linear-codes}

The transfer and processing of information is a physical process, and as such it is subject to unwanted outside influence. This means that random errors may occur during transfer of some signal, or during some computation. Ideally, we want to detect when this happens, and also correct the error, so that we may read the message, or carry on with the computation.

A protocol for encoding information in such a way that error can be detected, and perhaps corrected, is called an \emph{error correcting code}. There are multiple different classes of codes: in this work, we focus on \emph{linear (block) codes}, protocols defined using linear algebra. The word \emph{block} refers to partitioning a sequence of information bits into blocks of fixed size and encoding each separately -- we will not care much for this, because our aim will be to encode a fixed number of (qu)bits, so we refer to them as just \emph{linear codes}.\footnote{The partition into blocks may, in fact, be relevant if we want to encode a larger system: in this case, we just split the system into disjoint partitions, and encode each separately (see \Cref{sec:gluing-complexes} and \Cref{sec:connected-sum-of-codes}).}

We start in the \emph{classical} (i.e. not quantum) setting, and then move to quantum codes in \Cref{chap:quantum-errors}. To describe classical codes, we follow ref. \cite{ryan_lin_2009} with some minor changes to conventions, in particular using column instead of row vectors, and largely omitting generator matrices.

\section{General idea}

We work with binary systems, where the unit of information is one \emph{bit}. The natural setting is then the finite field of two elements $\Z_2 \deq \Z/2\Z = \{ 0, 1 \}$.\footnote{Formally, these should be $[0]$ and $[1]$ as they are equivalence classes. We abuse notation to unburden ourselves.} We represent words of information bits by vectors with entries in $\Z_2$.

\begin{definition}[words]
  We call \emph{logical} the bits of actual information, which form \emph{logical words} of length $k \in \N$. They are vectors from $\Z_2^k$ which we call the \emph{logical space}. We call \emph{physical} the bits and words used to encode the logical words. These have length $n \ge k$, and they live in the \emph{physical space} $\Z_2^n$. We encode logical words into physical words. The physical words obtained this way are called \emph{codewords} and live in a \emph{codespace} $\mathfrak{C} \subseteq \Z_2^n$.
\end{definition}

\begin{note}[linearity]
  We have defined the words as elements of linear spaces. This is no accident, and the linearity is a desirable feature. In particular, it means that a linear combination of codewords is again a codeword.
\end{note}

We require that the encoding is invertible: we can always uniquely decode a codeword to the corresponding logical word. This implies that there is an isomorphism $\mathfrak{C} \cong \Z_2^k$, i.e. $\mathfrak{C}$ has dimension $k$. This further means that there exists a monomorphism $\Z_2^k \monoto \Z_2^n$. We represent this by a matrix, called the \emph{generator matrix} of the code. This is what encodes logical words into codewords. Generator matrices are important in the classical error correction literature, however, we have no use for them, so we only briefly mention them.

\subsection{Error detection}
\label{sec:error-detection}

We have a monomorphism $\Z_2^k \monoto \Z_2^n$ that encodes logical words in physical words. We can always decode a codeword. However, there are possibly many more physical words than just codewords. These are word where bits have been randomly flipped -- errors have occurred. Our aim is to find out that this is the case, and ideally correct the error, thus successfully recovering logical information.

\begin{definition}[parity check, syndrome]
  We define a morphism $P : \Z_2^n \to \Z_2^m$, for $m \in \N$, that takes a physical word $\vec v$ and tells us whether this contains an error. We call is the \emph{parity check matrix}. By definition, $P\vec v = \vec 0$ if and only if $\vec v$ is a codeword. Otherwise, the nonzero vector $P\vec v$ tells us about the kind of error that happened. We call $P\vec v$ the \emph{syndrome} of $\vec v$. We use this term in both cases, if $\vec v$ is a codeword, or if there is an error.
\end{definition}

If no error has occurred, and only then, we want the syndrome to be the zero vector. This means that we require the kernel to be exactly the codespace: $\ker P = \mathfrak{C}$.
Using the rank-nullity theorem, we see that $m$, the dimension of the codomain of $P$, must satisfy:
\begin{equation}
  \label{eq:dimension-of-syndrome-space}
  m \ge \dim \im P = \dim \dom P - \dim \ker P = n - k,
\end{equation}
where $\dom P = \Z_2^k$ is the domain of $P$, and $\im P$ is the image of $P$.

\subsection{Interpretation of the parity check matrix}
\label{sec:interpretation-of-the-parity-matrix}

The parity check matrix $P$ sends a physical word $\vec v \in \Z_2^n$ to its \emph{syndrome} $\vec s \deq P \vec v$ which is a vector of $m$ components. We interpret these as the outcomes of individual parity checks, or \emph{measurements} of the physical word. Concretely, the component $s_i$, for $i\in\{1, \dots, m\}$, is
\begin{equation}
  \label{eq:parity-check-row}
  s_i = \sum_{\alpha=1}^n P_{i,\alpha} v_\alpha = \langle \vec{P}_{i}, \vec v\rangle,
\end{equation}
where $\vec P_i^\top$ is the $i^{\text{th}}$ row of $P$, seen as a row vector, $\vec P_i$ its transpose (column vector), and $\langle -, - \rangle$ is the usual inner product in $\Z_2^n$.

The parity check matrix describes a procedure for performing error detection. Each row $\vec{P}_i^\top$ defines a single measurement of the physical word, essentially listing which individual bits of the physical word to measure. The interpretation as measurements is perhaps strange in the context of classical codes. However, it is important, because in \Cref{sec:css-codes}, when dealing with quantum codes, this is what will tell us how to measure a quantum system.

\subsection{Characterizing codes}
\label{sec:code-metrics}

We now introduce a handy way to quickly characterize an error correcting code. First, we need the following definitions:
\begin{definition}[Hamming weight]
  Let $m \in \N$, and let $\vec v = (v_1, \dots, v_m) \in \Z_2^m$. Define the \emph{Hamming weight} (or just \emph{weight}) of $\vec v$, denoted $\mathfrak{w}(\vec v)$, to be the number of non-zero components of $\vec v$. Note that defining the vectors as elements of $\Z_2^m$ and reasoning in terms of tuples, we implicitly use the standard basis.
\end{definition}

\begin{definition}[Hamming distance]
  Let $\vec v = (v_i)_i, \vec w = (w_i)_i \in \Z_2^m$. Define the \emph{Hamming distance} (or just \emph{distance}) of $\vec v$ and $\vec w$, denoted $\mathfrak{d}(\vec v, \vec w)$, to be the number of components where $\vec v$ and $\vec w$ differ.
\end{definition}

Note that Hamming distance is a \emph{metric} on $\Z_2^m$. This means that $\mathfrak{d}(\vec v, \vec w) = 0$ for $\vec v = \vec w$, and $\mathfrak{d}(\vec v, \vec w) > 0$ otherwise; it is symmetric ($\mathfrak{d}(\vec v, \vec w) = \mathfrak{d}(\vec w, \vec v)$), and it satisfies the \emph{triangle inequality}
\[ \mathfrak{d}(\vec u, \vec v) + \mathfrak{d}(\vec v, \vec w) \ge \mathfrak{d}(\vec u, \vec w) \]
for all $\vec u, \vec v, \vec w \in \Z_2^n$.

Notice also that Hamming weight is a kind of \emph{norm} on $\Z_2^m$.\footnote{It is only a norm in the case of vector spaces over $\Z_2$, where the only possible non-zero value of components is $1$.} Similarly to other pairs of metric and norm, the distance between two vectors is the weight of their difference:
\begin{equation}
  \label{eq:hamming-distance-vs-hamming-weight}
  \mathfrak{d}(\vec v, \vec w) = \mathfrak{w}(\vec w - \vec v).
\end{equation}

\begin{definition}[code distance]
  The minimum Hamming distance between two different codewords in $\mathfrak{C} \le \Z_2^n$ is called the \emph{code distance} (name from \cite{cowtan2023css}), and we denote it
  \[
    \mathfrak{d} = \mathfrak{d}(\mathfrak{C})
    \deq \min \big\{ \mathfrak{d}(\vec v, \vec w) : \vec v, \vec w \in \mathfrak{C}, \vec v \ne \vec w \big\}.
    \qedhere
  \]
\end{definition}
Using \cref{eq:hamming-distance-vs-hamming-weight}, we can rewrite the definition above as
\[
  \mathfrak{d}
  = \min \big\{ \mathfrak{w}(\vec w - \vec v) : \vec v, \vec w \in \mathfrak{C}, \vec v \ne \vec w \big\}
  = \min \big\{ \mathfrak{w}(\vec x) : \vec x \in \mathfrak{C}, \vec x \ne \vec 0 \big\},
\]
so the code distance is, in fact, the minimum Hamming weight of a non-zero vector in the codespace.

The code distance is a property of the code, and together with the dimensions of logical and physical space, they describe the code:
\begin{definition}[code metrics, code rate]
  Suppose we have a code that encodes words of length $k$ into words of length $n$, and its code distance is $\mathfrak{d}$. These numbers are called the \emph{code metrics}, and we write that this is an $[n,k,\mathfrak{d}]$ code. Sometimes, we may omit the distance.
  Additionally, define the \emph{code rate} $k/n$, interpreted as the average amount of information carried by a physical bit.
\end{definition}
This characterization is not unique, and there may be multiple codes that have the same metrics. For example, if $G : \Z_2^k \monoto \Z_2^n$ is a generating matrix of a code, then there are in general many composites of the form
\[
  \begin{tikzcd}
    {\Z_2^k} & {\Z_2^k} & {\Z_2^n} & {\Z_2^n,}
    \arrow["G", tail, from=1-2, to=1-3]
    \arrow["\varphi", "\sim"', from=1-1, to=1-2]
    \arrow["\psi", "\sim"', from=1-3, to=1-4]
  \end{tikzcd}
\]
where $\varphi$ and $\psi$ are isomorphisms (i.e. changes of basis). There are corresponding changes of basis for the parity check matrix $P$.

There is a decision to be made whether to consider such composites separate codes. This is clear if the code distance is changed by the change of basis. However, suppose that the code distance is unchanged, and $\psi = \id_{\Z_2^n}$. Then the codespaces $\im(\varepsilon \circ \varphi) \subseteq \Z_2^n$ are the same for any $\varphi$. As such, the codes arising could be considered \emph{essentially the same}. A similar argument could be made for the case when $\psi$ is not necessarily the identity -- the codespaces may be different, but isomorphic. On the other hand, in either case, the encoding and error detection are actually different for different choices of $\varphi, \psi$.

\begin{convention}
  \label{con:different-codes}
  We follow the convention of ref. \cite{cowtan2023css}: we declare that codes are only the same if their encoding, and the parity check matrices, are equal.
\end{convention}

\subsection{Detection versus decoding}

We now have all the machinery needed for error detection, and perhaps correction. What is the difference, though?

If the syndrome is non-zero, that means the code \emph{detected} some kind of error. There exist, of course, \emph{undetectable errors}: if an error takes a codeword to another codeword, the code has no way to notice. However, if an error is detectable, we will always know when it occurs.

Knowing whether an error occurred, however, may not always be enough information. If the syndrome uniquely determines the kind of error that happened, then we may undo, or \emph{correct} it. This is also called \emph{decoding}, because it gives us back the logical information.

On the other hand, some syndromes may correspond to more than one kind of error. Then we can either assume one of the possibilities is much more probable, and take a leap of faith; or we can declare that an error has been detected, and leave it there. The choice always depends on the specific code. We will see an example of this in the following section.

\section{Examples}
\subsection{A [3,1,3] repetition code}
\label{sec:3-1-3-code}

We show a $[3,1,3]$ repetition code that copies the single logical bit three times. The code rate is $k/n = 1/3$. The only non-zero codeword is $(1,1,1)^\top$, which gives the code distance $\mathfrak{d} = \mathfrak{w}(1,1,1)^\top = 3$. The codespace is one-dimensional, and it is the span of $(1,1,1)^\top$.

We will now derive the parity check matrix $P$. By definition, the codespace is equal to the kernel of $P$, which means $P(1,1,1)^\top = \vec 0$. This parity matrix must always measure exactly two bits (otherwise the syndrome is nonzero even for $(1,1,1)^\top$). We require at least two measurements to compare all three bits this way. This gives us the following matrix $P$:
\begin{equation}
  \label{eq:3-1-3-parity-check-matrix}
  P =
  \begin{pmatrix}
    1 & 1 & 0 \\
    1 & 0 & 1
  \end{pmatrix}
\end{equation}
Note that we make a choice here. The rows could be swapped, and the columns could be also swapped. Parity matrices with such swaps would formally correspond to different codes under \Cref{con:different-codes}, though they would be \emph{essentially the same}.

The physical space is $\Z_2^3$, meaning it has $8$ elements. Thus it is convenient enough to enumerate all possible syndromes. We display them in the table in~\cref{fig:table-of-syndromes-3-1-3}.

\begin{figure}[h]
  \centering
  \begin{tabular}[h!]{|c|c|c|}
    \hline
    \textbf{word} $\vec v \in \Z_2^n$ & \textbf{syndrome} $P\vec v$ & \textbf{error} \\
    \hline
    $(0,0,0)^\top$ & \multirow{2}{*}{$(0,0)^\top$} & \multirow{2}{*}{no error, or three errors} \\
    $(1,1,1)^\top$ & & \\
    \hline
    $(0,0,1)^\top$ & \multirow{2}{*}{$(0,1)^\top$} & \multirow{2}{*}{error on bit $3$, or two errors on bits $1,2$} \\
    $(1,1,0)^\top$ & & \\
    \hline
    $(0,1,0)^\top$ & \multirow{2}{*}{$(1,0)^\top$} & \multirow{2}{*}{error on bit $2$, or two errors on bits $1,3$} \\
    $(1,0,1)^\top$ & & \\
    \hline
    $(1,0,0)^\top$ & \multirow{2}{*}{$(1,1)^\top$} & \multirow{2}{*}{error on bit $1$, or two errors on bits $2,3$} \\
    $(0,1,1)^\top$ & & \\
    \hline
  \end{tabular}
  \caption[Syndromes of the $\lbrack 3,1,3 \rbrack$ repetition code]{Syndromes of the $[3,1,3]$ repetition code.}
  \label{fig:table-of-syndromes-3-1-3}
\end{figure}

The table displays all the possible physical words, and all the possible ways to interpret their syndromes. We follow with a few observations. First, we focus on knowing about errors with certainty:

\begin{itemize}
\item The extreme case: three simultaneous errors make $\vec v$ look like a codeword. Such errors are called \emph{undetectable}, because the code has no way to find them. Undetectable errors are all those that take one codeword to another -- in other codes, this may not require every bit to get an error.
\item Secondly, the present $[3,1,3]$ code cannot tell apart whether one or two errors occurred. This means also that it cannot tell with certainty what the logical word is meant to be, or in other words, it cannot decode the physical word.

  For example, suppose the logical word is $\vec u = \vec 0$, and a single bit error occurs, giving us the physical word $\vec v = (0,0,1)^\top$. The physical word $(1,1,0)^\top$ has the same syndrome. We detect that an error occurred, either on the third bit only, or on the other two bits only. This is not enough to decide whether the logical word is supposed to be $0$ or $1$.
\end{itemize}

However, probability is an important part of error correction. Usually, we assume that a bit can randomly change value (\emph{random bit-flip}) with probability~$p$, independently of others. Then no error occurs with probability $(1-p)^3$, a single error with probability $p(1-p)^2$, two simultaneous errors with $p^2(1-p)$, and finally a three-bit error with probability $p^3$.

If $p$ is low enough so that $p^3$ is small, then we may declare three-bit errors unlikely and assume they never occur. Then we decode a codeword as the corresponding logical word. Furthermore, if $p$ is low enough to make $p^2(1-p)$ small, then we may assume the only errors that happen are single bit-flips. If we only get single errors, then we can also successfully correct them.

\subsection{The [7,4,3] Hamming codes}
\label{sec:7-4-3-hamming}

In this section, we show a different way to construct a code. We introduce a family of codes called the \emph{$[7,4,3]$ Hamming codes}. They encode $4$ logical bits using $7$ physical bits, meaning their code rate is $k/n = 4/7$, which is higher than the code rate of $1/3$ of the $[3,1,3]$ repetition code from \cref{sec:3-1-3-code}.
The members of this class are all essentially the same, but considered different under \Cref{con:different-codes}. For convenience, we choose a representative and refer to it as \emph{the $[7,4,3]$ Hamming code}.

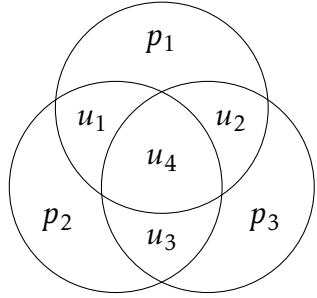
\begin{figure}[t]
  \centering
  \begin{tikzpicture}
  \newcommand*{\centralradius}{0.7cm}
  \node at (0,0) {$u_4$};
  \foreach \i [evaluate={\a=int(\i+1)}] in {0,1,2}{
    \draw (90+120*\i:\centralradius) circle[radius=2*\centralradius];
    \node at (150-120*\i:1.5*\centralradius) {$u_\a$};
    \node at (90+120*\i:2.25*\centralradius) {$p_\a$};
  }
\end{tikzpicture}
  \caption[Design of the $\lbrack 7,4,3 \rbrack$ Hamming code]{Design of the $[7,4,3]$ Hamming code.}
  \label{fig:design-of-7-4-3-Hamming}
\end{figure}
The code is based on the Venn diagram in \cref{fig:design-of-7-4-3-Hamming}. Each region corresponds to a bit in the physical word $\vec v = (u_1, \dots, u_4, p_1, \dots, p_3)^\top$, where $\vec u = (u_1, \dots, u_4)^\top$ is a logical word, and $p_1, \dots, p_3$ are called \emph{parity bits}.
Values of the parity bits are chosen to make the sum of each circle in the diagram even. In other words, the three circles give us a system of equations that have to be satisfied, and this determines the parity bits:
\begin{equation}
  \label{eq:7-4-3-parity-equations}
  \left\{
    \begin{matrix}
      u_1 & + u_2 &       & + u_4 & + p_1 &&& = 0 \\
      u_1 &       & + u_3 & + u_4 && + p_2 && = 0 \\
          & \hphantom+  u_2 & + u_3 & + u_4 &&& + p_3 & = 0
    \end{matrix}
  \right.
\end{equation}

In the system of equations, each $p_i$ is completely determined by the values of $\{u_j\}_j$ which allows us to determine the codewords: for a logical word $\vec u = (u_1, \dots, u_4)^\top$, the corresponding codeword is $\vec v^\top = (\vec u^\top | \vec p^\top)$, where we write the transpose $\vec v^\top$ just for convenience of notation, and where
\[ \vec p =
  \begin{pmatrix}
    u_1+u_2+u_4 \\ u_1+u_3+u_4 \\ u_2+u_3+u_4
  \end{pmatrix}.
\]

Furthermore, the system of equations (\ref{eq:7-4-3-parity-equations}) gives us the parity matrix. Each row corresponds to one of the equations. There are ones in columns corresponding to the variables present in the particular equation:
\[
  P = \begin{pmatrix}
        1 & 1 & 0 & 1 & 1 & 0 & 0 \\
        1 & 0 & 1 & 1 & 0 & 1 & 0 \\
        0 & 1 & 1 & 1 & 0 & 0 & 1
      \end{pmatrix}.
\]

\subsubsection*{Single-bit error syndromes}

The $[7,4,3]$ Hamming code can detect \emph{and correct} single-bit errors. Here, we show what the syndromes of single-bit errors look like.

Suppose $\vec u$ is a logical word, and $\vec v^\top = (\vec u^\top | \vec p^\top)$ is its codeword, as above. We add a single-bit error on bit $i \in \{ 1, \dots, 7\}$, giving us the physical word $\vec x = \vec v + \vec e_i$, where $\vec e_i$ is a vector of the standard basis.
Then we compute the syndrome:
\[ P\vec x = \cancel{P \vec v} + P \vec e_i = P \vec e_i, \]
where we note that, by definition, $P \vec v = 0$. This means that the syndrome of a single-bit error is $P \vec e_i$, the $i^{\text{th}}$ column of the matrix $P$. For example, if $i=2$, then $P \vec e_2 = (1,0,1)^\top$; if $i=7$, then $P \vec e_7 = (0,0,1)^\top$.

All columns of $P$ are nonzero, hence $P \vec e_i \ne \vec 0$. This means that all single-bit errors are detectable. Furthermore, all columns are different, so we can always determine where the error occurred, and correct it.

\subsubsection*{Two-bit error syndromes}

The code at hand can also detect two-bit errors, but not correct them. Here we show why both of those are the case.
Take the physical word $\vec x = \vec v + \vec e_i + \vec e_j$, where again $\vec v$ is a codeword, and $i \ne j$.\footnote{Depending on what we choose to model, we might also allow two errors to occur in the same place. In the case of bits, this undoes the error. However, from \Cref{sec:qudits-and-ops} onward, we work with (qu)dits, with possibly $d > 2$, meaning two errors in the same place may still be an error.} As before, the codeword part cancels, and we are left with the error terms:
\[ P\vec x = \cancel{P \vec v} + P(\vec e_i + \vec e_j) =  P(\vec e_i + \vec e_j). \]
The error syndrome is a sum of two columns of $P$. Under the assumption that $i\ne j$, i.e. we add two different columns, this is always nonzero. This means that there are no undetectable 2-bit errors.

However, if we allow two-bit errors, we lose the ability to perform correction. For instance, take the two-bit error $\vec e_1 + \vec e_2$ and a single-bit error $\vec e_3$. Their syndromes are the same:
\[
  P(\vec e_1 + \vec e_2)
  = \begin{pmatrix} 1 \\ 1 \\ 0 \end{pmatrix}
  + \begin{pmatrix} 1 \\ 0 \\ 1 \end{pmatrix}
  = \begin{pmatrix} 0 \\ 1 \\ 1 \end{pmatrix}
  = P \vec e_3.
\]
We conclude that decoding (with certainty) is not possible in this case, and in fact, any sum of two columns of $P$ is equal to another column of $P$.

\subsubsection*{More simultaneous errors}

Finally, we briefly note that simultaneous errors on more than two bits may be undetectable. Inspired by the previous example of a 2-bit error, suppose we have a physical word $\vec x = \vec v + \vec e_1 + \vec e_2 + \vec e_3$:
\[
  P\vec x = \cancel{P\vec v} + P(\vec e_1 + \vec e_2 + \vec e_3)
  = \begin{pmatrix} 1 \\ 1 \\ 0 \end{pmatrix}
  + \begin{pmatrix} 1 \\ 0 \\ 1 \end{pmatrix}
  + \begin{pmatrix} 0 \\ 1 \\ 1 \end{pmatrix}
  = \vec 0.
\]
An error like the above takes a codeword to another codeword, without the possibility of detection that this happened.

\section{Low-Density Parity-Check (LDPC) Codes}
\label{sec:LDPC}

In this section, we briefly mention an interesting class of linear codes, though we do not go into too much detail. The class is called \emph{Low-Density Parity-Check}, or \emph{LDPC}, codes. The parity check part is what we have already seen in the previous sections: these are linear codes, and we specifically describe them by using parity matrices.

The \emph{low-density} is the new part. It means the parity matrix $P$ has a low density of non-zero values. This can be formulated in terms of Hamming weights of rows and columns of $P$:
\begin{definition}[LDPC]
  A code is called a \emph{low-density parity-check (LDPC) code} if the density of ones in the corresponding parity matrix, defined using Hamming weights of each row and each column, is bounded by a constant.
\end{definition}

The above definition can be interpreted two different ways. It can be a statement about the weights of rows and columns of individual parity matrices. In this case, one would choose a small constant and say that the density of ones in the matrix is, e.g. 1 \%.\cite{ryan_lin_2009} The other view is that if we have a family of codes, for example, if we have a scheme to generate a code given $k$ or $n$, then the boundedness of the weights can be understood as the density of ones not increasing as we scale the code. In particular, LDPC codes have a constant code rate, as we scale them.\cite{cowtan2023css}

The rows of the parity matrix correspond to measurements of bits of the physical words. If each of them has a bounded Hamming weight, this means that there is a bound on the number of bits that are involved in any measurement.
Similarly, the columns of the parity matrix correspond to physical bits. If the rows have bounded weight, this means that each bit is measured only a bounded number of times.

In the classical world, measuring many bits, or having a bit be a part of many measurements, is not really a problem. However, we aim to do quantum error correction, where every operation and measurement is precious. This means that the constraint posed by the LDPC condition is a useful one. In this work, we will design quantum LDPC codes using cellulations of spaces.

\chapter[Quantum Error Correction]{Quantum Errors Correction \hspace*{\fill}\begin{tikzpicture}
  \tikzstyle{vertex}=[fill=black, draw=black, circle, minimum size=3pt, inner sep=0]
  \tikzstyle{box}=[fill=white, draw=black, minimum size=4pt, inner sep=0pt]
  \tikzstyle{tri}=[fill=white, draw=black, shape=isosceles triangle, rotate=180, minimum size=3pt, inner sep=0pt]
  \tikzstyle{measure}=[fill=white, draw=black, shape=semicircle, minimum size=3pt, inner sep=0pt]
  \foreach \x in {0,...,4}{
    \foreach \y in {0,1}{
      \coordinate (m-\y-\x) at (3mm*\x, -3mm*\y);
    }}
  \tikzstyle{ground}=[]
  \node[vertex] at (m-0-1) {};
  \node[box] at (m-0-3) {};

  \node[tri] at (m-1-0) {};
  \node[box] at (m-1-1) {};
  \node[measure] (meas) at (m-1-2) {};
  \draw[-{Latex[scale=0.5]}] (meas.south) -- ($(meas)+(60:1.8mm)$);
  \node[vertex] at (m-1-3) {};
  \node[tri, rotate=180] at (m-1-4) {};

  \begin{pgfonlayer}{background}
    \draw ($(m-0-0)-(1mm,0)$) -- ($(m-0-4)+(1mm,0)$);
    \draw (m-1-0) -- (m-1-2);
    \draw[double] (m-1-2) -- (m-1-4);
    \draw (m-0-1) -- (m-1-1);
    \draw[double] (m-1-3) -- (m-0-3);
  \end{pgfonlayer}
\end{tikzpicture}}
\label{chap:quantum-errors}

Quantum computation (and transmission of information) works in a fundamentally different way to classical computation. While in classical error correction, we measure the state of the message (which is a fancy way of saying we look at it), this cannot be done in the quantum setting without also destroying the computation state. Any superposition we may have had would collapse, and any entanglement would be destroyed.
This means we have to find a clever way to measure an error syndrome. The way to do this will be to perform partial measurements on a state enlarged by ancillae.

In this chapter, we assume knowledge of basic quantum mechanics and of quantum computation. An unfamiliar reader may find out more in \cite{berera_del-debbio_2021,MR1796805}. In introducing quantum error correction, we follow \cite{MR1796805,aaronson_intro_to_qis1,aaronson_intro_to_qis2,Devitt_2013}.

\section{Two-state system}
\label{sec:two-state-system}

In classical (non-quantum) computation, the basic unit that we can use is, usually, a two-state system called a \emph{bit}. Analogously, in quantum computation, at least on the basic level, works with \emph{qubits}, a quantum analogy of a bit. This is our starting point. Note, however, that in the quantum computing world, building blocks that are systems of more than two possible states are quite common, more so than in the classical case. We reach these in \Cref{sec:qudits-and-ops}.

\begin{definition}[spin\index{spin}]
  A two-state quantum system lives in a two-dimensional complex Hilbert space $\mathcal{H} = \C^2$. It is conventionally called a \emph{spin}. In the context of quantum computation, it is called a \emph{qubit}.
\end{definition}

\begin{definition}[standard basis\index{standard basis}]
  We choose the \emph{standard basis} of a single-qubit state space to be the \emph{eigenbasis} of the $Z$ Pauli operator (see \Cref{def:Pauli-matrices}). We denote $\ket 0$ the \emph{spin up} eigenstate, and we denote $\ket 1$ the \emph{spin down} eigenstate. The basis is written in ordered form $(\ket 0, \ket 1)$, so that we can write matrices.
\end{definition}

\begin{definition}[Pauli matrices\index{Pauli matrices}]
  \label{def:Pauli-matrices}
  If $\mathcal{H}$ is the state space of a two-state system, written in the standard basis, then we represent the Pauli operators as the following \emph{Pauli spin matrices}:
  \begin{align*}
    X = \sigma_1 & \deq \begin{pmatrix} 0 & 1 \\ 1 & 0 \end{pmatrix},
    & Y = \sigma_2 & \deq \begin{pmatrix} 0 & -i \\ i & 0 \end{pmatrix},
    & Z = \sigma_3 & \deq \begin{pmatrix} 1 & 0 \\ 0 & -1 \end{pmatrix}.
                     \qedhere
  \end{align*}
\end{definition}
\begin{definition}[Pauli group]
  \label{def:1-qubit-Pauli-group}
  The above matrices generate a group called the \emph{Pauli group} $\Pauli_1$, where the subscript means these act on a single qubit. The elements of $\Pauli_1$ are the Pauli matrices along with the identity $\bm 1_2$, and their multiples by $\pm 1$ and $\pm i$.
\end{definition}

\begin{lemma}[product, commutation]
  \label{lem:pauli-product-commutation}
  For Pauli matrices $\sigma_j$ and $\sigma_k$, their product is
  \[ \sigma_j \sigma_k = \delta_{jk} \bm1_2 + i \sum_{l} \varepsilon_{jkl} \sigma_l, \]
  where $\delta_{jk}$ is the \emph{Kronecker delta}, with value $1$ if $j=k$ and $0$ otherwise. The \emph{Levi-Civita symbol} $\varepsilon_{jkl}$ has values equal for cyclic permutations of indices, $\varepsilon_{123} = 1$, $\varepsilon_{213} = -1$, and is zero if indices repeat. Consequently, the commutation relation of Pauli matrices is
  \begin{equation}
    \label{eq:commutation-of-Paulis}
    \sigma_j \sigma_k = -\sigma_k \sigma_j
  \end{equation}
  if $j \ne k$. Obviously they commute if $j=k$.
\end{lemma}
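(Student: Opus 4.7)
The plan is to verify the product formula by direct computation, since the claim is a finite identity in a $2\times 2$ matrix algebra, and then deduce the commutation relation as a corollary. No deeper machinery is needed.

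First I would handle the diagonal case $j=k$. A short computation with the explicit matrices from \Cref{def:Pauli-matrices} gives $X^2 = Y^2 = Z^2 = \bm{1}_2$, which matches the right-hand side of the claimed formula because $\delta_{jj} = 1$ while the Levi-Civita symbol $\varepsilon_{jjl}$ vanishes identically for every $l$.

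Next I would tackle the off-diagonal cases. By the antisymmetry of $\varepsilon_{jkl}$ under swapping its first two indices, it suffices to check one orientation of each unordered pair, say $\sigma_1\sigma_2$, $\sigma_2\sigma_3$, and $\sigma_3\sigma_1$, and I would simply multiply the $2\times 2$ matrices. The expected outcomes are
\[
\sigma_1\sigma_2 = i\sigma_3, \qquad \sigma_2\sigma_3 = i\sigma_1, \qquad \sigma_3\sigma_1 = i\sigma_2,
\]
each of which agrees with $i\sum_l \varepsilon_{jkl}\sigma_l$ since $\varepsilon_{jkl}$ picks out the unique missing index with sign $+1$ under cyclic order. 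To cover the reversed pairs $\sigma_2\sigma_1$, $\sigma_3\sigma_2$, $\sigma_1\sigma_3$, I would either compute them directly or observe that swapping the first two indices of $\varepsilon_{jkl}$ flips its sign, so the formula predicts $-i\sigma_3$, $-i\sigma_1$, $-i\sigma_2$ respectively; both shortcuts give the same six entries.

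Finally, the commutation statement \eqref{eq:commutation-of-Paulis} falls out by inspection of the proved formula: for $j\neq k$ the Kronecker term vanishes, and swapping $j$ and $k$ in $i\sum_l \varepsilon_{jkl}\sigma_l$ negates the whole expression by antisymmetry of $\varepsilon$, giving $\sigma_j\sigma_k = -\sigma_k\sigma_j$; for $j=k$ the statement that $\sigma_j$ commutes with itself is trivial. The only thing that might be called an \emph{obstacle} here is bookkeeping — keeping track of the six off-diagonal sign conventions — but it is purely clerical, and no step requires more than elementary matrix multiplication.
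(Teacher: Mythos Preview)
Your proposal is correct and matches the paper's approach exactly: the paper omits the proof as a standard fact but explicitly suggests verifying it by direct matrix multiplication, which is precisely what you do. There is nothing to add.
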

This is a standard fact, so for the interest of brevity, we omit the proof. We prove a generalization of this later, in \Cref{lem:qudit-pauli-commutation}. An interested reader may verify the present Lemma by explicitly multiplying the matrices -- this is more efficient if one writes:
\[ \sigma_j =
  \begin{pmatrix}
    \delta_{j3} & \delta_{j1} - i\delta_{j2} \\
    \delta_{j1} + i\delta_{j2} & - \delta_{j3}
  \end{pmatrix}.
\]

\begin{definition}[$X$-basis]
  There is another common basis. It is the one made of eigenstates of Pauli $X$, with eigenstates $\ket + = \frac 1{\sqrt 2}(\ket 0 + \ket 1)$ and $\ket - = \frac 1{\sqrt 2}(\ket 0 - \ket 1)$, in the order $(\ket +, \ket -)$.
\end{definition}

\begin{definition}[Hadamard]
  The change of basis between $Z$- and $X$-eigenbases is called the \emph{Hadamard} operator $H$, written in the standard basis as
  \[ H = \frac{1}{\sqrt 2} \begin{pmatrix} 1 & 1 \\ 1 & -1 \end{pmatrix}. \]
  We can write the change of basis explicitly as $\ket + = H \ket 0$ and $\ket - = H \ket 1$.
\end{definition}

\subsection{Multiple qubits}

Obviously, a single qubit cannot do that much computation.
If we have $n \in \N$ qubits, then these live in a tensor product space $\mathcal{H} = (\C^2)^{\otimes n}$, which is isomorphic to $\C^{2^n}$. Similarly to the 1-qubit case, it will be useful to us to have a group of Pauli operators that act on multiple qubits.
\begin{definition}[Pauli group on $n$ qubits]
  \label{def:n-qubit-Pauli-group}
  Recall that $\Pauli_1$ is a group of operators acting on one qubit. We use the tensor product to define operators acting on a system of $n$ qubits. We denote $\Pauli_n$ the group of all $n$-fold tensor products of single-qubit Pauli operators. Explicitly,
  \(
    \Pauli_n \deq \left\{ P_1 \otimes \cdots \otimes P_n : P_i \in \Pauli_1  \right\}.
  \)
\end{definition}

Multiple qubits, without a way to let them interact, would be only as useful as classical bits. We need another ingredient, an that is an \emph{entangling operation}, one that acts on multiple qubits, previously perhaps in a \emph{product state} (not entangled), and creates entanglement between them. The canonical one is the \emph{controlled $X$ gate}:
\begin{definition}[CNOT]
  Let $\mathcal{H}$ be the state space of a 2-qubit system. A \emph{controlled~$X$}, also called a \emph{controlled NOT (CNOT)}, is an operator that sends
  \begin{align*}
    \ket 0 \otimes \neket \psi & \mapsto \ket 0 \otimes \neket \psi
    & \text{and} &
    & \ket 1 \otimes \neket \psi & \mapsto \ket 1 \otimes X\neket \psi,
  \end{align*}
  where $\neket\psi$ is a qubit.
\end{definition}

Apart from superposition, entanglement is one of the main advantages of quantum computing. The non-locality exhibited by joint systems is of great use to computation, and specifically in our case, to error correction.

\section{Stabilizer states}

In this section, we define the model of quantum computation, and the corresponding errors that we allow. We restrict our computers to use \emph{Clifford gates}, a specific group of quantum operators on multi-qubit systems. As such, the allowed states are \emph{stabilizer states}. We define these notions below.

\begin{definition}[stabilizer subgroup]
  Let $\mathcal{H}$ be a state space of a quantum system, and let $\mathfrak{G}$ be a group of operators $\mathcal{H} \to \mathcal{H}$ that act on this space. For a state $\neket\psi \in \mathcal{H}$, define the set of \emph{stabilizers} of $\neket\psi$ with respect to $\mathfrak{G}$ as
  \[
    \Stab_{\mathfrak{G}}\neket\psi \deq \big\{
    A \in \mathfrak{G} : A\neket\psi = \neket\psi
    \big\}.
  \]
  Equivalently, it contains all operators for which $\neket\psi$ is an eigenstate with eigenvalue~$+1$. We extend this to a subspace $\mathcal{K} \le \mathcal{H}$ as
  \begin{equation}
    \label{eq:stabilizer-of-subspace}
    \Stab_{\mathfrak{G}} \mathcal{K} \deq \big\{
    A \in \mathfrak{G} : \forall \neket\psi\in\mathcal{K},\; A\neket\psi = \neket\psi
    \big\},
  \end{equation}
  that is, $\Stab_{\mathfrak{G}}\mathcal{K}$ contains operators that stabilize every state in $\mathcal{K}$. The above sets are clearly subgroups of $\mathfrak{G}$, and we call them the \emph{stabilizer subgroups}.
\end{definition}

\begin{theorem}[Pauli stabilizers are abelian]
  \label{thm:Pauli-stab-abelian}
  Let $\mathcal{K} \le \mathcal{H} = (\C^2)^{\otimes n}$ be some nontrivial subspace of a state space of $n$ qubits. Then the group $\Stab_{\Pauli_n}\mathcal{K}$ is abelian.
\end{theorem}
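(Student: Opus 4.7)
The plan is to exploit the commutation/anticommutation dichotomy for single-qubit Paulis together with the fact that a nontrivial $\mathcal{K}$ contains some nonzero vector. The guiding idea is that any two elements of $\Pauli_n$ must either commute or anticommute, so if two of them both fix every vector in a nontrivial subspace, the anticommuting case would force a nonzero state to equal its own negative.

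First I would lift \Cref{lem:pauli-product-commutation} from $\Pauli_1$ to $\Pauli_n$. Writing $A = \alpha\, P_1 \otimes \cdots \otimes P_n$ and $B = \beta\, Q_1 \otimes \cdots \otimes Q_n$ with $\alpha, \beta \in \{\pm 1, \pm i\}$ and each $P_i, Q_i \in \{\bm 1_2, X, Y, Z\}$, the scalar phases commute with everything, and the tensor product respects composition factor by factor, so
\[ AB = \alpha\beta \bigotimes_{i=1}^n P_i Q_i \quad\text{and}\quad BA = \alpha\beta \bigotimes_{i=1}^n Q_i P_i. \]
By (\ref{eq:commutation-of-Paulis}) each factor satisfies $P_i Q_i = \pm Q_i P_i$, and collecting the signs yields $AB = (-1)^k BA$ where $k$ counts the positions at which $P_i$ and $Q_i$ anticommute. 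Hence $AB = \pm BA$ for all $A, B \in \Pauli_n$.

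The second step is a short contradiction argument. Take $A, B \in \Stab_{\Pauli_n}\mathcal{K}$ and suppose towards contradiction that $AB = -BA$. Nontriviality of $\mathcal{K}$ provides a nonzero $\neket\psi \in \mathcal{K}$, and then $AB\neket\psi = A(B\neket\psi) = A\neket\psi = \neket\psi$ and symmetrically $BA\neket\psi = \neket\psi$; combining these with $AB = -BA$ gives $\neket\psi = -\neket\psi$, hence $\neket\psi = 0$, a contradiction. Therefore $AB = BA$ for all $A, B \in \Stab_{\Pauli_n}\mathcal{K}$.

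The main obstacle is really just identifying which hypothesis is doing the work: the statement fails without nontriviality, because $\Pauli_n$ itself is visibly nonabelian and $\mathcal{K} = \{0\}$ would be vacuously stabilized by all of it. Once the dichotomy for $\Pauli_n$ is in place, the rest is a one-line check, and I would not expect any serious technical obstacle.
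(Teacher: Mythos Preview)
Your proposal is correct and follows essentially the same route as the paper: lift the commute/anticommute dichotomy from $\Pauli_1$ to $\Pauli_n$ via the tensor factorization, then derive $\neket\psi = -\neket\psi$ for any $\neket\psi \in \mathcal{K}$ in the anticommuting case. The only cosmetic difference is that the paper invokes group closure to say $AB$ and $BA$ stabilize, whereas you apply $A$ and $B$ sequentially; both arrive at the same contradiction.
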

\begin{proof}
  Following \cite[\S3.4.2]{aaronson_intro_to_qis2}.
  From \Cref{lem:pauli-product-commutation}, we know that single-qubit Pauli matrices always either commute or anticommute (i.e. $\sigma_j\sigma_k = -\sigma_k\sigma_j$). This extends to tensor products of Pauli matrices, because each factor acts only on its corresponding qubit.
  Suppose $A,B$ are stabilizers from $\Stab_{\Pauli_n}\mathcal{K}$, and assume that they do not commute -- then they anticommute ($AB = - BA$). As $\Stab_{\Pauli_m}\mathcal{K}$ is a group, it contains both $AB$ and $BA$, meaning that these products must also stabilize every state $\neket\psi$ in $\mathcal{K}$. Then
  \[ \neket\psi = AB\neket\psi = -BA\neket\psi = -\neket\psi, \]
  which is only possible if $\neket\psi = 0$. The state $\neket\psi \in\mathcal{K}$ was chosen arbitrarily, so this implies $\mathcal{K} = \bm 0$. This contradicts the hypothesis that $\mathcal{K}$ is non-trivial. We conclude that $A$ and $B$ must commute, and that $\Stab_{\Pauli_n}\mathcal{K}$ is abelian.
\end{proof}

\begin{definition}[Clifford group]
  Define the \emph{Clifford group} on $n$ qubits as the normalizer of $\Pauli_n$. This is a group that consists of unitary operators that commute with all of~$\Pauli_n$. The Clifford group is generated, for example, by the Hadamard $H$, the CNOT, and the phase gate
  \[
    S = \begin{pmatrix} 1 & 0 \\ 0 & i \end{pmatrix}.
  \]
\end{definition}

An element $A$ of the Clifford group is called a \emph{Clifford gate} if we can perform $A$ atomically, as a gate that our computer can do (e.g. the $Z$ gate, or the CNOT). In general, including composites, we call $A$ a \emph{Clifford circuit}.

\begin{definition}[stabilizer state]
  A quantum state of $n$ qubits $\neket\psi \in (\C^2)^{\otimes n}$ is called a \emph{stabilizer state} if it can be obtained from $\ket0^{\otimes n}$ by applying a Clifford circuit.\footnote{Originally, the definition of a stabilizer state requires that the cardinality of its stabilizer subgroup with respect to $\Pauli_n$ is $2^n$. We define them in terms of Clifford circuits, secretly applying the \emph{Gottesman-Knill theorem} which states that these are equivalent.\cite{aaronson_intro_to_qis2}}
\end{definition}

\subsection{Error model}

\begin{convention}[computation model]
  The quantum computers we consider are only allowed to perform Clifford gates, and thus every state is a stabilizer state.
\end{convention}

This is a standard convention often taken when reasoning about quantum error correction. Specifically, it allows us to use stabilizers to describe error correcting codes. We remark that this model of computation is not universal, but in this work we do not need it to be.

Similarly to the classical (non-quantum) case, we have random errors appearing in our quantum computation. These are operators on the state space that get randomly applied without our intent and knowledge. In this work, we assume that only the following errors are possible:
\begin{convention}[quantum errors]
  \label{con:quantum-errors}
  We assume that an error is a random application of an element of the group $\Pauli_n$. In particular, this implies that errors are \emph{discrete}, and they act on qubits \emph{independently}, without any new entanglement being created.
\end{convention}

The above is justified by the \emph{digitization} of quantum errors. This is a result stating that considering only the above errors is in fact viable, and by doing this, we can correct any error.\cite{digitization}

\section{CSS codes}
\label{sec:css-codes}

Unlike in the classical world, where we can look at the state of a system and figure out whether an error has occurred, we cannot measure a quantum system during computation. That would destroy any superposition and entanglement we may have had, and effectively halt the computation.

However, only qubits that are measured are destroyed. We are able to perform \emph{partial measurements}, which means we only measure some of the qubits. Of course, this still impacts the whole system, but unmeasured qubits can be used even after the partial measurement.

The way to detect errors in a quantum computer is to \emph{entangle} the system with new helper qubits, called \emph{ancillae}, initialized with known states, usually $\ket 0$ or $\ket + = H\ket 0$. This way, we enlarge the state space, but we also spread  information about the state to the ancillae. Then we measure the ancillae, collapsing the state space back to original. This process gives us information about the state without destroying it completely. It may not be full information, but if we entangled the system with the ancillae in a clever way, we learn enough to detect errors of certain kinds.

The aim of the theory of quantum error correction is to figure out concrete protocols to perform the above. Similarly to \Cref{chap:classical-linear-codes}, instead of covering quantum error correction in general, we immediately specify to the class of codes of our interest. These are called \emph{stabilizer codes}, and they use the stabilizer framework to describe error correcting protocols. 
\begin{definition}[stabilizer code]
  Let $\mathcal{H} = (\C^2)^{\otimes n}$ be the state space of $n$ physical qubits. A quantum error correcting code that describes the codespace $\mathcal{C} \le \mathcal{H}$,\footnote{We use the different notation $\mathcal{C} \le (\C^2)^{\otimes n}$ to distinguish the Hilbert space corresponding to correct physical words from the abstract representation of a codespace $\mathfrak{C} \le \Z_2^{n}$. The elements of $\mathfrak{C}$ correspond to the basis vectors in $\mathcal{C}$.} as well as the error detection measurements, using the stabilizer subgroup $\Stab_{\Pauli_n} \mathcal{C}$ is called a \emph{stabilizer code}. Concretely, a particular code is a choice of \emph{generators} of the stabilizer group.
\end{definition}

Following from  \Cref{lem:pauli-product-commutation}, we know that all elements of $\Pauli_n$ either commute or anticommute. In our model, errors are also members of that group. Let $A \in \Pauli_n$ be some operator. If $A$ anticommutes with at least one stabilizer in $\Stab_{\Pauli_n}\mathcal{C}$, then it is a detectable error. On the other hand, if $A$ commutes with all stabilizers, then it preserves the codespace, and it is a stabilizer itself. This could mean that is it a logical operator, but it could also be an undetectable error. In this sense, logical operators and undetectable errors are the same, and the difference is whether we intended to apply them

We are interested in a subclass of stabilizer codes, the \emph{Calderbank-Shor-Steane (CSS) codes}. They are defined by a partition on the generating set of their stabilizer group to two types of Pauli operators:

\begin{definition}[X-type, Z-type Pauli operators]
  Suppose $A = \bigotimes_{i=1}^n A_i$ is a Pauli operator from $\Pauli_n$. If each $A_i$ is either the identity $\bm 1$ or $X$, we call $A$ an \emph{$X$-type} operator. Similarly, if each $A_i$ is either $\bm 1$ or $Z$, we call $A$ a \emph{$Z$-type} operator.
\end{definition}

Note that we do not pay any mind to $Y$-type operators. This is exactly because $Y = iXZ$. Assigning importance specifically to the pair $X$ and $Z$, in contrast to e.g. $X$ and $Y$, is a choice -- one made by convention. In any case, we do need exactly two of these, because they are \emph{complementary}.

\begin{definition}[CSS code]
  Suppose we have a stabilizer code defined by a stabilizer group $\mathfrak{S}$. If we can find sets $S_X$ and $S_Z$ such that $S_X$ contains only $X$-type Pauli operators, and likewise $S_Z$ contains only $Z$-type operators, and such that $\mathfrak{S}$ is generated by $S_X \sqcup S_Z$, then we call this a \emph{Calderbank-Shor-Steane (CSS)~code}.
\end{definition}

The nice idea behind making this definition is that we can take two classical error correcting codes, use one of them to define $X$-type stabilizers, and the other one for $Z$-type stabilizers. That is, we use two classical codes to define the generating sets $S_X$ and $S_Z$, one each. Recall from \Cref{thm:Pauli-stab-abelian} that $\Stab_{\Pauli_n}\mathcal{K}$ is abelian for a nontrivial space $\mathcal{K}$. This means that not every two classical codes can be used for this: the stabilizer generators we obtain from them must all commute with each other, otherwise the codespace is trivial.

The stabilizers describe the encoding; and more concretely, they tell us how to perform syndrome measurements. We illustrate this on an example.

\subsection{Example: phase-flip code}

One of the basic quantum error correcting codes is the \emph{phase-flip} code. It encodes a single logical qubit ($k=1$) as three physical qubits ($n=3$). It can correct a random application of a Pauli $Z$, a phase flip, to a single physical qubit. It is one of the quantum versions of the $[3,1,3]$ repetition code from \Cref{sec:3-1-3-code}, the other being a \emph{bit-flip} code that corrects $X$ errors.

In general, we want to define a code by the stabilizer group. This determines the encoding as well. However, in this example, we start with an encoding, find the stabilizers, and then perform error correction.

\subsubsection*{Encoding}

Due to the \emph{No cloning theorem}, we cannot copy an arbitrary quantum state. Thus we cannot perform encoding as $\neket\varphi \mapsto \neket\varphi^{\otimes 3}$. Instead, we use the circuit in \Cref{fig:quantum-phase-flip-encoding} to create an entangled state. This generates the following codeword physical states:
\begin{align*}
  \ket 0
  & \mapsto \neket{\overline 0} \deq \ket{000},
  & \text{and} &
  & \ket 1
  & \mapsto \neket{\overline 1} \deq \ket{111}.
\end{align*}
Above, we use the overline notation to indicate that $\neket{\overline 0}$ is a physical state corresponding to the logical qubit $\ket 0$. We also use the notation $\ket{000}$ to mean the tensor product $\ket 0 \otimes \ket 0 \otimes \ket 0$, and similarly for $\ket{111}$.
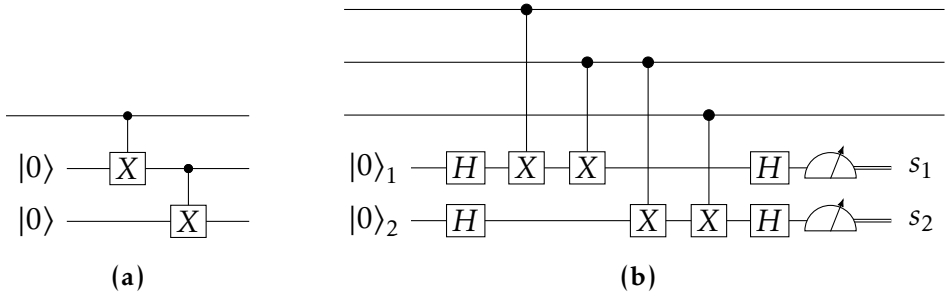
\begin{figure}[h]
  \centering
  \subfloat[\label{fig:quantum-phase-flip-encoding}]{%
    \begin{tikzpicture}
  \tikzstyle{vertex}=[fill=black, draw=black, circle, minimum size=3pt, inner sep=0]
  \tikzstyle{box}=[fill=white, draw=black, minimum size=4pt, inner sep=2pt]
  \tikzstyle{measure}=[fill=white, draw=black, shape=semicircle, minimum size=3pt, inner sep=0pt]

  \foreach \x in {0,...,3}{
    \foreach \y in {0,...,2}{
      \coordinate (m-\y-\x) at (0.8*\x, -0.7*\y);
    }}
  \node[vertex] at (m-0-1) {};
  \node[vertex] at (m-1-2) {};

  \node[fill=white] at ($(m-1-0)-(4mm,0)$) {$\ket 0$};
  \node[fill=white] at ($(m-2-0)-(4mm,0)$) {$\ket 0$};
  \node[box] at (m-1-1) {$X$};
  \node[vertex] (meas) at (m-1-2) {};
  \node[box] at (m-2-2) {$X$};

  \begin{pgfonlayer}{background}
    \draw ($(m-0-0)-(8mm,0)$) -- (m-0-3);
    \draw (m-1-0) -- (m-1-3);
    \draw (m-2-0) -- (m-2-3);

    \draw (m-0-1) -- (m-1-1);
    \draw (m-1-2) -- (m-2-2);
  \end{pgfonlayer}
\end{tikzpicture}
  }
  \qquad
  \subfloat[\label{fig:quantum-phase-flip-detection}]{%
    \begin{qcirc}{9}{4}{3}
      \gate32H \gate42H
      \cgate033X \cgate134X \cgate145X \cgate246X
      \gate37H \gate47H
    \end{qcirc}
  }
  \caption[Quantum phase-flip code]{Quantum phase-flip code. The encoding circuit is shown in~\textbf{(a)}, and the detection circuit in~\textbf{(b)}.}
\end{figure}

\subsubsection*{Stabilizers}

With this encoding, a codeword physical state is a superposition
\( \alpha\ket{000} + \beta\ket{111}. \)
This defines a two-dimensional subspace $\mathcal{C} \le (\C^2)^{\otimes 3}$, which is the codespace. We compute the group $\Stab_{\Pauli_3}\mathcal{C}$. If $A$ is a stabilizer from this group, it is a tensor product of Pauli matrices $A_1 \otimes A_2 \otimes A_3$ such that
\[ A_1 \otimes A_2 \otimes A_3 \big(\alpha\ket{000} + \beta\ket{111}\big) = \alpha\ket{000} + \beta\ket{111}. \]
It must be $Z$-type. If it was $X \otimes X \otimes X$, it would swap the coefficients $\alpha$ and $\beta$. If it was some other tensor product containing an $X$, it would send the state to a completely different subspace of $(\C^2)^{\otimes 2}$, for example
\[X \otimes X \otimes \bm 1 \big(\alpha\ket{000} + \beta\ket{111}\big) = \alpha\ket{110} + \beta\ket{001}. \]
It cannot contain $Y$ for the same reason.

The basis state $\ket{000}$ is stabilized by any $Z$-type operator, because $Z\ket 0 =\ket 0$. However, $Z\ket 1 = -\ket 1$, which means we need $A$ to be a tensor product of two $Z$ operators, so that the minus signs cancel. Conclude that $\Stab_{\Pauli_n}\mathcal{C}$ consists of the following elements:
\[
  \begin{matrix}
    \bm 1 & \otimes & \bm 1 & \otimes & \bm 1, \\
    Z & \otimes & Z & \otimes & \bm 1, \\
    \bm 1 & \otimes & Z & \otimes & Z, \\
    Z & \otimes & \bm 1 & \otimes & Z.
  \end{matrix}
\]
The identity is always contained and gives us no information. Note that $Z \circ Z = \bm 1$, which means
\[ \bm 1 \otimes Z \otimes Z = (Z \otimes Z \otimes \bm 1) \circ (Z \otimes \bm 1 \otimes Z). \]
Thus we can choose a minimal generating set for the stabilizer group consisting of the following generators:
\[
  \begin{matrix}
    Z & \otimes & Z & \otimes & \bm 1 \\
    Z & \otimes & \bm 1 & \otimes & Z
  \end{matrix}
\]
However, note that it is not always necessary to choose a minimal set. If we choose a non-minimal generating set, we have to perform more measurements (see below) than necessary, but this may be desirable in some cases.

\subsubsection*{Syndrome measurement}

These stabilizer generators tell us how to measure the syndromes. Each stabilizer corresponds to a single syndrome measurement. We do not need a syndrome measurement corresponding to $\bm 1 \otimes Z \otimes Z$ --- as it is generated by the two stabilizers above, measuring it would give us no new information.

We show how to measure the syndrome in \Cref{fig:quantum-phase-flip-detection}. Each generator $A_1 \otimes A_2 \otimes A_3$ gets an ancillary qubit. If $A_i = Z$, then we entangle the ancilla with the~$i^{\text{th}}$ qubit of the computation state, using a controlled $Z$, which can detect $X$-type errors. For example, the first ancilla (labeled $\ket0_1$) corresponds to $Z \otimes Z \otimes \bm 1$, so we entangle it with the first two computation qubits. In the figure, we write everything in the standard ($Z$) basis, so we write a controlled $Z$ as a CNOT conjugated by the Hadamard operator.

\subsection{A wild parity check matrix appears}

Continuing with the example, recall the parity check matrix of the classical $[3,1,3]$ repetition code from \cref{eq:parity-check-row}:
\begin{align*}
  P
  & =
    \begin{pmatrix}
      1 & 1 & 0 \\
      1 & 0 & 1
    \end{pmatrix}
  & \longleftrightarrow &
  &  \begin{matrix}
       Z & \otimes & Z & \otimes & \bm 1 \\
       Z & \otimes & \bm 1 & \otimes & Z
     \end{matrix}
\end{align*}
We write it next to the generating set of $\Stab_{\Pauli_3}\mathcal{C}$. The positions of $0$ and $1$ in the matrix correspond to the positions of $\bm 1$ and $Z$, respectively, in the generators (in the order in which we wrote them). This is not a coincidence! It shows that the classical $[3,1,3]$ repetition code from \Cref{sec:3-1-3-code} does in fact correspond to the phase-flip code with this choice of generators.

Recall from \Cref{sec:interpretation-of-the-parity-matrix} that we interpret the rows of the parity matrix as specifying measurements of the physical bits. Stabilizer generators also tell us which qubits to measure. Then this correspondence between the parity matrix and generators of the stabilizer subgroup makes sense, and it means we can represent our quantum codes using vectors and matrices over $\Z_2$, in the same way as we did classical codes.

\begin{definition}[binary representation of stabilizers]
  \label{def:binary-rep-of-stabs}
  Suppose we have a system of $n$ physical qubits. We represent $X$- and $Z$-type Pauli operators from $\Pauli_n$ using vectors from $\Z_2^n$ as follows: Take $\vec v \in \Z_2^n$, then the corresponding $X$-type operator is
  \[ \mathcal{X}(\vec v) \deq \bigotimes_{i=1}^n X^{v_i} = X^{v_1} \otimes X^{v_2} \otimes \cdots \otimes X^{v_n}, \]
  and the $Z$-type operator is
  \[ \mathcal{Z}(\vec v) \deq \bigotimes_{i=1}^n Z^{v_i} = Z^{v_1} \otimes Z^{v_2} \otimes \cdots \otimes Z^{v_n}. \]
  Clearly these are bijections between the set of $X$-type (resp. $Z$-type) operators and~$\Z_2^n$.
\end{definition}

With the above, we can represent a quantum code on $n$ physical qubits that obtains syndromes by performing $m$ measurements by a parity check matrix $P : \Z_2^n \to \Z_2^m$.

\subsection{CSS Code from orthogonal classical codes}
\label{sec:css-code-from-ortho-classical-codes}

We can now study CSS stabilizer codes as a pair of classical codes joined together in a compatible manner. The splitting into two codes comes from the fact that the stabilizer group of a CSS code has generating set split into $X$- and $Z$-type operators. Thus each of those corresponds to one classical code. We represent them using parity check matrices $P_X : \Z_2^n \to \Z_2^{m_X}$ and $P_Z : \Z_2^n \to \Z_2^{m_Z}$. The compatibility between the two codes comes from the fact that the stabilizer group is required to be abelian. This leads to the following relationship:

\begin{theorem}
  \label{thm:stabilizers-commute-iff-vectors-orthogonal}
  Let $\vec x$ and $\vec z \in \Z_2^n$ be two vectors representing stabilizers $\mathcal{X}(\vec x)$ and $\mathcal{Z}(\vec z)$, respectively. The stabilizers commute if and only if $\langle \vec x, \vec z \rangle = 0$, where $\langle -,- \rangle$ is the usual inner product on $\Z_2^n$. In other words, the stabilizers commute exactly when their representing vectors are orthogonal.
\end{theorem}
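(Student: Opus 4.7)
The plan is to reduce the commutation of the $n$-fold tensor products to a component-wise statement, and then tally up the signs picked up from each qubit factor. Because the tensor product factors act on disjoint qubits, reordering $\mathcal{X}(\vec x)\mathcal{Z}(\vec z)$ into $\mathcal{Z}(\vec z)\mathcal{X}(\vec x)$ happens factor-by-factor, so I only need to understand how $X^{x_i}$ and $Z^{z_i}$ interact on a single qubit for each $i \in \{1,\dots,n\}$.

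First I would compute, using the fact that $X^0 = Z^0 = \bm 1$ trivially commutes with everything, that $X^{x_i}Z^{z_i} = Z^{z_i}X^{x_i}$ whenever $x_i = 0$ or $z_i = 0$, while in the remaining case $x_i = z_i = 1$ the relation $XZ = -ZX$ (a direct consequence of \Cref{lem:pauli-product-commutation}) applies. Packaging these four cases together, one obtains the uniform identity
\[
X^{x_i} Z^{z_i} = (-1)^{x_i z_i}\, Z^{z_i} X^{x_i},
\]
where the exponent $x_i z_i$ is the product in $\Z_2$ lifted to $\{0,1\} \subset \Z$, which is well-defined as a parity since $(-1)^{2} = 1$.

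Next I would apply this pointwise to the tensor product. Using that the tensor product distributes over composition ($(A_1\otimes\cdots\otimes A_n)(B_1\otimes\cdots\otimes B_n) = (A_1B_1)\otimes\cdots\otimes(A_nB_n)$), I would compute
\[
\mathcal{X}(\vec x)\mathcal{Z}(\vec z)
= \bigotimes_{i=1}^n X^{x_i}Z^{z_i}
= \bigotimes_{i=1}^n (-1)^{x_i z_i} Z^{z_i}X^{x_i}
= (-1)^{\sum_{i=1}^n x_i z_i}\, \mathcal{Z}(\vec z)\mathcal{X}(\vec x),
\]
where the scalar phases collected from each tensor factor can be pulled out to the front. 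Recognising $\sum_i x_i z_i = \langle \vec x, \vec z\rangle$ (computed in $\Z_2$, then read as its unique lift in $\{0,1\}$), this gives the compact relation
\[
\mathcal{X}(\vec x)\mathcal{Z}(\vec z) = (-1)^{\langle \vec x,\vec z\rangle}\, \mathcal{Z}(\vec z)\mathcal{X}(\vec x).
\]

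Finally I would conclude in both directions: the stabilizers commute if and only if $(-1)^{\langle \vec x,\vec z\rangle} = 1$, which happens precisely when $\langle \vec x,\vec z\rangle = 0$ in $\Z_2$. There is no real obstacle here; the only thing to be careful about is the well-definedness of $(-1)^{\langle \vec x,\vec z\rangle}$ when the exponent is an element of $\Z_2$, which is handled by noting that the sign only depends on the parity of the exponent.
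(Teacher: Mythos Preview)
Your proof is correct. The paper itself does not prove this theorem directly, stating that it is a known fact and deferring to the qudit generalization in \Cref{thm:qudit-stab-commute-iff-rep-vectors-ortho}; your argument is precisely the $d=2$ specialization of that proof, computing the commutation factor-by-factor via $X^{x_i}Z^{z_i} = (-1)^{x_iz_i}Z^{z_i}X^{x_i}$ and collecting the resulting phase into $(-1)^{\langle \vec x,\vec z\rangle}$.
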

This is a known fact (see \cite{cowtan2023css}) and we do not prove it here. However, we will generalize this to qudits in \Cref{thm:qudit-stab-commute-iff-rep-vectors-ortho}, which we prove.

It will be incredibly useful to express the theorem in terms of the parity check matrices $P_X$ and $P_Z$, and particularly their composite. Both have $\Z_2^n$ as domain, and they have, in general, different spaces as codomains, so they cannot be composed on the nose. We transpose one of them, say the $Z$-type matrix, to obtain $P_Z^\top : \Z_2^{m_Z} \to \Z_2^n$. Then \Cref{thm:stabilizers-commute-iff-vectors-orthogonal} tells us the following:
\begin{corollary}
  The composite $\Z_2^{m_Z} \xrightarrow{P_Z^\top} \Z_2^n \xrightarrow{P_X} \Z_2^{m_x}$ is zero, or equivalently $\im P_Z^\top \subseteq \ker P_X$.
\end{corollary}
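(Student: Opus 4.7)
The plan is to unpack what the two parity check matrices mean in terms of stabilizer generators and then apply the previous theorem row-by-row. By \Cref{def:binary-rep-of-stabs} and the CSS construction in \Cref{sec:css-code-from-ortho-classical-codes}, each row of $P_X$ is a vector $\vec x_i \in \Z_2^n$ encoding an $X$-type stabilizer generator $\mathcal{X}(\vec x_i)$, and each row of $P_Z$ is a vector $\vec z_j \in \Z_2^n$ encoding a $Z$-type stabilizer generator $\mathcal{Z}(\vec z_j)$. Because the generators all live inside a single stabilizer subgroup of $\Pauli_n$, which is abelian by \Cref{thm:Pauli-stab-abelian}, every $\mathcal{X}(\vec x_i)$ commutes with every $\mathcal{Z}(\vec z_j)$.

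Next I would invoke \Cref{thm:stabilizers-commute-iff-vectors-orthogonal}: commutation of $\mathcal{X}(\vec x_i)$ and $\mathcal{Z}(\vec z_j)$ is equivalent to $\langle \vec x_i, \vec z_j \rangle = 0$ in $\Z_2^n$. Now observe that the $(i,j)$ entry of the matrix product $P_X P_Z^\top$ is exactly the inner product of the $i^{\text{th}}$ row of $P_X$ with the $j^{\text{th}}$ row of $P_Z$, namely $\langle \vec x_i, \vec z_j \rangle$. Since all such inner products vanish, the entire matrix $P_X P_Z^\top$ is the zero matrix, which is the claim that the composite map $P_X \circ P_Z^\top$ is zero.

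For the equivalent formulation, I would note the standard fact that a linear map being zero on $\Z_2^{m_Z}$ is the same as its image being contained in the kernel of the subsequent map: $P_X \circ P_Z^\top = 0$ is equivalent to $\im P_Z^\top \subseteq \ker P_X$. This is just the definition of the image and kernel, so no real work is needed here.

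There is no genuine obstacle: the argument is a direct translation between the operator-level statement (commuting stabilizers) and the matrix-level statement (orthogonal rows, vanishing product). The only thing worth being careful about is keeping straight that rows of $P_X$, not columns, represent generators — which is why one must transpose $P_Z$ to form the composite in the statement.
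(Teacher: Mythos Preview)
Your proposal is correct and follows essentially the same approach as the paper: both compute the $(i,j)$ entry of $P_X P_Z^\top$ as the inner product $\langle \vec x_i, \vec z_j \rangle$ of the corresponding rows and then invoke \Cref{thm:stabilizers-commute-iff-vectors-orthogonal} to conclude each such entry vanishes. Your version is slightly more explicit about why the rows are orthogonal (tracing it back to abelianness of the stabilizer group), but the core argument is identical.
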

\begin{proof}
  Write the parity matrices as
  \begin{align*}
    P_X
    & =
      \begin{pmatrix}
        \leftarrow & \vec x^1 & \rightarrow \\
                   & \vdots & \\
        \leftarrow & \vec x^{m_X} & \rightarrow
      \end{pmatrix},
    & \text{and} &
    & P_Z
    & =
      \begin{pmatrix}
        \leftarrow & \vec z^1 & \rightarrow \\
                   & \vdots & \\
        \leftarrow & \vec z^{m_Z} & \rightarrow
      \end{pmatrix}.
  \end{align*}
  The generators are represented by $\vec x^i$ and $\vec z^i$, and they are the rows of the matrices. In components, we write $(P_X)_{i,j} = x^i_j$ and $(P_Z)_{i,j} = z^i_j$. Then the composite is
  \[ (P_X \circ P_Z^\top)_{i,j} = \sum_{\alpha=1}^n (P_X)_{i,\alpha} (P_Z)_{j,\alpha} = x^i_\alpha z^j_\alpha = \langle \vec x^i, \vec z^j \rangle = 0, \]
  where we used the fact that the vectors $\vec x^i$ and $\vec z^j$ are orthogonal for all $i,j$. Conclude that $P_X \circ P_Z^\top = 0$.
\end{proof}

This is an important property of a CSS code. The sequence
\[ \Z_2^{m_Z} \xrightarrow{P_Z^\top} \Z_2^n \xrightarrow{P_X} \Z_2^{m_X} \]
such that $\im P_Z^\top \subseteq \ker P_X$ is called a \emph{chain complex}, a term which originated in \emph{homological algebra}. This is deeply connected to algebraic topology, and it will allows us to design quantum error correcting codes by analyzing certain kinds of topological spaces. We will also use these to generalize from CSS codes on qubit systems to codes on systems of qudits. We need some mathematical tools to properly deal with these, and we introduce them in the following chapters.

\part[Tools from Algebraic Topology and Homological Algebra]{Tools from Algebraic Topology and Homological Algebra \\[4cm] \includegraphics{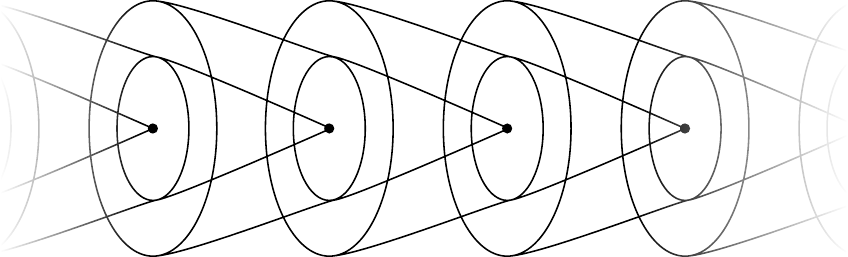}}
\label{part:alg-top-hom-alg}
\chapter[Rings and Modules]{Rings and Modules \hspace*{\fill} \begin{tikzpicture}
  \tikzstyle{point}=[fill=black,draw=black,circle,minimum size=3pt,inner sep=0]
  \tikzstyle{whitepoint}=[fill=white,draw=black,circle,minimum size=3pt,inner sep=0]
  \tikzstyle{->-}=[
  decoration={
    markings,
    mark=at position 0.7 with {\arrow{Triangle[scale=0.7]}}},
  postaction={decorate},
  draw]
  \node[point] at (90 : 0.3) (n0) {};
  \foreach \i in {1,...,5}{
    \node[whitepoint] at (90 - \i*360/6 : 0.3) (n\i) {};
  }
  \begin{pgfonlayer}{background}
    \draw[->-] (n0.center) -- (n1.center);
    \draw[->-] (n1.center) -- (n2.center);
    \draw[->-] (n2.center) -- (n3.center);
    \draw[->-] (n3.center) -- (n4.center);
    \draw[->-] (n4.center) -- (n5.center);
    \draw[->-] (n5.center) -- (n0.center);
  \end{pgfonlayer}
\end{tikzpicture}}
\label{chap:rings-and-modules}

Firstly, we need some basics.
In this chapter, we generalize the notions of fields and vector spaces over them. This will be used to reason about qudit systems.

We assume the reader is familiar with most of these ideas. As such, we only review the basic definitions (to fix notation), and parts that can subtle or are especially useful to us. We follow \cite{JacobsonNathan2009Ba,JacobsonNathan2009Bb}, and an unfamiliar reader can use those to find out more.

\section{Rings}

\begin{definition}[ring]
  A \emph{ring} is a structure $(R, +, \cdot, 0, 1)$, where $R$ is a set, called the \emph{carrier}, we have binary operations $+, \cdot : R \times R \to R$ called, respectively, \emph{addition} and \emph{multiplication}, and $0,1$ are distinguished elements of $R$. The triple $(R,+,0)$ is an abelian group, called the \emph{additive group} of $R$, and the triple $(R, \cdot, 1)$ is a monoid, called the \emph{multiplicative monoid}. We require that multiplication distributes over addition:
  \begin{align*}
    r \cdot (s + t) & = r \cdot s + r \cdot t
    & \text{and} &
    & (r + s) \cdot t & = r \cdot t + s \cdot t.
  \end{align*}
  By convention, every ring in this work is \emph{unital} (multiplication forms a monoid, as above, and not just a semigroup). Often, we write multiplication without the dot. Lastly, whenever this does not create ambiguity, we abuse notation and refer to the whole ring as $R$.
\end{definition}

\begin{definition}[ring homomorphism]
  A function between rings that preserves the ring structure is a \emph{ring homomorphism}.
\end{definition}

\begin{notation}
  As is usual, a homomorphism that is injective is also called a \emph{monomorphism} (denoted $R \monoto S$), surjective an \emph{epimorphism} ($R \epito S$), and bijective an \emph{isomorphism} ($R \isoto S$). If there is an isomorphism, the rings are \emph{isomorphic} ($R \cong S$). This terminology and notation will be used also for homomorphisms between other kinds of objects.
\end{notation}

Now a few relevant examples:
\begin{example}[integers]
  The set of integers $\Z$ forms a ring with the usual addition and multiplication. This ring is \emph{commutative}, which means its multiplicative monoid is commutative.
\end{example}

\begin{example}[endomorphism rings]
  \label{ex:endomorphism-ring}
  A ring may not necessarily contain numbers. For an abelian group $(A, \boxplus, \hat 0)$, denote $\End(A)$ the set of its \emph{endomorphisms} $f : A \to A$, that is homomorphisms from $A$ to itself. This set forms a ring, where we define the ring addition $+$ pointwise using the addition of the group, and ring multiplication as the composition of functions:
  \begin{align*}
    (f + g)(a) & \deq f(a) \boxplus g(a)
    & \text{and} &
    & (f \cdot g)(a) & \deq (f \circ g)(a) = f(g(a))
  \end{align*}
  for $f,g \in \End(A)$ and $a \in A$.
  The neutral element for multiplication is the identity function $\id_A : A \to A$ which maps $a \mapsto a$ for all $a \in A$, and zero, the neural element for addition, is $0 : A \to A$ that maps $a \mapsto \hat 0$. Note that the endomorphism ring is, in general, not commutative.
\end{example}

\begin{caution}
  \begin{convention}
    \label{con:all-rings-commutative}
    Most of our rings will be commutative. In fact, \textbf{from now on, unless otherwise stated, every ring will be commutative}.
  \end{convention}
\end{caution}

\subsection{Quotients}

Throughout the work, we will need quotient rings, especially $\Z/d\Z$. We define these below:
\begin{definition}[ideal]
  Let $R$ be a ring. A subset $I \subseteq R$ is an \emph{ideal} (denoted $I \idealeq R$) if it is a subgroup of the additive group of $R$ and it absorbs multiplication by all of $R$, which means that for any $r \in R$ and $i \in I$, the product $ri \in I$.
\end{definition}

\begin{definition}[quotient]
  Let $R$ be a ring and $I \idealeq R$ an ideal. Define an \emph{equivalence relation} $\sim\ \subseteq R \times R$ such that $r \sim s$ if and only if their difference $r - s$ is in $I$, for all $r,s \in R$. Then the equivalence classes, also called \emph{cosets}, are $[r] = r + I \deq \{ r + i : i \in I \}$ for $r \in R$, and we denote the set of equivalence classes $R/I$. The set $R/I$ is a ring with the following operations:
  \begin{itemize}
  \item addition: $[r] + [s] \deq [r + s] = r + s + I$, with the neutral element $[0] = I$, and
  \item multiplication: $[r][s] \deq [rs] = rs + I$, with the neutral element $[1] = 1 + I$,
  \end{itemize}
  for $r,s,0,1 \in R$.
  The ring $R/I$ is called the \emph{quotient ring} of $R$ by $I$, and we have a canonical epimorphism $R \epito R/I$ that maps $r \mapsto [r]$ for all \mbox{$r \in R$}.
\end{definition}

\begin{example}[modular arithmetic]
  \label{ex:modular-arithmetic}
  In our work, the most important rings are $\Z$ and its quotients by ideals. Take an integer $n \in \Z$; then we have the ideal $n\Z = \{ nm : m\in\Z \}$ containing all integer multiples of $n$. The quotient ring consists of \emph{residue classes} $[k] = k + n\Z = \{ n m + k : m \in \Z \}$ for $k \in \Z$.
  For $k, \ell \in \Z$, the sum is $[k] + [\ell] = [k + \ell \mod n]$, and product is $[k][\ell] = [k \ell \mod n]$, that is the quotient ring $\Z/n\Z$ is the setting for integer arithmetic modulo $n$. We denote it by the shorthand $\Z_n \deq \Z / n\Z$.
\end{example}

This has a few important special cases:
If $n = 0$, then $0\Z = \{ 0 \}$ and all elements $m \in \Z$ have their own singleton equivalence class $[m] = \{ m \}$. The quotient $\Z_0$ is isomorphic to $\Z$.

In the opposite extreme, if $n = 1$, then $1\Z = \Z$ which contains the difference of any $m,p \in \Z$. Thus we have a single equivalence class, and $\Z_1 \cong \bm 0$, the \emph{zero ring}.

If $n$ is prime, then $\Z_n$ contains all multiplicative inverses of non-zero elements, and is thus a \emph{finite field}. However, if $n$ is not prime, this no longer holds, which leads us to the following section.

\subsection{Division}
\label{sec:division}

The existence of multiplicative inverses of elements of a ring $R$, that is \emph{division}, is an important feature of the ring. Special interest belongs to the division of the zero element:
\begin{definition}[zero-divisor]
  Let $R$ be a ring, and $r \in R$. If there exists $s \in R$, $s \ne 0$ such that $rs = 0$, then we call $r$ a \emph{zero-divisor}. If $r$ is also non-zero, then it is a \emph{non-zero zero-divisor}.
\end{definition}

\begin{example}
  \label{ex:Z6-zero-divisors}
  Take $\Z_6 = \Z/6\Z$, and observe that $[2] \cdot [3] = [6 \mod 6] = [0]$, i.e. we can multiply non-zero elements to get zero. The elements $[2]$ and $[3]$ are both zero-divisors.
\end{example}
The division of zero is important enough that we define classes of rings based on whether this is possible:
\begin{definition}[PID]
  \label{def:PID}
  A non-zero ring $R$ that does not contain any non-zero zero-divisors, and where every ideal is generated by a single element, is called a \emph{principal integral domain}.
\end{definition}

Notably, the ring $\Z$ is a PID, and as such, many familiar properties hold here. However, as shown in \Cref{ex:Z6-zero-divisors}, $\Z_d$ with non-prime $d$ has zero-divisors, and hence is not a domain. We care about this distinction because much of the previous work on homological quantum error correction was focused on working with qudits of prime dimension $p$, which are related to the fields $\Z_p$, or on working with $\Z$, which is a PID. We want to extend this work to $\Z_d$, where $d > 1$ is arbitrary.

\section{Modules}
\label{sec:modules}

We now focus our attention to spaces that are linear with respect to a ring. This generalizes the notion of a vector space.

\begin{definition}[module]
  Let $R$ be a ring, and let $M$ be an abelian group. Define a ring action of $R$ on $M$ as the ring homomorphism $\mu : R \to \End M$, where $\End M$ is the endomorphism ring of $M$ (see \cref{ex:endomorphism-ring}). We write this action as left multiplication $r \cdot x = rx \deq \mu(r)(x)$. The structure $(M,\mu)$ is called an \emph{$R$-module}.\footnote{With multiplication on the left, it is a \emph{left $R$-module}. The other possibility would be a \emph{right $R$-module}. All our modules will be over commutative rings (\Cref{con:all-rings-commutative}), in which case the left vs. right distinction does not matter, and we omit it.}
\end{definition}
We often leave the action implicit, use multiplicative notation, and refer to $M$ as the module.

\begin{definition}[module homomorphism]
  Let $R$ be a ring, and let $(M,\mu)$\hiddenparbreak
  \diagramonright{0.8}{0.2}{
    \vspace{1ex}
    \noindent
    and $(N,\nu)$ be $R$-modules. A \emph{homomorphism of $R$-modules}, also called an \emph{$R$-linear map}, is a group homomorphism $f : M \to N$ that is also compatible with the action of $R$, meaning it makes the diagram on the right commute. This can be also written as\hiddenparbreak
    \vspace{1.1ex}
  }{
    \centering
    \hspace{1ex}
    \begin{tikzcd}[ampersand replacement=\&]
      M \& M \\
      N \& N
      \arrow["{\mu(r)}", from=1-1, to=1-2]
      \arrow["{\nu(r)}"', from=2-1, to=2-2]
      \arrow["f"', from=1-1, to=2-1]
      \arrow["f", from=1-2, to=2-2]
    \end{tikzcd}
  }
  $f(r x) = r f(x)$ for each $r \in R$ and $x \in M$.
\end{definition}

It is important here that both modules are of the same kind, i.e. both are $R$-modules. The collection of all $R$-modules, together with all $R$-linear maps, forms the category $\RMod R$.

We generalize the following objects from the theory of vector spaces to $R$-modules. They are exactly what one would think.
\begin{definition}[kernel, image]
  Let $M,N$ be $R$-modules, and $f:M \to N$ a~homomorphism of \mbox{$R$-modules}. As usual, denote its \emph{kernel} as $\ker f \deq \{ x \in M : f(x) = 0 \}$, and its image as $\im f \deq \{ f(x) : x \in M \}$.
\end{definition}
\begin{theorem}
  If $f$ is a homomorphism as above, then $\ker f$ is a submodule of $M$, and $\im f$ is a submodule of $N$
\end{theorem}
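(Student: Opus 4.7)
The plan is to verify the submodule axioms for $\ker f$ and $\im f$ separately, using the two defining properties of an $R$-linear map: that $f$ is a homomorphism of the underlying abelian groups, and that $f(rx) = rf(x)$ for every $r \in R$ and $x \in M$. A submodule of an $R$-module is a subset that is a subgroup of the additive group and is closed under the action of $R$, so for each candidate I would check (i) it contains $0$, (ii) it is closed under addition, (iii) it is closed under additive inverses, and (iv) it is closed under multiplication by any $r \in R$.

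First I would handle $\ker f$. Closure under the abelian group structure is the standard check for group homomorphisms: $f(0_M) = 0_N$ so $0_M \in \ker f$; if $x,y \in \ker f$ then $f(x+y) = f(x) + f(y) = 0 + 0 = 0$; and $f(-x) = -f(x) = 0$. The only new ingredient is closure under the $R$-action, which follows directly from $R$-linearity: if $x \in \ker f$ and $r \in R$, then $f(rx) = r f(x) = r \cdot 0 = 0$, where the last equality uses that $\mu(r) \in \End N$ is a group homomorphism and hence sends $0$ to $0$. Hence $rx \in \ker f$, and $\ker f$ is a submodule of $M$.

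Next I would handle $\im f$. Again, the abelian subgroup checks are routine: $0_N = f(0_M) \in \im f$; given $y_1 = f(x_1)$ and $y_2 = f(x_2)$ in $\im f$, we have $y_1 + y_2 = f(x_1 + x_2) \in \im f$; and $-y_1 = f(-x_1) \in \im f$. For closure under the $R$-action, take $y = f(x) \in \im f$ and $r \in R$; by $R$-linearity, $ry = rf(x) = f(rx)$, and $rx \in M$, so $ry \in \im f$. Therefore $\im f$ is a submodule of $N$.

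The argument is entirely routine and there is no real obstacle; the whole content is that $R$-linearity of $f$ promotes the standard group-theoretic statements about kernels and images to statements about the $R$-action. I would present the proof as two short paragraphs, one per claim, highlighting only the step where the homomorphism $\mu(r) \colon N \to N$ (respectively the identity $f(rx) = rf(x)$) is invoked, so that the role of the module structure over the ring structure is made explicit.
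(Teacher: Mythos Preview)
Your proof is correct and entirely standard. The paper does not actually prove this statement: it remarks that it is a standard fact, omits the proof, and points the reader to Hatcher for details. So there is nothing to compare against; your direct verification of the submodule axioms (nonemptiness, closure under addition and negation, closure under the $R$-action via $f(rx)=rf(x)$) is exactly the routine argument one would expect and fills in what the paper deliberately left out.
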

\noindent
This is a standard fact, so we omit the proof. See \cite{HatcherAllen2002At} for details.

\begin{definition}[submodule]
  Let $M$ be an $R$-module. A submodule $N$ of $M$, written as $N \le M$, is a subgroup of the additive group of $M$ that is closed under the action of $R$, that is $R \cdot N = \{ rn : r \in R, n \in N \} \subseteq N$.
\end{definition}

\subsection{Examples of Modules}

Here, we present a selection of examples of modules. In particular, the first two will be important later.

\begin{example}[ring]
  \label{ex:ring-as-module-over-itself}
  Any ring $R$ can be seen as a  module over itself, where the action is just the ring multiplication: $\mu(r)(s) = rs$ for $r,s \in R$.
  Observe that if $I$ is a submodule of $R$ as module over itself, then $I$ is an ideal of the ring $R$.
\end{example}

\begin{example}[abelian group]
  Let $(A, +, 0)$ be an abelian group. Define the multiplication by $n \in \Z$ as
  \[ n \cdot a \deq \underbrace{a + a + \cdots + a}_{n\ \text{times}}\]
  for all $a \in A$. This makes $A$ a $\Z$-module. Every abelian group is a $\Z$-module this way.
  Conversely, any $\Z$-module is already an abelian group, we just forget the multiplication. This means we can reason about abelian groups and $\Z$-modules interchangeably.
\end{example}

As previously mentioned, modules are meant to generalize vector spaces, so it is nice to see that a vector space is, indeed, a module:
\begin{example}[vector space]
  \label{ex:K-module-is-vector-space}
  Recall that a field $\mathbb{K}$ is a PID where division by all non-zero elements is defined. A $\mathbb{K}$-module is, by definition, a \emph{vector space}. The action in this case is exactly \emph{scalar multiplication}. A submodule is now a vector subspace.
\end{example}

\subsection{Free and Finitely Generated Modules}
\label{sec:free-and-finitely-generated-modules}

Modules over a ring generalize vector spaces (see \Cref{ex:K-module-is-vector-space}), in that they are linear spaces, except over a ring, not necessarily a field. Many familiar properties of vector spaces no longer hold in this setting, depending on how far the ring is from a field. In this section, we define a class of modules that are like vector spaces, in that they are spanned by a linearly independent basis.

\begin{definition}[generating set]
  \label{def:generating-set}
  Let $M$ be an $R$-module, and $S \subseteq M$ be a subset of $M$. Denote $\genR R S$ the \emph{submodule generated by $S$}, defined by
  \[
    \genR RS \deq \left\{
      \sum_{x_i \in S} r_i x_i : r_i \in R
    \right\}
    = \sum_{x_i \in S} R x_i,
  \]
  i.e. $\genR RS$ contains $R$-linear combinations of elements of $S$. If $\genR RS = M$, we say $M$ is \emph{generated} by $S$, and we call $S$ the \emph{generating set}.
  If $S$ is a smallest such set, then we call its cardinality the \emph{rank} of $M$, denoted $\rk M$.

  If there exists a finite $S$ such that $M = \genR RS$ (so $\rk M$ is finite) then we say that $M$ is \emph{finitely generated}.
\end{definition}

Note that so far, there is no mention of linear independence. In particular, $M$ generates itself: $M = \genR R M$.

\begin{definition}[free module]
  \label{def:free-module-base}
  An $R$-module $M$ is \emph{free} if there exists a set $S = \{ e_i \}_i$ of elements of $M$, such that $M = \genR R S$ and all its elements are linearly independent: for $r_i \in R$, if
  \[ \sum_{i} r_i e_i = 0, \]
  then each $r_i = 0$. The set $S$ is called a \emph{basis} of $M$.\footnote{Module theory literature prefers the term \emph{base} for modules, and \emph{basis} only for vector spaces.\cite{JacobsonNathan2009Ba} There are the same concept, and we call them the same name for clarity.}
\end{definition}

A free module is the most vector space-like. The rank of a free module is analogous to the dimension of a vector space. In fact if $\mathbb{K}$ is a field, then a $\mathbb{K}$-module is a vector space and the rank and dimension are the same thing.

There is a useful equivalent characterization using direct sums. This is a generalization of the fact that a vector space $V$ over a field $\mathbb{K}$ of dimension $n$ is a direct sum of $n$ copies of $\mathbb{K}$.

\begin{definition}[finitary direct sum of modules]
  Let $R$ be a ring, and for $n \in \N$, let $M_1, \dots, M_n$ be $R$-modules. Recall that the Cartesian product of these is
  \[
    \prod_{i=1}^n M_i \deq \{ (x_1, \dots, x_n) : x_j \in M_j \; \forall 1 \le j \le n \},
  \]
  i.e. all ordered tuples, where the $j^{\mathrm{th}}$ entry is from module $M_j$. Define the addition and $R$-action on this component-wise:
  \begin{align*}
    (x_1, \dots, x_n) + (y_1, \dots, y_n) & \deq (x_1 + y_1, \dots, x_n + y_n),
    & \text{and} &
    & r(x_1, \dots, x_n) & \deq (rx_1, \dots, rx_n),
  \end{align*}
  where $(x_1, \dots, x_n)$ and $(y_1, \dots, y_n) \in \prod_{i=1}^nM_i$, and $r \in R$. The zero element is $(0, \dots, 0)$, an $n$-tuple of all zeros. The module defined this way is called the \emph{direct sum} of $M_1, \dots, M_n$, denoted $\bigoplus_{i=1}^n M_i$.
\end{definition}
Note that the above definition explicitly requires a finite number of modules. This can be generalized to a direct sum any family of modules: then the elements of the direct sum are further required to have only finitely many non-zero components. In our work, all direct sums will be finitary, so we do not worry about this further.

\begin{theorem}[free module]
  An $R$-module is free if and only if it is isomorphic to $\bigoplus_{i=1}^nR$, a direct sum of $n$ copies of $R$, denoted $R^{\oplus n}$, for some $n \in \N$. Here, we see $R$ as a module over itself (see \Cref{ex:ring-as-module-over-itself}).
\end{theorem}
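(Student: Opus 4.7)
The plan is to prove both directions by translating between a basis of $M$ and the standard basis of $R^{\oplus n}$. The isomorphism in one direction is induced by sending the basis vectors to the tuples, and in the other by pushing the standard basis of $R^{\oplus n}$ forward through the given isomorphism. Under our Definition~\ref{def:free-module-base} the basis is a set, but the theorem restricts to finite $n$, so we implicitly take $S = \{e_1, \dots, e_n\}$ to be finite.

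For the forward direction, suppose $M$ is free with basis $\{e_1, \dots, e_n\}$. Define
\[
  \varphi : R^{\oplus n} \to M, \qquad (r_1, \dots, r_n) \longmapsto \sum_{i=1}^n r_i e_i.
\]
The map $\varphi$ is $R$-linear because the addition and $R$-action on $R^{\oplus n}$ are both componentwise, and $M$ is itself an $R$-module. Surjectivity is exactly the statement that $\{e_i\}$ generates $M$: every $x \in M$ can be written as some $R$-linear combination $\sum_i r_i e_i$, which equals $\varphi(r_1, \dots, r_n)$. Injectivity is exactly the linear independence clause of \Cref{def:free-module-base}: if $\varphi(r_1, \dots, r_n) = 0$, then $\sum_i r_i e_i = 0$ in $M$, and the definition of a basis forces each $r_i = 0$. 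Hence $\varphi$ is an isomorphism of $R$-modules.

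For the reverse direction, suppose $\psi : R^{\oplus n} \isoto M$ is an isomorphism. Let $\epsilon_i \in R^{\oplus n}$ be the tuple with $1$ in position $i$ and $0$ elsewhere, for $1 \le i \le n$. The set $\{\epsilon_i\}_{i=1}^n$ is a basis of $R^{\oplus n}$: every $(r_1, \dots, r_n) = \sum_i r_i \epsilon_i$, so it generates, and $\sum_i r_i \epsilon_i = (r_1, \dots, r_n) = 0$ forces each $r_i = 0$, giving linear independence. Now let $e_i \deq \psi(\epsilon_i) \in M$. The set $\{e_i\}_{i=1}^n$ generates $M$ because for any $x \in M$ we can write $x = \psi(r_1, \dots, r_n) = \sum_i r_i \psi(\epsilon_i) = \sum_i r_i e_i$ using surjectivity and $R$-linearity of $\psi$. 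Linear independence transfers through the injectivity of $\psi$: if $\sum_i r_i e_i = 0$, then $\psi\bigl(\sum_i r_i \epsilon_i\bigr) = 0$, so $\sum_i r_i \epsilon_i = 0$, which forces each $r_i = 0$ by the previous paragraph. Thus $\{e_i\}$ is a basis, and $M$ is free.

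The bulk of the argument is entirely routine once the two sides are set up symmetrically. The only mildly subtle point is the last step of the reverse direction, where one must be careful to invoke the $R$-linearity and injectivity of $\psi$ (not just its surjectivity) to pull linear independence back from $M$ to $R^{\oplus n}$; this is the one place where general modules behave as cleanly as vector spaces precisely because we are comparing against the prototypical free module $R^{\oplus n}$ rather than against a field.
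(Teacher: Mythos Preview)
Your proof is correct and follows the standard textbook argument. The paper itself does not prove this theorem; it simply states it as a standard fact and refers the reader to \cite[\S3.4]{JacobsonNathan2009Ba}, so there is no proof in the paper to compare against. Your observation that the theorem as stated tacitly restricts to finite bases is apt and worth noting, since \Cref{def:free-module-base} allows arbitrary generating sets while the theorem only produces $R^{\oplus n}$ for $n \in \N$.
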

This is a standard fact, and a proof can be found in \cite[\S3.4]{JacobsonNathan2009Ba}.

\begin{notation}
  When we want to specify a generating set that is also a basis, i.e. we have $S = \{ e_1, \dots, e_n \}$ and $M = \bigoplus_{i} R e_i$, we write it as $\genR R {e_1, \dots, e_n}^\oplus$, where we mark it with the $\oplus$ symbol to indicate that the generating elements are linearly independent.
\end{notation}

\subsection{Matrices}

Throughout the work, we will need to perform explicit computation using modules and homomorphisms between them. In the context of finite-dimensional vector spaces, we can encode their homomorphisms using matrices. This works exactly because they have bases, and it generalizes to free modules:

\begin{definition}
  Let $M,N$ be free and finitely-generated $R$-modules. Suppose $M$ has rank $m$ and basis $\{ m_j \}_{j=1}^m$, and suppose $N$ has rank $n$ and basis $\{ n_i \}_{i=1}^n$. If $f : M \to N$ is an $R$-linear map, we can write $f$ using its action on the basis element $m_j$ as follows:
  \[ f(m_j) \deq \sum_{i=1}^n f_{i,j} n_i, \]
  where $f_{i,j} \in R$, and we linearly extend this to all of $M$.
  Having a choice of bases, we can identify $f$ with the \emph{matrix} $(f_{i,j})_{i,j} \in R^{n \times m}$ that represents it.
\end{definition}

In particular, we will often need to compute kernels and images of homomorphisms of $\Z$-modules in order to compute homologies of chain complexes; see \Cref{sec:chain-complex-abstract}. This can be done by eyeballing the morphism or its representing matrix, but there is also a systematic way to do it.

The algorithm we use is \emph{Gaussian elimination} which transforms a matrix into a \emph{row echelon form} by a sequence of \emph{elementary row operations}. We use this technique to obtain kernels and images of matrices. We assume this to be known; an unfamiliar reader may learn more in ref. \cite{MR1391966}.

Note that we do not necessarily get the \emph{reduced} row echelon form, which is a unique normal form. The reduced part means that all leading coefficients are equal to $1$, and this can in general only be obtained if the ring of scalars contains divisors of all non-zero elements. This is not the case for integers. For example, if a matrix is in a row echelon form and some leading coefficient is $2$, this cannot be reduced, because that would require scalar multiplication by $\frac12$, which is not contained in $\Z$. As we will see later, this is exactly where \emph{torsion} comes in (see \Cref{sec:torsion}).

We have used Sage,\cite{sagemath} a Computer Algebra System, to do many of these computations for us. An example of such computation is in \Cref{cha:jupyter-sage}.\footnote{While looking for a method to wrangle the huge matrices that sometimes came out of cell complexes (see \Cref{sec:chain-complexes-from-cell-complexes}), we have also explored other algorithms. In particular, there is a method to compute kernels and images of matrices over \emph{principal ideal rings}, which include not just $\Z$, but also $\Z_d$; see ref. \cite{Buchmann1996AlgorithmsFL}. However, due to \Cref{thm:structure-theorem-for-qudit-logical-space}, we did not need this in the end.}

\subsection{Quotient}

Similarly to rings, we now construct a \emph{quotient module}. The construction is, basically, a group quotient with induced action.

\begin{definition}[quotient]
  Let $(M, \mu)$ be an $R$-module, and $N \le M$ a submodule. Recall that these are abelian groups, and as such, the subgroup $N$ is normal in $M$.\footnote{A normal subgroup $N \idealeq G$ is one closed under conjugation, i.e. $g N g^{-1} \subseteq N$ for each $g \in G$. If $G$ is abelian, this is true for all subgroups.} Then we have the quotient group $M/N$. Let $x, y \in M$. The elements of $M/N$ are cosets $[x] = x + N$, equivalence classes of the relation defined by $x \sim y$ if $x - y \in N$. The addition is defined as $[x] + [y] = [x + y]$, the zero element is $[0] = N$, and the additive inverse is $-[x] = [-x]$.

  \diagramonright{0.76}{0.24}{
    \vspace{1.2ex}
    We define the action $\overline{\mu} : R \to \End (M/N)$ by requiring that the diagram on the right commutes for every \mbox{$r \in R$}.
    Here, $q : M \to M/N$ is the quotient map.
    Equivalently, this means that $r[x] \deq [rx]$ for every $r \in R$ and $x \in M$. This induced \hiddenparbreak
    \vspace{1.2ex}
  }{
    \centering
    \begin{tikzcd}[ampersand replacement=\&]
      M \& M \\
      {\faktor{M}{N}} \& {\faktor{M}{N}}
      \arrow["{\mu(r)}", from=1-1, to=1-2]
      \arrow["{\overline{\mu}(r)}"', dashed, from=2-1, to=2-2]
      \arrow["q"', two heads, from=1-1, to=2-1]
      \arrow["q", two heads, from=1-2, to=2-2]
    \end{tikzcd}
  }
   action clearly also satisfies everything it needs to, i.e. it is compatible with addition and multiplication of the ring.
  The quotient group $M/N$, together with the induced action $\overline{\mu}$, forms a \emph{quotient module}.
\end{definition}

\subsection{Torsion}
\label{sec:torsion}

With regard to zero-division from \cref{sec:division}, we now define the conceptual opposite to freeness in terms of modules. We use the definition from ref. \cite{LamT.Y2006Eima}, because this is more general than other definitions in the literature which require the ring to be a PID.

\begin{definition}[torsion]
  Let $M$ be an $R$-module. An element $x \in M$ is called a \emph{torsion element} if there exists an $r \in R$ that is neither zero, nor a zero-divisor, such that $rx = 0$. The set of torsion elements is denoted $\Tor_RM$, and it is a submodule of~$M$.
\end{definition}

In the context of modules over a PID, we have the following fundamental result that allows us to split the module into a free part and a torsion part.
\begin{theorem}[Structure Theorem for Finitely Generated Modules over PID]
  \label{thm:structure-thm-PID-mod}
  Let $R$ be a PID, and let $M$ be a nonzero finitely-generated $R$-module. Then there exist nonzero and non-invertible elements $d_1,\dots,d_s \in R$ ($s \in \N$), such that $d_1$ divides $d_2$, that divides $d_3$, etc., and there exists an integer $k \in \N$, such that
  \begin{equation}
    \label{eq:structure-thm-PID-mod}
    M \cong R^{\oplus k} \oplus \bigoplus_{i=1}^s \faktor R {d_i R}.
  \end{equation}
  The generators $d_i$ are unique up to multiplication by an invertible element, and are called \emph{invariant factors} of $M$.
\end{theorem}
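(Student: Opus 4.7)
The plan is to exhibit $M$ as a cokernel of a map between free $R$-modules, then put that map into a diagonal form whose cokernel is visibly of the claimed shape. Since $M$ is finitely generated, we pick generators $x_1,\dots,x_n$ and form the surjection $\pi : R^{\oplus n} \epito M$ sending the $i$-th basis element to $x_i$. By the first isomorphism theorem, $M \cong R^{\oplus n}/K$, where $K = \ker\pi \le R^{\oplus n}$.

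The first key lemma I would prove is that any submodule of a finitely generated free module over a PID is itself free of rank at most $n$. I would prove this by induction on $n$: for $n=1$, a submodule of $R$ (viewed as a module over itself, per \Cref{ex:ring-as-module-over-itself}) is an ideal, which is principal and hence either $\bm 0$ or free of rank $1$. For the inductive step, project $K \le R^{\oplus n}$ onto the last coordinate to get a map $K \to R$ with image an ideal $(d)$, and use the short exact sequence $0 \to K \cap R^{\oplus(n-1)} \to K \to (d) \to 0$, which splits because $(d)$ is free.

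The second, and hardest, step is to bring the inclusion $K \monoto R^{\oplus n}$ into \emph{Smith normal form}: choosing a basis of $K$ (say of rank $m \le n$), the inclusion is represented by an $n \times m$ matrix $A$ over $R$, and the claim is that there exist invertible matrices $U \in R^{n \times n}$ and $V \in R^{m \times m}$ (i.e.\ changes of basis on $R^{\oplus n}$ and on $K$) such that $UAV = \operatorname{diag}(d_1,\dots,d_m)$ with $d_1 \mid d_2 \mid \cdots \mid d_m$. This is a generalization of Gaussian elimination: the obstruction is that division by arbitrary nonzero scalars is not available, so the reduction proceeds by GCDs instead. Concretely, given any two elements $a,b$ in a row or column, Bézout's identity in the PID $R$ gives $x,y \in R$ with $xa+yb = \gcd(a,b)$, which lets us build an invertible $2\times 2$ matrix that sends $(a,b)$ to $(\gcd(a,b),0)$. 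Iterating this reduces $A$ to diagonal form, and repeating on the lower-right submatrix enforces the divisibility chain.

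With the diagonal form in hand, the decomposition is immediate: in the new bases,
\[
  M \;\cong\; R^{\oplus n}/K \;\cong\; \bigoplus_{i=1}^m R/d_i R \;\oplus\; R^{\oplus(n-m)}.
\]
We then clean up: any $d_i$ that is a unit contributes a zero summand and is discarded; any $d_i = 0$ contributes a free summand $R$ and is absorbed into the free part; the remaining $d_i$ are the claimed invariant factors, and the free rank $k$ is obtained as $n-m$ plus the number of vanishing $d_i$'s. For uniqueness, I would argue that $k$ is determined as the rank of the free module $M / \Tor_R M$, and the invariant factors $d_i$ are determined by the ideals $\Ann_R(M/d_{i-1}M) / \cdots$ (equivalently, by Fitting ideals or localization at each prime of $R$); I expect this uniqueness part to be routine once the existence is established. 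The main obstacle is clearly the Smith normal form step, which crucially uses the PID hypothesis via Bézout — this is precisely the property that fails for a general $\Z_d$ with composite $d$, foreshadowing why the later \Cref{thm:structure-theorem-for-qudit-logical-space} needs the homological CSS structure to recover an analogous decomposition.
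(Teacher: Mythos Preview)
The paper does not actually prove this theorem: it explicitly states ``We do not prove this theorem, because it is standard and can be looked up for example in ref.\ \cite{JacobsonNathan2009Ba}.'' So there is no paper proof to compare against.

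Your sketch is the standard textbook argument (and is essentially what one finds in Jacobson): present $M$ as a cokernel of a map between finitely generated free modules, reduce the matrix of that map to Smith normal form using the B\'ezout property of a PID, and read off the decomposition. The supporting lemma that submodules of $R^{\oplus n}$ are free of rank $\le n$ is correctly identified and your inductive proof is the usual one. Your cleanup step (discarding unit $d_i$'s, absorbing zero $d_i$'s into the free part) is right, and your gesture at uniqueness via the torsion submodule and Fitting ideals is on the right track, though in a full write-up that part deserves more care than ``routine''. Your closing remark about why this fails over $\Z_d$ with composite $d$ is exactly the point the paper is building toward.
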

We do not prove this theorem, because it is standard and can be looked up for example in ref. \cite{JacobsonNathan2009Ba}. It can be summarized also as
\begin{equation}
  \label{eq:structure-thm-PID-mod-tor}
  M \cong R^{\oplus k} \oplus \Tor_RM,
\end{equation}
because the summands $R/d_iR$ are torsion $R$-modules. This decomposition will be important to us later. It will tell us what the logical objects are in an error correcting code if we include torsion.\cite{vuillot2023homological} More on this in \Cref{chap:torsion}.

\section{Tensor Product of Modules}
\label{sec:tens-prod-modul}

In this section, we introduce the concept of \emph{tensor product}. This is a way to combine two objects into one in a way that carries a lot of coherence. In particular, this is different than, for example, a direct sum. Here, we define the tensor product for modules, but we will meet related tensor products for other kinds of objects as we go along, and the relationships between these will be important to us.\footnote{To foreshadow slightly, this will lead us to products of certain topological complexes, their associated chain complexes of modules, and it will be interesting to see how homology behaves with respect to this. See \Cref{sec:tensor-product-codes}} This is why we dedicate it its own section.
We follow the development in \cite[\S 3.7]{JacobsonNathan2009Bb}, but we simplify the definition to suit our purposes.\footnote{The tensor product is conventionally defined as the \emph{universal balanced product}. We absorb the definition of  balanced product into the definition of tensor product.}

\begin{definition}[tensor product]
  Let $R$ be a ring, and let $M, N$ be $R$-modules. Define the \emph{tensor product} of $M$ and $N$ over $R$ as the abelian group $M \otimes_R N$ together with a biadditive map $-\otimes- : M \times N \to M \otimes_R N$. The biadditivity means that $\otimes$ acts on each component as a homomorphism of the additive group:
  \begin{align}
    \label{eq:balanced-product-biadditive}
    (x + x') \otimes y & = (x \otimes y) + (x' \otimes y)
    & \text{and} &
    & x \otimes (y + y') & = (x \otimes y) + (x \otimes y')
  \end{align}
  for $x,x' \in M$ and $y, y' \in N$. Furthermore, we require that $\otimes$ is compatible with the action of $R$:
  \begin{equation}
    \label{eq:balanced-product-ring-action}
    (rx) \otimes y = x \otimes (ry)
  \end{equation}
  for $x \in M$, $y \in N$ and $r \in R$. Note that these conditions imply the following:
  \begin{align*}
    0 \otimes y & = 0 = x \otimes 0
    & \text{and} &
    & (-x) \otimes y & = -(x \otimes y) = x \otimes (-y).
  \end{align*}

  \diagramonright{0.7}{0.3}{
    \vspace{0.5ex}
    We require that the tensor product is \emph{universal}: this means that if there is another object $P$ together with a biadditive map $f$ that satisfies all of the above, then there exists a unique homomorphism of abelian groups\hiddenparbreak
    \vspace{1.2ex}
  }{
    \centering
    \begin{tikzcd}[ampersand replacement=\&]
      {M \times N} \& {M \otimes_R N} \\
      \& P
      \arrow["\otimes", from=1-1, to=1-2]
      \arrow["f"', from=1-1, to=2-2]
      \arrow["{\exists !}", dashed, from=1-2, to=2-2]
    \end{tikzcd}
  }
  \noindent
  $\eta : M \otimes_R N \to P$ that maps $x \otimes y \mapsto f(x,y)$ for $x \in M$ and $y \in N$, such that the diagram above right commutes.
\end{definition}

We explicitly construct $M \otimes_R N$ and describe its elements. Let \mbox{$F \deq \genR \Z {M \times N}^\oplus$} be the free abelian group with basis $M \times N$ (see \cref{sec:free-and-finitely-generated-modules}). Let $G \le F$ be the subgroup generated by
\begin{align*}
  (x + x', y) & - (x, y) - (x', y),
  & (x, y+y') & - (x,y) - (x,y'),
  & \text{and} &
  & (rx, y) & - (x, ry)
\end{align*}
for all $x, x' \in M$, $y, y' \in N$, and $r \in R$. These generators correspond to equations \eqref{eq:balanced-product-biadditive} and  \eqref{eq:balanced-product-ring-action}. Then we define $M \otimes_R N \deq F/G$, and the map $\otimes : M \times N \to M \otimes_R N$ as $x \otimes y \deq (x,y) + G$ for $x \in M$ and $y \in N$. This construction is indeed a tensor product; the proof is standard and we omit it here, but a reader may look it up in \cite[\S3.7]{JacobsonNathan2009Bb}.

Observe that the universal property makes the tensor product unique up to a unique isomorphism, and hence any tensor product is isomorphic to the above. From now on, we call $M \otimes_R N$ \emph{the} tensor product of modules. We now construct the corresponding tensor product for morphisms:

\begin{definition}[tensor product of $R$-linear maps]
  Let $R$ be a ring, $M,M', N,N'$\hiddenparbreak
  \diagramonright{0.65}{0.35}{
    \vspace{1ex}
    \noindent
    be $R$-modules, and let $f: M \to M'$ and $N \to N'$ be $R$-linear maps. We define the \emph{tensor product} of $f$ and $g$ as the unique map $f \otimes g$ that makes the diagram on the right commute. It maps $x \otimes y \mapsto f(x) \otimes g(y)$ for $x \in M$\hiddenparbreak
    \vspace{1ex}
  }{
    \centering
    \begin{tikzcd}[ampersand replacement=\&]
      {M \times N} \& {M \otimes_R N} \\
      {M' \times N'} \& {M' \otimes_R N'}
      \arrow["\otimes", from=1-1, to=1-2]
      \arrow["{f \otimes g}", dashed, from=1-2, to=2-2]
      \arrow["\otimes"', from=2-1, to=2-2]
      \arrow["{f \times g}"', from=1-1, to=2-1]
    \end{tikzcd}
  }
  and $y \in N$.
\end{definition}

We do not go into much explicit detail, but the map $f \otimes g$ is indeed unique, and it satisfies all the nice properties we might want. We summarize it compactly by noting that the tensor product is a bifunctor \[\otimes : \RMod R \times \RMod R \to \mathbf{Ab} = \RMod \Z. \]
Notice that its results are, so far, abelian groups and their homomorphisms. However, we want a module over the original ring $R$, and we obtain this as follows:
\begin{definition}[tensor product module]
  \label{def:tensor-product-module}
  Let $R$ be a ring, and let $M,N$ be $R$-modules. We make $M \otimes_R N$ an $R$-module by defining the action $R \to \End(M \otimes_R N)$ such that $r(x \otimes y) \deq (rx) \otimes y$.
\end{definition}

This is clearly compatible with the other structure. In particular, recall $(rx) \otimes y = x \otimes (ry)$, and so the action as defined above can be equivalently written $r(x \otimes y) = x \otimes (ry)$, as we would expect.

\subsection{Properties of Tensor Product Modules}

Here, we list some basic properties of the tensor product $R$-module. These are standard, and we do not prove them. We only list the ones that we need at some point. In the following, let $M,N,P$ be $R$-modules.

The tensor product behaves like a \emph{monoid}: it is \emph{associative}, and has a \emph{unit} which is the ring $R$ as module over itself:
\begin{align}
  \label{eq:tensor-monoid}
  M \otimes_R (N \otimes_R P) & \cong (M \otimes_R N) \otimes_R P,
  & \text{and} &
  & M \otimes_R R & \cong M \cong R \otimes_R M.
\end{align}

In $\RMod R$, we have a zero object, the module $\bm 0 = \{ 0 \}$. The tensor product of this with any other module is again zero:
\begin{equation}
  \label{eq:tensor-zero}
  \bm 0 \otimes_R M = \bm 0.
\end{equation}

The tensor product is compatible with the direct sum, in a way that mimics the distributivity of multiplication over addition of numbers. If $\{N_i\}_i$ is a family of $R$-modules, then
\begin{equation}
  \label{eq:tensor-distributes}
  M \otimes_R \bigoplus_i N_i \cong \bigoplus_i M \otimes_R N_i.
\end{equation}

In the context of integer rings as modules over themselves, we have the following equality. Let $a,b \in \N$, then
\begin{equation}
  \label{eq:tensor-Za-Zb-gcd}
  \Z_a \otimes \Z_b \cong \Z_{\gcd(a,b)},
\end{equation}
where $\gcd(a,b)$ is the greatest common divisor of $a$ and $b$.

\subsection{Extension of Scalars}

In \Cref{def:tensor-product-module}, we make the tensor product into a functor $\RMod R \times \RMod R \to \RMod R$. However, the $R$-module structure was added on top of the tensor product by defining a new action of $R$. A similar technique, but using a different ring, can be used to change the ring of scalars of a module. This will be an important step when describing quantum error correcting codes acting on qudit systems in \Cref{chap:torsion}. More details can be found in \cite{MR2286236}.

\begin{definition}[extension of scalars]
  \label{def:extension-of-scalars}
  Let $R, S$ be rings, let $f : R \to S$ be a ring homomorphism, and let $M$ be an $R$-module. Define an action $\sigma : R \to \End S$, such that $\sigma(r)(s) \deq f(r) \cdot_S s$, for $r \in R$ and $s \in S$, with the multiplication done in $S$. This makes $(S, \sigma)$ into an $R$-module, and hence there is a tensor product $S \otimes_R M$.

  The ring $S$ is also canonically an $S$-module, with action being the ring multiplication (see \Cref{ex:ring-as-module-over-itself}). We use this to define the action $\mu_S : S\to \End (S \otimes_R M)$ where $\mu_S(s)(s' \otimes x) \deq (s s') \otimes x$ for $s, s' \in S$ and $x \in M$. This makes $S \otimes_R M$ an $S$-module.

  The procedure we described is a way to turn an $R$-module into an $S$-module, and it is called the \emph{extension of scalars of $M$ by $S$ along $f$}.
\end{definition}

\begin{note}
  \label{note:scalar-extension-is-bimodule}
  Note that the scalar-extended module $S \otimes_R M$ can still be viewed as an $R$ module. Thus $S \otimes_R M$ has actions of two possibly different rings $R$ and $S$, these are compatible, and as such $S \otimes_R M$ is a \emph{bimodule}.
\end{note}

Since the tensor product is a functor, it follows that the extension of scalars is also a functor. Concretely, a ring homomorphism $f: R \to S$ induces the functor \mbox{$f_! : \RMod R \to \RMod S$}. The action on morphisms is as follows: If $\alpha : M \to N$ is  an $R$-linear map, then $f_!$ maps it to an $S$-linear map $f_!(\alpha) \deq \id_S \otimes \alpha$.

\chapter[Homological Algebra]{Homological Algebra \hspace*{\fill} \begin{tikzpicture}[scale=0.3,baseline=-0.35cm]
  \tikzstyle{vertex}=[fill=black,draw=black,circle,minimum size=2pt,inner sep=0]
  \tikzstyle{kernel}=[draw=none, fill=red, fill opacity=0.5]
  \tikzstyle{image}=[draw=none, fill=white, fill opacity=1]

  \begin{scope}[local bounding box=bbox]
    \node         (E1)  at (-2.3, 0) {};
    \node         (E0)  at (   0, 0) {};
    \node[vertex] (E-1) at ( 2.3, 0) {};

    \draw (E1) ellipse [x radius=0.9, y radius=1.3];
    \node (E-1-0-top) at ($(E1)+(0,1.3)$) {};
    \node (E-1-0-bot) at ($(E1)-(0,1.3)$) {};

    \foreach \i [evaluate={\r=1-0.3*\i; \a=0.9*\r; \b=1.3*\r}] in {0,...,2}{
      \draw (E0) ellipse [x radius=\a, y radius=\b];
      \node (E-0-\i-top) at ($(E0)+(0,\b)$) {};
      \node (E-0-\i-bot) at ($(E0)-(0,\b)$) {};
    }
    \node (E-0-3-top) at (E0) {};
    \node (E-0-3-bot) at (E0) {};

    \foreach \i [evaluate={\r=1-0.3*\i; \a=0.9*\r; \b=1.3*\r}] in {0,2}{
      \draw (E-1) ellipse [x radius=\a, y radius=\b];
      \node (E--1-\i-top) at ($(E-1)+(0,\b)$) {};
      \node (E--1-\i-bot) at ($(E-1)-(0,\b)$) {};
    }
    \node (E--1-3-top) at (E-1) {};
    \node (E--1-3-bot) at (E-1) {};

    \draw[out=-5, in=180-5] (E-1-0-top.center) to (E-0-2-top.center);
    \draw[out=5, in=180+5] (E-1-0-bot.center) to (E-0-2-bot.center);
    \foreach \n [evaluate={\m=int(\n-1)}] in {0}{
      \foreach \i [evaluate={\j=int(\i+2)}] in {0, 1}{
        \draw[out=-5, in=180-5] (E-\n-\i-top.center) to (E-\m-\j-top.center);
        \draw[out=5, in=180+5] (E-\n-\i-bot.center) to (E-\m-\j-bot.center);
      }
    }
    \begin{pgfonlayer}{background}
      \fill[kernel] (E0) ellipse [x radius=0.9*0.7, y radius=1.3*0.7];
      \fill[image] (E0) ellipse [x radius=0.9*0.4, y radius=1.3*0.4];
    \end{pgfonlayer}
  \end{scope}
  \node (bboxnw) at ($(bbox.north west)-(0,1)$) {};
  \node (bboxse) at ($(bbox.south east)+(0,1)$) {};
  \useasboundingbox (bboxnw.center) rectangle (bboxse.center);
\end{tikzpicture}}
\label{chap:homol-algebra}

In this chapter, we introduce the basic notions of Homological Algebra. This is an abstract area that originated in Algebraic Topology. However, we first present it abstractly, only using linear algebra and the notions from \Cref{chap:rings-and-modules}, without any topology involved. We follow the books \cite{HatcherAllen2002At} and \cite{weibel_1994}.

\section{Chain Complexes and Homology}
\label{sec:chain-complex-abstract}

\begin{definition}[chain complex]
  \label{def:chain-complex}
  Let $R$ be a ring, and let $\{ C_n \}_{n\in\N}$ be a family of $R$-modules, called \emph{chain modules} or \emph{components}. For $C_n$, the label $n$ is called the \emph{degree}. For convenience, we define also $C_{-1} \deq \bm 0$, the zero module. Let $\{ \partial_n : C_n \to C_{n-1} \}_{n \in \N}$ be a family of $R$-module homomorphisms, where $\partial_n$ is called the \emph{$n^{\text{th}}$ differential} or \emph{boundary map}. For each $n \in \N$, we require that $\partial_n \circ \partial_{n+1} = 0$ (written $\partial^2 = 0$ for short), or equivalently that $\im \partial_{n+1} \subseteq \ker \partial_{n}$. Such structure is called a \emph{chain complex of $R$-modules}, and is denoted $C_\bullet$. When necessary, the differentials may be annotated $\partial_n^{C_\bullet}$ to indicate their corresponding chain complex.
\end{definition}

There is a visual intuition for a chain complex. We display a special case where $\im \partial_{n+1} = \ker \partial_n$ (discussed later) at the beginning of \Cref*{part:alg-top-hom-alg} on \cpageref{part:alg-top-hom-alg}, and we show a general picture in \cref{fig:visualization-of-chain-complex}.
\begin{figure}[h]
  \centering
  \begin{tikzpicture}
  \tikzstyle{vertex}=[fill=black,draw=black,circle,minimum size=4pt,inner sep=0]
  \tikzstyle{kernel}=[draw=none, pattern=north west lines, pattern color=red]
  \tikzstyle{image}=[draw=none, pattern=north east lines, pattern color=blue]
  \foreach \n [evaluate={\x=-2.3*\n}] in {2,...,-3}{
    \node[vertex] (E\n) at (\x, 2) {};
    \foreach \i [evaluate={\r=1-0.3*\i; \a=0.9*\r; \b=1.5*\r}] in {0,...,2}{
      \draw[line width=\r] (E\n) ellipse [x radius=\a, y radius=\b];
      \node (E-\n-\i-top) at ($(E\n)+(0,\b)$) {};
      \node (E-\n-\i-bot) at ($(E\n)-(0,\b)$) {};
    }
    \node (E-\n-3-top) at (E\n) {};
    \node (E-\n-3-bot) at (E\n) {};
    \node at (\x, 0) (C\n) {$C_{\ifnum \n=0 n \else \ifnum \n<0 n\n \else n+\n \fi \fi}$};
    \begin{pgfonlayer}{background}
      \fill[kernel] (E\n) ellipse [x radius=0.9*0.7, y radius=1.5*0.7];
      \fill[image] (E\n) ellipse [x radius=0.9*0.4, y radius=1.5*0.4];
    \end{pgfonlayer}
  }
  \foreach \n [evaluate={\m=int(\n-1)}] in {2,...,-2}{
    \draw[->] (C\n) -- node[above] {\small $\partial_{\ifnum \n=0 n \else \ifnum \n<0 n\n \else n+\n \fi \fi}$} (C\m);
    \foreach \i [evaluate={\j=int(\i+2)}] in {0, 1}{
      \draw[out=-5, in=180-5] (E-\n-\i-top.center) to (E-\m-\j-top.center);
      \draw[out=5, in=180+5] (E-\n-\i-bot.center) to (E-\m-\j-bot.center);
    }
  }
\end{tikzpicture}
  \caption[Visualization of a chain complex]{Visualization of a chain complex $C_\bullet$. Chain modules $C_n$ are the big ovals, with the zero element distinguished by a small black circle $\bullet$. The differentials $\partial_n$ go from left to right, and are shown as funnels that take the domain (left oval) to their image, which is a subset of the codomain (right oval). We indicate the kernels of differentials by the red hatched area, and the images by the blue hatched area. Observe the chain complex condition that $\im \partial_{m+1} \subseteq \ker \partial_{m}$ for all $m$.}
  \label{fig:visualization-of-chain-complex}
\end{figure}
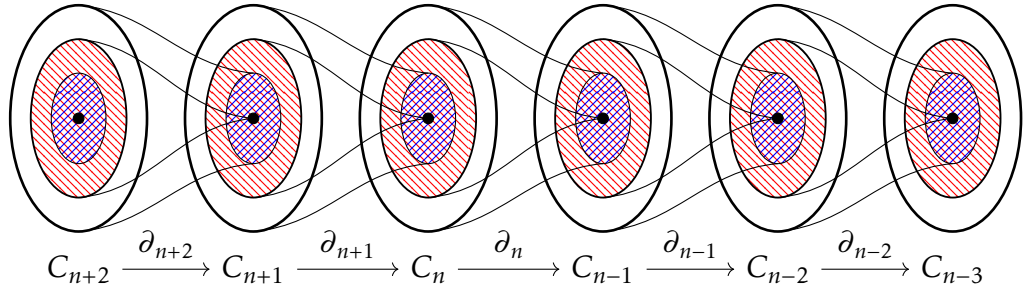

\begin{definition}[length of a chain complex]
  If there exists $n \in \N$, such that for all $m > n$, the chain module $C_m = \bm 0$, then the chain complex is \emph{bounded}. If $n$ is minimal, we call it the \emph{length} of the chain complex.
\end{definition}

There is a corresponding notion of morphisms between chain complexes, called the \emph{chain maps}. We will not need them, so we only briefly mention that these are families of $R$-linear maps, one for each degree, that commute appropriately with the differentials of each complex. Together, these form a category denoted $\Ch{\RMod R}$.

We look more closely at the requirement that $\im \partial_{n+1} \subseteq \ker \partial_n$ for each $n \in \N$. Recall that the images and kernels of $R$-linear maps are modules. This implies a stronger condition, that $\im \partial_{n+1}$ is in fact a submodule $\ker \partial_n$, and this allows us to define the quotient to study their difference.
\begin{definition}[homology]
  \label{def:homology}
  Let $C_\bullet$ be a chain complex of $R$-modules. For $n \in \N$, define the \emph{$n^{\mathrm{th}}$ homology module} as the $R$-module quotient
  \[ H_n(C_\bullet) \deq \faktor{\ker \partial_n}{\im \partial_{n+1}}. \]
  In the literature (e.g. \cite{HatcherAllen2002At}), they are often called the homology \emph{groups}, because often we work over $R = \Z$, however, we want to emphasize that the ring may be arbitrary.
\end{definition}

\begin{notation}[cycles and boundaries]
  \label{nota:cycles-and-boundaries}
  There is some terminology that comes with the above definition. This stems from Algebraic Topology, and we explain what it means in \Cref{sec:homology-of-cell-complex}. For now, these are just names. We call $\ker \partial_n \eqd Z_n(C_\bullet)$ the space of \emph{$n$-cycles}, and $\im \partial_{n+1} \eqd B_n(C_\bullet)$ the space of \emph{$n$-boundaries}. Then $H_1(C_\bullet) = Z_n(C_\bullet) / B_n(C_\bullet)$.
\end{notation}

In the following examples, we very briefly show why we care about this. We will work these out in more detail later.
\begin{example}[classical code]
  \label{ex:classical-code-is-chain-complex}
  Any $R$-linear map, together with its domain and codomain, can be interpreted as a chain complex of length $1$. In particular, in \Cref{sec:error-detection}, we introduce the parity check matrix $P : \Z_2^n \to \Z_2^m$ that defines an error correcting code. This corresponds to the following chain complex of $\Z_2$-modules (i.e. vector spaces over $\Z_2$):
  \[
    \begin{tikzcd}[ampersand replacement=\&, row sep=-4pt]
      \cdots \& {\bm 0} \& {\Z_2^n} \& {\Z_2^m} \& {\bm 0} \\
      \& {C_2} \& {C_1} \& {C_0} \& {C_{-1}}
      \arrow["0", from=1-1, to=1-2]
      \arrow["0", from=1-2, to=1-3]
      \arrow["P", from=1-3, to=1-4]
      \arrow["0", from=1-4, to=1-5]
    \end{tikzcd}
  \]
  Recall that the codespace is $\ker P$, which in this case is isomorphic to $H_1(C_\bullet)$, because $\im \partial_2= \im 0 = \bm 0$. This is to foreshadow that homologies will be very important when analyzing codespaces.
\end{example}

\begin{example}
  \label{ex:css-code-is-chain-complex}
  In \Cref{sec:css-code-from-ortho-classical-codes}, we show how a quantum error correcting code is defined using two parity check matrices $P_X$ and $P_Z$ that satisfy $P_X \circ P_Z^\top = 0$. As mentioned there, this is exactly the chain complex condition, and as such a CSS code corresponds to the following chain complex of length $2$:
  \[
    \begin{tikzcd}[ampersand replacement=\&, row sep=-4pt]
      \cdots \& {\bm 0} \& {\Z_2^{m_Z}} \& {Z_2^n} \& {Z_2^{m_X}} \& {\bm 0} \\
      \& {C_3} \& {C_2} \& {C_1} \& {C_0} \& {C_{-1}}
      \arrow["{P_Z^\top}", from=1-3, to=1-4]
      \arrow["{P_X}", from=1-4, to=1-5]
      \arrow["0", from=1-5, to=1-6]
      \arrow["0", from=1-1, to=1-2]
      \arrow["0", from=1-2, to=1-3]
    \end{tikzcd}
  \]
  We will see in \Cref{sec:logical-space} that the codespace is, in fact, isomorphic to the first homology module~$H_1(C_\bullet)$.
\end{example}

\section{Exact Sequences and Resolutions}

Before we delve into error correction, we need some more tools. These will be required to move from qubits to qudits. Specifically, we need to be able to define the $\Tor$ functors, and this will allow us to compute the codespace of qudit codes.

\begin{definition}[exactness]
  Suppose we have the following sequence of $R$-modules and $R$-linear maps between adjacent modules:
  \[
    \begin{tikzcd}[ampersand replacement=\&]
      \cdots \& {M_{n+1}} \& {M_{n}} \& {M_{n-1}} \& \cdots
      \arrow["{f_{n+2}}", from=1-1, to=1-2]
      \arrow["{f_{n+1}}", from=1-2, to=1-3]
      \arrow["{f_n}", from=1-3, to=1-4]
      \arrow["{f_{n-1}}", from=1-4, to=1-5]
    \end{tikzcd}
  \]
  For some $n$, if $\im f_{n+1} = \ker f_{n}$, then we say the sequence is \emph{exact} at $M_n$. If such a sequence is everywhere exact, we call it an \emph{exact sequence}.
\end{definition}
The above idea is related to that of a chain complex. Note, however, that we do not impose that $\im f_{n+1} \subseteq \ker f_n$ here. If that is the case though, then the sequence can be interpreted as a chain complex $M_\bullet$ with $\partial_n = f_n$, and the exactness at $n$ can be stated as $H_n(M_\bullet) = \bm 0$.

\begin{example}[short exact sequence]
  Ubiquitous in algebra are short sequences of the form
  \[
    \begin{tikzcd}[ampersand replacement=\&]
      {\bm 0} \& A \& B \& C \& {\bm 0.}
      \arrow["0", tail, from=1-1, to=1-2]
      \arrow["f", tail, from=1-2, to=1-3]
      \arrow["g", two heads, from=1-3, to=1-4]
      \arrow["0", two heads, from=1-4, to=1-5]
    \end{tikzcd}
  \]
  The exactness at every object within the sequence implies that $f$ is a monomorphism and $g$ is an epimorphism; we emphasize this notationally, but only here. Such a sequence can be interpreted as $C$ being (isomorphic to) the quotient $B/A$.
\end{example}

\begin{example}[isomorphism]
  An example of an even shorter exact sequence is
  \[
    \begin{tikzcd}[ampersand replacement=\&]
      {\bm 0} \& A \& B \& {\bm 0.}
      \arrow[tail, from=1-1, to=1-2]
      \arrow["f", "\sim"', from=1-2, to=1-3]
      \arrow[two heads, from=1-3, to=1-4]
    \end{tikzcd}
  \]
  The exactness implies that $f$ is an isomorphism.
\end{example}

In Homological Algebra, a special kind of exact sequence is very important to describe modules, and is important for us to define $\Tor$. We will not do so in full generality; instead, we only go as far as we need.
\begin{definition}[free resolution]
  Suppose $M$ is an $R$-module. A \emph{free resolution} of $M$ is a chain complex $F_\bullet$, where each $F_n$ is a free $R$-module, together with an $R$-linear map $\varepsilon : F_0 \to M$ such that the \emph{augmented complex}
  \[
    \begin{tikzcd}[ampersand replacement=\&]
      \cdots \& {F_1} \& {F_0} \& M \& {\bm 0}
      \arrow["{\partial_2^{F_\bullet}}", from=1-1, to=1-2]
      \arrow["{\partial_1^{F_\bullet}}", from=1-2, to=1-3]
      \arrow["\varepsilon", from=1-3, to=1-4]
      \arrow["0", from=1-4, to=1-5]
    \end{tikzcd}
  \]
  is everywhere exact. We write a resolution as $F_\bullet \xrightarrow\varepsilon M$.
\end{definition}

\begin{definition}[Tor]
  Let $A, B$ be $\Z$-modules, and let $F_\bullet \xrightarrow\varepsilon A$ be a free resolution of $A$. We define the \emph{$n^{\text{th}}$ Tor module} of $A$ and $B$ as
  \(
    \Tor_n(A,B) \deq H_n(F_\bullet \otimes B),
  \)
  that is the $n^{\text{th}}$ homology module of the chain complex
  \[
    \begin{tikzcd}[ampersand replacement=\&, column sep=large]
      \cdots \& {F_1 \otimes_\Z B} \& {F_0 \otimes_\Z B} \& {\bm 0.}
      \arrow["{\partial_2^{F_\bullet} \otimes \id_B}", from=1-1, to=1-2]
      \arrow["{\partial_1^{F_\bullet} \otimes \id_B}", from=1-2, to=1-3]
      \arrow["0", from=1-3, to=1-4]
    \end{tikzcd}
  \]
  This is no longer the augmented complex ending with $A$, just the $F_\bullet$ itself with the functor $- \otimes B$ applied to it.
\end{definition}
Note that we use the free resolution $F_\bullet \xrightarrow\varepsilon A$ to define $\Tor_n(A,B)$, but $\Tor$ is actually independent of the choice of resolution.

\begin{lemma}
  \label{lem:Tor-1-over-free}
  Let $A,B$ be $\Z$-modules, and suppose $A$ is free. Then the first torsion module $\Tor_1(A,B) = \bm 0$.
\end{lemma}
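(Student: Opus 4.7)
The plan is to exploit the fact that $\Tor$ does not depend on the chosen free resolution (as noted just before the lemma). The key observation is that if $A$ is itself free, then $A$ admits a particularly simple free resolution: namely, $A$ resolves itself.

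Concretely, I would define the free resolution $F_\bullet \xrightarrow{\varepsilon} A$ by setting $F_0 \deq A$, $F_n \deq \bm 0$ for all $n \ge 1$, taking $\varepsilon \deq \id_A$, and letting all differentials be zero. The augmented complex
\[
  \begin{tikzcd}[ampersand replacement=\&]
    \cdots \& {\bm 0} \& {\bm 0} \& A \& A \& {\bm 0}
    \arrow[from=1-1, to=1-2]
    \arrow[from=1-2, to=1-3]
    \arrow[from=1-3, to=1-4]
    \arrow["{\id_A}", from=1-4, to=1-5]
    \arrow[from=1-5, to=1-6]
  \end{tikzcd}
\]
is exact at every position: exactness at $A$ on the right is $\im \id_A = A = \ker 0$, exactness at $A$ on the left is $\ker \id_A = \bm 0 = \im 0$, and exactness at every zero module is automatic. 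So this is a valid free resolution.

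Applying the functor $- \otimes_\Z B$ and dropping the augmentation, the complex $F_\bullet \otimes B$ becomes
\[
  \begin{tikzcd}[ampersand replacement=\&]
    \cdots \& {\bm 0 \otimes_\Z B} \& {A \otimes_\Z B} \& {\bm 0,}
    \arrow[from=1-1, to=1-2]
    \arrow[from=1-2, to=1-3]
    \arrow[from=1-3, to=1-4]
  \end{tikzcd}
\]
and by \cref{eq:tensor-zero} we have $\bm 0 \otimes_\Z B = \bm 0$. Thus $F_1 \otimes_\Z B = \bm 0$, so $\ker(\partial_1^{F_\bullet} \otimes \id_B) = \bm 0$, and consequently $H_1(F_\bullet \otimes B) = \bm 0 / \bm 0 = \bm 0$. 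By independence of the resolution, $\Tor_1(A,B) = \bm 0$.

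There is no real obstacle here: the whole argument rests on recognising that a free module is its own shortest free resolution, so the complex $F_\bullet$ is already zero in every positive degree, and the functor $- \otimes_\Z B$ preserves this. The only subtlety worth flagging is the appeal to resolution-independence of $\Tor$, which the paper has already stated.
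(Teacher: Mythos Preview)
Your proof is correct and follows essentially the same approach as the paper: both take the trivial length-zero free resolution of $A$ and observe that $F_\bullet \otimes B$ is zero in positive degrees. The only cosmetic difference is that you set $F_0 = A$ with $\varepsilon = \id_A$, whereas the paper writes $F_0 = \Z^{\oplus n}$ with $\varepsilon$ the isomorphism $\Z^{\oplus n} \isoto A$; your version is marginally cleaner since it does not invoke finite generation of $A$.
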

\begin{proof}
  If $A$ is free, then $A \cong \Z^{\oplus n}$ for some $n \in \N$. We write this isomorphism as $\varepsilon : \Z^{\oplus n} \isoto A$. A suitable free resolution of $A$ is the following:
  \[
    \begin{tikzcd}[ampersand replacement=\&]
      \cdots \& {\bm 0} \& {\Z^{\oplus n}} \& A \& {\bm 0.}
      \arrow["0", from=1-1, to=1-2]
      \arrow["\varepsilon", "\sim"', from=1-3, to=1-4]
      \arrow["0", from=1-4, to=1-5]
      \arrow["0", from=1-2, to=1-3]
    \end{tikzcd}
  \]
  The chain complex $F_\bullet \otimes B$ has components $F_n = \bm 0$ for $n > 0$, and $F_0 = \Z^{\oplus n} \otimes_\Z B$. Consequently, its differentials are necessarily all zero. Then the first homology is
  \[ H_1(F_\bullet \otimes B) = \faktor{\ker (\partial_1^{F_\bullet} \otimes \id_B)}{\im (\partial_2^{F_\bullet} \otimes \id_B)} = \faktor{\bm 0}{\bm 0} = \bm 0, \]
  and this is the first Tor module $\Tor_1(A,B)$.
\end{proof}

\section{Change of ring for homology}

In \Cref{cha:cell-compl}, we will use chain complexes of $\Z$-modules to study topological spaces. However, to properly study qudit systems, we need to move to complexes of $\Z_d$-modules, where $d > 1$ is arbitrary. We describe here how to do that.

\begin{definition}[change of ring for chain complex]
  \label{def:change-of-ring-for-chain-complex}
  Let $C_\bullet^R$ be a chain complex of $R$-modules, and let $f : R \to S$ be a ring homomorphism. We use the \emph{extension of scalars} from \Cref{def:extension-of-scalars} to obtain the corresponding chain complex of \mbox{$S$-modules} $C^S_\bullet \deq f_!(C_\bullet^R)$. This is the tensor product $S \otimes_R C_\bullet^R$ with component $S \otimes_R C_n^R$ and differential $\id_S \otimes \partial_n^{C_\bullet^R}$ for degree $n$.
\end{definition}

Changing the ring of a chain complex is relatively straightforward. This induces change of ring for the homology modules:
\begin{definition}[change of ring for homology]
  Let $C_\bullet$ be a chain complex of $R$-modules. Its $n^\text{th}$ homology $R$-module is $H_n(C_\bullet)$. Using the above change of scalars, we define the \emph{$n^{\text{th}}$ homology of $C_\bullet$ over $S$} as the $n^{\text{th}}$ homology evaluated in the complex~$C_\bullet^S$. That is, $H_n^S(C_\bullet) \deq H_n^S(C_\bullet^S)$. Written explicitly, it is
  \[ H_n^S(C_\bullet) = \faktor{\ker (\id_S \otimes \partial_n^{C_\bullet})}{\im (\id_S \otimes \partial_{n+1}^{C_\bullet})}. \]
  We intend this to be an $S$-module; however, recall from \Cref{note:scalar-extension-is-bimodule} that this is also still an $R$-module.
\end{definition}

We now present an important result from Homological Algebra. We do so without proof, because this would, in full generality, require concepts beyond the scope of this work. There are several version of the following theorem; we use the one directly useful to us.
\begin{theorem}[Universal Coefficient Theorem\cite{HatcherAllen2002At}]
  \label{thm:UCT}
  Let $C_\bullet$ be a chain complex of $\Z$-modules, and let $R$ be a ring. Then for each $n \in \N$, there exists a short exact sequence as follows:
  \[
    \begin{tikzcd}[ampersand replacement=\&, column sep=scriptsize]
      {\bm 0} \& {H_{n+1}(C_\bullet) \otimes_\Z R} \& {H_{n+1}^R(C_\bullet)} \& {\Tor_1\big( H_{n}(C_\bullet), R \big)} \& {\bm 0.}
      \arrow[from=1-1, to=1-2]
      \arrow[from=1-2, to=1-3]
      \arrow[from=1-3, to=1-4]
      \arrow[from=1-4, to=1-5]
    \end{tikzcd}
    \qedhere
  \]
\end{theorem}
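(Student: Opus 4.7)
The proof relies on an implicit hypothesis not stated above: that each chain module $C_n$ is a \emph{free} $\Z$-module. This will hold in every context in which we later apply the theorem, since our chain complexes come from cell complexes and hence have free components. Under this hypothesis, and because $\Z$ is a PID, every submodule of a free $\Z$-module is again free; in particular the cycles $Z_n = \ker \partial_n$ and the boundaries $B_n = \im \partial_{n+1}$ are both free $\Z$-modules.

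The plan is to package the defining short exact sequences $\bm 0 \to Z_n \xrightarrow{\iota} C_n \xrightarrow{\partial_n} B_{n-1} \to \bm 0$ into a single short exact sequence of chain complexes
\[
\begin{tikzcd}[ampersand replacement=\&]
{\bm 0} \& {Z_\bullet} \& {C_\bullet} \& {B'_\bullet} \& {\bm 0,}
\arrow[from=1-1, to=1-2]
\arrow["\iota", from=1-2, to=1-3]
\arrow["\partial", from=1-3, to=1-4]
\arrow[from=1-4, to=1-5]
\end{tikzcd}
\]
where $Z_\bullet$ is the complex with $Z_n$ in degree $n$ and all differentials zero, and $B'_\bullet$ is the shifted complex with $B_{n-1}$ in degree $n$ and zero differentials. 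Since each $B_{n-1}$ is free, the SES splits degreewise as $\Z$-modules, so applying $- \otimes_\Z R$ preserves exactness and yields a new SES of chain complexes. Invoke the long exact sequence in homology (from the snake / zig-zag lemma); using that the homology of a complex with zero differentials is the complex itself, and tracing the connecting map to identify it with the inclusion $B_{n-1} \hookrightarrow Z_{n-1}$ tensored with $\id_R$, we obtain
\[
\cdots \to B_n \otimes R \xrightarrow{\iota \otimes \id} Z_n \otimes R \to H_n^R(C_\bullet) \to B_{n-1} \otimes R \xrightarrow{\iota \otimes \id} Z_{n-1} \otimes R \to \cdots.
\]

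From this LES we extract short exact sequences
\[
\bm 0 \to \operatorname{coker}(\iota_n \otimes \id) \to H_n^R(C_\bullet) \to \ker(\iota_{n-1} \otimes \id) \to \bm 0.
\]
Identify the outer terms as follows. By right-exactness of $-\otimes_\Z R$, the cokernel equals $(Z_n / B_n) \otimes R = H_n(C_\bullet) \otimes R$. Since $Z_{n-1}$ and $B_{n-1}$ are free, the sequence $\bm 0 \to B_{n-1} \xrightarrow{\iota} Z_{n-1} \to H_{n-1}(C_\bullet) \to \bm 0$ is a free resolution of $H_{n-1}(C_\bullet)$, so by the definition of $\Tor$ we have $\Tor_1(H_{n-1}(C_\bullet), R) = \ker(\iota \otimes \id_R)$. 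Re-indexing $n \mapsto n+1$ produces exactly the stated sequence.

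The main obstacle is the long exact sequence in homology associated with a short exact sequence of chain complexes (the zig-zag lemma), which is not developed in this text and which does most of the heavy lifting, particularly the construction and identification of the connecting homomorphism. The freeness of the $C_n$ is also crucial and is hidden in the statement: without it, neither the splitting of the SES of complexes before tensoring, nor the use of $\bm 0 \to B_{n-1} \to Z_{n-1} \to H_{n-1}(C_\bullet) \to \bm 0$ as a \emph{free} resolution, would be available, and one would need to resort to general projective (or flat) resolutions.
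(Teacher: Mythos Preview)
The paper does not prove this theorem at all: it explicitly states ``We do so without proof, because this would, in full generality, require concepts beyond the scope of this work.'' Your sketch is the standard argument (essentially the one in the cited reference \cite{HatcherAllen2002At}), and it is correct. Your remarks are well taken: the freeness of the $C_n$ is indeed an unstated hypothesis that holds in all the paper's applications because the complexes arise from cell complexes (\Cref{def:cellular-chain-modules}), and the long exact sequence in homology is precisely the machinery the paper chose to avoid developing.
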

The theorem is usually stated with $R$ being an abelian group ($\Z$-module). Here, we see it as a ring, but recall that every ring is an abelian group. We need the theorem to prove the following lemma for later use.

\begin{lemma}
  \label{lem:H1-change-of-ring}
  Let $C_\bullet$ be a chain complex of $\Z$-modules, and let $R$ be a ring. Suppose that $H_n(C_\bullet)$ is free for some $n \in \N$. Then the $(n+1)^{\text{st}}$ homology over $R$ is
  \[
    H_{n+1}^R (C_\bullet) \cong H_{n+1}(C_\bullet) \otimes_\Z R.
    \qedhere
  \]
\end{lemma}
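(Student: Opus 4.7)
The plan is a direct two-step application of results already collected in the excerpt. First I invoke the Universal Coefficient Theorem (\Cref{thm:UCT}) with the same chain complex $C_\bullet$ and ring $R$, applied at degree $n$: this yields the short exact sequence
\[
\begin{tikzcd}[column sep=scriptsize]
{\bm 0} & {H_{n+1}(C_\bullet) \otimes_\Z R} & {H_{n+1}^R(C_\bullet)} & {\Tor_1\big(H_n(C_\bullet), R\big)} & {\bm 0.}
\arrow[from=1-1, to=1-2]
\arrow[from=1-2, to=1-3]
\arrow[from=1-3, to=1-4]
\arrow[from=1-4, to=1-5]
\end{tikzcd}
\]

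Second, I use the hypothesis that $H_n(C_\bullet)$ is free. By \Cref{lem:Tor-1-over-free}, applied with $A = H_n(C_\bullet)$ and $B = R$ (viewed as a $\Z$-module, since every ring is in particular an abelian group, as noted just below \Cref{thm:UCT}), we get $\Tor_1(H_n(C_\bullet), R) = \bm 0$. Substituting this back, the short exact sequence becomes
\[
\begin{tikzcd}[column sep=small]
{\bm 0} & {H_{n+1}(C_\bullet) \otimes_\Z R} & {H_{n+1}^R(C_\bullet)} & {\bm 0} & {\bm 0.}
\arrow[from=1-1, to=1-2]
\arrow["\varphi", from=1-2, to=1-3]
\arrow[from=1-3, to=1-4]
\arrow[from=1-4, to=1-5]
\end{tikzcd}
\]
Exactness at $H_{n+1}(C_\bullet) \otimes_\Z R$ says $\ker \varphi = \im 0 = \bm 0$, so $\varphi$ is a monomorphism; exactness at $H_{n+1}^R(C_\bullet)$ says $\im \varphi = \ker 0 = H_{n+1}^R(C_\bullet)$, so $\varphi$ is an epimorphism. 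Hence $\varphi$ is an isomorphism, giving the desired $H_{n+1}^R(C_\bullet) \cong H_{n+1}(C_\bullet) \otimes_\Z R$.

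There is no real obstacle here beyond ensuring the hypotheses of the two invoked results line up. The only point worth being careful about is the dual role of $R$: in \Cref{thm:UCT} and in \Cref{lem:Tor-1-over-free}, $R$ appears on the right of the tensor product as a $\Z$-module, which is automatic, while on the left-hand side $H_{n+1}^R(C_\bullet)$ inherits its $R$-module structure from the extension of scalars of \Cref{def:change-of-ring-for-chain-complex}. The isomorphism $\varphi$ constructed from the UCT is a priori only a map of $\Z$-modules, but since the map $C_0 \otimes_\Z R \to C_0^R$ underlying the comparison is the canonical $R$-linear identification and all inductive constructions in the proof of the UCT respect the $R$-action on the right tensor factor, the $\varphi$ above is automatically $R$-linear, so the isomorphism holds in $\RMod R$ as desired.
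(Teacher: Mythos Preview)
Your proof is correct and follows essentially the same route as the paper: invoke \Cref{thm:UCT}, kill the $\Tor_1$ term using freeness of $H_n(C_\bullet)$ via \Cref{lem:Tor-1-over-free}, and read off the isomorphism from the collapsed short exact sequence. Your extra paragraph about $R$-linearity of $\varphi$ is a nice point the paper leaves implicit.
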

\begin{proof}
  By hypothesis, $H_n(C_\bullet)$ is a free $\Z$-module. It follows from \Cref{lem:Tor-1-over-free}, that $\Tor_1(H_n(C_\bullet), R) = \bm 0$. Then the short exact sequence from \Cref{thm:UCT} becomes
  \[
    \begin{tikzcd}[ampersand replacement=\&,column sep=scriptsize]
      {\bm 0} \& {H_{n+1}(C_\bullet) \otimes_\Z R} \& {H_{n+1}^R(C_\bullet)} \& {\bm 0.}
      \arrow[from=1-1, to=1-2]
      \arrow[from=1-2, to=1-3]
      \arrow[from=1-3, to=1-4]
    \end{tikzcd}
  \]
  The exactness of the sequence implies an isomorphism, as required.
\end{proof}

\Cref{lem:H1-change-of-ring} seems perhaps insufficiently motivated: why would we care specifically about what happens to the $(n+1)^{\text{st}}$ homology when the $n^{\text{th}}$ homology is free? In a general chain complex, this may be completely useless. As shown in \Cref{ex:classical-code-is-chain-complex,ex:css-code-is-chain-complex}, we intend to build error correcting codes using chain complexes. However, we will specifically use chain complexes arising from topological spaces, where \Cref{lem:H1-change-of-ring} will have great relevance. First, however, we need to define those spaces; we do so in the next chapter.

\chapter[Cell Complexes]{Cell Complexes \hspace*{\fill} \begin{tikzpicture}
  \tikzstyle{point}=[fill=black,draw=none,circle,minimum size=3pt,inner sep=0]
  \tikzstyle{gluon}=[draw=blue]
  \begin{scope}[local bounding box=X0, rotate={15}]
    \node[point] (v0) at (0,0) {};
    \node[point] (v1) at (0.2,0.3) {};
    \node[point] (v2) at (0.5,0) {};
  \end{scope}
  \begin{scope}[local bounding box=X1, shift={($(X0.south)+(0.7,0.05)$)}, rotate={-15}]
    \node (v0e) at (0,0) {};
    \node (v1e) at (0.2,0.3) {};
    \node (v2e) at (0.5,0) {};
    \draw (v0e.center) -- node[midway] (e01) {} (v1e.center) -- node[midway] (e12) {} (v2e.center) -- node[midway] (e02) {} (v0e.center);
  \end{scope}
  \begin{scope}[local bounding box=X2, shift={($(X1.south)+(1,0.45)$)}, rotate={-30}]
    \node (v0f) at (0,0) {};
    \node (v1f) at (0.2,0.3) {};
    \node (v2f) at (0.5,0) {};
    \draw[fill=gray, fill opacity=0.5] (v0f.center) -- node[midway] (e01f) {} (v1f.center) -- node[midway] (e12f) {} (v2f.center) -- node[midway] (e02f) {} (v0f.center);
  \end{scope}
  \draw[gluon] (v0.center) .. controls ($(v0.center)+(0,-0.1)$) and ($(v0e.center)+(-0.5,-0.1)$) .. (v0e.center);
  \draw[gluon] (v1.center) -- (v1e.center);
  \draw[gluon] (v2.center) .. controls ($(v2.center)+(0,-0.2)$) and ($(v2e.center)+(0,-0.2)$) .. (v2e.center);

  \draw[gluon] (e01.center) -- (e01f.center);
  \draw[gluon] (e12.center) -- (e12f.center);
  \draw[gluon] (e02.center) -- (e02f.center);
\end{tikzpicture}}
\label{cha:cell-compl}

This chapter introduces a way to split topological spaces into simple parts that are glued together. This allows us to establish algebraic structures (chain complexes from \Cref{sec:chain-complex-abstract}) that describe these spaces, leading us into algebraic topology.
We take a combinatorial approach, where we define these parts and their relationships abstractly, without worrying too much about the actual underlying topology or geometry.

We assume knowledge of basic topology, and we will not talk about open sets and similar concepts here, because they are not needed for us. Some knowledge of cell complexes (also called CW complexes) would be perhaps beneficial to see deeper into the background, but it is not strictly necessary.

\section{Abstract $2$-dimensional cell complexes}

What we need is a \emph{complex} of simple topological objects glued together in a useful way. There are many possibilities; the one we choose is a \emph{cell complex} (also called a \emph{CW complex}), because this is general enough for the kinds of spaces we need. There are other kinds of topological complexes, and we comment on our choice of cell complexes instead of the other options in \Cref{sec:other-kinds-abstract-top-complexes}. For more details on cell and other kinds of complexes, see the book~\cite{HatcherAllen2002At}, which served as our guide.

First, some intuition of a \emph{cell complex}. The idea is that we build a space inductively by attaching new \emph{cells} of increasing dimension, starting from $0$-dimensional points (called \emph{$0$-cells}). At each step, there is a ``frame'' of dimension $n$, which we call the \emph{$n$-skeleton} $X^n$. This is composed of cells of dimension $n$ (called \emph{$n$-cells}) and lower, glued together in a previous step. We attach $n+1$ dimensional ``faces'' to it to fill the holes in the frame. These faces are copies of the \emph{closed topological $(n+1)$-dimensional ball} $\mathcal{B}^{n+1}$, and the gluing is done along their boundaries, which are copies of the \emph{$n$-dimensional sphere} $\mathcal{S}^n$.

\begin{example}
  \label{ex:gluing-triangles-cell-complex}
  We show an example of this procedure in \cref{fig:gluing-two-triangles-CW}. We take two triangles, which are our $2$-cells $e^2_1$ and $e^2_2$ (where the upper index indicates the dimension), and glue them to a $1$-skeleton $X^1$ constructed from edges and points in a previous step. We always glue by boundaries, as shown in the figure. The result is a $2$-dimensional space, which we call the \emph{$2$-skeleton}. In this case, this is the final step, and we say the space $X = X^2$.
  \begin{figure}[ht]
    \centering
    \begin{tikzpicture}
  \begin{scope}[local bounding box=D1]
    \draw[dashed, fill=\colora, fill opacity=0.5] (0,2) -- node[midway] (D11) {} (0,0) -- node[midway] (D12) {} (2,0) -- node[midway] (D13) {} cycle;
    \node at (0.5, 0.5) {$e^2_1$};
  \end{scope}
  \begin{scope}[local bounding box=skeleton, shift={($(D1.east)+(1,-1)$)}]
    \draw[dashed] (2,0) -- node[midway] (s1) {} (0,0) -- node[midway] (s2) {} (0,2) -- node[midway] (s3) {} (2,0) -- node[midway] (s4) {} (2,2) -- node[midway] (s5) {} (0,2);
    \node at (0.5, 0.5) {$X^1$};
  \end{scope}
  \begin{scope}[local bounding box=D2, shift={($(skeleton.east)+(1,-1)$)}]
    \draw[dashed, fill=\colorb, fill opacity=0.5] (0,2) -- node[midway] (D21) {} (2,2) -- node[midway] (D22) {} (2,0) -- node[midway] (D23) {} cycle;
    \node at (1.5, 1.5) {$e^2_2$};
  \end{scope}

  \tikzstyle{glueline}=[]

  \draw[glueline]
  (D12.center)
  .. controls ($(D12.center)+(0,-0.5)$) and ($(s1.center)+(0,-0.5)$)
  .. (s1.center);

  \draw[glueline]
  (D11.center)
  .. controls ($(D11.center)+(-1,0.7)$) and ($(s2.north)+(-1,0.5)$)
  .. (s2.north);

  \draw[glueline]
  (D13.south east)
  .. controls ($(D13.south east)+(0.5,0.5)$) and ($(s3.north west)+(-0.5,-1)$)
  .. (s3.north west);

  \draw[glueline]
  (D22.center)
  .. controls ($(D22.center)+(1,-0.7)$) and ($(s4.south)+(1,-0.5)$)
  .. (s4.south);

  \draw[glueline]
  (D21.center)
  .. controls ($(D21.center)+(0,0.5)$) and ($(s5.center)+(0,0.5)$)
  .. (s5.center);

  \draw[glueline]
  (D23.north west)
  .. controls ($(D23.north west)+(-0.5,-0.7)$) and ($(s3.south east)+(0.7,1)$)
  .. (s3.south east);

  \node[scale=1.5] at ($(D2.east)+(1,0)$) (mapsto) {$\rightsquigarrow$};
  \begin{scope}[shift={($(mapsto.east)+(1,-1)$)}]
    \draw (0,0) -- (0,2) -- (2,2) -- (2,0) -- cycle;
    \draw (2,0) -- (0,2);
    \node at (0.5,0.5) {$X^2$};
    \begin{pgfonlayer}{background}
      \path[fill=\colora, fill opacity=0.5] (0,2) -- (0,0) -- (2,0);
      \path[fill=\colorb, fill opacity=0.5] (0,2) -- (2,2) -- (2,0);
    \end{pgfonlayer}
  \end{scope}
\end{tikzpicture}
    \caption[Gluing two triangles to a frame made of edges]{Gluing two triangles to a frame made of edges. The dashed edges are the boundaries of the triangles ($2$-cells $e^2_1$ and $e^2_2$), and the $1$-skeleton $X^1$. The solid bendy lines show how the triangle edges are glued to the skeleton. We do not show the vertices here, but they are at the endpoints of the edges, and they are glued as appropriate.}
    \label{fig:gluing-two-triangles-CW}
  \end{figure}
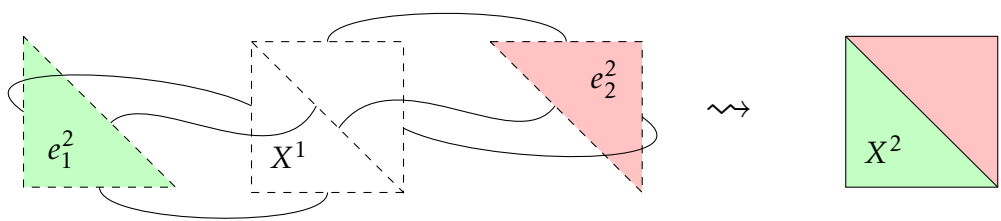
\end{example}

\subsection{Definition}

The above \cref{ex:gluing-triangles-cell-complex} is quite simple.
However, our goal is \emph{homology of cell complexes}, and this is tricky to define in full generality.\footnote{As we shall see in \Cref{sec:chain-complexes-from-cell-complexes}, we attach a chain complex to a cell complex, and the differentials correspond to boundaries where cells are attached. In order to define the boundary maps in full generality, we need to define the \emph{degree} of a continuous map, and this in turn is defined using \emph{relative homology}. Thus, we need one homology theory to define another, more complicated homology theory. A standard way to do this (see \cite{HatcherAllen2002At}) is by first defining \emph{simplicial}, and then \emph{singular homology}, and then using this to define the aforementioned boundary maps for cellular homology.} Thus, we take an alternative approach.
For our purposes, we restrict the maximum dimension to two, because we only work with surfaces. We also abstract away much of the topological structure. What remains is an abstract representation of a \emph{cell complex}, which is in analogy to what an \emph{abstract simplicial complex} is to a \emph{topological simplicial complex}.
The concrete definitions are original, though following the general definition of a cell complex given by \cite{HatcherAllen2002At}.
It took nontrivial effort to get the definition right for our purposes. Because of this, we show it in steps, motivating each design choice.

\subsubsection*{$0$-cells}

When building a \emph{cell complex}, we first need to define a set of discrete points or vertices. This will be the initial step for us too. Denote $X_0 \deq \{ e^0_\alpha \}_\alpha$ a family of $0$-cells, which are abstract symbols, and are all distinct. The $0$-skeleton is the same: $X^0 \deq X_0$.

\subsubsection*{$1$-cells}

Next is a family of $1$-cells, edges between points. Denote $X_1 \deq \{ e^1_\alpha \}_\alpha$ a family of these; they are again just abstract symbols, distinct from each other and from $0$-cells. For an edge $e^1_\alpha \in X_1$, we need to specify its endpoint vertices: the \emph{source} $s(e^1_\alpha) \in X_0$ and \emph{target} $t(e^1_\alpha) \in X_0$. We define the function $s,t : X_1 \to X_0$. We allow the source and target to be the same vertex, or equivalently, an edge may be a self-loop.

In cell complex style, we combine both the source and target functions into a single \emph{attaching map} $\varphi_1 : X_1 \to X_0 \times X_0$ that sends $e^1_\alpha \mapsto (s(e^1_\alpha), t(e^1_\alpha))$.
Note that so far, we have essentially defined a \emph{directed graph}, with vertices $X_0$ and edges~$X_1$.\footnote{In \Cref{sec:larger-torsion}, we will see that it is interesting if a $2$-cell touches the same $1$-cell several times. It seems that this is not possible for $1$-cells and $0$-cells; however, we could generalize the $1$-cells to \emph{oriented hyperedges} to get a similar effect.} This forms the $1$-skeleton $X^1 \deq X^0 \cup X_1$.

Unlike general cell complexes which are unoriented, we are building the orientation of $1$-cells into the definition. This is a choice that will make things more convenient in the definition of $2$-cells in the next step, and the chain complexes that we will define in \Cref{sec:chain-complexes-from-cell-complexes} are uniquely, and easily, determined from this. Furthermore, in \Cref{sec:gluing-complexes}, we will glue cells together to form interesting spaces, and sometimes we want to glue them in opposite orientation.

This means also that while a $1$-cell has a defined orientation, we will need to reason about a copy of the same cell with the opposite orientation. We set up the following machinery:

\begin{definition}[orientation closure]
  \label{def:orientation-closure}
  For a set $A$, define $+A \deq \{ +a : a \in A \}$, and similarly  $-A \deq \{ -a : a \in A \}$. By this we mean that we attach a symbol $+$ or $-$ to the elements. Define the \emph{orientation closure} of $A$ to be the set $\overrightarrow{A} \deq (+A) \cup (-A)$. Define also an action of $e^{i\pi\Z} \deq (\{+1, -1\}, \cdot, +1)$, the cyclic group of order two in multiplicative notation, such that $-1$ switches the signs. We write this as $-(\pm a) = \mp a$ for $a \in A$. By definition of an action, $+1$ acts as the identity.

  We may use the usual multiplicative convention of omitting the symbol $+$, and this way we notationally identify $A \cong +A$. Then $\overrightarrow{A}$ is the set of elements of $A$ and their formal additive inverses.
\end{definition}

We use the above construction to extend the attaching map $\varphi_1$, by requiring that for each $e^1 \in X_1$, the cell $-e^1 \in \overrightarrow{X_1}$ is the same edge, but with direction reversed:
\begin{align*}
  s(-e^1) &\deq t(e^1)
  & \text{and} &
  & t(-e^1) & \deq s(e^1).
\end{align*}
Formally, we define this extension of $\varphi_1$ to be $\overrightarrow{\varphi_1} : \overrightarrow{X_1} \to X_0 \times X_0$, and we require%
\hiddenparbreak
\diagramonright{0.7}{0.3}{%
  \vspace{0.8ex}
  \noindent
   that it is compatible with the action of $e^{i \pi \Z}$ on $X_0 \times X_0$, defined such that $-1$ swaps the entries: $(e^0_1, e^0_2) \xmapsto{-1} (e^0_2, e^0_1)$. The compatibility then means that the diagram on the right commutes.}
{%
  \centering
  \begin{tikzcd}[ampersand replacement=\&, column sep=1.5em]
    {\overrightarrow{X_1}} \& {\overrightarrow{X_1}} \\
    {X_0 \times X_0} \& {X_0 \times X_0}
    \arrow["{\pm 1}"', from=2-1, to=2-2]
    \arrow["{\overrightarrow{\varphi_1}}"', from=1-1, to=2-1]
    \arrow["{\overrightarrow{\varphi_1}}", from=1-2, to=2-2]
    \arrow["{\pm 1}", from=1-1, to=1-2]
  \end{tikzcd}
}

\subsubsection*{$2$-cells}

The highest dimension we reach is two, i.e. faces. Denote $X_2 \deq \{ e^2_\alpha \}_\alpha$ a family of $2$-cells, again abstract symbols, distinct from $0$- and $1$-cells. In the usual definition of a cell complex, the boundary is a $1$-sphere $\mathcal{S}^1$, a circle, so we need an abstract analogy for this. As the $1$-skeleton $X^1$ is essentially a directed graph, the obvious way to define a circle is to use a cycle of edges in the graph. However, we need to take the orientation of the edges into account.

We define an abstract circle as a sequence $(\pm_i e^1_i)_{i=1}^\ell$ of length $\ell \in \N$ which consists of elements of $\overrightarrow{X_1}$. We denote their signs with $\pm_i$, where the $i$ means this is the sign of $i^{\text{th}}$ cell in the sequence. In our context, some boundary has to always exist, so we exclude the empty cycle ($\ell = 0$). The sequence must form a cycle in the graph, so the endpoints of adjacent edges have to match:
\begin{align*}
  t(\pm_i e^1_i) & = s(\pm_{i+1} e^1_{i+1}) \quad \text{for every $i=1,\dots,\ell-1$; and}
  & t(\pm_\ell e^1_\ell) & = s(\pm_1 e^1_1).
\end{align*}
Our $2$-cells are oriented as well, so their boundary circles have a definite direction by which they traverse the edges. We show an example of this in \cref{fig:abstract-1-sphere-examples:plus-minus}.
\begin{figure}[ht]
  \centering
  \subfloat[\label{fig:abstract-1-sphere-examples:plus-minus}]{
    \begin{tikzpicture}
  \tikzstyle{vertex}=[fill=black,draw=black,circle,minimum size=4pt,inner sep=0]
  \tikzstyle{->-}=[
  decoration={
    markings,
    mark=at position 0.6 with {\arrow{Triangle[scale=1]}}},
  postaction={decorate},
  draw]
  \node[vertex] (v1) at (90:1) {};
  \node (v2) at (30:1) {};
  \node (v3) at (-30:1) {};
  \node (v4) at (-90:1) {};
  \node (vl) at (150:1) {};
  \node at (0,0) {\Huge $\circlearrowright$};
  \draw[->-] (v1.center) -- node[midway, above right] {$e^1_1$} (v2.center);
  \draw[->-] (v3.center) -- node[midway, right] {$e^1_2$} (v2.center);
  \draw[->-] (v4.center) -- node[midway, below right] {$e^1_3$} (v3.center);
  \draw[->-] (vl.center) -- node[midway, above left] {$e^1_\ell$} (v1.center);
  \node (a1) at ($(v4.center)+(-180:0.8)$) {};
  \node (a2) at ($(vl.center)+(240:0.8)$) {};
  \draw[dashed] (v4.center) .. controls (a1) and (a2) .. (vl.center);
  \node at (-90:2) {boundary $(+e^1_1, -e^1_2, -e^1_3, \dots, +e^1_\ell)$};
  \begin{pgfonlayer}{background}
    \path[fill=gray, fill opacity=0.3]
    (v1.center) -- (v2.center) -- (v3.center) -- (v4.center) .. controls (a1) and (a2) .. (vl.center) -- cycle;
  \end{pgfonlayer}
\end{tikzpicture}
  }
  \qquad
  \subfloat[\label{fig:abstract-1-sphere-examples:0-cell}]{
    \begin{tikzpicture}
  \tikzstyle{vertex}=[fill=black,draw=black,circle,minimum size=4pt,inner sep=0]
  \tikzstyle{->-}=[
  decoration={
    markings,
    mark=at position 0.5 with {\arrow{Triangle[scale=1]}}},
  postaction={decorate},
  draw]
  \node[vertex, label={-30:$e^0_1$}] (v1) at (-30:1.2) {};
  \node at (-90:0.5) {\Huge $\circlearrowright$};
  \node at (-90:2) {boundary $e^0_1$};
  \begin{pgfonlayer}{background}
    \shade[ball color = gray!40, opacity = 0.4] (0,0) circle (1.5);
  \end{pgfonlayer}
\end{tikzpicture}
  }
  \caption[Examples of boundaries of a $2$-cell]{Examples of boundaries of a $2$-cell (shaded). \textbf{(a)} A boundary that is a circle ($1$-sphere) oriented clockwise. It is represented by a sequence of oriented edges $(+e^1_1, -e^1_2, -e^1_3, \dots, +e^1_\ell)$. We indicate the start and end point of the cycle by a black circle. \textbf{(b)} A $2$-cell bounded by a single point $e^0_1$. It is a sphere $\mathcal{S}^2$ with a distinguished point.}
  \label{fig:abstract-1-sphere-examples}
\end{figure}

So far, the sequences have a distinguished base point: this is the vertex $s(\pm_1 e^1_1)$. However, we do not need it -- we just need to know which edges the circle traverses, and in which direction. Thus we choose to forget the base point, and identify all sequences of the same length that are cyclic permutations of one another.
We summarize this as follows:
\begin{definition}[abstract circle]
  \label{def:abstract-circle}
  Let $X_0, X_1$ be sets of $0$- and $1$-cells respectively. An \emph{abstract circle ($1$-sphere)} is an equivalence class of cycles, up to cyclic permutations, represented by tuples in $\overrightarrow{X_1}$ as described above. We denote such equivalence classes as $\lParen \pm_1e^1_1, \dots, \pm_\ell e^1_\ell \rParen$, and we denote the set of all abstract circles $\mathcal{S}\overrightarrow{X_1}$.
\end{definition}

Similarly to $1$-cells, we need to consider reversing the orientation. An abstract circle is already oriented, so the only remaining thing needed is the action of $e^{i\pi\Z}$ on $\mathcal{S}\overrightarrow{X_1}$. We define
\[ - \lParen \pm_1 e^1_1,\ \pm_2 e^1_2, \dots, \pm_{\ell-1}e^1_{\ell-1},\ \pm_\ell e^1_\ell \rParen
  \deq \lParen \mp_\ell e^1_\ell,\ \mp_{\ell-1} e^1_{\ell-1}, \dots, \mp_2 e^1_2,\ \mp_1 e^1_1  \rParen \]
for all abstract circles. This means that traversing a circle backward, we see the edges in the opposite orientation, and in the opposite order, as expected.

\textbf{Special case:} A general cell complex allows a $2$-cell to be bounded by a single $0$-cell, without any $1$-cells in between. This corresponds to a $2$-sphere $\mathcal{S}^2$ with a distinguished point, as shown in \Cref{fig:abstract-1-sphere-examples:0-cell}. We have no use for this, so in order to keep the definitions simpler, we exclude this case. This means that our definition is even more restrictive, but that is fine.

Finally, we define the \emph{attaching map} $\varphi_2 : X_2 \to \mathcal{S}\overrightarrow{X_1}$ that maps a $2$-cell to its boundary which is an (oriented) abstract circle.
Similarly to the case of\mbox{ $1$-cells}, we\hiddenparbreak
\diagramonright{0.7}{0.3}{
  \vspace{-0.25ex}
  \noindent
  extend $\varphi_2$ to the orientation closure of $2$-cells, and define $\overrightarrow{\varphi_2} : \overrightarrow{X_2} \to \mathcal{S}\overrightarrow{X_1}$. This extension must be compatible with the action of $e^{i \pi \Z}$, meaning the diagram to the right must commute.
}{%
  \centering
  \begin{tikzcd}[ampersand replacement=\&]
    {\overrightarrow{X_2}} \& {\overrightarrow{X_2}} \\
    {\mathcal{S} \overrightarrow{X_1}} \& {\mathcal{S} \overrightarrow{X_1}}
    \arrow["{\overrightarrow{\varphi_2}}"', from=1-1, to=2-1]
    \arrow["{\overrightarrow{\varphi_2}}", from=1-2, to=2-2]
    \arrow["{\pm 1}", from=1-1, to=1-2]
    \arrow["{\pm 1}"', from=2-1, to=2-2]
  \end{tikzcd}
}

\subsubsection*{Full definition}

We bound the dimension to two, so there will be no $n$-cells of dimension $n>2$. Hence, we now have all the ingredients required for the whole definition. Note that for reasons of practicality, we require our complexes to be \emph{finite}.

\begin{definition}[abstract $2$-dimensional cell complex]
  \label{def:abstract-2-cell-complex}
  Let \mbox{$X_n \deq \{ e^n_\alpha \}_\alpha$} be a \emph{finite} family of \emph{$n$-cells}, where $n \in \{ 0, 1, 2 \}$ is called the \emph{dimension} of the cells. The $e^n_\alpha$ are abstract symbols, and all are distinct. Let $\varphi_1 : X_1 \to X_0 \times X_0$ and $\varphi_2 : X_2 \to \mathcal{S}\overrightarrow{X_1}$ be \emph{attaching maps} as described above. We call the tuple $X \deq (X_0, X_1, X_2, \varphi_1, \varphi_2)$, also denoted $X = (X_\bullet, \varphi_\bullet)$, an \emph{abstract $2$-dimensional cell complex}.
\end{definition}

\begin{example}
  \label{ex:acc2-square}
  \begin{figure}[h]
    \centering
    \begin{tikzpicture}
  \tikzstyle{vertex}=[fill=black,draw=black,circle,minimum size=4pt,inner sep=0]
  \tikzstyle{->-}=[
  decoration={
    markings,
    mark=at position 0.55 with {\arrow{Triangle[scale=1]}}},
  postaction={decorate},
  draw]
  \node[vertex, label={-135:$e^0_1$}] at (0,0) (v1) {};
  \node[vertex, label={135:$e^0_2$}] at (0,3) (v2) {};
  \node[vertex, label={45:$e^0_3$}] at (3,3) (v3) {};
  \node[vertex, label={-45:$e^0_4$}] at (3,0) (v4) {};
  \draw[->-] (v1.center) -- node[midway, left] {$e^1_1$} (v2.center);
  \draw[->-] (v2.center) -- (v4.center);
  \node at (1.4,2.2) {$e^1_2$};
  \draw[->-] (v4.center) -- node[midway, below] {$e^1_3$} (v1.center);
  \draw[->-] (v2.center) -- node[midway, above] {$e^1_4$} (v3.center);
  \draw[->-] (v3.center) -- node[midway, right] {$e^1_5$} (v4.center);
  \node at (1,0.7) (f1) {$e^2_1$};
  \node[scale=3] at (f1.center) {$\circlearrowright$};
  \node at (2.3,2) (f2) {$e^2_2$};
  \node[scale=3] at (f2.center) {$\circlearrowright$};
  \begin{pgfonlayer}{background}
    \path[fill=\colora, fill opacity=0.5] (0,3) -- (0,0) -- (3,0);
    \path[fill=\colorb, fill opacity=0.5] (0,3) -- (3,3) -- (3,0);
  \end{pgfonlayer}
\end{tikzpicture}
    \caption[Example of an abstract $2$-dimensional cell complex]{Example of an abstract $2$-dimensional cell complex.}
    \label{fig:square-abs-cell-complex}
  \end{figure}
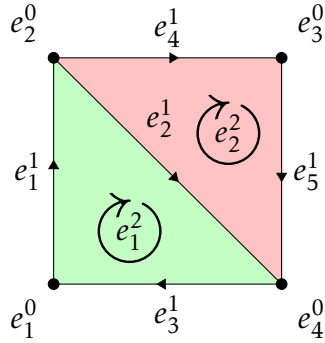
  Take the space from \cref{ex:gluing-triangles-cell-complex}. In that example, we showed the gluing of $2$-cells (triangles) to the $1$-skeleton to make a $2$-skeleton (square). Now, we construct the same space using our new definition. Note that this requires that we decide the orientation of the faces and edges; we indicate this by arrows in \cref{fig:square-abs-cell-complex}.

  The $0$-cells are $X_0 = \{ e^0_1, \dots, e^0_4 \}$. The $1$-cells are $X_1 = \{ e^1_1, \dots, e^1_5 \}$, and they are attached to the $0$-cells using $\varphi_1$ as follows:
  \begin{align*}
    \varphi_1(e^1_1) & = (e^0_1, e^0_2),
    & \varphi_1(e^1_4) & = (e^0_2, e^0_3), \\
    \varphi_1(e^1_2) & = (e^0_2, e^0_4),
    & \varphi_1(e^1_5) & = (e^0_3, e^0_4), \\
    \varphi_1(e^1_3) & = (e^0_4, e^0_1),
  \end{align*}
  Together, this makes the $1$-skeleton. The $2$-cells are $X_2 = \{ e^2_1, e^2_2 \}$, and they are attached by $\varphi_2$ as follows:
  \begin{align*}
    \varphi_2(e^2_1) & = \lParen +e^1_1, +e^1_2, +e^1_3 \rParen,
    & \text{and} &
    & \varphi_2(e^2_1) & = \lParen -e^1_2, +e^1_4, +e^1_5 \rParen.
  \end{align*}
\end{example}

\begin{definition}[subcomplex]
  \label{def:subcomplex}
  Let $X = (X_\bullet, \varphi_\bullet)$ be an abstract $2$-dimensional cell complex. Define a \emph{subcomplex} of $X$ to be a complex $Y = (Y_\bullet, \psi_\bullet)$, where $Y_n \subseteq X_n$ and the attaching maps are restrictions on these subsets, i.e. $\psi_n = \varphi_n|_{Y_n}$, for each $n\in\{0,1,2\}$. For this to work, we require that if an $(n+1)$-cell $e^{n+1} \in X_{n+1}$ is in $Y_{n+1}$, then all the $n$-cells forming its boundary are in $Y_n$, for $n \in \{ 0, 1\}$.
\end{definition}

\subsection{Gluing Complexes}
\label{sec:gluing-complexes}

To obtain interesting spaces, we will glue these complexes, or different cells of the same complex, together. To do this, we define the notion of a quotient. First, we describe how to glue a single complex to itself; gluing different complexes will be done by first joining them together disjointly, and then quotienting.

We restrict ourselves to only glue cells of the same dimension. No contractions are allowed, meaning we cannot transform an $n$-cell to a cell of lower dimension. On the other hand, we wish to be able to glue cells in an oriented manner, so that we may, for example, construct a cylinder and M\"obius strip from the same complex by gluing edges in different ways (see \Cref{ex:cyl-Mob}).

\begin{definition}[oriented equivalence]
  Let $A$ be a set. Take its orientation closure $\overrightarrow{A}$, together with the action of $e^{i\pi\Z}$ (see \Cref{def:orientation-closure}). A set-theoretic equivalence relation $\sim$ on $\overrightarrow{A}$ is called \emph{oriented} if it does not relate an element to its inverse, that is $\pm a \nsim \mp a$ for $a \in A$, and if it is compatible with the action of $e^{i\pi\Z}$, meaning that $\pm_a a \sim \pm_b b$ implies $\mp_a a \sim \mp_b b$ for $\pm_a a$ and $\pm_b b \in \overrightarrow{A}$.
\end{definition}

\begin{definition}[coherent equivalence of cells]
  Let $X = (X_\bullet, \varphi_\bullet)$ be an abstract \mbox{$2$-dim}ensional cell complex. Suppose we have an equivalence relation $\sim_0$ on the set of points $X_0$ which have no orientation. Suppose further we have an oriented equivalence relation $\sim_1$ on the orientation closure of edges $\overrightarrow{X_1}$, and similarly $\sim_2$ on $\overrightarrow{X_2}$. These can be joined together as
  $\sim \;\deq\; \sim_0 \cup \sim_1 \cup \sim_2$,
  an equivalence relation on $X_0 \cup \overrightarrow{X_1} \cup \overrightarrow{X_2}$.
  We call $\sim$ \emph{coherent} if the equivalences are compatible with the attaching maps:
  \begin{itemize}
  \item Suppose $e^1_a$ and~$e^1_b \in X_1$ are $1$-cells. If $\pm_a e^1_a \sim \pm_b e^1_b$, then $s(\pm_a e^1_a) \sim s(\pm_b e^1_b)$ and $t(\pm_a e^1_a) \sim t(\pm_b e^1_b)$. That is, the endpoints of equivalent edges must match, and this must respect the orientation.
  \item Suppose $e^2_a$ and $e^2_b \in X_2$ are $2$-cells with boundary circles $\varphi_2(e^2_a)$ and $\varphi_2(e^2_b)$. If $\pm_a e^2_a \sim \pm_b e^2_b$,
    then $\overrightarrow{\varphi}^2(\pm_a e^2_a) \simeq \overrightarrow{\varphi}^2(\pm_b e^2_b)$. By this we mean that their boundary circles must match, respecting the orientations, up to equivalence of $1$-cells, indicated by the symbol $\simeq$. \qedhere
  \end{itemize}
\end{definition}

\begin{definition}[quotient of abstract cell complex]
  Let $X = (X_\bullet, \varphi_\bullet)$ be an abstract $2$-dimensional cell complex, and let $\sim$ be a coherent equivalence of the cells of $X$. Define the \emph{quotient} of $X$ by $\sim$, denoted $X/\sim$, to be the abstract $2$-dimensional cell complex $(Q_\bullet, \varkappa_\bullet)$ obtained by identifying the cells that are equivalent under $\sim$ and inducing the attaching maps on the equivalence classes.
\end{definition}

\subsubsection*{Examples}

We show a few important examples. In the following, let $X=(X_\bullet, \varphi_\bullet)$ be an~abstract $2$-dimensional cell complex corresponding to a square, with four vertices $X_0 = \{ e^0_1, \dots, e^0_4 \}$, four edges $X_1 = \{ e^1_1, \dots, e^1_4 \}$, and a face $X_2 = \{ e^2 \}$. The edges attach as $\varphi_1(e^1_1) = (e^0_1, e^0_2)$, etc.; and the face attaches as $\varphi_2(e^2) = \lParen e^1_1, e^1_2, e^1_3, e^1_4 \rParen$. See \Cref{fig:gluing-square:square}.

\begin{figure}[h]
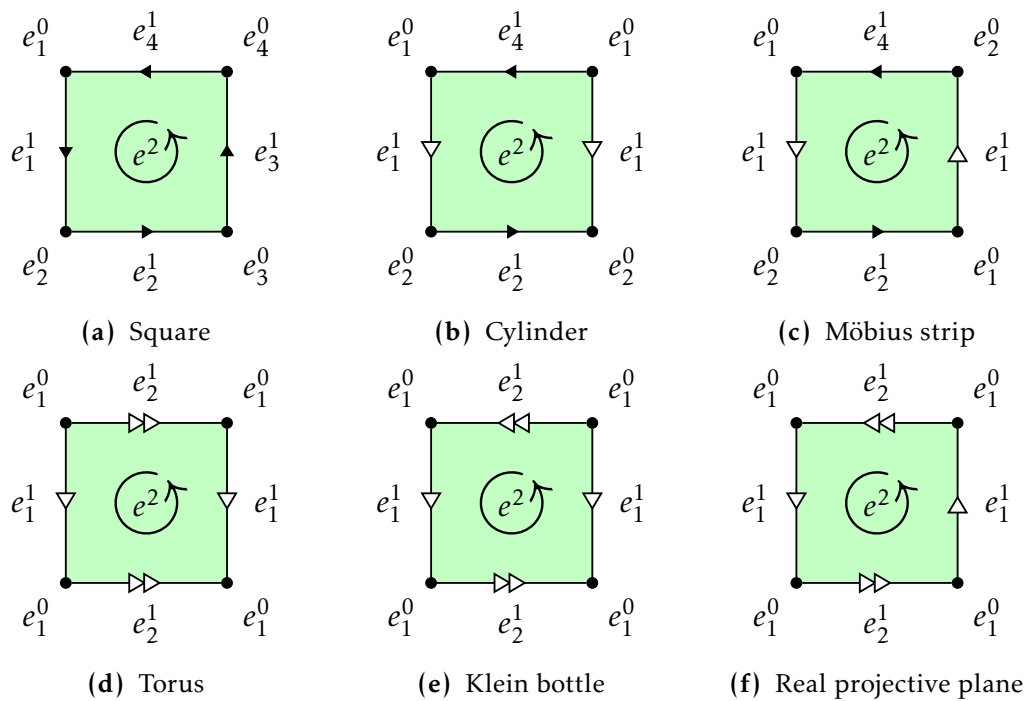

  \centering
  \subfloat[\label{fig:gluing-square:square} Square]{\gsNoglue}
  \qquad
  \subfloat[\label{fig:gluing-square:cylinder} Cylinder]{\gsCylinder}
  \qquad
  \subfloat[\label{fig:gluing-square:mobius} M\"obius strip]{\gsMobius}
  \qquad
  \subfloat[\label{fig:gluing-square:torus} Torus]{\gsTorus}
  \qquad
  \subfloat[\label{fig:gluing-square:klein} Klein bottle]{\gsKlein}
  \qquad
  \subfloat[\label{fig:gluing-square:proj} Real projective plane]{\gsProj}
  \qquad
  \caption[Gluing a square to itself in different ways]{Gluing a square to itself in different ways. Pairs of edges with the same number of white arrows are glued together, in the orientation indicated by those arrows. Edges with black arrows are not glued.}
\end{figure}

\begin{example}[cylinder, M\"obius strip]
  \label{ex:cyl-Mob}
  If we put $e^1_1 \sim -e^1_3$, the quotient we obtain is a \emph{cylinder} in \Cref{fig:gluing-square:cylinder}. Observe that because of the way we constructed the square $X$, we have to glue $e^1_1$ to $e^1_3$ in the opposite orientation. This induces the equivalences of points $e^0_1 \sim e^0_4$ and $e^0_2 \sim e^0_3$. The boundary circle of the face becomes $\lParen e^1_1, e^1_2, -e^1_1, e^1_4 \rParen$. In the figure, we indicate the gluing of $e^1_1 \sim -e^1_3$ by the white triangles. In \Cref{fig:cylinder_3d}, we show the result of the gluing in three dimensions.

  On the other hand, if we put $e^1_1 \sim e^1_3$, the quotient is a \emph{M\"obius strip} in \Cref{fig:gluing-square:mobius}, with the face now bounded by the circle $\lParen e^1_1, e^1_2, e^1_1, e^1_3 \rParen$. Observe that the white triangle on the right is now oriented the other way.
\end{example}
\begin{figure}[h]
  \centering
  \begin{tikzpicture}
  \newcommand{\h}{1.5}
  \newcommand{\s}{40}
  \draw[domain=120:360+120, samples=\s, variable=\t, ->-]
  plot ({cos(-\t)},0,{sin(-\t)});

  \foreach \i [evaluate={\t={\i*360/\s}}] in {0,...,\s}{
    \fill[gray, fill opacity=0.15]
    ({cos(\t)}, 0, {sin(\t)})
    -- ({cos(\t+\s)}, 0, {sin(\t+\s)})
    -- ({cos(\t+\s)}, \h, {sin(\t+\s)})
    -- ({cos(\t)}, \h, {sin(\t)})
    -- cycle;
  }

  \draw[domain=-120:270, samples=\s, variable=\t, ->-]
  plot ({cos(\t)},\h,{sin(\t)});

  \coordinate (b) at ({cos(150)}, 0, {sin(150)});
  \draw[-g>-] ($(b)+(0,\h,0)$) -- node[midway,left] {$e^1_1$} (b);

  \node[vertex, label={-120:$e^0_2$}] at (b) {};
  \node[vertex, label={150:$e^0_1$}] at ($(b)+(0,\h,0)$) {};
  \node at ({1.3*cos(-30)}, 0, {1.3*sin(-30)}) {$e^1_2$};
  \node at ({1.3*cos(-30)}, \h, {1.3*sin(-30)}) {$e^1_4$};

  \node (f) at ({cos(75)}, {0.6*\h}, {sin(75)}) {$e^2$};
  \node[scale=3] at (f.center) {$\circlearrowleft$};
\end{tikzpicture}
  \caption[Cylinder from \protect\Cref*{fig:gluing-square:cylinder} in 3D]{Cylinder from \protect\Cref{fig:gluing-square:cylinder} in 3D.}
  \label{fig:cylinder_3d}
\end{figure}

\begin{example}[torus]
  \label{ex:gs-torus}
  Now, let $e^1_1 \sim -e^1_3$ and $e^1_2 \sim -e^1_4$. This means we glue two pairs of edges, and the quotient is a \emph{torus} in \Cref{fig:gluing-square:torus}. In the picture, we now indicate the pairs of edges: the gluing $e^1_1 \sim -e^1_3$ is indicated by a single white triangle on the edges, and the gluing $e^1_2 \sim -e^1_4$ has two triangles on each glued edge. Observe that all vertices have been identified. The face is bounded by the circle $\lParen e^1_1, e^1_2, -e^1_1, -e^1_2 \rParen$.
\end{example}

\begin{example}[Klein bottle, real projective plane]
  \label{ex:gs-Klein-RP2}
  Finally, we show what happens if we glue pairs of opposite edges as in \Cref{ex:gs-torus}, but we change the orientation in one or both of the gluings. Putting $e^1_1 \sim -e^1_3$ and $e^1_2 \sim e^1_4$ gives the\emph{ Klein bottle} in \Cref{fig:gluing-square:klein}. Putting $e^1_1 \sim e^1_3$ and $e^1_2 \sim e^1_4$ gives the \emph{real projective plane} in \Cref{fig:gluing-square:proj}.
\end{example}

Note the omission of a \emph{sphere} $\mathcal{S}^2$. This does not fit our definition of a cell complex very neatly, but we also do not need to use any spheres, so we do not mind this.

We can obtain those same spaces using different abstract cell complexes. One particular method useful for us later is to work backwards, define a space, and then \emph{cellulate} it, i.e. construct an abstract cell complex that describes it. We use, and describe this procedure, in \Cref{sec:cellulation}.

\subsection{Gluing two complexes}

Now, we shortly describe how to glue two complexes together. Essentially, we first disjointly merge them to a single complex, and then we quotient.
\begin{definition}[direct sum of complexes]
  \label{def:direct-sum-of-cell-complexes}
  Let $X = (X_\bullet, \varphi_\bullet)$ and $Y = (Y_\bullet, \psi_\bullet)$ be abstract cell complexes. Define their direct sum, denoted $X \oplus Y$, as a complex $U = (U_\bullet, \upsilon_\bullet)$ where $U_n \deq X_n \sqcup Y_n$ and $\upsilon_n \deq \varphi_n \sqcup \psi_n$, for $n\in\{0,1,2\}$. These disjoint unions are set-theoretic coproducts, and this is a coproduct of cell complexes.
\end{definition}

Then in order to merge two complexes $X$ and $Y$, we first put them together in the coproduct $X \oplus Y$, and then we identify whatever cells we want to identify. This is a \emph{pushout} of abstract cell complexes, but we omit the details of this, because we have not defined morphisms for abstract cell complexes.\footnote{We do not use morphisms directly, which is why we do not define them. They are also tricky to get right, and several choices are available. To give an idea, the easiest to define are dimension-preserving maps, which are required to send $n$-cell to $n-$cells. These must of course respect the boundaries and the action of $e^{i\pi\Z}$. In fact, the action of $-1 \in e^{i\pi\Z}$ is such a morphism.}

Even though the direct sum of cell complexes is defined using disjoint unions of sets ($\sqcup$), we denote it similarly to a direct sum $\oplus$. This is because we will later see that the chain complex corresponding to a direct sum of cell complexes is a direct sum of chain complexes (see \Cref{lem:direct-sum-of-complexes}).

\subsection{Relationship to general cell complexes}

The construction we defined is designed to give very precise instructions on how to construct a topological (non-abstract) cell complex from an abstract cell complex $X = (X_\bullet, \varphi_\bullet)$. This follows by mapping the $n$-cells to actual topological objects, namely $n$-balls $\mathcal{B}^n$. These are glued by their boundaries $\partial\mathcal{B}^n = \mathcal{S}^{n-1}$, as prescribed by the abstract cell complex. The abstract attaching maps $\varphi_n$ ($n=1,2$) are realized as a family of topological attaching maps $\overline\varphi^n_\alpha : \mathcal{S}^{n-1} \to \overline X^{n-1}$ for each $n$-cell $e^n_\alpha \in X_n$, where $\overline X^{n-1}$ is the $(n-1)$-skeleton of the topological realization. As such, our definition is a useful abstraction to faithfully reason about cell complexes in the setting we have here. It further allows us to reach cellular homology without too much complication.

\subsection{Other kinds of (abstract) topological complexes}
\label{sec:other-kinds-abstract-top-complexes}

We now comment on other kinds of topological complexes that we chose \emph{not} to use. We only briefly and informally describe them.

The simplest kind are \emph{simplicial complexes}.\cite[\S2.1]{HatcherAllen2002At} These are made of triangles, tetrahedra, and their higher-dimensional analogues; though we only care about surfaces, so triangles are where we end. Simplicial complexes are very easy to work with, however, the definition is quite restrictive in what kind of cells are allowed. For example, to construct a space like the torus (see \Cref{ex:gs-torus}), we need at least $19$ triangles, and similarly many edges and vertices. For our purposes, we want control over how many cells are present, and simplicial complexes cannot give us that.

A slight generalization are \emph{$\Delta$-complexes}.\cite[\S2.1]{HatcherAllen2002At} They look like simplicial complexes, but some conditions are relaxed. Describing a torus, say, no longer requires a large amount of cells. However, all the faces still have to be triangles, and this is a deal-breaker for us.

Another generalization are \emph{polyhedral complexes}.\cite[\S2.3]{MR3287221} These allow any polygons to be the faces, which would work more nicely for us. However, they still exclude certain things that we like to have, like faces bounded by self-loops. Also, the way they are defined in ref. \cite{MR3287221}, they are already concrete topological spaces, and we want abstractions.

Thus the choice is to use an extremely general kind of complex, the \emph{cell complex}. Their definition in ref. \cite{HatcherAllen2002At}, while stated in terms of concrete spaces, like $n$-balls and similar, is very amenable to abstraction, and this is what we have done.

Lastly, we mention another kind of \emph{abstract cell complex}, unrelated to our construction. These are special kind of topological spaces which only hold information about dimension of cells and their incidence. They seemed promising for a while, but they were in the end not quite what we needed, so we chose not to use them. They are part of a current line of research in \emph{digital topology}, mainly used for image processing.\cite{Kovalevsky2021}

\section{Chain Complexes from Abstract Cell Complexes}
\label{sec:chain-complexes-from-cell-complexes}

When constructing a cell complex, we build it up by saying how higher-dimensional cells attach, by their boundaries, to lower-dimensional ones. Now, we break the space back into its parts to study it algebraically. We define chain complexes (see \Cref{def:chain-complex}) that correspond to cell complexes.

In the following definitions, let $X = (X_\bullet, \varphi_\bullet)$ be an abstract $2$-dimensional cell complex. We define a \emph{cellular chain complex} for this, in analogy to a (non-abstract, and not bounded to dimension $2$) cellular chain complex from ref.~\cite{HatcherAllen2002At}.

\begin{definition}[cellular chain modules]
  \label{def:cellular-chain-modules}
  Define a family of free $\Z$-modules $\{ C_n \}_{n \in \N}$, called \emph{chain modules}, as follows: There are no cells of dimension larger than $2$, so we define $C_n \deq \bm 0$ for $n > 2$. For $n \in \{ 0, 1, 2 \}$, we define the chain module $C_n$ to be a module freely generated by the $n$-cells of the cell complex:
  \[ C_n \deq \genR{\Z}{X_n}^\oplus = \bigoplus_{e^n_\alpha \in X_n} \Z e^n_\alpha, \]
  that is the elements of $C_n$ are $\Z$-linear combinations of $n$-cells, and they are called \emph{$n$-chains}. We emphasize that $n$-cells are linearly independent, and thus a basis. Furthermore, since all sets of cells are finite by definition, all chain modules are \emph{finitely generated}.
  Recall from \Cref{def:chain-complex} that by convention, we define also $C_{-1} \deq \bm 0$.
\end{definition}

\begin{definition}[cellular boundary maps]
  \label{def:cellular-boundary-maps}
  Define a family of $\Z$-module homomorphisms $\{ \partial_n : C_n \to C_{n-1} \}_{n \in \N}$ between cellular chain modules from \Cref{def:cellular-chain-modules} as follows: For $n>2$, the module $C_n = \bm 0$, and hence the differential from it has to be the zero map: $\partial_n = 0$. Similarly, $C_{-1} = \bm 0$, so $\partial_0 = 0$.

  The remaining two differentials are defined by the abstract attaching maps. Recall that for a $1$-cell $e^1_\alpha \in X_1$ (an edge), the attaching map gives its source and target: $\varphi_1(e^1_\alpha) = (s(e^1_\alpha), t(e^1_\alpha))$. Define \[\partial_1(e^1_\alpha) \deq t(e^1_\alpha) - s(e^1_\alpha).\]
  For a $2$-cell $e^2_\alpha \in X_2$, the boundary is an abstract circle $\varphi_2(e^2_\alpha) = \lParen \pm_i e^1_i \rParen_{i=1}^\ell \in \mathcal{S}\overrightarrow{X_1}$, and we define
  \[
    \partial_2(e^2_\alpha) \deq \sum_{i=1}^\ell \pm_i e^1_i,
  \]
  that is we sum the elements in the tuple representing the circle, keeping their signs as well.
\end{definition}

\begin{theorem}
  The family of cellular chain modules from \Cref{def:cellular-chain-modules} together with the family of cellular boundary maps from \Cref{def:cellular-boundary-maps} form a chain complex, called a \emph{cellular chain complex}. That is, $\partial^2 = 0$.
\end{theorem}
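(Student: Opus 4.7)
The goal is to verify $\partial_n \circ \partial_{n+1} = 0$ for every $n \in \N$. The plan is to dispose of the trivial degrees first, then reduce the only substantive case to a telescoping sum made possible by the cycle condition on abstract circles.

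First I would observe that for $n \ne 1$, at least one of $\partial_n$ or $\partial_{n+1}$ in the composite $\partial_n \circ \partial_{n+1}$ is the zero map: indeed $\partial_0 = 0$ and $\partial_m = 0$ for every $m > 2$ by \Cref{def:cellular-boundary-maps}. So the only non-trivial composite to check is $\partial_1 \circ \partial_2 : C_2 \to C_0$. Since $C_2 = \genR{\Z}{X_2}^\oplus$ is free on $X_2$ and $\partial_1, \partial_2$ are $\Z$-linear, it suffices to verify $\partial_1\partial_2(e^2_\alpha) = 0$ for an arbitrary generator $e^2_\alpha \in X_2$.

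Next I would set up notation by writing the boundary circle as $\overrightarrow{\varphi_2}(e^2_\alpha) = \lParen \pm_1 e^1_1, \dots, \pm_\ell e^1_\ell \rParen$, so that by definition $\partial_2(e^2_\alpha) = \sum_{i=1}^\ell \pm_i e^1_i$. Before composing, I would record the sign convention for $\partial_1$ on reversed edges: using $\Z$-linearity together with $s(-e^1) = t(e^1)$ and $t(-e^1) = s(e^1)$ from the definition of $\overrightarrow{\varphi_1}$, we get $\partial_1(\pm_i e^1_i) = t(\pm_i e^1_i) - s(\pm_i e^1_i)$ uniformly. This is the delicate book-keeping step and where I expect the reader most likely to stumble, so I would spell it out explicitly.

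Applying $\partial_1$ term by term then gives
\[
\partial_1\partial_2(e^2_\alpha) = \sum_{i=1}^\ell \bigl( t(\pm_i e^1_i) - s(\pm_i e^1_i) \bigr).
\]
Now I invoke the cycle condition from \Cref{def:abstract-circle}: $t(\pm_i e^1_i) = s(\pm_{i+1} e^1_{i+1})$ for $i = 1, \dots, \ell-1$ and $t(\pm_\ell e^1_\ell) = s(\pm_1 e^1_1)$. Re-indexing the target sum, $\sum_{i=1}^\ell t(\pm_i e^1_i) = \sum_{i=1}^\ell s(\pm_i e^1_i)$ as elements of $C_0$, and the difference vanishes. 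Finally I would remark that this computation is independent of the choice of representative of the cyclic-permutation equivalence class, since a cyclic permutation merely reorders the terms in a sum in $C_0$, concluding $\partial^2 = 0$.
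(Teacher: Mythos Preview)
Your proof is correct and follows essentially the same approach as the paper: both reduce to the composite $\partial_1 \circ \partial_2$ on a generator, rewrite each summand as $t(\pm_i e^1_i) - s(\pm_i e^1_i)$, and then use the cycle condition to telescope the sum to zero. The only cosmetic difference is that the paper writes out the cancellation explicitly term by term, whereas you re-index one of the two sums; your extra remark about invariance under cyclic permutation is a nice touch not present in the paper.
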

\begin{proof}
  This is clear for those $n \in \N$ where $\partial_n = 0$ or $\partial_{n+1} = 0$, and thus $\partial_n \circ \partial_{n+1} = 0$. That is the case for $n \ge 2$, as well as $n = 0$. We only need to show that $\partial_1 \circ \partial_2 = 0$.

  Take some $e^2_\alpha \in X_2$, and write its boundary circle $\varphi_2(e^2_\alpha) = \lParen \pm_1 e^1_1, \dots, \pm_\ell e^1_\ell \rParen \in \mathcal{S}\overrightarrow{X_1}$. We compute the action of the composite $\partial_1\circ\partial_2$ on this:
  \[
    \partial_1 \circ \partial_2(e^2_\alpha)
    = \partial_1 \left( \sum_{i=1}^\ell \pm_i e^1_i \right)
    = \sum_{i=1}^\ell \pm_i \partial_1(e^1_i)
    = \sum_{i=1}^\ell \left( \pm_i t(e^1_i) \mp_i s(e^1_i) \right).
  \]
  Remember that the boundary of a $1$-cell is target minus source. Recall that $s(\pm e^1_i) = t(\mp e^1_i)$, i.e. the source of an edge when traversed in one direction is the target of the edge when traversed in the opposite direction. Rewrite the above as:
  \begin{align*}
    \sum_{i=1}^\ell
    & \left( \pm_i t(e^1_i) \mp_i s(e^1_i) \right)
      = \sum_{i=1}^\ell \left( t(\pm_i e^1_i) - s(\pm_i e^1_i) \right) \\
    & = t(\pm_1 e^1_1) - s(\pm_1 e^1_1) + t(\pm_2 e^1_2) - s(\pm_2 e^1_2)
      + \cdots + t(\pm_\ell e^1_\ell) - s(\pm_\ell e^1_\ell) \\
    & =
      \underbrace{\left[ t(\pm_1 e^1_1) - s(\pm_\ell e^1_\ell) \right]}_0
      + \underbrace{\left[ t(\pm_2 e^1_2) - s(\pm_1 e^1_1) \right]}_0
      + \cdots
      + \underbrace{\left[ t(\pm_\ell e^1_\ell) - s(\pm_{\ell-1} e^1_{\ell-1}) \right]}_0 \\
    & = 0.
  \end{align*}
  In the last step, we used the fact that $t(\pm_i e^1_i) = s(\pm_{i + 1} e^1_{i + 1})$ for each $i=1,\dots, \ell-1$ and $t(\pm_\ell e^1_\ell) = s(\pm_1 e^1_1)$, which comes from this being a circle. Therefore $\partial_1 \circ \partial_2 = 0$, and since this was the only nontrivial case, we conclude that $\partial^2 = 0$, and the family of cellular chain modules together with cellular boundary maps forms a chain complex.
\end{proof}

\begin{example}
  \label{ex:interpretation-of-sum-of-edges}
  Now for some interpretation. The abstract cell complex $(X_\bullet, \varphi_\bullet)$ only contains individual cells, whereas the chain modules contain linear combinations. Suppose we have two edges $e^1_1$ and $e^1_2$, such that $t(e^1_1) = s(e^1_2)$. Consider $e^1_1 + e^1_2$: its boundary is $\partial_1(e^1_1 + e^1_2) = t(e^1_2) - s(e^1_1)$. The sum $e^1_1 + e^1_2$ behaves as a single longer edge with endpoints $(s(e^1_1), t(e^1_2))$. Not all linear combinations have such a nice interpretation, for example
  \[ \partial_1(e^1_1 - e^1_2) = t(e^1_1) - s(e^1_1) - t(e^1_2) + s(e^1_2) = 2t(e^1_1) - s(e^1_1) - t(e^1_2), \]
  where the final step comes from the fact that $t(e^1_1) = s(e^1_2)$. This correctly shows the fact that the endpoint $t(e^1_1)$ appears twice as the target of an edge, but the sum itself does not have a nice interpretation as a longer edge. That is because we did not sum the edges \emph{coherently}: $(e^1_1, -e^1_2)$ is not a path in the graph $X^1$.
\end{example}

\diagramonright{0.7}{0.3}{%
  \vspace{1.5ex}
  \begin{example}
    \label{ex:cancellation-of-edge-between-faces}
    This works similarly for faces. Return to \Cref{ex:acc2-square}; we recall \Cref{fig:square-abs-cell-complex} on the right. Consider $e^2_1 + e^2_2$: the boundary is
    \[
      \partial_2(e^2_1 + e^2_2) = \underbrace{e^1_1 + \cancel{e^1_2} + e^1_3}_{\partial_2(e^2_1)} + \underbrace{e^1_4 + e^1_5 - \cancel{e^1_2}}_{\partial_2(e^2_2)} = e^1_1 + e^1_4 + e^1_5 + e^1_3,
    \]
    that is the boundary of the whole square, without the\hiddenparbreak
    \vspace{0.7ex}
  \end{example}
}{%
  \centering
  \begin{tikzpicture}
  \tikzstyle{vertex}=[fill=black,draw=black,circle,minimum size=4pt,inner sep=0]
  \tikzstyle{->-}=[
  decoration={
    markings,
    mark=at position 0.55 with {\arrow{Triangle[scale=1]}}},
  postaction={decorate},
  draw]
  \node[vertex, label={-135:$e^0_1$}] at (0,0) (v1) {};
  \node[vertex, label={135:$e^0_2$}] at (0,3) (v2) {};
  \node[vertex, label={45:$e^0_3$}] at (3,3) (v3) {};
  \node[vertex, label={-45:$e^0_4$}] at (3,0) (v4) {};
  \draw[->-] (v1.center) -- node[midway, left] {$e^1_1$} (v2.center);
  \draw[->-] (v2.center) -- (v4.center);
  \node at (1.4,2.2) {$e^1_2$};
  \draw[->-] (v4.center) -- node[midway, below] {$e^1_3$} (v1.center);
  \draw[->-] (v2.center) -- node[midway, above] {$e^1_4$} (v3.center);
  \draw[->-] (v3.center) -- node[midway, right] {$e^1_5$} (v4.center);
  \node at (1,0.7) (f1) {$e^2_1$};
  \node[scale=3] at (f1.center) {$\circlearrowright$};
  \node at (2.3,2) (f2) {$e^2_2$};
  \node[scale=3] at (f2.center) {$\circlearrowright$};
  \begin{pgfonlayer}{background}
    \path[fill=\colora, fill opacity=0.5] (0,3) -- (0,0) -- (3,0);
    \path[fill=\colorb, fill opacity=0.5] (0,3) -- (3,3) -- (3,0);
  \end{pgfonlayer}
\end{tikzpicture}
}
diagonal edge $e^1_2$. This has a nice interpretation that the sum $e^2_1 + e^2_2$ corresponds to the whole square. It works because we sum faces of the same orientation. Obviously, we can again take any linear combination of $2$-cells, but this will not have the same nice interpretation.

\begin{example}
  \label{ex:interpretation-sum-of-faces-in-cylinder}
  In \Cref{ex:cancellation-of-edge-between-faces}, $e^1_2$ canceled in $\partial_2(e^2_1 + e^2_2)$ because both faces include it in their boundary circles, but in different directions. The same kind of cancellation may happen when the same $2$-cell, or linear combination of $2$-cells, touches and edge multiple times. Consider the cylinder obtained by gluing opposite edges of a the square, see \cref{fig:cylinder-complex:original}. The attaching map of $2$-cells is:
  \begin{align*}
    \varphi_2(e^2_1) & = \lParen e^1_1, e^1_2, e^1_3 \rParen
    & \implies &
    & \partial_2(e^2_1) & = e^1_1 + e^1_2 + e^1_3, \\
    \varphi_2(e^2_2) & = \lParen e^1_4, -e^1_1, -e^1_2 \rParen
    & \implies &
    & \partial_2(e^2_1) & = e^1_4 -e^1_1 -e^1_2.
  \end{align*}
  Then the boundary of the sum is
  \[
    \partial_2(e^2_1 + e^2_2)
    = \cancel{e^1_1} + \bcancel{e^1_2} + e^1_3 + e^1_4 - \cancel{e^1_1} - \bcancel{e^1_2}
    = e^1_3 + e^1_4.
  \]
  This now corresponds to the whole square, even though such a cell does not exist in~$X_2$. If it did, the boundary circle would be $\lParen e^1_1, e^1_4, -e^1_1, e^1_3 \rParen$; see \cref{fig:cylinder-complex:sum}. The edge $e^1_2$ is not present in the abstract boundary circle, nor in $\partial_2(e^2_1 + e^2_2)$. The edge $e^1_1$ is also canceled in $\partial_2(e^2_1 + e^2_2)$, in the same way that $e^1_2$, but we keep it in the abstract complex description, because this is what the definition requires. However, as far as the chain complex can tell, the boundary consists of $e^1_3 + e^1_4$, in this case two disjoint loops. This does correspond to a cylinder without the top and bottom face.
\end{example}

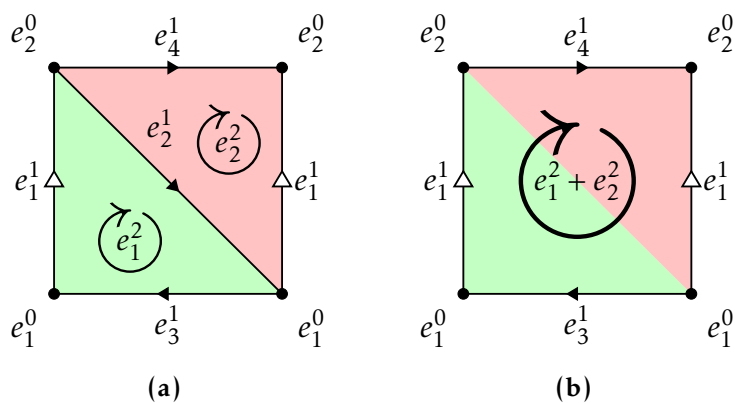
\begin{figure}[h]
  \centering
  \subfloat[\label{fig:cylinder-complex:original}]{\begin{tikzpicture}
  \tikzstyle{vertex}=[fill=black,draw=black,circle,minimum size=4pt,inner sep=0]
  \node[vertex, label={-135:$e^0_1$}] at (0,0) (v1) {};
  \node[vertex, label={135:$e^0_2$}] at (0,3) (v2) {};
  \node[vertex, label={45:$e^0_2$}] at (3,3) (v3) {};
  \node[vertex, label={-45:$e^0_1$}] at (3,0) (v4) {};
  \draw[-g>-] (v1.center) -- node[midway, left] {$e^1_1$} (v2.center);
  \draw[->-] (v2.center) -- (v4.center);
  \node at (1.4,2.2) {$e^1_2$};
  \draw[->-] (v4.center) -- node[midway, below] {$e^1_3$} (v1.center);
  \draw[->-] (v2.center) -- node[midway, above] {$e^1_4$} (v3.center);
  \draw[-g>-] (v4.center) -- node[midway, right] {$e^1_1$} (v3.center);
  \node at (1,0.7) (f1) {$e^2_1$};
  \node[scale=3] at (f1.center) {$\circlearrowright$};
  \node at (2.3,2) (f2) {$e^2_2$};
  \node[scale=3] at (f2.center) {$\circlearrowright$};
  \begin{pgfonlayer}{background}
    \path[fill=\colora, fill opacity=0.5] (0,3) -- (0,0) -- (3,0);
    \path[fill=\colorb, fill opacity=0.5] (0,3) -- (3,3) -- (3,0);
  \end{pgfonlayer}
\end{tikzpicture}}
  \qquad
  \subfloat[\label{fig:cylinder-complex:sum}]{\begin{tikzpicture}
  \tikzstyle{vertex}=[fill=black,draw=black,circle,minimum size=4pt,inner sep=0]
  \node[vertex, label={-135:$e^0_1$}] at (0,0) (v1) {};
  \node[vertex, label={135:$e^0_2$}] at (0,3) (v2) {};
  \node[vertex, label={45:$e^0_2$}] at (3,3) (v3) {};
  \node[vertex, label={-45:$e^0_1$}] at (3,0) (v4) {};
  \draw[-g>-] (v1.center) -- node[midway, left] {$e^1_1$} (v2.center);
  \draw[->-] (v4.center) -- node[midway, below] {$e^1_3$} (v1.center);
  \draw[->-] (v2.center) -- node[midway, above] {$e^1_4$} (v3.center);
  \draw[-g>-] (v4.center) -- node[midway, right] {$e^1_1$} (v3.center);
  \node at (1.5,1.5) (f1) {$e^2_1 + e^2_2$};
  \node[scale=5.5] at (f1.center) {$\circlearrowright$};
  \begin{pgfonlayer}{background}
    \path[fill=\colora, fill opacity=0.5] (0,3) -- (0,0) -- (3,0);
    \path[fill=\colorb, fill opacity=0.5] (0,3) -- (3,3) -- (3,0);
  \end{pgfonlayer}
\end{tikzpicture}}
  \caption[Cylinder complex]{Cylinder complex. \textbf{(a)}~The actual complex $(X_\bullet, \varphi_\bullet)$. We glued $e^1_1 \sim -e^1_5$ from \protect\cref{fig:square-abs-cell-complex}. If one were an ant living on this surface, one could start in the middle, walk right, and loop back from the left. \textbf{(b)}~The same space, but we show that $e^2_1 + e^2_2$ corresponds to a single cell that consists of the entire square. Such a cell is not actually contained in $X_2$, but a corresponding element is contained in the chain module $C_2$.}
  \label{fig:cylinder-complex}
\end{figure}

Having defined the abstract cell complexes and their chain complexes in such a closely-corresponding way allows us to use either structure to reason about a space. We use whichever is more convenient: the chain complex is incredibly useful in capturing the actual structure of the space independent of a particular abstract cell complex chosen to represent it, but we write the cell complex explicitly when we need the details about individual cells and their attaching. One of the things we obtain from chain complexes is homology, and this is important in the view of the topological space.

\section{Homology of a cell complex}
\label{sec:homology-of-cell-complex}

In \Cref{def:homology}, we define the $n^{\text{th}}$ homology module of a chain complex $C_\bullet$ as $H_n(C_\bullet) = \ker \partial_n / \im \partial_{n+1}$. We mentioned in \Cref{nota:cycles-and-boundaries} that we call $\ker \partial_n$ the space of \emph{$n$-cycles}, denoted $Z_n(C_\bullet)$, and we call $\im \partial_{n+1}$ the space of \emph{$n$-boundaries}, denoted $B_n(C_\bullet)$. There terms originate in studying a topological space using a chain complex, and we explain them here, also interpreting what homology is in this context.
While this obviously applies to general cell complexes (i.e. non-abstract and not restricted to dimension two), we choose to use our notion of an abstract $2$-dimensional cell complex. In part, this is because imagining surfaces, perhaps with edges and points drawn on them, is easy. The abstract $2$-dimensional cell complexes are also the setting for the rest of the work, so it is useful to build a good understanding of these.

In the following, let $X = (X_\bullet, \varphi_\bullet)$ be an abstract $2$-dimensional cell complex, and let $C_\bullet$ be its corresponding chain complex of $\Z$-modules from \Cref{sec:chain-complexes-from-cell-complexes}.

Elements of $Z_n(C_\bullet)$, the \emph{$n$-cycles}, correspond to $n$-dimensional subcomplexes (see \Cref{def:subcomplex}) of $X$ that have no boundary. This is because the differentials are interpreted as boundaries, and an element is in the kernel of a differential if it is mapped to zero. In our setting, $\partial_0 = 0$ and $\partial_n = 0$ for $n \ge 2$, so the only nontrivial case is $n=1$. Recall that $C_1$ is the space of oriented edges and their linear combinations; we saw their interpretation in \Cref{ex:interpretation-of-sum-of-edges}. A boundary of such a thing consists of endpoint vertices, and thus $1$-cycles are linear combinations of edges that have no endpoints. These are exactly the \emph{cycles} in the graph, which is where the term comes from. Note that they are not necessarily \emph{simple}, by which we mean they may visit the same vertex multiple times, follow an edge forward and then immediately backward, etc. Such cycles are linear combinations of simple cycles that do not do these things, and these form a basis of $Z_1(C_\bullet)$.

In our case, orientations are built into the cell complex, so it is more correct to say that elements of $Z_1(C_\bullet)$ correspond to abstract circles (recall \Cref{def:abstract-circle}). Note that self-loops are also included in this, because both their endpoints are the same, and they cancel.

\emph{Boundaries}, that is elements of $B_n(C_\bullet) = \im \partial_{n+1}$, are $n$-dimensional subspaces that form the boundaries of $(n+1)$-dimensional subspaces. That is to say $B_n(C_\bullet)$ is a $\Z$-module generated by boundaries of all $(n+1)$-cells; though note that $\partial_{n+1}$ is not necessarily a monomorphism, i.e. the set $\partial_{n+1}(X_{n+1})$ of images of $(n+1)$-cells is not necessarily linearly independent.

In the case of $n=1$, the elements of $B_1(C_\bullet)$ correspond to the boundary circles of faces. Let $e^2$ be a $2$-cell bounded by an abstract circle, e.g. a triangle $\lParen e^1_1, e^1_2, e^1_3 \rParen$. Then the differential of $e^2$ is, by definition, $\partial_2(e^2) = e^1_1 + e^1_2 + e^1_3$. Observe that such a triangle (or any abstract circle) belongs to $Z_1(C_\bullet)$. This is a defining property of a chain complex, and the topological interpretation is why that was chosen. Any boundary of a face from $X_2$, or sum thereof, itself has no boundary.

But suppose $Z_1(C_\bullet) \ne B_1(C_\bullet)$, so the homology $H_1(C_\bullet) = Z_1(C_\bullet) / B_1(C_\bullet)$ is non-zero. What are its elements?
The non-zero elements of $H_1(C_\bullet)$ correspond to cycles that are not boundaries of any face, and as such they describe \emph{holes} in the space. More correctly, the elements of $H_1(C_\bullet)$ are equivalence classes of these non-boundary cycles, up to addition of boundary cycles. If two cycles are in the same equivalence class, we call them \emph{homologous}. In the next section, we look at some examples.

\subsection{Examples}
\label{sec:homology-examples}

\begin{example}
  \label{ex:torus-nonHzero-cycle}
  \begin{figure}[h]
    \centering
    \subfloat[\label{fig:torus-nonH0-cycle-square}]{%
      \begin{tikzpicture}
  \newcommand*{\sqside}{4}

  \node[vertex, label={-135:$e^0_1$}] at (0,0) (v1) {};
  \node[vertex, label={135:$e^0_1$}] at (0,\sqside) (v2) {};
  \node[vertex, label={45:$e^0_1$}] at (\sqside,\sqside) (v3) {};
  \node[vertex, label={-45:$e^0_1$}] at (\sqside,0) (v4) {};
  \node[vertex, label={180:$e^0_2$}] at (0,\sqside/2) (v5) {};
  \node[vertex, label={0:$e^0_2$}] at (\sqside,\sqside/2) (v6) {};
  \node[vertex, label={90:$e^0_3$}] at (\sqside/2,\sqside/2) (v7) {};
  \draw[-g>-] (v1.center) -- node[midway, left] {$e^1_1$} (v5.center);
  \draw[-g>>-] (v5.center) -- node[midway, left] {$e^1_2$} (v2.center);
  \draw[-g>>>-] (v2.center) -- node[midway, above] {$e^1_3$} (v3.center);
  \draw[-g>-] (v4.center) -- node[midway, right] {$e^1_1$} (v6.center);
  \draw[-g>>-] (v6.center) -- node[midway, right] {$e^1_2$} (v3.center);
  \draw[-g>>>-] (v1.center) -- node[midway, below] {$e^1_3$} (v4.center);
  \draw[walk] (v5.center) -- node[midway, above] {$e^1_4$} (v7.center);
  \draw[walk] (v7.center) -- node[midway, above] {$e^1_5$} (v6.center);
  \node at (\sqside/2,0.8) (f1) {$e^2_1$};
  \node[scale=3] at (f1.center) {$\circlearrowright$};
  \node at (\sqside/2,\sqside-0.8) (f2) {$e^2_2$};
  \node[scale=3] at (f2.center) {$\circlearrowright$};

  \begin{pgfonlayer}{background}
    \path[fill=\colora, fill opacity=0.5] (v1.center) -- (v5.center) -- (v6.center) -- (v4.center) -- cycle;
    \path[fill=\colorb, fill opacity=0.5] (v2.center) -- (v3.center) -- (v6.center) -- (v5.center) -- cycle;
  \end{pgfonlayer}
\end{tikzpicture}}
    \qquad
    \subfloat[\label{fig:torus-nonH0-cycle-3d}]{%
      \input{torus_with_nonHzero_cycle_3d_labels}}
    \caption[Torus, which has circles that are not homologous to zero]{Torus, which has circles that are not homologous to zero.
      \textbf{(a)}~The gluing diagram. We graphically emphasize the circle $\lParen e^1_4, e^1_5 \rParen$ as \tikz[baseline=-1mm]{\draw[walk] (0,0) -- ++(0.5,0);}.
      \textbf{(b)}~A 3D picture of the torus glued. The dashed lines correspond to the edges where gluing occurs. The full lines correspond to the circle $\lParen e^1_4, e^1_5 \rParen$. Notice that there is no way to continuously deform this circle along the surface, such that it contracts to a point.}
    \label{fig:torus-nonH0-cycle}
  \end{figure}
  Suppose $X$ is a torus obtained by gluing a square as shown in \cref{fig:torus-nonH0-cycle-square}. Then the boundaries of faces are:
  \begin{align*}
    \varphi_2(e^2_1) & = \lParen e^1_1, e^1_4, e^1_5, -e^1_1, -e^1_3 \rParen
    & \implies &
    & \partial_2(e^2_1) & =  -e^1_3 + e^1_4 + e^1_5, \\
    \varphi_2(e^2_2) & = \lParen e^1_2, e^1_3, -e^1_2, -e^1_5, -e^1_4 \rParen
    & \implies &
    & \partial_2(e^2_1) & = +e^1_3 -e^1_4 -e^1_5.
  \end{align*}
  The boundary of their sum, respecting orientation, is $\partial_2(e^2_1 + e^2_2) = 0$.  We could of course sum them in a different way:
  \[
    \partial_2(a e^2_1 + be^2_2)
    = a(-e^1_3 + e^1_4 + e^1_5) + b(e^1_3 - e^1_4 - e^1_5)
    = -(a - b) e^1_3 + (a - b) e^1_4 + (a - b) e^1_5,
  \]
  for $a,b \in \Z$. Notice that we are describing the image of $\partial_2$. We will show that an element $e^1_4 + e^1_5$ corresponding to the circle $\lParen e^1_4, e^1_5 \rParen$, emphasized in \cref{fig:torus-nonH0-cycle}, is not in $\im\partial_2$. If it were, then there would exist $c \in \Z$, $c \ne 0$, such that $\partial_2(a e^2_1 + be^2_2) = c(e^1_4 + e^1_5)$. This would imply $a-b = 0$. But then $\partial_2(a e^2_1 + be^2_2) = 0$ and thus $c=0$, which is a contradiction. Conclude that $e^1_4 + e^1_5$ is indeed not in the image, meaning there is no linear combination of faces bounded by it. Hence it is not in the homology class $[0] = B_1(C_\bullet)$.

  In the following, we view the torus in \cref{fig:torus-nonH0-cycle-3d} as a continuous topological space. The homology class $[0]$ contains exactly the circles on the torus which can be continuously deformed to a point -- these are called \emph{contractible}. However, there is no way to contract the circle $\lParen e^1_4, e^1_5 \rParen$, because there is no surface along which we could deform it. It would have to pass through ``empty space,'' but this is not actually a part of our space, so it is not possible. The circle $\lParen e^1_4, e^1_5 \rParen$ is said to go around a \emph{hole} in the space. This is the meaning of $e^1_4 + e^1_5$ not being the boundary of any face (or linear combination of faces).

  In the example, we artificially created an unnecessary circle $\lParen e^1_4, e^1_5 \rParen$ that is not a boundary, so that we could explain this on a circle that is not used to define the gluing of the square (\cref{fig:torus-nonH0-cycle-square}) to make a torus. However, the circle $\lParen e^1_3 \rParen$ is also not homologous to zero, and in fact it is homologous to $\lParen e^1_4, e^1_5 \rParen$: we can add the boundary of $e^2_1$ or $e^2_2$ to transform one into the other; or in terms of deformation, we can slide $\lParen e^1_3 \rParen$ to $\lParen e^1_4, e^1_5 \rParen$.

  Another homology class of non-boundary circles corresponds to the circle $\lParen e^1_1, e^1_2 \rParen$. This is because the torus is a surface with no volume inside, so this circle also cannot be contracted. Thus we conclude that the first homology of this complex has rank two, i.e. it consists of two distinct homology classes:
  \[ H_1(C_\bullet) = \genR{\Z}{ e^1_1 + e^1_2, \ e^1_4 + e^1_5 }^\oplus \cong \Z^{\oplus 2}. \]
\end{example}

\begin{example}
  Now, we add another face to the complex from \Cref{ex:torus-nonHzero-cycle}: the new face $e^2_3$ has boundary $\lParen e^1_4, e^1_5 \rParen$, the circle that previously described a hole. We show this in \Cref{fig:torus-with-a-hole-filled}. With the addition of this face, the circle $\lParen e^1_4, e^1_5 \rParen$ can be contracted, and thus it is homologous to zero. The homology class represented by this circle is now, in fact, the zero class $[0]$. The first homology of this new complex is
  $H_1(C_\bullet) = \langle e^1_1 + e^1_2 \rangle_\Z^\oplus \cong \Z$ which has rank 1.
\end{example}

\begin{figure}[h]
  \centering
  \subfloat[]{%
    \begin{tikzpicture}
  \node[vertex, label={-135:$e^0_1$}] at (0,0) (v1) {};
  \node[vertex, label={135:$e^0_1$}] at (0,4) (v2) {};
  \node[vertex, label={45:$e^0_1$}] at (4,4) (v3) {};
  \node[vertex, label={-45:$e^0_1$}] at (4,0) (v4) {};
  \node[vertex, label={180:$e^0_2$}] at (0,4/2) (v5) {};
  \node[vertex, label={0:$e^0_2$}] at (4,4/2) (v6) {};
  \node[vertex, label={90:$e^0_3$}] at (4/2,4/2) (v7) {};
  \draw[-g>-] (v1.center) -- node[midway, left] {$e^1_1$} (v5.center);
  \draw[-g>>-] (v5.center) -- node[midway, left] {$e^1_2$} (v2.center);
  \draw[-g>>>-] (v2.center) -- node[midway, above] {$e^1_3$} (v3.center);
  \draw[-g>-] (v4.center) -- node[midway, right] {$e^1_1$} (v6.center);
  \draw[-g>>-] (v6.center) -- node[midway, right] {$e^1_2$} (v3.center);
  \draw[-g>>>-] (v1.center) -- node[midway, below] {$e^1_3$} (v4.center);
  \draw[walk] (v5.center) -- node[midway, above] {$e^1_4$} (v7.center)
  -- node[midway, above] {$e^1_5$} (v6.center);
  \node at (4-0.8,0.6) (f1) {$e^2_1$};
  \node[scale=3] at (f1.center) {$\circlearrowright$};
  \node at (0.8,4-0.8) (f2) {$e^2_2$};
  \node[scale=3] at (f2.center) {$\circlearrowright$};

  \draw[crawl] (v6.center)
  .. controls (4, 4/2-1) and (-1.3, 4/2-1.8)
  .. (v5.center);
  \node at (1, 1.425) (f3) {$e^2_3$};
  \node[scale=3] at (f3.center) {$\circlearrowright$};

  \begin{pgfonlayer}{background}
    \path[fill=\colora, fill opacity=0.5] (v1.center) -- (v5.center) -- (v6.center) -- (v4.center) -- cycle;
    \path[fill=\colorb, fill opacity=0.5] (v2.center) -- (v3.center) -- (v6.center) -- (v5.center) -- cycle;
    \path[fill=\colorc, fill opacity=0.5]  (v6.center)
    .. controls (4, 4/2-1) and (-1.3, 4/2-1.8)
    .. (v5.center);
  \end{pgfonlayer}
\end{tikzpicture}
  }
  \qquad
  \subfloat[]{%
    \input{torus_with_added_face_3d_labels}
  }
  \caption[Torus with a hole filled]{Torus with a hole filled.
    \textbf{(a)}~The gluing diagram. The only difference from \Cref{fig:torus-nonH0-cycle-square} is the addition of the face $e^2_3$ (blue), bounded by the circle $\lParen e^1_4, e^1_5 \rParen$ drawn as \tikz[baseline=-1mm]{\draw[walk] (0,0) -- ++(0.5,0);}. The edge drawn as \tikz[baseline=-1mm]{\draw[crawl] (0,0) -- ++(0.5,0);} is not real; we only use it to emphasize the boundary and orientation of the new face, but this is only an artifact of the way we have drawn the picture, with the vertex $e^0_2$ on the left and on the right. The edge \tikz[baseline=-1mm]{\draw[crawl] (0,0) -- ++(0.5,0);} is to be interpreted as the vertex $e^0_2$ stretched between the two places where we actually place the dot \tikz[baseline=-1mm]{\node[vertex] {};}.
    \textbf{(b)}~A 3D picture. To avoid clutter, we omit the labeling, which is the same as in \Cref{fig:torus-nonH0-cycle-3d}, with the addition of the face $e^2_3$ (blue). Notice that the circle $\lParen e^1_4, e^1_5 \rParen$ (full line) can be contracted.}
  \label{fig:torus-with-a-hole-filled}
\end{figure}

\subsection{Homology of a direct sum of complexes}

In \Cref{def:direct-sum-of-cell-complexes}, we defined the direct sum of cell complexes. We now show how homology behaves in respect to this. We need this to study $H_0(C_\bullet)$ in the following section, but also for merging two quantum error correcting codes in \Cref{sec:connected-sum-of-codes}.

\begin{lemma}
  \label{lem:direct-sum-of-complexes}
  Let $Y = (Y_\bullet, \upsilon_\bullet)$ and $Z = (Z_\bullet, \zeta_\bullet)$ be abstract cell complexes, and let $X = (X_\bullet, \chi_\bullet) \deq Y \oplus Z$. Denote their chain complexes $C_\bullet(Y)$, $C_\bullet(Z)$, and $C_\bullet(X)$. Then $C_\bullet(X) = C_\bullet(Y) \oplus C_\bullet(Z)$, the \emph{direct sum chain complex} which has components $C_n(X) = C_n(Y) \oplus C_n(Z)$ and differentials $\partial_n^{C_\bullet(X)} = \partial_n^{C_\bullet(Y)} \oplus \partial_n^{C_\bullet(Z)}$ for all $n$.
\end{lemma}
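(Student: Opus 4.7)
The plan is to verify both parts of the claim---the components and the differentials---separately for each degree $n \in \{0, 1, 2\}$, using the explicit constructions from Definitions~\ref{def:cellular-chain-modules} and \ref{def:cellular-boundary-maps} together with the definition of direct sum of abstract cell complexes (\Cref{def:direct-sum-of-cell-complexes}).

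First I would check the chain modules. By definition, $X_n = Y_n \sqcup Z_n$, so the free $\Z$-module generated by $X_n$ is
\[
C_n(X) = \genR{\Z}{X_n}^\oplus = \genR{\Z}{Y_n \sqcup Z_n}^\oplus \cong \genR{\Z}{Y_n}^\oplus \oplus \genR{\Z}{Z_n}^\oplus = C_n(Y) \oplus C_n(Z),
\]
where the middle isomorphism is the standard fact that the free module on a disjoint union of sets is the direct sum of the free modules on each piece. For $n > 2$ and $n = -1$ both sides are $\bm 0$, so equality holds trivially.

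Next I would check that the differentials agree. The nontrivial cases are $n = 1$ and $n = 2$, since $\partial_n = 0$ everywhere else. For $n = 1$, take a basis element $e^1_\alpha \in X_1$. By construction of the disjoint union of attaching maps, $\chi_1 = \upsilon_1 \sqcup \zeta_1$, so if $e^1_\alpha \in Y_1$ then its source and target both lie in $Y_0$ and $\partial_1^{C_\bullet(X)}(e^1_\alpha) = t(e^1_\alpha) - s(e^1_\alpha)$ equals $\partial_1^{C_\bullet(Y)}(e^1_\alpha)$ under the identification $C_0(Y) \hookrightarrow C_0(X)$; symmetrically for $e^1_\alpha \in Z_1$. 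Hence $\partial_1^{C_\bullet(X)} = \partial_1^{C_\bullet(Y)} \oplus \partial_1^{C_\bullet(Z)}$. The case $n = 2$ is analogous: if $e^2_\alpha \in Y_2$, then $\chi_2(e^2_\alpha) = \upsilon_2(e^2_\alpha)$ is an abstract circle whose edges all lie in $Y_1$, so the sum defining $\partial_2^{C_\bullet(X)}(e^2_\alpha)$ lives entirely in $C_1(Y)$ and equals $\partial_2^{C_\bullet(Y)}(e^2_\alpha)$; likewise for $Z$.

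The only potential subtlety---and thus the ``hard part''---is a bookkeeping concern rather than a conceptual one: one must make sure that the free-module construction is genuinely additive over disjoint unions and that the abstract circles in $\mathcal{S}\overrightarrow{X_1}$ arising from $Y_2 \sqcup Z_2$ do not accidentally mix edges from both summands. The latter is guaranteed by the coherence built into \Cref{def:direct-sum-of-cell-complexes}, where the attaching map of a $2$-cell from $Y_2$ is inherited verbatim from $\upsilon_2$ and therefore involves only $1$-cells of $Y$ (and similarly for $Z$). Putting the module identification and the two differential identities together degree-wise yields the equality $C_\bullet(X) = C_\bullet(Y) \oplus C_\bullet(Z)$ of chain complexes, as desired.
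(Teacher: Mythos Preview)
Your proof is correct and follows essentially the same approach as the paper: first identify the chain modules via $\genR{\Z}{Y_n \sqcup Z_n}^\oplus \cong \genR{\Z}{Y_n}^\oplus \oplus \genR{\Z}{Z_n}^\oplus$, then observe that the differentials respect this splitting. The paper dispatches the second step in a single sentence (``clearly this differential is compatible with the splitting''), whereas you spell out the verification on basis elements for $n=1$ and $n=2$ and flag the bookkeeping point about abstract circles not mixing edges from the two summands---this is more explicit than the paper, but not a different argument.
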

\begin{proof}
  Recall from  that the sets of cells of $X$ are $X_n = Y_n \sqcup Z_n$ for $n\in\{0,1,2\}$. Then the chain modules are related as
  \begin{equation}
    \label{eq:direct-sum-of-complexes-on-chain-modules}
    C_n(X) = \genR\Z{X_n}^\oplus = \genR\Z{Y_n \sqcup Z_n}^\oplus = \genR\Z{Y_n}^\oplus \oplus \genR\Z{Z_n}^\oplus = C_n(Y) \oplus C_n(Z),
  \end{equation}
  Take a differential $\partial_n^{C_\bullet(X)} : C_n(X) \to C_{n-1}(X)$, where both the domain and codomain are direct sums as in \cref{eq:direct-sum-of-complexes-on-chain-modules}. Then clearly this differential is compatible with the splitting, and
  \[
    \partial_n^{C_\bullet(X)} = \partial_n^{C_\bullet(Y)} \oplus \partial_n^{C_\bullet(Z)}.
    \qedhere
  \]
\end{proof}

\begin{theorem}
  \label{thm:direct-sum-of-complexes-on-homology}
  Let $X = Y \oplus Z$ be cell complexes as above, with their corresponding chain complexes. Then $H_n(C_\bullet(X)) = H_n(C_\bullet(Y)) \oplus H_n(C_\bullet(Z))$ for each $n$.
\end{theorem}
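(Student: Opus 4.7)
The plan is to use the preceding \Cref{lem:direct-sum-of-complexes}, which already tells us that the chain complex of $X$ splits as a direct sum at the level of chain modules and differentials. So the work reduces to showing that kernels, images, and quotients all respect this splitting degree by degree. There is no topology left to do; this is purely an algebraic fact about direct sums of $R$-linear maps.

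First I would write the differential in degree $n$ explicitly as $\partial_n^{C_\bullet(X)}(y,z) = \big(\partial_n^{C_\bullet(Y)}(y),\ \partial_n^{C_\bullet(Z)}(z)\big)$ for $(y,z) \in C_n(Y) \oplus C_n(Z)$. From this componentwise description it is immediate that
\[
  \ker \partial_n^{C_\bullet(X)} = \ker \partial_n^{C_\bullet(Y)} \oplus \ker \partial_n^{C_\bullet(Z)},
\]
since $(y,z)$ is sent to zero exactly when each component is, and similarly
\[
  \im \partial_{n+1}^{C_\bullet(X)} = \im \partial_{n+1}^{C_\bullet(Y)} \oplus \im \partial_{n+1}^{C_\bullet(Z)}.
\]
Both of these are checked by a short diagram chase on pairs, using that the direct sum embeds each summand via the canonical inclusion and projects onto each summand via the canonical projection.

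Next I would invoke the general fact that for $\Z$-modules with $N \le M$ and $N' \le M'$, one has a canonical isomorphism $(M \oplus M')/(N \oplus N') \cong (M/N) \oplus (M'/N')$, given by $[(m,m')] \mapsto ([m],[m'])$. This is well-defined (cosets match up componentwise) and has an obvious inverse; it is a standard module-theoretic fact. Applying it to $M = \ker \partial_n^{C_\bullet(Y)}$, $N = \im \partial_{n+1}^{C_\bullet(Y)}$, $M' = \ker \partial_n^{C_\bullet(Z)}$, $N' = \im \partial_{n+1}^{C_\bullet(Z)}$, and recalling \Cref{def:homology}, gives
\[
  H_n(C_\bullet(X)) = \faktor{\ker \partial_n^{C_\bullet(X)}}{\im \partial_{n+1}^{C_\bullet(X)}} \cong H_n(C_\bullet(Y)) \oplus H_n(C_\bullet(Z)),
\]
as required.

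There is no real obstacle here; the only thing to be mildly careful about is making sure that the submodule $\im \partial_{n+1}^{C_\bullet(X)}$ really is the direct sum of the two images, rather than something larger sitting inside $\ker \partial_n^{C_\bullet(Y)} \oplus \ker \partial_n^{C_\bullet(Z)}$. Because the differential has no ``cross terms'' between the $Y$ and $Z$ summands — this is exactly the content of \Cref{lem:direct-sum-of-complexes} — no such cross terms can appear in the image either, and the splitting is clean.
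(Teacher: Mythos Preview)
Your proposal is correct and follows essentially the same route as the paper: use \Cref{lem:direct-sum-of-complexes} to split the differentials componentwise, deduce that kernels and images split as direct sums, and then apply the standard isomorphism $(M \oplus M')/(N \oplus N') \cong (M/N) \oplus (M'/N')$ to obtain the homology splitting. The paper's proof is slightly more terse but structurally identical.
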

\begin{proof}
  Since the chain modules and differentials are all direct sums, it follows that the kernel and image are also direct sums:
  \begin{align*}
    \ker \partial_n^{C_\bullet(X)}
    & = \ker \partial_n^{C_\bullet(Y)} \oplus \ker \partial_n^{C_\bullet(Z)}
    & \text{and} &
    & \im \partial_n^{C_\bullet(X)}
    & = \im \partial_n^{C_\bullet(Y)} \oplus \im \partial_n^{C_\bullet(Z)},
  \end{align*}
  for all $n$. Then the homology modules are, for each $n$:
  \begin{align}
    \nonumber
    H_n(C_\bullet(X))
    & = \faktor{\ker \partial_n^{C_\bullet(X)}}{\im \partial_{n+1}^{C_\bullet(X)}}
      = \faktor{\left(
      \ker \partial_n^{C_\bullet(Y)} \oplus \ker \partial_n^{C_\bullet(Z)}
      \right)\;}{\;\left(
      \im \partial_n^{C_\bullet(Y)} \oplus \im \partial_n^{C_\bullet(Z)}
      \right)} \\
    \label{eq:H-direct-sum:direct-sum-quotient}
    & \cong \left(
      \faktor{\ker \partial_n^{C_\bullet(Y)}}{\im \partial_n^{C_\bullet(Y)}}
      \right) \oplus \left(
      \faktor{\ker \partial_n^{C_\bullet(Z)}}{\im \partial_n^{C_\bullet(Z)}}
      \right) \\
    \nonumber
    & = H_n(C_\bullet(Y)) \oplus H_n(C_\bullet(Z)).
  \end{align}
  In (\ref{eq:H-direct-sum:direct-sum-quotient}), we use the fact that $\im \partial_n^{C_\bullet(Y)} \le \ker \partial_n^{C_\bullet(Y)}$, and similarly for $C_\bullet(Z)$, to split the quotient into direct sum of two quotients, and thus conclude that the homology respects the direct sum.
\end{proof}

\subsection{$0^{\text{th}}$ homology module}
\label{sec:H0}

We saw in the examples in \Cref{sec:homology-examples} what the first homology module represents. Now, we look at $H_0(C_\bullet)$. The following results are important steps towards the Structure \Cref{thm:structure-theorem-for-qudit-logical-space}. They are standard results in the context of cellular (and other) homology theories. However, we still prove them below, to show how they work, and that they indeed hold for our construction of the abstract cell complex.

\begin{theorem}
  \label{thm:H0-is-free}
  If $C_\bullet$ is a chain complex corresponding to an abstract cell complex $(X_\bullet, \varphi_\bullet)$, then the $0^{\text{th}}$ homology module is free, and its rank is the number of connected components of $(X_\bullet, \varphi_\bullet)$.
\end{theorem}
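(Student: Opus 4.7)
My plan is to use the direct sum decomposition from \Cref{thm:direct-sum-of-complexes-on-homology} to reduce to the connected case, and then run the standard augmentation argument.

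First I would make the notion of connected component precise: declare two vertices $v, w \in X_0$ equivalent if there is a finite sequence of $1$-cells $e^1_{i_1}, \ldots, e^1_{i_m}$ (in either orientation) such that consecutive cells share an endpoint and the path starts at $v$ and ends at $w$. This is an equivalence relation on $X_0$. Extend it to higher cells by placing each $1$-cell in the class of either of its endpoints (the two agree) and each $2$-cell in the class of the vertices occurring in its boundary circle (which all lie in one class, since the boundary circle connects them by edges). The equivalence classes give subcomplexes $X^{(1)}, \ldots, X^{(k)}$ in the sense of \Cref{def:subcomplex}, and $X = X^{(1)} \oplus \cdots \oplus X^{(k)}$ as abstract cell complexes. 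By \Cref{thm:direct-sum-of-complexes-on-homology}, $H_0(C_\bullet) \cong \bigoplus_{j=1}^k H_0(C_\bullet(X^{(j)}))$, so it suffices to prove that a connected complex has $H_0 \cong \Z$.

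Assume now that $X$ is connected. Since $\partial_0 = 0$, we have $H_0(C_\bullet) = C_0 / \im \partial_1$. Define the \emph{augmentation} $\varepsilon : C_0 \to \Z$ by $\varepsilon(e^0_\alpha) \deq 1$ for every $0$-cell, extended $\Z$-linearly. For any $1$-cell $e^1$, $\varepsilon(\partial_1(e^1)) = \varepsilon(t(e^1)) - \varepsilon(s(e^1)) = 1 - 1 = 0$, so $\varepsilon \circ \partial_1 = 0$ and $\varepsilon$ factors through the quotient as $\overline{\varepsilon} : H_0(C_\bullet) \to \Z$. This map is surjective since $\overline{\varepsilon}([e^0_\alpha]) = 1$ for any vertex.

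For injectivity, I would show $\ker \varepsilon \subseteq \im \partial_1$. Fix once and for all a base vertex $v_0$. For each $v \in X_0$, choose a path in the $1$-skeleton from $v_0$ to $v$ (which exists by connectedness); this determines an element $p_v \in C_1$ built from the $1$-cells along the path with a sign $+$ or $-$ according to whether the path traverses each in its given orientation or the reverse. By a telescoping argument (each intermediate vertex appears once as a target and once as a source, in signed form), $\partial_1(p_v) = v - v_0$. Now take any $c = \sum_i n_i v_i \in \ker \varepsilon$, so $\sum_i n_i = 0$. Then
\[
\partial_1\!\left( \sum_i n_i p_{v_i} \right) = \sum_i n_i (v_i - v_0) = c - \left(\sum_i n_i\right) v_0 = c,
\]
so $c \in \im \partial_1$. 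Hence $\overline{\varepsilon}$ is an isomorphism and $H_0(C_\bullet) \cong \Z$ in the connected case. Assembling via the direct sum gives $H_0(C_\bullet) \cong \Z^{\oplus k}$, free of rank equal to the number of connected components.

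The only subtle point, and the main obstacle worth writing carefully, is verifying the telescoping $\partial_1(p_v) = v - v_0$: one must handle that the chosen path may traverse some edges against their stored orientation, so $p_v$ is genuinely an element of $C_1$ (a signed sum) rather than of $X_1$, and the signs must be tracked to make consecutive boundary terms cancel.
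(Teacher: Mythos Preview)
Your proof is correct and follows essentially the same route as the paper: reduce to the connected case via the direct-sum decomposition (\Cref{thm:direct-sum-of-complexes-on-homology}), then use paths from a fixed base vertex to show every difference $v - v_0$ lies in $\im\partial_1$. The only cosmetic difference is that you package the connected case through the augmentation map $\varepsilon : C_0 \to \Z$, whereas the paper phrases it as a change of basis $C_0 = \genR\Z{v_0} \oplus \genR\Z{v_i - v_0}^\oplus$; these are the same computation, and your augmentation argument is arguably cleaner since it makes both containments $\im\partial_1 \subseteq \ker\varepsilon$ and $\ker\varepsilon \subseteq \im\partial_1$ explicit.
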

In order to prove it, we need the following first:
\begin{lemma}
  \label{lem:connected-complex-has-H0-Z}
  If $X = (X_\bullet, \varphi_\bullet)$ is a \emph{connected} abstract cell complex, which means that there exists a path of edges between any two vertices, and $C_\bullet$ is its chain complex, then $H_0(C_\bullet) \cong \Z$.
\end{lemma}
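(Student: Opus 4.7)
The plan is to construct an augmentation map $\varepsilon : C_0 \to \Z$ sending each vertex $e^0_\alpha \in X_0$ to $1$, extended $\Z$-linearly, and then show that $\ker \varepsilon = \im \partial_1$. Since $X$ is connected it has at least one vertex, so $\varepsilon$ is surjective, and by the first isomorphism theorem we would get $H_0(C_\bullet) = C_0 / \im \partial_1 = C_0 / \ker \varepsilon \cong \Z$.

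The containment $\im \partial_1 \subseteq \ker \varepsilon$ is immediate: for any $1$-cell $e^1 \in X_1$, the definition of $\partial_1$ gives $\varepsilon(\partial_1(e^1)) = \varepsilon(t(e^1) - s(e^1)) = 1 - 1 = 0$, and since $\im \partial_1$ is generated (as a $\Z$-module) by $\{\partial_1(e^1) : e^1 \in X_1\}$, this extends by linearity.

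The reverse containment $\ker\varepsilon \subseteq \im \partial_1$ is where connectedness enters, and this is the main step. Fix a basepoint $e^0_* \in X_0$. I want to show first that for every vertex $e^0_\alpha$, the element $e^0_\alpha - e^0_*$ lies in $\im \partial_1$. By connectedness, there is a sequence of $1$-cells $\pm_1 e^1_1, \dots, \pm_k e^1_k \in \overrightarrow{X_1}$ (a path) such that $s(\pm_1 e^1_1) = e^0_*$, $t(\pm_k e^1_k) = e^0_\alpha$, and $t(\pm_i e^1_i) = s(\pm_{i+1} e^1_{i+1})$ for $i < k$. Noting that $\partial_1(-e^1) = s(e^1) - t(e^1) = -\partial_1(e^1)$ extends the boundary map compatibly to $\overrightarrow{X_1}$, the sum $c \deq \sum_{i=1}^k \pm_i e^1_i \in C_1$ satisfies $\partial_1(c) = \sum_{i=1}^k \left(t(\pm_i e^1_i) - s(\pm_i e^1_i)\right)$, which telescopes to $e^0_\alpha - e^0_*$ exactly as in the telescoping computation from the proof that $\partial_1 \circ \partial_2 = 0$. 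Hence $e^0_\alpha - e^0_* \in \im \partial_1$.

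Now given any $c_0 = \sum_\alpha n_\alpha e^0_\alpha \in \ker\varepsilon$, the condition $\varepsilon(c_0) = \sum_\alpha n_\alpha = 0$ lets us rewrite
\[
  c_0 = \sum_\alpha n_\alpha e^0_\alpha - \Bigl(\sum_\alpha n_\alpha\Bigr) e^0_* = \sum_\alpha n_\alpha (e^0_\alpha - e^0_*),
\]
which is a $\Z$-linear combination of elements of $\im \partial_1$, hence itself in $\im \partial_1$. The only subtlety to double-check is the existence of a path from $e^0_*$ to $e^0_\alpha$ in the oriented sense of $\overrightarrow{X_1}$, but the definition of connectedness is exactly that such a (possibly edge-reversing) path exists, which is handled cleanly by passing to the orientation closure. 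This completes the two containments and gives $H_0(C_\bullet) \cong \Z$.
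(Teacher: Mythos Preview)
Your proof is correct. Both your approach and the paper's share the same core step: using connectedness to find a path between any two vertices and telescoping the boundary to show $e^0_\alpha - e^0_* \in \im \partial_1$. The packaging differs slightly. The paper changes the basis of $C_0$ to $\genR\Z{e^0} \oplus \genR\Z{e^0_i - e^0}^\oplus$ and argues that the second summand is killed in the quotient, concluding $H_0 \cong \genR\Z{e^0} \cong \Z$. You instead introduce the augmentation map $\varepsilon$ and invoke the first isomorphism theorem. Your route is arguably tidier: by proving both containments $\im\partial_1 \subseteq \ker\varepsilon$ and $\ker\varepsilon \subseteq \im\partial_1$ you make explicit that the quotient is exactly $\Z$ and not a proper quotient of it, a point the paper's change-of-basis argument leaves implicit.
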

In the statement of the lemma, we do not mention faces. The $1$-skeleton is essentially a graph, and if this is partitioned into disjoint connected components, then there is no way for a face to exist between them, because there is no way to define a boundary circle if there are no edges between the components. This means we can reason about connectedness just in terms of the $1$-skeleton.
\begin{proof}
  Assume $X$ is connected, and take two distinct points $e^0_1$ and $e^0_2 \in X_0$. Then there exists a path of oriented edges $(\pm_i e^1_i \in \overrightarrow{X_1})_{i=1}^\ell$ for $\ell \in \N$, such that $s(\pm_1 e^1_1) = e^0_1$ and $t(\pm_\ell e^1_\ell) = e^0_2$. This being a path means that $t(\pm_i e^1_i) = s(\pm_{i+1} e^1_{i+1})$ for $i = 1, \dots, \ell-1$.

  In the chain complex, the differential $\partial_0 = 0$, so its kernel is $C_0 = \genR\Z{X_0}^\oplus$, where its basis elements are all vertices. We now show that the quotient $C_0 / \im\partial_1 \cong \Z$. The connectedness means that for any pair of vertices $e^0_1$ and $e^0_2$, there exists an element $x \in C_1$ such that $\partial_1(x) = e^0_2 - e^0_1$ -- this $x$ corresponds to the path above. Choose some vertex~$e^0$, and change the basis to
  \[
    C_0 = \genR\Z{e^0} \oplus \genR\Z{ e^0_i - e^0 : e^0_i \in X_0 \setminus \{e^0\}}^\oplus.
  \]
  All of the basis elements in the right hand span are eliminated by quotienting over $\im\partial_1$. We are left with $H_0(C_\bullet) = C_0/\im\partial_1 \cong \langle e^0 \rangle_\Z \cong \Z$.
\end{proof}

\begin{proof}[Proof of \Cref{thm:H0-is-free}]
  Now, we simply put two results together. The connected components of the abstract cell complex $X$ are a family of complexes $\{X^{(i)}\}_i$ such that $X = \bigoplus_i X^{(i)}$. The superscript $(i)$ indexes the family and is not related to skeletons. It follows from \Cref{thm:direct-sum-of-complexes-on-homology} that $H_0(C_\bullet(X)) = \bigoplus_i H_0(C_\bullet(X^{(i)}))$, where $C_\bullet(X)$ and $\{C_\bullet(X^{(i)})\}_i$ are the corresponding chain complexes. By construction, all $X^{(i)}$ are connected, so each $H_0(C_\bullet(X^{(i)})) \cong \Z$ by \Cref{lem:connected-complex-has-H0-Z}. Then we conclude that $H_0(C_\bullet(X)) \cong \Z^{\oplus c}$, where $c$ is the number of connected components.
\end{proof}

\Cref{thm:H0-is-free} is important in its own right. However, the main reason we needed to show it was the following important result about the first homology taken over a ring $R$. The result below is a central step of our main \Cref{thm:structure-theorem-for-qudit-logical-space}.
\begin{corollary}
  \label{thm:ACC-H1-change-of-ring}
  Let $X$ be an abstract cell complex, and $C_\bullet$ its corresponding chain complex of $\Z$-modules. Let $R$ be a ring. Then the first homology of $C_\bullet$ over the ring~$R$ is $H_1^R(C_\bullet) \cong H_1(C_\bullet) \otimes_\Z R$, where $H_1(C_\bullet)$ is the usual homology over $\Z$.
\end{corollary}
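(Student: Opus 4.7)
The plan is to apply \Cref{lem:H1-change-of-ring} directly with $n = 0$, using \Cref{thm:H0-is-free} to check its hypothesis. Everything has been set up precisely so that this becomes a one-line deduction.

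First I would recall \Cref{lem:H1-change-of-ring}: for any chain complex $D_\bullet$ of $\Z$-modules and any ring $R$, whenever $H_n(D_\bullet)$ is a \emph{free} $\Z$-module one has
\[
  H_{n+1}^R(D_\bullet) \cong H_{n+1}(D_\bullet) \otimes_\Z R.
\]
I would take $D_\bullet = C_\bullet$ and $n = 0$, so that this becomes exactly the claim $H_1^R(C_\bullet) \cong H_1(C_\bullet) \otimes_\Z R$. The only thing to verify is the hypothesis, namely that $H_0(C_\bullet)$ is a free $\Z$-module.

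Next I would invoke \Cref{thm:H0-is-free}: for a chain complex arising from an abstract $2$-dimensional cell complex, $H_0(C_\bullet) \cong \Z^{\oplus c}$, where $c$ is the number of connected components of $X$. This is manifestly free over $\Z$. Hence the hypothesis of \Cref{lem:H1-change-of-ring} is satisfied, and the corollary follows immediately.

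There is essentially no obstacle here; the work was done earlier in setting up the change-of-ring lemma (via the Universal Coefficient \Cref{thm:UCT} and \Cref{lem:Tor-1-over-free}) and in proving that the zeroth homology of a cellular chain complex is free. The corollary is simply the place where these two ingredients are combined, and it is the form of the statement that will be used later when transporting codes from $\Z$ to $\Z_d$ coefficients.
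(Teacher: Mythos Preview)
Your proposal is correct and matches the paper's proof essentially line for line: invoke \Cref{thm:H0-is-free} to see that $H_0(C_\bullet)$ is free, then apply \Cref{lem:H1-change-of-ring} with $n=0$.
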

\begin{proof}
  Recall from \Cref{lem:H1-change-of-ring} that if $H_n(C_\bullet)$ is free, then there is an isomorphism $H_{n+1}^R(C_\bullet) \cong H_{n+1}(C_\bullet) \otimes_\Z R$. As we have just shown in \Cref{thm:H0-is-free}, $H_0(C_\bullet)$ is free. The result follows immediately from this.
\end{proof}

\part[Homological Quantum Error Correction]{Homological Quantum Error Correction
  \\[4cm] \begin{RPhalfplane}{3}{0.2}{120}{30}
  \coordinate (v1) at (180:\crad/2);
  \coordinate (v2) at (90:\crad/2);
  \coordinate (v3) at (0:\crad/2);
  \coordinate (v4) at (-90:\crad/2);
  \node[vertex] at (v1) {};
  \node[vertex] at (v2) {};
  \node[vertex] at (v3) {};
  \node[vertex] at (v4) {};

  \newcommand*{\outangle}{15}
  \newcommand*{\outdist}{\crad+\arrowoffset}
  \coordinate (x1) at (180-\outangle:\outdist);
  \coordinate (x2) at (180+\outangle:\outdist);
  \coordinate (x3) at (\outangle:\outdist);
  \coordinate (x4) at (-\outangle:\outdist);

  \coordinate (y1) at (90+1.3*\outangle:\outdist);
  \coordinate (y2) at (-90+1.3*\outangle:\outdist);

  \draw[->-] (x1) to[out=-15, in=120] node[midway, above] {} (v1);
  \draw[->-] (v1) to[out=-120, in=15] node[midway, below] {} (x2);

  \draw[->-] (x3) to[out=180+15, in=60] node[midway, above] {} (v3);
  \draw[->-] (v3) to[out=-60, in=180-15] node[midway, below] {} (x4);

  \draw[->-] (v2) to[out=90, in=-90+1.3*\outangle] node[midway, below] {} (y1);

  \draw[->-] (y2) to[out=90+1.3*\outangle, in=-90] node[midway, above] {} (v4);

  \draw[->-] (v1) -- (v2);
  \draw[->-] (v2) -- (v3);
  \draw[->-] (v3) -- (v4);
  \draw[->-] (v4) -- (v1);

\end{RPhalfplane}
}
\label{part:homol-quant-error}
\chapter[CSS codes from cellulations]{CSS codes from cellulations \hspace*{\fill}\begin{tikzpicture}
  \tikzstyle{vertex}=[fill=black,draw=black,circle,minimum size=2pt,inner sep=0]
  \tikzstyle{compedge}=[
  postaction={decorate},
  line width=0.7pt,
  draw]
  \tikzstyle{->-}=[compedge,
  decoration={
    markings,
    mark=at position 0.7 with {\arrow{Triangle[scale=0.5]}}}]
  \tikzstyle{erroredge}=[red, decorate,
  decoration={crosses,
    segment length=2mm,
    raise=0pt,
    shape size=2mm}]
  \tikzstyle{errorvertex}=[fill=red, minimum size=3pt, inner sep=0pt]

  \pgfdeclarelayer{background}
  \pgfdeclarelayer{foreground}
  \pgfsetlayers{background,main,foreground}

  \newcommand*{\side}{3mm}
  \begin{pgfonlayer}{foreground}
    \node[vertex] (v1) at (0,0) {};
    \node[errorvertex] (v2) at (0,-\side) {};
    \node[vertex] (v3) at (\side,0) {};
    \node[vertex] (v4) at (\side,-\side) {};
    \node[errorvertex] (v5) at (2*\side,0) {};
    \node[vertex] (v6) at (2*\side,-\side) {};
  \end{pgfonlayer}
  \draw[->-] (v1.center) -- (v2.center);
  \draw[->-, red] (v3.center) -- (v4.center);
  \draw[->-, red] (v2.center) -- (v4.center);
  \draw[->-] (v1.center) -- (v3.center);
  \draw[->-, red] (v3.center) -- (v5.center);
  \draw[->-] (v4.center) -- (v6.center);
  \draw[->-] (v5.center) -- (v6.center);

  \begin{pgfonlayer}{background}
    \fill[\colora, fill opacity=1] (v1.center) -- (v2.center) -- (v4.center) -- (v3.center) -- cycle;
    \fill[\colorc, fill opacity=0.7] (v3.center) -- (v5.center) -- (v6.center) -- (v4.center) -- cycle;
  \end{pgfonlayer}

\end{tikzpicture}}
\label{chap:CSS-codes-from-cellulations}

We foreshadowed at the end of \Cref{sec:LDPC} that we aim to design LDPC codes by cellulating spaces. We saw in \Cref{ex:classical-code-is-chain-complex,ex:css-code-is-chain-complex} that a classical code, as well as a CSS code, can be interpreted as a chain complex. Then in \Cref{sec:chain-complexes-from-cell-complexes}, we construct chain complexes that represent abstract cell complexes. This now all comes together, and we finally explain the full story. In explaining this way to construct quantum error correcting codes, and the relation between a chain complex and stabilizer measurements, we follow the papers \cite{MR1951039,bravyi2013homological,cowtan2023css,vuillot2023homological}, with our own commentary and examples.

We start with qubit codes, following from \Cref{sec:css-codes}, and then using the abstract machinery built up in \Cref{chap:rings-and-modules,chap:homol-algebra}, we generalize to qudits where torsion appears and gives us more interesting logical spaces. We start by example.

\section{A simple toric code}
\label{sec:toric-code-basic}

\diagramonright{0.62}{0.38}{%
  \noindent
  We use the torus from \Cref{ex:torus-nonHzero-cycle}. We recall \Cref{fig:torus-nonH0-cycle-square} on the right, though we change the notation for convenience as follows:
  \begin{notation}
    From now on, we denote a $0$-cell (vertex) $e^0_\alpha$ as $\cvert_\alpha$, a $1$-cell (edge) $e^1_\alpha$ as $\cedge_\alpha$, and a $2$-cell (face) as $\cface_\alpha$.
  \end{notation}
  As a reminder, the white arrows on edges in the diagram imply gluing. Pairs of edges with the same number of white arrows are glued, and they are\hiddenparbreak
  \vspace{-0.75ex}
}{
  \centering
  \hspace{1ex}
  \begin{tikzpicture}
  \newcommand*{\sqside}{4}

  \node[vertex, label={-135:$\cvert_1$}] at (0,0) (v1) {};
  \node[vertex, label={135:$\cvert_1$}] at (0,\sqside) (v2) {};
  \node[vertex, label={45:$\cvert_1$}] at (\sqside,\sqside) (v3) {};
  \node[vertex, label={-45:$\cvert_1$}] at (\sqside,0) (v4) {};
  \node[vertex, label={180:$\cvert_2$}] at (0,\sqside/2) (v5) {};
  \node[vertex, label={0:$\cvert_2$}] at (\sqside,\sqside/2) (v6) {};
  \node[vertex, label={90:$\cvert_3$}] at (\sqside/2,\sqside/2) (v7) {};
  \draw[-g>-] (v1.center) -- node[midway, left] {$\cedge_1$} (v5.center);
  \draw[-g>>-] (v5.center) -- node[midway, left] {$\cedge_2$} (v2.center);
  \draw[-g>>>-] (v2.center) -- node[midway, above] {$\cedge_3$} (v3.center);
  \draw[-g>-] (v4.center) -- node[midway, right] {$\cedge_1$} (v6.center);
  \draw[-g>>-] (v6.center) -- node[midway, right] {$\cedge_2$} (v3.center);
  \draw[-g>>>-] (v1.center) -- node[midway, below] {$\cedge_3$} (v4.center);
  \draw[walk] (v5.center) -- node[midway, above] {$\cedge_4$} (v7.center);
  \draw[walk] (v7.center) -- node[midway, above] {$\cedge_5$} (v6.center);
  \node at (\sqside/2,0.8) (f1) {$\cface_1$};
  \node[scale=3] at (f1.center) {$\circlearrowright$};
  \node at (\sqside/2,\sqside-0.8) (f2) {$\cface_2$};
  \node[scale=3] at (f2.center) {$\circlearrowright$};

  \begin{pgfonlayer}{background}
    \path[fill=\colora, fill opacity=0.5] (v1.center) -- (v5.center) -- (v6.center) -- (v4.center) -- cycle;
    \path[fill=\colorb, fill opacity=0.5] (v2.center) -- (v3.center) -- (v6.center) -- (v5.center) -- cycle;
  \end{pgfonlayer}
\end{tikzpicture}
}

\noindent
glued in the orientation indicated by the arrows. Formally, this is a quotient by coherent equivalence, see \Cref{sec:gluing-complexes}.
The chain complex $C_\bullet$ that corresponds to this abstract cell complex is the following: the \mbox{$0$-chain} module, i.e. the space of vertices, is $C_0 = \genR\Z{\cvert_1, \cvert_2, \cvert_3}^\oplus$, the $1$-chain module (edges) is $C_1 = \genR\Z{\cedge_1, \dots, \cedge_5}^\oplus$, and finally the $2$-chain module (faces) is $C_2 = \genR\Z{\cface_1, \cface_2}$. These listed $n$-cells form bases of the chain modules, and we order the basis elements by their index (the subscript); this is the order we wrote them in the preceding expressions.
With this choice of ordering on the basis, we can now write the differentials as matrices:
\begin{align*}
  \partial_2
  & =
    \begin{pmatrix*}[r]
      0 & 0 \\ 0 & 0 \\ -1 & 1 \\ 1 & -1 \\ 1 & -1
    \end{pmatrix*}
  & \text{and} &
  & \partial_1
  & =
    \begin{pmatrix*}[r]
      -1 &  1 & 0 &  0 &  0 \\
      1  & -1 & 0 & -1 &  1 \\
      0  &  0 & 0 &  1 & -1
    \end{pmatrix*}.
\end{align*}
These matrices summarize the differentials computed from the attaching maps in \Cref{ex:torus-nonHzero-cycle}. As mentioned at the end of \Cref{sec:css-code-from-ortho-classical-codes}, we define these to be the parity check matrices corresponding to $X$-type and $Z$-type stabilizers. Specifically, we say that $P_X \deq \partial_1$ and $P_Z \deq \partial_2^\top$.

Recall that each row of a parity check matrix describes a parity check measurement, or in the context of CSS codes, a stabilizer measurement. For example, the first row of $P_X = \partial_1$ corresponds to $X^{-1} \otimes X \otimes \bm 1 \otimes \bm 1 \otimes \bm 1$. Here, the inverse is the adjoint, i.e. $X^{-1} = X^\dagger$. Working with qubit systems, the adjoint is actually $X$ itself, i.e. $X^\dagger = X$. It is useful then to perform a change of ring to the field $\Z_2$. This is done using \Cref{def:extension-of-scalars,def:change-of-ring-for-chain-complex} along the canonical quotient map $\Z \epito \Z_2$. This makes the chain modules into vector spaces over $\Z_2$, and the parity check matrices become the following:
\begin{align*}
  P_Z
  & =
    \begin{pmatrix}
      0 & 0 & 1 & 1 & 1 \\
      0 & 0 & 1 & 1 & 1
    \end{pmatrix},
  & \text{and} &
  & P_X
  & =
    \begin{pmatrix}
      1 & 1 & 0 & 0 & 0 \\
      1 & 1 & 0 & 1 & 1 \\
      0 & 0 & 0 & 1 & 1
    \end{pmatrix}.
\end{align*}
Finally, the stabilizer group defining this CSS code is generated by the following elements:
\[
  \begin{matrix}
    \bm1 & \otimes & \bm1 & \otimes & Z & \otimes & Z & \otimes & Z, \\
    X & \otimes & X & \otimes & \bm1 & \otimes & \bm1 & \otimes & \bm1, \\
    X & \otimes & X & \otimes & \bm1 & \otimes & X & \otimes & X, \\
    \bm1 & \otimes & \bm1 & \otimes & \bm1 & \otimes & X & \otimes & X.
  \end{matrix}
\]
We omit the second occurrence of $\bm1^{\otimes 2} \otimes Z^{\otimes 3}$, because this gives us no new information. The generating set above is not minimal though: $X \otimes X \otimes \bm1 \otimes X \otimes X$ is a product of other two elements. The rows corresponding to those two have lower Hamming weight, so we could choose them to be the generators, and omit the higher-weight element. However, the correspondence to the abstract cell complex would be partially lost, and this may lower our ability to correct errors. In addition, this would mean that attempts at reasoning about the code using the cell complex would fail -- as we will see later, omitting $X \otimes X \otimes \bm1 \otimes X \otimes X$ would lead to an error syndrome that only flags a single vertex, and this makes it unclear what the error is.

The syndrome measurement for this is in \Cref{fig:syndrome-measurement-toric}. The $Z$-type stabilizer is measured by applying a series of controlled $X$ operators, where the data qubit is the control, and the ancilla is the target. To measure $X$-type stabilizers, we essentially change the basis in the measurement of $Z$-type stabilizers. We do not derive it here, but observe in the figure that this means we apply the Hadamard~$H$ before and after the entangling operations (this essentially means preparing $\neket +$ and measuring in the $X$-basis), and the CNOTs are controlled on the ancillae. More on the measurement of stabilizers can be found in ref. \cite[Appendix A]{cowtan2022qudit}

We do not show encoding separately, because this is another large area of quantum error correction using CSS codes, and we will not go into the details on this. However, one way to prepare an encoded state is to perform all the stabilizer measurements and correct errors, as we would usually do during error correction.\cite{lodyga2015}

\begin{figure}[h]
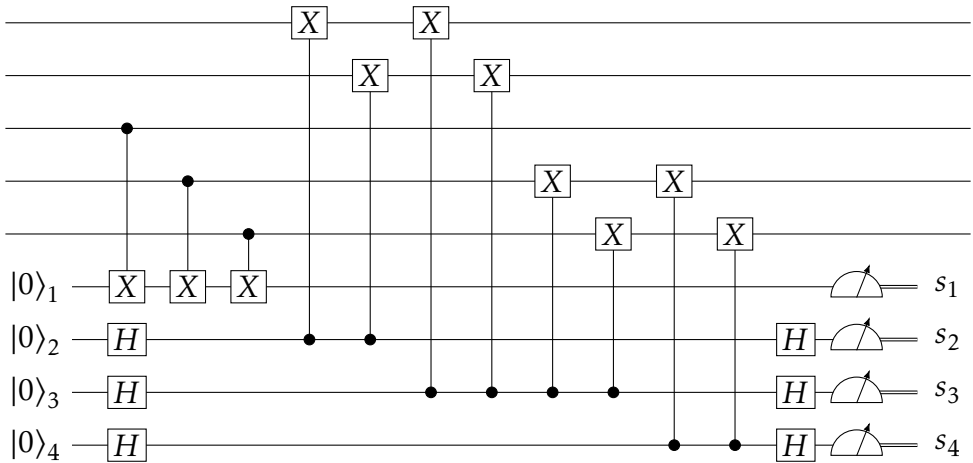

  \centering
  \begin{qcirc}{15}{8}{5}
    \cgate252X
    \cgate353X
    \cgate454X
    \gate  62H
    \cgate605X
    \cgate616X
    \gate  6{13}H
    \gate  72H
    \cgate707X
    \cgate718X
    \cgate739X
    \cgate74{10}X
    \gate  7{13}H
    \gate  82H
    \cgate83{11}X
    \cgate84{12}X
    \gate  8{13}H
  \end{qcirc}
  \caption[Syndrome measurement circuit for the example toric code]{Syndrome measurement circuit for the example toric code.}
  \label{fig:syndrome-measurement-toric}
\end{figure}

\subsection{Logical space}
\label{sec:logical-space}

Recall that codewords, i.e. states considered correct by the code, or undetectable errors, are those corresponding to the kernel of a parity check matrix. We now generalize this idea to our chain complex:
\begin{equation}
  \label{eq:Z-type-complex}
  \begin{tikzcd}[ampersand replacement=\&, row sep=-1]
    \cdots \& {\bm 0} \& {\Z_2^{\oplus m_Z}} \& {\Z_2^{\oplus n}} \& {\Z_2^{\oplus m_X}} \& {\bm 0.}
    \arrow["0", from=1-2, to=1-3]
    \arrow["{P_Z^\top}", from=1-3, to=1-4]
    \arrow["{P_X}", from=1-4, to=1-5]
    \arrow["0", from=1-1, to=1-2]
    \arrow["0", from=1-5, to=1-6]
  \end{tikzcd}
\end{equation}
In the context of CSS codes, we represent codewords by operators from our stabilizer group, and errors are other operators. That is, by (code)words, we do not mean the state vectors, but the operators. In \eqref{eq:Z-type-complex}, we already denote the chain modules as free and finitely generated $\Z_2$-modules with their corresponding ranks. This makes it clear that $C_1 = \Z_2^{\oplus n}$ corresponds to operators on physical qubits, and the other two chain modules somehow correspond to syndromes.

The space $C_1$ corresponds to $Z$-type operators that act on our system. Recall from \Cref{def:binary-rep-of-stabs} that a vector $\vec v \in C_1 = \Z_2^n$ represents $\mathcal{Z}(\vec v) = \bigotimes_{i=1}^n Z^{v_i}$. It is perhaps not intuitive why $C_1$ represents $Z$-type operators, and not $X$-type. The reason is that $\partial_1 = P_X$ describes $X$-type stabilizer measurements, and these commute with any $X$-type errors. However, they anti-commute with detectable $Z$-type errors, represented by elements in $C_1 \setminus \ker P_X$, and this is what we can measure. Thus we call the complex in \eqref{eq:Z-type-complex} a \emph{$Z$-type chain complex}.

\subsubsection*{Kernel and image}

The kernel of $P_X$ corresponds to $Z$-type operators which commute with all \mbox{$X$-type} stabilizers, i.e. undetectable $Z$-type errors (or perhaps logical operations, as we see later). Of course, the matrix $P_Z^\top$ also features in the complex. Its image consists of $Z$-type stabilizers, i.e. products of $Z$-type generators defined by $P_Z$. With this interpretation, the chain complex property that $P_X \circ P_Z^\top = 0$ indeed makes sense: the $Z$-type subgroup of stabilizers by definition commutes with the $X$-type subgroup.

Recall from \Cref{ex:torus-nonHzero-cycle} which is the basis for the present code that in terms of the abstract cell complex, the kernel of $P_X$ contains the linear combinations
\[ \vec v = a( \cedge_1 + \cedge_2 ) + b \cedge_3 + c( \cedge_4 + \cedge_5 )  \]
for $a,b,c \in \Z_2$. These correspond to abstract circles. Such a linear combination represents the $Z$-type operator
\[ \mathcal{Z}(\vec v) = Z^a \otimes Z^a \otimes Z^b \otimes Z^c \otimes \Z^c. \]
Notice this important idea: our operators are $1$-chains in the abstract cell complex! We can thus think of these two very different things interchangeably.

\subsubsection*{Kernel, but not image}

The elements $\ker P_X \setminus \im P_Z^\top$ represent those $Z$-type operators that are not $Z$-type stabilizers, i.e. they can act non-trivially on the state, but they still commute with all $X$-type stabilizers which means they are not (detectable) errors. These are \emph{$Z$-type logical operators}. However, not all of them are, in fact, independent operators. Take a $Z$-type logical operator $A$, and take a $Z$-type stabilizer $B$. Then $A$ and $AB = BA$ are different operators, but their action on the state is the same. Thus we work with \emph{equivalence classes of logical operators}, i.e. of elements of $\ker P_X \setminus \im P_Z^\top$ up to composition with elements of $\im P_Z^\top$.\footnote{We abuse terminology here: we should say: ``up to addition of elements of $\im P_Z^\top$,'' because these are vectors in $\Z_2^n$ representing operators. But the addition in $\Z_2^n$ does, indeed, represent composition of the represented operators: $\mathcal{Z}(\vec v + \vec w) = \mathcal Z(\vec v) \circ \mathcal Z(\vec w)$.} What we have just described is the quotient
\[
  \faktor{\ker P_X}{\im P_Z^\top} = \faktor{\ker \partial_1}{\im \partial_2} = H_1(C_\bullet),\] the first homology of the $Z$-type complex. Note that this is a vector space over $\Z_2$, because we have already changed the ring of scalars of the complex to $\Z_2$. Conclude that the first homology describes the equivalence classes of $Z$-type logical operators, and this is why we call this error correcting code \emph{homological}.

As these homology classes are the only non-trivial logical operators, this also gives us the size of the logical space -- number of logical qubits encoded by this code. This is the rank of $H_1(C_\bullet)$, which for the case of torus is $2$ (see \Cref{ex:torus-nonHzero-cycle}).

The connection with the abstract cell complex and its first homology is clear: logical operators are circles not homologous to zero, i.e. those that go around the holes of the torus. They are equivalent up to addition of circles that are boundaries of faces, which are in $\im P_Z^\top$, and correspond to the $Z$-type stabilizers.

\subsubsection*{Choice of representatives}

The logical space is the space of homology classes. However, to perform actual computation, we need concrete operators. That is, we need to choose a basis of representatives of these equivalence classes. This choice depends on the cell complex used to generate the code, and on what the architecture allows us to do, which is why we only mention this briefly.

One thing we may consider is the weight of the chosen operators, which is the Hamming weight of their representing vectors in $\Z_2^n$. In the ideal case, the weight is large enough that such an operator is unlikely to be applied as a random error (this is related to code distance), yet low enough that the process of applying it does not generate too much noise.\cite{vuillot2023homological,MR3832800}

\subsection{Syndrome measurement}
\label{sec:homological-CSS-error-correction}

We now describe how error detection, and perhaps correction, work in this setting. We still work with the $Z$-type complex.
By construction, $Z$-type errors correspond to the elements of $C_1 \setminus \ker P_X$.
Since the kernel contains $1$-chains without a boundary, this means the errors are exactly those $1$-chains which have a nonzero boundary. Concretely, these correspond to  subcomplexes that are made of edges, and are not closed, meaning they have endpoints. An example of a $Z$-type error is $Z \otimes \bm1^{\otimes 4}$, which corresponds to the $1$-chain $\cedge_1 \in C_1$. We display this in \Cref{fig:toric-example-Z-type-syndrome}.

\begin{figure}[h!]
  \centering
  \newcommand*{\sqside}{4}
  \begin{tikzpicture}
  \pgfdeclarelayer{background}
  \pgfdeclarelayer{foreground}
  \pgfsetlayers{background,main,foreground}

  \begin{pgfonlayer}{foreground}
    \node[errorvertex, label={-135:$\cvert_1$}] at (0,0) (v1) {};
    \node[errorvertexsubtle, label={135:$\cvert_1$}] at (0,\sqside) (v2) {};
    \node[errorvertexsubtle, label={45:$\cvert_1$}] at (\sqside,\sqside) (v3) {};
    \node[errorvertexsubtle, label={-45:$\cvert_1$}] at (\sqside,0) (v4) {};
    \node[errorvertex, label={180:$\cvert_2$}] at (0,\sqside/2) (v5) {};
    \node[errorvertexsubtle, label={0:$\cvert_2$}] at (\sqside,\sqside/2) (v6) {};
    \node[vertex, label={90:$\cvert_3$}] at (\sqside/2,\sqside/2) (v7) {};
  \end{pgfonlayer}

  \draw[-g>-] (v1.center) -- node[midway, left] {$\cedge_1$} (v5.center);
  \draw[erroredge] (v1.center) -- (v5.center);
  \draw[-g>>-] (v5.center) -- node[midway, left] {$\cedge_2$} (v2.center);
  \draw[-g>>>-] (v2.center) -- node[midway, above] {$\cedge_3$} (v3.center);
  \draw[-g>-] (v4.center) -- node[midway, right] {$\cedge_1$} (v6.center);
  \draw[erroredgesubtle] (v4.center) -- (v6.center);
  \draw[-g>>-] (v6.center) -- node[midway, right] {$\cedge_2$} (v3.center);
  \draw[-g>>>-] (v1.center) -- node[midway, below] {$\cedge_3$} (v4.center);
  \draw[->-] (v5.center) -- node[midway, above] {$\cedge_4$} (v7.center);
  \draw[->-] (v7.center) -- node[midway, above] {$\cedge_5$} (v6.center);
  \node at (\sqside/2,0.8) (f1) {$\cface_1$};
  \node[scale=3] at (f1.center) {$\circlearrowright$};
  \node at (\sqside/2,\sqside-0.8) (f2) {$\cface_2$};
  \node[scale=3] at (f2.center) {$\circlearrowright$};

  \begin{pgfonlayer}{background}
    \path[fill=\colora, fill opacity=0.7] (v1.center) -- (v5.center) -- (v6.center) -- (v4.center) -- cycle;
    \path[fill=\colorb, fill opacity=0.7] (v2.center) -- (v3.center) -- (v6.center) -- (v5.center) -- cycle;
  \end{pgfonlayer}
\end{tikzpicture}
  \caption[Example of $Z$-type syndrome in the toric presented code]{Example of $Z$-type syndrome in the presented toric code.
    The abstract cell complex with the error corresponding to $\cedge_1$. We mark $\cedge_1$ by the red crosses~\tikz[baseline=-1mm]{\draw[erroredge](0,0) -- (4mm,0);}. We also emphasize the endpoints $\cvert_1$ and $\cvert_2$ by big red squares \tikz{\node[errorvertex]{};}. Note that these vertices and edges repeat throughout the diagram due to gluing; we mark the repeats more subtly, and we focus attention to one of the occurrences.
  }
  \label{fig:toric-example-Z-type-syndrome}
\end{figure}

The fact that these have endpoints is the essence of obtaining the syndromes. As seen in \Cref{fig:syndrome-measurement-toric}, each stabilizer generator, or equivalently each row of $P_X$, tells us which qubits to (non-destructively) measure. Since by construction $C_1$ corresponds to the physical space, and $C_0$ to syndromes of $X$-type measurements, we associate measurements to vertices of the complex. The syndrome $\vec s \in \Z_2^{m_X}$ has component $s_i = 1$ if there is an error on a qubit touching vertex $\cvert_i$. Note that if there are errors on multiple qubits, they may cancel -- the syndrome measurement counts the number of errors on incident edges, but in the field $\Z_2$, i.e. modulo $2$. This is, again, related to the fact that errors are paths (or tree, graphs, etc.), and we can only detect their endpoints.

To make this concrete, suppose again that we have an error $Z \otimes \bm1^{\otimes 4}$, corresponding to $\cedge_1$. The syndrome is $\vec s = (0,1,1,0)^\top$. The code sees an error, but it is not clear where exactly it occurred:
\begin{itemize}
\item the correct guess is a single error on the first qubit, corresponding to $\cedge_1$,
\item but equally likely, though incorrectly, it could guess that there is a single error on $\cedge_2$;
\item and finally, there could be errors on $\cedge_1$, $\cedge_2$, $\cedge_4$ and $\cedge_5$. This case is less probable, because it requires multiple simultaneous errors.
\end{itemize}

\subsection{Error correction}

Suppose that we guess correctly that the error is either on $\cedge_1$ or $\cedge_2$. If we knew with certainty that it is on $\cedge_1$, we could just apply $Z$ there, and the error would be fixed. However, in a homological code, we can correct an error even if we do not guess correctly where it is: by applying $Z$ operators, we may extend the $1$-chain of errors. The strategy is to create a circle that is homologous to zero, i.e. one that is contractible. Then even if we do not correctly identify the qubit where error occurred, we correct it. For example, we could apply $\bm1 \otimes \bm1 \otimes Z\otimes Z \otimes Z$, that is, apply $Z$ on edges $\cedge_3$, $\cedge_4$ and $\cedge_5$. This is inefficient, but closes the circle, and eliminates the error; see \Cref{fig:toric-example-Z-type-syndrome:altfix}.

However, it is possible to go wrong with this. Not knowing where exactly the error occurred, we might guess it was at $\cedge_2$. Then applying $Z$ on $\cedge_2$ closes the circle $\lParen \cedge_1, \cedge_2 \rParen$, see \Cref{fig:toric-example-Z-type-syndrome:fix}. This circle is not in the trivial homology class. That means the circle is a logical operator: by attempting error correction, we cause a logical operation by mistake.

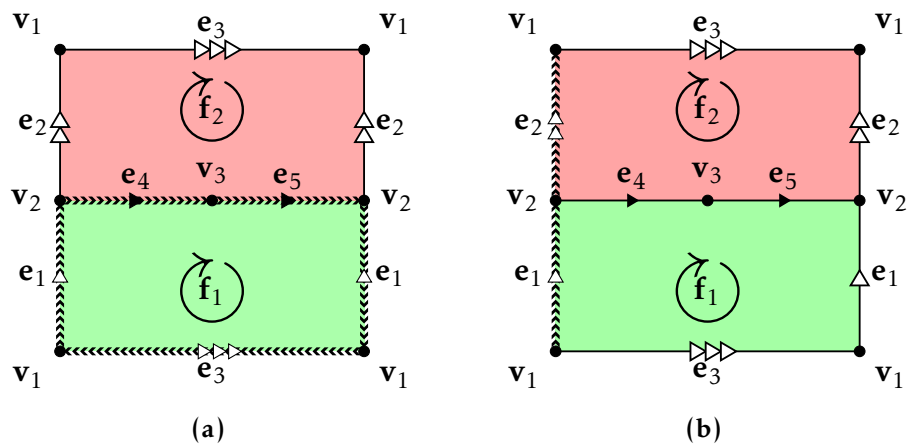
\begin{figure}[h]
  \centering
  \newcommand*{\sqside}{4}
  \subfloat[\label{fig:toric-example-Z-type-syndrome:altfix}]{%
    \begin{tikzpicture}

  \pgfdeclarelayer{background}
  \pgfdeclarelayer{foreground}
  \pgfsetlayers{background,main,foreground}

  \begin{pgfonlayer}{foreground}
    \node[vertex, label={-135:$\cvert_1$}] at (0,0) (v1) {};
    \node[vertex, label={135:$\cvert_1$}] at (0,\sqside) (v2) {};
    \node[vertex, label={45:$\cvert_1$}] at (\sqside,\sqside) (v3) {};
    \node[vertex, label={-45:$\cvert_1$}] at (\sqside,0) (v4) {};
    \node[vertex, label={180:$\cvert_2$}] at (0,\sqside/2) (v5) {};
    \node[vertex, label={0:$\cvert_2$}] at (\sqside,\sqside/2) (v6) {};
    \node[vertex, label={90:$\cvert_3$}] at (\sqside/2,\sqside/2) (v7) {};
  \end{pgfonlayer}

  \draw[walk-g>-] (v1.center) -- node[midway, left] {$\cedge_1$} (v5.center);
  \draw[-g>>-] (v5.center) -- node[midway, left] {$\cedge_2$} (v2.center);
  \draw[-g>>>-] (v2.center) -- node[midway, above] {$\cedge_3$} (v3.center);
  \draw[walk-g<-] (v4.center) -- node[midway, right] {$\cedge_1$} (v6.center);
  \draw[-g>>-] (v6.center) -- node[midway, right] {$\cedge_2$} (v3.center);
  \draw[walk-g<3-] (v1.center) -- node[midway, below] {$\cedge_3$} (v4.center);
  \draw[walk->-] (v5.center) -- node[midway, above] {$\cedge_4$} (v7.center);
  \draw[walk->-] (v7.center) -- node[midway, above] {$\cedge_5$} (v6.center);
  \node at (\sqside/2,0.8) (f1) {$\cface_1$};
  \node[scale=3] at (f1.center) {$\circlearrowright$};
  \node at (\sqside/2,\sqside-0.8) (f2) {$\cface_2$};
  \node[scale=3] at (f2.center) {$\circlearrowright$};

  \begin{pgfonlayer}{background}
    \path[fill=\colora, fill opacity=0.7] (v1.center) -- (v5.center) -- (v6.center) -- (v4.center) -- cycle;
    \path[fill=\colorb, fill opacity=0.7] (v2.center) -- (v3.center) -- (v6.center) -- (v5.center) -- cycle;
  \end{pgfonlayer}
\end{tikzpicture}}
  \qquad
  \subfloat[\label{fig:toric-example-Z-type-syndrome:fix}]{%
    \begin{tikzpicture}

  \pgfdeclarelayer{background}
  \pgfdeclarelayer{foreground}
  \pgfsetlayers{background,main,foreground}

  \begin{pgfonlayer}{foreground}
    \node[vertex, label={-135:$\cvert_1$}] at (0,0) (v1) {};
    \node[vertex, label={135:$\cvert_1$}] at (0,\sqside) (v2) {};
    \node[vertex, label={45:$\cvert_1$}] at (\sqside,\sqside) (v3) {};
    \node[vertex, label={-45:$\cvert_1$}] at (\sqside,0) (v4) {};
    \node[vertex, label={180:$\cvert_2$}] at (0,\sqside/2) (v5) {};
    \node[vertex, label={0:$\cvert_2$}] at (\sqside,\sqside/2) (v6) {};
    \node[vertex, label={90:$\cvert_3$}] at (\sqside/2,\sqside/2) (v7) {};
  \end{pgfonlayer}

  \draw[walk-g>-] (v1.center) -- node[midway, left] {$\cedge_1$} (v5.center);
  \draw[walk-g>>-] (v5.center) -- node[midway, left] {$\cedge_2$} (v2.center);
  \draw[-g>>>-] (v2.center) -- node[midway, above] {$\cedge_3$} (v3.center);
  \draw[-g>-] (v4.center) -- node[midway, right] {$\cedge_1$} (v6.center);
  \draw[-g>>-] (v6.center) -- node[midway, right] {$\cedge_2$} (v3.center);
  \draw[-g>>>-] (v1.center) -- node[midway, below] {$\cedge_3$} (v4.center);
  \draw[->-] (v5.center) -- node[midway, above] {$\cedge_4$} (v7.center);
  \draw[->-] (v7.center) -- node[midway, above] {$\cedge_5$} (v6.center);
  \node at (\sqside/2,0.8) (f1) {$\cface_1$};
  \node[scale=3] at (f1.center) {$\circlearrowright$};
  \node at (\sqside/2,\sqside-0.8) (f2) {$\cface_2$};
  \node[scale=3] at (f2.center) {$\circlearrowright$};

  \begin{pgfonlayer}{background}
    \path[fill=\colora, fill opacity=0.75] (v1.center) -- (v5.center) -- (v6.center) -- (v4.center) -- cycle;
    \path[fill=\colorb, fill opacity=0.75] (v2.center) -- (v3.center) -- (v6.center) -- (v5.center) -- cycle;
  \end{pgfonlayer}
\end{tikzpicture}}
  \caption[Two of the possible choices of correction for the syndrome]{Two of the possible choices of correction the syndrome. We show them as the circles, denoted \tikz[baseline=-1mm]{\draw[walk](0,0) -- (3mm,0);}, obtained by error correction, and we no longer show the error markings.
}
\end{figure}

\subsection{$X$-type complex}

In the previous steps, we obtained $X$-type measurements, $Z$-type errors, and equivalence classes of $Z$-type logical operators. We have shown how the code deals with $Z$-type errors. However, the choice to have $P_X = \partial_1$ and $P_Z^\top = \partial_2$ was arbitrary, and analogous results follow from swapping them; this is how we obtain the corresponding $X$-type complex:\footnote{A note on notation: in both, the chain complex, and this cochain complex, we could just write the three non-trivial vector spaces, the parity matrices, and omit all the zero spaces and morphisms. However, we present it in this form to make the connection with general (co)chain complexes clear.}
\[
  \begin{tikzcd}[ampersand replacement=\&,column sep=scriptsize]
    \cdots \& {\bm 0} \& {C^2} \& {C^1} \& {C^0} \& {\bm 0}
    \arrow["0"', from=1-3, to=1-2]
    \arrow["{P_Z}"', from=1-4, to=1-3]
    \arrow["{P_X^\top}"', from=1-5, to=1-4]
    \arrow["0"', from=1-2, to=1-1]
    \arrow["0"', from=1-6, to=1-5]
  \end{tikzcd}
\]
This is an instance of a \emph{cochain complex}, the dual notion to a chain complex. Specifically, the above cochain complex is the dual of the chain complex in \eqref{eq:Z-type-complex}. We define this formally below, but we limit ourselves to the present case, instead of full generality.
\begin{definition}[dual chain complex and cohomology over $\Z_2$]
  Let $C_\bullet$ be a chain complex of vector spaces over $\Z_2$. Its \emph{dual} is the cochain complex $C^\bullet$, where we have the chain module $C^n \deq C_n^*$ for each $n\in\N$, and where the differentials are $\partial^n : C^{n-1} \to C^{n}$ that are the duals of the chain differentials: $\partial^n \deq \partial_n^*$. We work with finitely generated modules, so these are represented by matrices; in those terms, we have $\partial^n = \partial_n^\top$. This can be interpreted as another chain complex, because $\partial^{n+1} \circ \partial^{n} = 0$. The homology of this new chain complex is called \emph{cohomology}:
  \[
    H^n(C^\bullet)
    \deq \faktor{\ker \partial^{n+1}}{\im \partial^n}
    \overset{\text{(fin. dim.)}}=
    \faktor{\ker \partial_{n+1}^\top}{\im \partial_n^\top}.
    \qedhere
  \]
\end{definition}
We formulate the space of (equivalence classes) of $X$-type logical operators as the first cohomology $H^1(C^\bullet)$ in analogy to deriving the $Z$-type logical operators as the first homology $H_1(C_\bullet)$.

The $X$-type errors, measured by $Z$-type stabilizers, are also $1$-chains in the complex. However, these are not the typical paths, trees, or graphs of edges. The $Z$-type measurements are associated with faces of the complex: we measure the qubits corresponding to the edges in the boundary of a face. Then $X$-type errors correspond to paths of edges as in \Cref{fig:X-type-syndrom}, and syndromes are measured at their endpoint faces.

\begin{figure}[h]
  \centering
  \begin{tikzpicture}
  \foreach \x[evaluate={\v={int(\x+1)}; \w={int(\x+4)}}] in {0,...,3}{
    \coordinate (v-\x-0) at (2*\x,0);
    \coordinate (v-\x-1) at (2*\x,-2);
    \node[vertex, label={90:$\cvert_{\v}$}] at (v-\x-0) {};
    \node[vertex, label={-90:$\cvert_{\w}$}] at (v-\x-1) {};
  }
  \foreach \x in {0,...,3}{
    \draw[->-] (v-\x-0) -- (v-\x-1);
  }
  \foreach \x [evaluate={\n=int(\x+1)}] in {0,...,2}{
    \draw[->-] (v-\x-0) -- (v-\n-0);
    \draw[->-] (v-\x-1) -- (v-\n-1);
  }
  \begin{pgfonlayer}{background}
    \fill[\colora, fill opacity=0.7] (v-0-0) -- (v-1-0) -- (v-1-1) -- (v-0-1) -- cycle;
    \fill[\colord, fill opacity=0.7] (v-1-0) -- (v-2-0) -- (v-2-1) -- (v-1-1) -- cycle;
    \fill[\colorc, fill opacity=0.7] (v-2-0) -- (v-3-0) -- (v-3-1) -- (v-2-1) -- cycle;
    \draw[erroredge] (v-1-0) -- (v-1-1);
    \draw[erroredge] (v-2-0) -- (v-2-1);
  \end{pgfonlayer}

  \foreach \x [evaluate={\f=int(\x+1)}] in {0,...,2}{
    \node (f\f) at (1+2*\x,-0.7) {$\cface_\f$};
    \node[scale=3, rotate=-60] at (f\f.center) {$\circlearrowright$};
  }
  \node[errorcirc] (c1) at (1,-1.5) {};
  \node[errorcirc] (c3) at (5,-1.5) {};
  \begin{pgfonlayer}{background}
    \draw[line width=1pt, dashed] (c1.center) -- (c3.center);
  \end{pgfonlayer}
\end{tikzpicture}
  \caption[$X$-type syndrome in a small lattice code]{$X$-type syndrome in a small lattice code. The errors occur on qubits corresponding to edges marked with crosses \tikz{\draw[erroredge] (0,0) -- (0.3,0);}. The syndrome component corresponding to $\cface_2$ is $0$, because these errors cancel out. The components corresponding to $\cface_1$ and $\cface_3$ are $1$, indicated by the red circles \tikz{\node[errorcirc, scale=0.5]{};}. We draw a dashed line between these, to indicate how the error corresponds to a path across edges. This path is not a cell in the complex.}
  \label{fig:X-type-syndrom}
\end{figure}

An interesting way to look at the cochain complex, when the abstract cell complex describes a $2$-manifold, is that it describes an abstract cell complex with faces and vertices switching places. This idea can also be see in \Cref{fig:X-type-syndrom}. However, this is picture is not the right way to think about the cochain complex in general, and the fact that it is possible here is a special case.

The choice of representatives the cohomology classes, i.e. elements of $H^1(C^\bullet)$, is determined by the choice of representatives of $H_1(C_\bullet)$. Equivalently, choosing concrete $Z$-type logical operators determines the $X$-type logical operators.\cite{cowtan2023css}

\section{Qudits and their operators}
\label{sec:qudits-and-ops}

The aim is now to generalize the preceding to the case of qudits, where the integer $d>1$ is arbitrary. In particular, it is not necessarily a prime, so $\Z_d$ is not necessarily a field, and $\Z_d$-modules are no longer necessarily vector spaces. In particular, this allows for torsion within the first homology, which is where we get extra logical space to play with. Apart from theoretical interest, there is a practical one: quantum computing with qudit systems seems to be a viable avenue. This is in part also because we struggle to keep many qudits stable enough to do computation, but for $d>2$, we can pack more information into a system of fewer objects. Furthermore, there is linear optical quantum computing, and this is inherently qudit based.

In this section, we briefly introduce qudits.
We follow the paper \cite{sarkar2023qudit}.

\begin{definition}[qudit]
  \label{def:qudit}
  Let $d > 1$ be an integer. A \emph{qudit} is a $d$-state system living in a $d$-dimensional state space. The standard basis of this is written $\neket 0$, \dots, $\neket {d-1}$, which is the eigenbasis for the generalized $Z$ operator defined below.
  Their duals are, as expected, $\nebra 0$, \dots, $\nebra {d-1}$, with $\braket ij = \delta_{i,j}$.
\end{definition}

\begin{definition}[qudit Pauli group]
  \label{def:qudit-pauli-group}
  Define the \emph{qudit Pauli $X$} as the operator that maps $\neket{i} \mapsto \neket{i+1 \pmod d}$ for $i =0,\dots,d-1$. Define the \emph{qudit Pauli $Z$} that maps $\neket{i} \mapsto \omega^i \neket{i}$, where $\omega \deq e^{2\pi i / d}$.
  Similarly to \Cref{def:1-qubit-Pauli-group}, we use these to generate the group of qudit Pauli operators $\Pauli_1$, and we define the $n$-qudit Pauli group $\Pauli_n$ as the tensor products of elements of $\Pauli_1$ analogously to \Cref{def:n-qubit-Pauli-group}.
\end{definition}

Note that we do not mention the $Y$ Pauli above. In the qubit case, this is just $iXZ$, and ignoring global phases, it corresponds to $XZ$ which means it is not needed as a generator of the qubit Pauli group. It is similarly defined in the qudit case, and as such we do not need it.
Observe also that if $d=2$, these indeed become the familiar qubit Paulis, as expected.

All of the previous reasoning about stabilizers and similar is transported to the case of qudits. We will build codes where stabilizers may include higher powers of $Z$ or $X$, for example $Z^2 \otimes Z \otimes \bm 1$. Note that it is no longer necessarily the case that $Z^2 = X^2 = \bm 1$.

\begin{lemma}[powers of qudit Pauli operators]
  \label{lem:powers-of-qudit-Paulis}
  Let $m,n \in \Z$. For $d>1$, the power $X^m = X^{m \mod d}$, and similarly $Z^m = Z^{m \mod d}$. Consequently, we have that the product of powers behaves as $X^mX^n = X^{m+n \mod d}$, the inverse is $X^\dagger = X^{-1} = X^{d-1}$, and analogous equations hold for $Z$.
\end{lemma}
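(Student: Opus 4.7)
The plan is to reduce everything to the two order identities $X^d = \bm 1$ and $Z^d = \bm 1$, from which every stated equality follows by the division algorithm for integers and elementary properties of the multiplicative group generated by each operator.

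First I would prove by induction on $k \in \N$ that $X^k \neket i = \neket{i + k \bmod d}$ for every basis state $\neket i$ from \Cref{def:qudit}. The base case $k = 0$ is trivial, and the inductive step is just one application of the definition $X \neket j = \neket{j + 1 \bmod d}$ from \Cref{def:qudit-pauli-group}. Setting $k = d$ then gives $X^d \neket i = \neket{i + d \bmod d} = \neket i$ on each basis element, so $X^d = \bm 1$. For arbitrary $m \in \Z$, the division algorithm writes $m = qd + r$ with $0 \le r < d$, and then $X^m = (X^d)^q X^r = X^r = X^{m \bmod d}$; negative powers are legitimate here because $X^d = \bm 1$ already supplies an inverse, namely $X^{-1} = X^{d-1}$.

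For $Z$ I would proceed in exactly the same way. Induction on $k$ gives $Z^k \neket i = \omega^{ki} \neket i$, and since $\omega = e^{2\pi i / d}$ satisfies $\omega^d = 1$, we obtain $Z^d \neket i = \omega^{di} \neket i = \neket i$, hence $Z^d = \bm 1$. The identity $Z^m = Z^{m \bmod d}$ then follows by the identical division-algorithm argument.

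The remaining claims are immediate corollaries. The product $X^m X^n = X^{m+n}$ combined with $X^d = \bm 1$ gives $X^m X^n = X^{(m+n) \bmod d}$, and $X \cdot X^{d-1} = X^d = \bm 1$ identifies $X^{-1}$ with $X^{d-1}$; the equality $X^\dagger = X^{-1}$ follows because $X$ is a permutation matrix on an orthonormal basis and hence unitary. The analogous statements for $Z$ are formally identical, with unitarity coming from $Z$ being diagonal with unit-modulus entries and $\overline{\omega^i} = \omega^{-i}$. No step here presents a genuine obstacle; the only point requiring a brief convention is that $m \bmod d$ should denote the unique representative in $\{0, 1, \ldots, d-1\}$, which is what makes the statements well-formed when $m$ is negative.
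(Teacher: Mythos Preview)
Your argument is correct and is exactly what the paper has in mind: the paper omits the proof entirely, stating only that it ``follows directly from \Cref{def:qudit-pauli-group},'' and your induction-plus-division-algorithm reduction to $X^d = \bm 1$ and $Z^d = \bm 1$ is precisely the spelling-out of that remark.
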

This follows directly from \Cref{def:qudit-pauli-group}, so we omit the proof. Note that the above means that $\Z_d$, the ring of modular arithmetic with modulus $d$, is the natural setting for reasoning about qudits.

\begin{lemma}[commutation]
  \label{lem:qudit-pauli-commutation}
  The qudit Pauli $X$ and $Z$ commute as follows: \[ZX = \omega XZ. \qedhere \]
\end{lemma}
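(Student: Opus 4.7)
The plan is to verify the identity $ZX = \omega XZ$ by computing the action of both sides on an arbitrary element $\neket{i}$ of the standard basis, where $i \in \{0, 1, \dots, d-1\}$, and then invoking linearity to conclude. Since the standard basis spans the $d$-dimensional state space of a qudit, agreement on this basis is enough.

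First, I would compute the left-hand side: applying $X$ first sends $\neket{i} \mapsto \neket{i+1 \bmod d}$, and then $Z$ multiplies by the appropriate root of unity, giving $ZX\neket{i} = \omega^{i+1 \bmod d}\, \neket{i+1 \bmod d}$. The crucial observation is that $\omega^d = e^{2\pi i} = 1$, so $\omega^{i+1 \bmod d} = \omega^{i+1}$, and hence $ZX\neket{i} = \omega^{i+1}\, \neket{i+1 \bmod d}$.

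Next, I would compute the right-hand side: $Z$ acts first as $Z\neket{i} = \omega^i \neket{i}$, then $X$ shifts the basis label, giving $XZ\neket{i} = \omega^i \neket{i+1 \bmod d}$. Multiplying by the prefactor $\omega$ yields $\omega X Z \neket{i} = \omega^{i+1}\, \neket{i+1 \bmod d}$, which matches the left-hand side exactly. Since $X$ and $Z$, together with scalar multiplication by $\omega$, are all $\C$-linear operators, equality on the basis implies equality as operators, completing the proof.

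There is no real obstacle here: the only subtlety is the reduction modulo $d$ in the exponent, which is harmless because $\omega$ is a primitive $d$-th root of unity. I would expect the whole argument to fit in a few lines of display math, one for each side of the equation.
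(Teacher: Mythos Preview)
Your proof is correct and takes essentially the same approach as the paper: both verify the identity by checking it on the standard basis. The paper writes $X$ and $Z$ as sums of outer products $\sum \neket{\cdot}\nebra{\cdot}$ and multiplies those sums out, while you apply the operators directly to $\neket{i}$, but the underlying computation is identical.
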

\begin{proof}
  Write $X = \sum_{i=0}^{d-1} \neket{i+1 \mod d}\nebra{i}$ and $Z = \sum_{j=0}^{d-1} \omega^j \neket j \nebra j$. This is just a different way to write their definition. Compute $XZ$:
  \[ XZ = \sum_{i=0}^{d-1} \sum_{j=0}^{d-1} \neket{i+1 \mod d} \nebra i \omega^j \neket j \nebra j = \sum_{i=0}^{d-1} \omega^i \neket{i+1 \mod d}\nebra i. \]
  Now, compute $ZX$:
  \[ ZX = \sum_{i=0}^{d-1} \sum_{j=0}^{d-1}  \omega^j \neket j \nebraket j {i+1 \mod d} \nebra i = \sum_{i=0}^{d-1} \omega^{i+1} \neket{i+i \mod d} \nebra i = \omega XZ. \]
  Thus we conclude that $ZX = \omega XZ$.
\end{proof}

\begin{lemma}[commutation of powers of Pauli operators]
  \label{lem:higher-powers-commutation-Pauli}
  Let $m,n \in \N$. Then $Z^m X^n = \omega^{mn} X^n Z^m$.
\end{lemma}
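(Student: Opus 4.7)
The plan is a straightforward double induction, bootstrapping from the base commutation relation $ZX = \omega XZ$ established in \Cref{lem:qudit-pauli-commutation}. The argument factors neatly into two steps: first move a single $Z$ past $X^n$, then iterate to move $Z^m$ past $X^n$.

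First I would prove the auxiliary identity $Z X^n = \omega^n X^n Z$ by induction on $n$. The base case $n=0$ is trivial since $X^0 = \bm 1$, and $n=1$ is exactly the content of \Cref{lem:qudit-pauli-commutation}. For the inductive step, assuming $Z X^n = \omega^n X^n Z$, I would compute
\[
Z X^{n+1} = (Z X^n) X = \omega^n X^n Z X = \omega^n X^n (\omega X Z) = \omega^{n+1} X^{n+1} Z,
\]
where the middle equality again invokes \Cref{lem:qudit-pauli-commutation}, and scalars $\omega^k$ commute freely past all operators.

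Next I would prove the full statement $Z^m X^n = \omega^{mn} X^n Z^m$ by induction on $m$, using the auxiliary identity at each step. The base case $m=0$ is immediate, and $m=1$ is exactly what was just proven. For the inductive step, assuming $Z^m X^n = \omega^{mn} X^n Z^m$, I would write
\[
Z^{m+1} X^n = Z (Z^m X^n) = \omega^{mn} Z X^n Z^m = \omega^{mn} \omega^n X^n Z \cdot Z^m = \omega^{(m+1)n} X^n Z^{m+1},
\]
completing the induction.

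The main obstacle is essentially nonexistent — this is a routine bookkeeping exercise, and the only subtlety is organizing the two inductions so that we do not have to keep track of both $m$ and $n$ simultaneously. I note in passing that although the lemma is stated for $m,n \in \N$, combined with \Cref{lem:powers-of-qudit-Paulis} (which identifies $X^{-1} = X^{d-1}$ and likewise for $Z$), the identity extends to all $m,n \in \Z$, with the exponent $mn$ interpreted modulo $d$ since $\omega^d = 1$.
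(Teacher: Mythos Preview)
Your proof is correct and takes essentially the same approach as the paper: a double induction bootstrapping from $ZX = \omega XZ$. Your organization is in fact slightly cleaner than the paper's, which phrases things as a pair of strong inductions on $m$ and $n$ that implicitly rely on each other, whereas you isolate the auxiliary identity $Z X^n = \omega^n X^n Z$ first and then induct on $m$.
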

\begin{proof}
  We prove this by strong induction. We start with two base cases: If $m =0$ or $n=0$, then the statement is trivially true. If $m=n=1$, then it follows from \Cref{lem:qudit-pauli-commutation}.

  Now, we follow by induction on $m$. Take an arbitrary $n$, and suppose that for each $m' \le m$, we have that $Z^{m'} X^n = \omega^{m' n} X^n Z^{m'}$. We will show this is then also the case for $m+1$:
  \[ Z^{m+1} X^n = Z Z^m X^n = Z \omega ^{mn} X^n Z^m = \omega^{mn} \omega^{1n} X^n Z Z^m = \omega^{(m+1)n} X^n Z^{m+1}, \]
  where we applied the inductive hypothesis first on $Z^m X^n$ and then $Z X^n$. We follow this by an analogous strong induction on $n$. Take an arbitrary $m$, and suppose that for each $n' \le n$, we have that $Z^m X^{n'} = \omega^{mn'} X^{n'} Z^m$. Then
  \[ Z^m X^{n+1} = Z^m X^n X = \omega^{mn} X^n Z^m X = \omega^{mn} X^n \omega^{m1} X Z^m = \omega^{m(n+1)} X^{n+1} Z^m. \]
  The lemma then follows by strong induction on $m$ and $n$.
\end{proof}

\section{Stabilizers of qudit systems}
\label{sec:stab-qudit-syst}

We generalize \Cref{def:binary-rep-of-stabs} to handle qudits:
\begin{definition}[representation of qudit Pauli operators]
  Let $n \in \N$, and suppose we have a system of $n$ qudits, where $d>1$. We represent the operators acting on this system as vectors in $\Z_d^{\oplus n}$.
  Let $\vec v$ be an element of $\Z_d^{\oplus n}$. Then the corresponding $X$- and $Z$-type stabilizers, where this is now the qudit Paulis, are:
  \begin{align*}
    \mathcal{X}(\vec v) &\deq \bigotimes_{i=1}^n X^{v_i} = X^{v_1} \otimes \cdots \otimes X^{v_n},
    & \text{and} &
    & \mathcal{Z}(\vec v) & \deq \bigotimes_{i=1}^n Z^{v_i} = Z^{v_1} \otimes \cdots \otimes Z^{v_n}.
    \qedhere
  \end{align*}
\end{definition}

Having shown \Cref{lem:higher-powers-commutation-Pauli}, we now generalize \Cref{thm:stabilizers-commute-iff-vectors-orthogonal} from qubits to qudits. This is an important result that will allow us to write a CSS code on qudits again as a chain complex.
\begin{theorem}
  \label{thm:qudit-stab-commute-iff-rep-vectors-ortho}
  Let $n \in \N$ be the number of qudits in a system, and let $\vec z$ and $\vec x$ be vectors $\Z_d^{\oplus n}$. Then the operators $\mathcal{Z}(\vec z)$ and $\mathcal{X}(\vec x)$ commute if and only if $\langle \vec z , \vec x \rangle = 0$, where this is the usual inner product in $\Z_d^{\oplus n}$. The equality is also in $\Z_d$, so it holds modulo $d$.
\end{theorem}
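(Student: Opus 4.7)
The plan is to reduce the $n$-qudit statement to the single-qudit commutation relation from \Cref{lem:higher-powers-commutation-Pauli} by computing the product $\mathcal{Z}(\vec z) \mathcal{X}(\vec x)$ factor-by-factor across the tensor product. Since each tensor factor acts independently on its own qudit, the tensor product of operators multiplies componentwise:
\[
  \mathcal{Z}(\vec z) \mathcal{X}(\vec x)
  = \bigotimes_{i=1}^n Z^{z_i} \cdot \bigotimes_{i=1}^n X^{x_i}
  = \bigotimes_{i=1}^n Z^{z_i} X^{x_i}.
\]
First I would apply \Cref{lem:higher-powers-commutation-Pauli} in each factor to rewrite $Z^{z_i} X^{x_i} = \omega^{z_i x_i} X^{x_i} Z^{z_i}$. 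The scalar phases $\omega^{z_i x_i}$ can be pulled out of the tensor product and collected into a single global phase, giving
\[
  \mathcal{Z}(\vec z) \mathcal{X}(\vec x)
  = \omega^{\sum_{i=1}^n z_i x_i} \bigotimes_{i=1}^n X^{x_i} Z^{z_i}
  = \omega^{\langle \vec z, \vec x \rangle} \mathcal{X}(\vec x) \mathcal{Z}(\vec z),
\]
where $\langle -, - \rangle$ is the usual inner product.

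Next, I would observe that $\omega = e^{2\pi i/d}$ is a primitive $d^{\text{th}}$ root of unity, so $\omega^k = 1$ if and only if $k \equiv 0 \pmod d$. Therefore $\mathcal{Z}(\vec z)$ and $\mathcal{X}(\vec x)$ commute if and only if $\omega^{\langle \vec z, \vec x \rangle} = 1$, which holds if and only if $\langle \vec z, \vec x \rangle \equiv 0 \pmod d$, i.e. the inner product is zero in $\Z_d$.

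There is no real obstacle here: the main subtlety is bookkeeping of the components $z_i, x_i \in \Z_d$ as integer exponents, which is justified by \Cref{lem:powers-of-qudit-Paulis} stating that $X^m$ and $Z^m$ only depend on $m \bmod d$. This guarantees that the inner product, a priori an integer, is well-defined modulo $d$ when exponentiating $\omega$, so the entire computation is consistent with viewing vectors as elements of $\Z_d^{\oplus n}$.
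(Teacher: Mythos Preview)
Your proof is correct and follows essentially the same approach as the paper: apply \Cref{lem:higher-powers-commutation-Pauli} in each tensor factor, collect the phases into $\omega^{\langle \vec z, \vec x\rangle}$, and use that $\omega$ is a primitive $d^{\text{th}}$ root of unity. The paper phrases this via the commutator $[\mathcal{Z}(\vec z),\mathcal{X}(\vec x)] = (\omega^{\sum_\ell z_\ell x_\ell}-1)\bigotimes_i X^{x_i}Z^{z_i}$ and lifts the exponents to $\Z$ at the start rather than invoking \Cref{lem:powers-of-qudit-Paulis} at the end, but the content is identical.
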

\begin{proof}
  Suppose for now that $\vec z, \vec x \in \Z^{\oplus n}$, i.e. their components are arbitrary integers. This is for the sake of recovering the equality $\langle \vec z , \vec x \rangle \equiv 0 \pmod d$, with the modulo~$d$ explicit. The operators $\mathcal{Z}(\vec z)$ and  $\mathcal{X}(\vec x)$ can be obviously extended to $\Z^{\oplus n}$, and \Cref{lem:higher-powers-commutation-Pauli} clearly also works -- this is due to \Cref{lem:powers-of-qudit-Paulis}.
  We evaluate the commutator, and apply \Cref{lem:higher-powers-commutation-Pauli} on each tensor factor $Z^{z_i} X^{x_i}$:
  \[
    [\mathcal{Z}(\vec z), \mathcal{X}(\vec x)]
    = \bigotimes_{i=1}^n Z^{z_i} X^{x_i} - \bigotimes_{j=1}^n X^{x_j} Z^{z_j}
    = \bigotimes_{i=1}^n \omega^{z_i x_i} X^{x_i} Z^{z_i} - \bigotimes_{j=1}^n X^{x_j} Z^{z_j}.
  \]
  We collect the factors $\omega^{z_i x_i}$ by linearity of the tensor product, and then take the tensor product out of the parenthesis:
  \[
    [\mathcal{Z}(\vec z), \mathcal{X}(\vec x)]
    = \left(\prod_{\ell=1}^n \omega^{z_\ell x_\ell} \right) \bigotimes_{i=1}^n X^{x_i} Z^{z_i} - \bigotimes_{j=1}^n X^{x_j} Z^{z_j}
    = \left(\omega^{\sum_{\ell=1}^n z_\ell x_\ell} - 1 \right) \bigotimes_{i=1}^n X^{x_i} Z^{z_i}.
  \]
  Recall that $\omega \deq e^{2 \pi i/d}$, and for an integer $q \in \Z$, $\omega^q = 1$ if and only if $q \in d\Z$.
  It follows that the commutator is zero if and only if $\sum_{\ell=1}^n z_\ell x_\ell \in d\Z$, or equivalently, $\sum_{\ell=1}^n z_\ell x_\ell \equiv 0 \pmod d$. This is exactly the statement that the operators commute if and only if their representing vectors are orthogonal in $\Z_d^{\oplus n}$.
\end{proof}

Then defining two parity check matrices $P_X$ and $P_Z$ over qudits, in analogy to the qubit case, means that once again we will have $P_X \circ P_Z^\top = 0$, and we can use all the machinery of homological algebra developed so far.

\chapter[Torsion]{Torsion \hspace*{\fill}\begin{tikzpicture}
  \newcommand*{\crad}{0.3}
  \newcommand*{\arrowoffset}{0.05}
  \newcommand*{\arrowstart}{120}
  \newcommand*{\arrowsec}{30}

  \tikzstyle{vertex}=[fill=black,draw=black,circle,minimum size=2pt,inner sep=0]
  \tikzstyle{compedge}=[
  postaction={decorate},
  line width=0.7pt,
  draw]
  \tikzstyle{-g>-}=[compedge,
  decoration={
    markings,
    mark=at position 0.55 with {\arrow{Triangle[scale=0.3, fill=white]}}}]

  \coordinate (o) at (0,0);
  \begin{pgfonlayer}{background}
    \draw[dashed, line width=0.7pt, postaction={decorate}, decoration={
      markings,
      mark=at position 0.35 with {\arrow{Triangle[scale=0.8, fill=white]}},
      mark=at position 0.85 with {\arrow{Triangle[scale=0.8, fill=white]}}
    }] (o) circle (\crad);
  \end{pgfonlayer}

  \coordinate (v1) at (180:2*\crad/3);
  \coordinate (v2) at (o);
  \coordinate (v3) at (0:2*\crad/3);
  \node[vertex] at (v1) {};
  \node[vertex] at (v2) {};
  \node[vertex] at (v3) {};

  \newcommand*{\outangle}{15}
  \coordinate (x1) at (180-\outangle:0.35);
  \coordinate (x2) at (180+\outangle:0.35);
  \coordinate (x3) at (\outangle:0.35);
  \coordinate (x4) at (-\outangle:0.35);

  \draw (x1) -- (v1);
  \draw (v1) -- (x2);

  \draw (x3) -- (v3);
  \draw (v3) -- (x4);

  \draw (v1) .. controls (-3*\crad/8,\crad/4) and (-\crad/8,\crad/4) .. (v2);
  \draw (v2) .. controls (-\crad/8,-\crad/4) and (-3*\crad/8,-\crad/4) .. (v1);

  \draw (v2) .. controls (\crad/8,\crad/4) and (3*\crad/8,\crad/4) .. (v3);
  \draw (v3) .. controls (3*\crad/8,-\crad/4) and (\crad/8,-\crad/4) .. (v2);
\end{tikzpicture}}
\label{chap:torsion}

We build qudit CSS codes in close analogy to the qubit codes. They will be defined using chain complexes of $\Z_d$-modules arising from abstract cell complexes. Note that this implies change of ring from the original chain complex of $\Z$-modules that represents a cell complex, in analogy to changing the ring to $\Z_2$ for qubit codes before.
Error detection and correction are also analogous. The difference is that in the case of qudits, both errors and stabilizers may contain higher powers of qudit Pauli operators.
The thing that is not at all analogous is the presence of torsion (see \Cref{sec:torsion}). Torsion does not exist in vector spaces, so qudits with prime $d$, like qubits with $d=2$, do not have it. In particular, the first homology of the complex which represents the logical space will contain torsion, and this makes the codes interesting for us.

\section{First homology}

We first say something general about this. The main theorem presented here is original, and it transports a result from ref. \cite{vuillot2023homological} about decomposing the logical space to the case of qudits. It is a specific use of the well established Universal Coefficient Theorem~\ref{thm:UCT}.

We wish to work with chain complexes of $\Z_d$-modules. However, as a first step, we come back to the familiar $\Z$-modules, because $\Z$ is a PID which means the Structure \Cref{thm:structure-thm-PID-mod} applies. That gives us the following decomposition:
\begin{equation}
  \label{eq:H1-split-Z}
  H_1(C_\bullet) \cong \Z^{\oplus k'} \oplus \bigoplus_{i=1}^{k''} \faktor{\Z}{d_i\Z},
\end{equation}
where $d_i$ are non-zero integers that are not invertible (i.e. $d_i \ne \pm 1$), and they form a divisibility chain: $d_1$ divides $d_2$, which in turn divides $d_3$, etc.
We write \Cref{eq:H1-split-Z} it in the presentation of \cite{vuillot2023homological}.
The free part $\Z^{\oplus k'}$ means that we have $k'$ logical qudits of dimension $d$, i.e. living in a $d$-dimensional state space as usual.\footnote{In \cite{vuillot2023homological}, the codes acted on rotor systems -- these can be understood as qudits with $d = \infty$, and this is why the authors reasoned in $\Z$.} The torsion part means that there are $k''$ additional logical qudits, but they have dimensions $d_i$. This torsion part does not exist if $d$ is prime, because then $\Z_d$ is a field. The total number of logical objects is $k = k' + k''$. We now follow this with our result:

\begin{theorem}[Structure Theorem for the Qudit Logical Space]
  \label{thm:structure-theorem-for-qudit-logical-space}
  Suppose we have a system of qudits of dimension $d$, and we have a CSS code described by an abstract cell complex acting on it. Let $C_\bullet$ be the chain complex of $\Z$-modules corresponding to the cell complex. Then the logical space of the code, the first homology \emph{over $\Z_d$}, has the following decomposition:
  \[ H_1^{\Z_d}(C_\bullet) \cong \Z_d^{\oplus k'} \oplus \bigoplus_{i=1}^{k''} \Z_{d_i}, \]
  where $k', k'' \in \N$ and $d_1, \dots, d_{k''}$ are non-zero and non-invertible elements of $\Z_d$, such that $d_i$ divides $d_{i+1}$ for $i=1,\dots,k''-1$. The numbers $k', k''$ here may be different than those in \cref{eq:H1-split-Z}.\footnote{Note that \Cref*{thm:structure-theorem-for-qudit-logical-space} could be generalized: we use the Universal Coefficient Theorem which itself does not require the chain complex to be defined using a topological space. We impose this requirement to set $\Tor_1(H_0(C_\bullet), R) = \bm 0$ because we want to obtain a simple decomposition of $H_1^R(C_\bullet)$ that is similar to the decomposition of a finitely generated module over a PID. However, it is possible to relax this requirement to obtain a more general, though slightly more complicated, theorem.}
\end{theorem}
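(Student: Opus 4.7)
The plan is to combine three ingredients from the preceding material: \Cref{thm:ACC-H1-change-of-ring} (which converts first homology over $\Z_d$ into a tensor product), the Structure \Cref{thm:structure-thm-PID-mod} for finitely generated modules over the PID $\Z$, and the elementary tensor identities \eqref{eq:tensor-distributes} and \eqref{eq:tensor-Za-Zb-gcd}. First, since $C_\bullet$ arises from an abstract cell complex, \Cref{thm:ACC-H1-change-of-ring} yields
\[ H_1^{\Z_d}(C_\bullet) \cong H_1(C_\bullet) \otimes_\Z \Z_d, \]
so the whole task reduces to analyzing the ordinary integer homology $H_1(C_\bullet)$ and then tensoring.

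Next I would observe that $H_1(C_\bullet)$ is a finitely generated $\Z$-module, being a subquotient of the finitely generated free module $C_1$ (see \Cref{def:cellular-chain-modules}, recalling that abstract cell complexes are finite by \Cref{def:abstract-2-cell-complex}). Since $\Z$ is a PID, \Cref{thm:structure-thm-PID-mod} gives
\[ H_1(C_\bullet) \cong \Z^{\oplus k} \oplus \bigoplus_{i=1}^{s} \faktor{\Z}{\delta_i \Z} \]
for some $k, s \in \N$ and non-zero, non-invertible integers $\delta_1 \mid \delta_2 \mid \cdots \mid \delta_s$. Tensoring this decomposition with $\Z_d$ over $\Z$, distributing over direct sums via \eqref{eq:tensor-distributes}, using $\Z \otimes_\Z \Z_d \cong \Z_d$, and applying \eqref{eq:tensor-Za-Zb-gcd} to each torsion summand, produces
\[ H_1^{\Z_d}(C_\bullet) \cong \Z_d^{\oplus k} \oplus \bigoplus_{i=1}^{s} \Z_{\gcd(\delta_i, d)}. \]

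The last step is to repackage this expression into the form claimed by the theorem. Each integer $\gcd(\delta_i, d)$ is a divisor of $d$, and falls into one of three cases: when $\gcd(\delta_i, d) = 1$ the summand $\Z_1 = \bm 0$ vanishes; when $\gcd(\delta_i, d) = d$ the summand is a copy of $\Z_d$ that is absorbed into the free $\Z_d$-part; the remaining cases, where $1 < \gcd(\delta_i, d) < d$, yield genuine torsion summands $\Z_{d_i}$ in which each $d_i$ is a proper divisor of $d$ and hence both non-zero and non-invertible in $\Z_d$. Setting $k'$ equal to $k$ plus the number of indices with $\gcd(\delta_i, d) = d$, and letting $k''$ be the number of surviving torsion indices (relabelled in order), gives exactly the decomposition in the statement. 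Finally, since $\delta_1 \mid \cdots \mid \delta_s$ implies $\gcd(\delta_i, d) \mid \gcd(\delta_{i+1}, d)$ in $\Z$, and this divisibility descends to $\Z_d$, the relabelled $d_i$ still form a divisibility chain.

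The main obstacle is this last repackaging: one must carefully account for the three cases of $\gcd(\delta_i, d)$, verify that the divisibility chain survives the deletion of trivial factors, and confirm that the $\Z_d$ contributions coming from the original torsion part are correctly merged into the free rank $k'$. None of this is conceptually hard, but the bookkeeping is the one place the proof can easily be stated sloppily.
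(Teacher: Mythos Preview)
Your proposal is correct and follows essentially the same approach as the paper: apply \Cref{thm:ACC-H1-change-of-ring} to reduce to $H_1(C_\bullet)\otimes_\Z \Z_d$, decompose $H_1(C_\bullet)$ via the Structure \Cref{thm:structure-thm-PID-mod}, distribute the tensor product using \eqref{eq:tensor-distributes} and \eqref{eq:tensor-Za-Zb-gcd}, and then sort the resulting $\Z_{\gcd(\delta_i,d)}$ summands into the three cases. You are in fact slightly more explicit than the paper in two places --- you justify that $H_1(C_\bullet)$ is finitely generated, and you verify that the divisibility chain survives passage to $\gcd(\cdot,d)$ --- both of which the paper leaves implicit.
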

\begin{proof}
  By the Structure \Cref{thm:structure-thm-PID-mod}, we have the decomposition of the first homology over $\Z$ as in \cref{eq:H1-split-Z}:
  \begin{equation}
    \label{eq:strthm:H1-Z}
    H_1^\Z(C_\bullet) \cong \Z^{\oplus s} \oplus \bigoplus_{i=1}^{t} \faktor{\Z}{q_i\Z},
  \end{equation}
  where $s,t \in \N$ and $q_i \in \Z$ are non-zero, non-invertible (that is, $q_i \ne \pm 1$), and they form a divisibility chain as before. Notice that we now call the rank of the free part $s$ instead of $k'$, and the number of torsion parts $t$ instead of $k''$. We also label the generators of the ideals $q_i$ instead of $d_i$. This is because when moving to homology over $\Z_d$, the new decomposition may have a different rank of the free part, a different number of torsion parts, and different generators of the quotients.

  We now express the first homology over $\Z_d$ by change of ring. By \Cref{thm:ACC-H1-change-of-ring}, for a chain complex arising from an (abstract) cell complex, this is simply
  \[
    H_1^{\Z_d}(C_\bullet) \cong H_1^\Z(C_\bullet) \otimes_\Z \Z_d.
  \]
  What happens there is an extension of scalars along the canonical quotient epimorphism $\Z \epito \Z_d$. Now, we expand the $H_1^\Z(C_\bullet)$ by \cref{eq:strthm:H1-Z}, and use the distributivity of the tensor product over direct sum from \cref{eq:tensor-distributes}:
  \[
    H_1^{\Z_d}(C_\bullet) \cong \left(
      \Z^{\oplus s} \oplus \bigoplus_{i=1}^{t} \Z_{q_i}
    \right) \otimes_\Z \Z_d
    \cong
    \left( \Z \otimes_\Z \Z_d \right)^{\oplus s} \oplus \bigoplus_{i=1}^t Z_{q_i} \otimes_\Z \Z_d.
  \]
  We use the property from \cref{eq:tensor-Za-Zb-gcd} to evaluate these tensor products of $\Z$ and its quotients. Recall that this is $\Z_a \otimes_\Z \Z_b \cong \Z_{\gcd(a,b)}$ for $a,b \in \N$. Note that $\Z \cong \Z_0$, which means that $\Z \otimes_\Z \Z_d \cong \Z_d$ (as expected from \cref{eq:tensor-monoid}). The module $\Z_d$ is a free $\Z_d$-module, so we have a contribution $\Z_d^{\oplus s}$ to the free part of $H_1^{\Z_d}$.

  The quotients become $\Z_{q_i} \otimes_\Z \Z_d \cong \Z_{\gcd(q_i, d)}$. If $q_i$ and $d$ are coprime, meaning that $\gcd(q_i, d) = 1$, then this tensor product is $\Z_1 = \bm 0$, the zero module. In the other extreme, if $\gcd(q_i,d) = d$, then the tensor product is $\Z_d$, another free part. Finally, if~$1 < \gcd(q_i,d) < d$, then $\Z_{\gcd(q_i,d)}$ is a torsion part of $H_1^{\Z_d}(C_\bullet)$. Thus the decomposition is
  \[
    H_1^{\Z_d}(C_\bullet) \cong
    \Z_d^{\oplus s} \oplus \bigoplus_{i=1}^t Z_{\gcd(q_i,d)}
    \eqd \Z_d^{\oplus k'} \oplus \bigoplus_{i=1}^{k''} \Z_{d_i}.
  \]
  We collect the free parts, of which there may be more in $H_1^{\Z_d}$ than in $H_1^\Z$, so $k' \ge s$. Similarly, some of the torsion parts of $H_1^\Z$ may have become free or zero in $H_1^{\Z_d}$, so $k'' \le t$. The generators of ideals $d_i$ correspond to the greatest common divisors that are not $1$ nor $d$ from above.
\end{proof}

The Structure Theorem allows us to compute the logical spaces in terms of~$\Z_d$. This corresponds more closely to the properties of the physical systems we are using, i.e. qudits with finite $d$. In addition, the proof above shows explicitly how to compute $H_1^{\Z_d}(C_\bullet)$ from the decomposition of $H_1^\Z(C_\bullet)$ over $\Z$. This allows us to work with chain complexes of $\Z$-modules, and compute their homologies over $\Z$ as well, and then change the base ring to $\Z_d$ at the end.
That is very useful for computational purposes: computing in a PID like $\Z$ is much more convenient than in a non-PID like $\Z_d$. In particular, Computer Algebra Systems such as Sage\cite{sagemath} are able to deal with $\Z$ much better (see \Cref{cha:jupyter-sage}).

\section{Projective plane}

We now show an example of a topological space where torsion occurs in first homology, and hence derive a CSS code that has torsion in its logical space. The space we use the is \emph{real projective plane}, denoted also $\R\mathcal{P}^2$, a $2$-dimensional non-orientable manifold. Imagining it is difficult, because it cannot be embedded into $3$-dimensional space without self-intersection. It can be described in terms of the gluing diagram in \Cref{fig:gs-RP2-relabeld}. This is the same as in \Cref{fig:gluing-square:proj}, though we relabel the cells for convenience: vertices are labeled $\{\mathbf v_i\}_i$, edges $\{\mathbf{e}_i\}_i$, and faces $\{ \mathbf{f}_i \}_i$ (in~this case only $\mathbf{f}_1$).
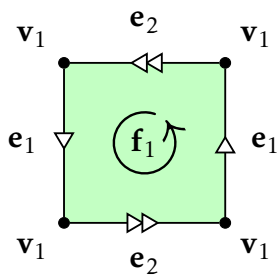
\begin{figure}[h]
  \centering
  \begin{tikzpicture}
  \node[vertex, label={45+90*1:$\mathbf v_1$}] at (45+90*1:1.5) (v1) {};
  \node[vertex, label={45+90*2:$\mathbf v_1$}] at (45+90*2:1.5) (v2) {};
  \node[vertex, label={45+90*3:$\mathbf v_1$}] at (45+90*3:1.5) (v3) {};
  \node[vertex, label={45+90*4:$\mathbf v_1$}] at (45+90*4:1.5) (v4) {};

  \draw[-g>-] (v1) -- (v2);
  \node at (90*2:1.6) {$\mathbf e_1$};

  \draw[-g>>-] (v2) -- (v3);
  \node at (90*3:1.6) {$\mathbf e_2$};

  \draw[-g>-] (v3) -- (v4);
  \node at (90*4:1.6) {$\mathbf e_1$};

  \draw[-g>>-] (v4) -- (v1);
  \node at (90*1:1.6) {$\mathbf e_2$};

  \node (f) at (0,0) {$\mathbf f_1$};
  \node[scale=3, rotate=-45] at (f.center) {$\circlearrowleft$};
  \begin{pgfonlayer}{background}
    \fill[\colora, fill opacity=0.5] (v1.center) -- (v2.center) -- (v3.center) -- (v4.center) -- cycle;
  \end{pgfonlayer}
\end{tikzpicture}
  \caption[Gluing square representation of the real projective plane]{Gluing square representation of the real projective plane.}
  \label{fig:gs-RP2-relabeld}
\end{figure}

In our case, however, we choose a different representation: the \emph{half-sphere model} of the projective plane. This is conventional in the literature on homological CSS codes based on the projective plane; see for example \cite{vuillot2023homological,10.1063/1.2731356}. It is also more general than the gluing square, specifically because there need not be a square subcomplex encompassing neatly the space that we want. In this model, we have a~half of a $2$-sphere, say the upper hemisphere, and we identify the opposite points of the equator. We may also view this as a $2$-disc, with the opposite points of the boundary identified.\footnote{Note that the gluing square is, in fact, compatible with this construction: opposite edges are glued in the opposite direction, and if we forget about the points and edges, we get the disc with antipodal points identified.} In either case, we refer to the circle on the boundary, whose antipodal points get identified, as the \emph{gluing boundary}. It is not a part of an abstract cell complex defined in this space. We show this model in \Cref{fig:RP2-half-sphere}.

\begin{figure}[h]
  \centering
  \input{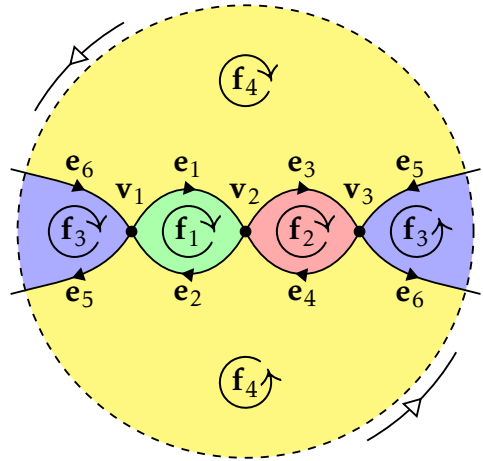}
  \caption[A cellulation of the real projective plane, using the half-sphere model]{A cellulation of the real projective plane, using the half-sphere model. The half-sphere model is essentially a disc with antipodal points identified: for example, notice that the points where $\mathbf{e}_6$ crosses the circle are an angle of $180^\circ$ apart. The identification of antipodal points reverses orientations as one crosses the circle: for example, the face $\mathbf f_4$ is oriented clockwise in the upper part, but anticlockwise in the lower part of the diagram.}
  \label{fig:RP2-half-sphere}
\end{figure}

\subsection{Cellulation}
\label{sec:cellulation}

In \Cref{ex:cyl-Mob,ex:gs-torus,ex:gs-Klein-RP2}, we started with an abstract cell complex, glued its parts, and saw what came out. Here, we switch direction: we have a space first, and we construct an abstract cell complex to match it. This assignment of a cell complex to a space is called a \emph{cellulation}.\cite{HatcherAllen2002At} For this, the half-sphere model of $\R\mathcal{P}^2$ is very useful, because the half-plane itself has no cells by itself. The construction is not trivial, and one can easily make a mistake. We provide a practical guide below:

\subsubsection*{Vertices and edges}

As usual with a cell complex, vertices come first. We place these as we like into the half-sphere. Next, we draw edges between vertices, choosing also their orientation. This is straightforward if the edge does not cross the boundary of the half-sphere. In \Cref{fig:RP2-half-sphere}, these are edges $\mathbf e_1, \dots, \mathbf e_4$.

If an edge does cross the boundary, it continues from the antipodal point, i.e. from a point on the dashed circle exactly $180^\circ$ away. We draw it in its chosen orientation from a vertex into the boundary, and then again from the antipodal point of the boundary. In the figure, these are $\mathbf e_5$ and $\mathbf e_6$.

Notice that crossing the boundary switches left and right. Suppose there are two ants, labeled $A$ and $B$, who live on this surface. They walk side-by-side, $A$ on the left of $B$. Suppose both $A$ and $B$ are walking left from $\mathbf v_1$, the ant $A$ along $\mathbf e_5$, and $B$ along $-\mathbf e_6$ (i.e. opposite to the direction of the edge). As they cross the dashed line, they emerge on the right-hand side, but now $B$, walking along $\mathbf e_6$, is on the left of $A$, still on $\mathbf e_5$.

\subsubsection*{Faces}

This leads us to faces, which are the tricky part. Notice in the picture that the face $\mathbf{f}_4$ appears in two regions of the circle, but it has opposite orientation in each. The same is true for $\mathbf{f}_3$. We can state this as a rule: the orientation of a face is reversed when it crosses the gluing boundary.

We can also state this in a constructive way. Suppose we have built the \mbox{$1$-skeleton}, and wish to add the face $\mathbf{f}_3$. We mark the left-hand region of the picture as a part of this face, and declare the orientation there to be clockwise $\bm\circlearrowright$. Then we choose a vertex incident on this face; in this case, the only choice is $\mathbf{v}_1$. Respecting the clockwise orientation, we follow the edge $\mathbf{e}_5$ from $\mathbf{v}_1$. Since the face is on the right of this edge, we think of this as walking on the right of this edge, within the face $\mathbf{f}_3$. As we cross the glued boundary, we emerge in the right-hand part of the picture. We are on the left of the edge $\mathbf{e}_5$, but still within $\cface_3$. That must mean that in this region, the face is oriented counter-clockwise $\bm\circlearrowleft$.\footnote{Another way to think about this is that we start of on top of the plane (in front of the page), and as we walk across the glued boundary, we cross over below the plane (behind the page).}

\begin{note}
  The above description is possible due to the requirement that in our abstract cell complex (or even a less restricted general cell complex), a face must be bounded by a circle, and hence a face can only exist in a connected component of the $1$-skeleton. We might define more general topological constructions, where this requirement is relaxed. Then, for example, we could have a face bounded by two circles with no edge between them. These are more difficult to deal with, and in that case, we have to fall back on the rule that crossing the gluing boundary reverses the orientation of a face.
\end{note}

\subsection{CSS code}

We construct a CSS code from the cellulation of $\R\mathcal{P}^2$ from \Cref{fig:RP2-half-sphere}. The chain $\Z$-modules are $C_0 = \genR\Z{\mathbf{v}_1,\dots, \mathbf{v}_3}^\oplus$, $C_1 = \genR\Z{\mathbf{e}_1, \dots, \mathbf{e}_6}^\oplus$, and $C_2 = \genR\Z{\mathbf{f}_1, \dots, \mathbf{f}_4}^\oplus$. We choose the natural ordering of these bases, so that we may write down matrices. The differentials are:
\begin{align}
  \label{eq:RP2-code-matrices}
  P_Z^\top = \partial_2
  & =
    \begin{pmatrix*}[r]
            1 & 0 & 0 & -1 \\
            1 & 0 & 0 & 1 \\
            0 & 1 & 0 & -1 \\
            0 & 1 & 0 & 1 \\
            0 & 0 & 1 & 1 \\
            0 & 0 & 1 & -1
    \end{pmatrix*}
  & \text{and} &
  & P_X = \partial_1
  & =
    \begin{pmatrix*}[r]
      -1 & 1 & 0 & 0 & -1 & 1 \\
      1 & -1 & -1 & 1 & 0 & 0 \\
      0 & 0 & 1 & -1 & 1 & -1
    \end{pmatrix*}.
\end{align}
This gives us the following generators of the stabilizer group, which also prescribe which qudits to measure and how:
\[
  \begin{array}[h]{lclclclclcl}
    Z & \otimes & Z & \otimes & \bm1 & \otimes & \bm1 & \otimes & \bm1 & \otimes & \bm1 \\
    \bm1 & \otimes & \bm1 & \otimes & Z & \otimes & Z & \otimes & \bm1 & \otimes & \bm1 \\
    \bm1 & \otimes & \bm1 & \otimes & \bm1 & \otimes & \bm1 & \otimes & Z & \otimes & Z \\
    Z^{-1} & \otimes & Z & \otimes & Z^{-1} & \otimes & Z & \otimes & Z & \otimes & Z^{-1} \\
    X^{-1} & \otimes & X & \otimes & \bm1 & \otimes & \bm1 & \otimes & X^{-1} & \otimes & X \\
    X & \otimes & X^{-1} & \otimes & X^{-1} & \otimes & X & \otimes & \bm1 & \otimes & \bm1 \\
    \bm1 & \otimes & \bm1 & \otimes & X & \otimes & X^{-1} & \otimes & X & \otimes & X^{-1}
  \end{array}
\]
The measurement process itself is architecture-dependent, so do we not go into much detail. In general, we need an entangling operation, a qudit analogue of the CNOT. One option is the operator that maps $\neket i \otimes \neket j \mapsto \neket i \otimes \neket{i + j \mod d}$. Furthermore, we need a way to switch between the $Z$ and $X$ bases. On qubits, this is the Hadamard operator, which is itself an instance of a Fourier transform. This is generalized to the case of qudits; we still denote it $H$. In the general qudit case, its square $H^2$ maps $\neket i \mapsto \neket {-i\mod d}$. Apart from using this for $X$-type measurements as in the qubit case, we also apply $H^2$ before and after CNOT on those qudits, where the stabilizer generator acts as $X^{-1}$ or $Z^{-1}$, i.e. an inverse Pauli operator. For more details, see ref. \cite{cowtan2022qudit}.

Similarly to the case of a qubit CSS code, the main idea is to use the non-commutation of the qudit $X$ and $Z$ operators. A measurement outcome of $\omega^a$ corresponds to a component of the syndrome vector with value $a \in \Z_d$. Observe that for qubits ($d=2$), this is exactly the statement that measuring $(-1)^a$ corresponds to syndrome component of $a \in \Z_2$.

\subsubsection*{Syndrome decoding}

In our example code in \Cref{fig:RP2-half-sphere}, suppose an error $Z \otimes \bm 1 \otimes Z^2 \otimes \bm1^{\otimes 3}$ occurred. This corresponds to the $1$-chain $\cedge_1 + 2\cedge_3$. Its syndrome of $X$-type measurements is $(-1, 1, 0)^\top + 2(0,-1,1)^\top = (-1, -1, 2)^\top$. In general, if an error $Z^{a}$ occurs on an edge $\cedge_\beta$, then this contributes $-a$ to the syndrome component corresponding to $s(\cedge_\beta)$ and $+a$ to $t(\cedge_\beta)$. However, it is not easy to work out the $1$-chain backward from the syndrome. Even with qubits, it may be unclear what the error is. In the case of qudits, the search space is now much larger, and so is the complexity of decoding. Qudit error decoding is a large field of its own, and we do not go into this. See for example the papers \cite{Delfosse2021unionfinddecoders,bravyi2013homological,Watson_2015}.

\subsection{Logical space}

We now compute the first homology of the complex in \Cref{fig:RP2-half-sphere}. First, we compute the kernel of $\partial_1 = P_X$ and image of $\partial_1 = P_Z^\top$, the matrices from \Cref{eq:RP2-code-matrices}. We have used Sage\cite{sagemath} for this computation, and we display an excerpt from the corresponding Jupyter\cite{Kluyver:2016aa} notebook in \Cref{cha:jupyter-sage}. The results are:
\begin{equation}
  \label{eq:RP2-code-ker-im}
  \left\{
  \begin{aligned}
    \ker P_X & = \genR\Z{ \mathbf{e}_1 + \mathbf{e}_2,\ \hphantom{2(}\mathbf{e}_2 + \mathbf{e}_4 - \mathbf{e}_6,\hphantom{)}\ \mathbf{e}_3 + \mathbf{e}_4,\ \mathbf{e}_5 + \mathbf{e}_6 }, \\
    \im P_Z^\top & = \genR\Z{ \mathbf{e}_1 + \mathbf{e}_2,\ 2(\mathbf{e}_2 + \mathbf{e}_4 - \mathbf{e}_6),\ \mathbf{e}_3 + \mathbf{e}_4,\ \mathbf{e}_5 + \mathbf{e}_6 }.
  \end{aligned}\right.
\end{equation}
Notice that while the kernel corresponds to circles in the complex, and as such includes $\mathbf{e}_2 + \mathbf{e}_4 - \mathbf{e}_6$, the image does not include this element. This is because $\mathbf{e}_2 + \mathbf{e}_4 - \mathbf{e}_6$ is not a boundary of a face. However, its multiple $2(\mathbf{e}_2 + \mathbf{e}_4 - \mathbf{e}_6)$ is the boundary of $\vec x \deq \mathbf{f}_1 + \mathbf{f}_2 - \mathbf{f}_3 + \mathbf{f}_4$. This is strange on first look, and it illustrates why working with the matrices is useful: we can compute first, and interpret later. This method is also much more reliable, because it does not depend on topological intuition; in fact, this is how the author gained that intuition in the first place.

The element $\vec x$ represents the sum of all of the faces, in a way that respects their orientation. We interpret it as merging the faces across edges that separate them by forgetting those edges. Specifically, this element is interpreted as merging across $\mathbf{e}_1$, $\mathbf{e}_3$, and $\mathbf{e}_5$, that is the edges not present in $\partial_2(\vec x)$. It is important to note that we do not delete all the edges. In fact, the circle $\lParen \cedge_4, \cedge_2, -\cedge_6 \rParen$ must exist, because the face $\vec x$ must be bounded by a circle. We show this in~\Cref{fig:RP2-circ-merge}.

\begin{figure}[h]
  \centering
  \input{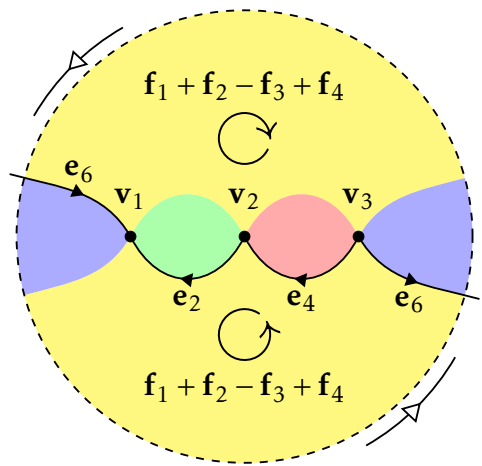}
  \caption[The complex from \protect\Cref*{fig:RP2-half-sphere} with faces merged]{The complex from \protect\Cref{fig:RP2-half-sphere} with faces merged.}
  \label{fig:RP2-circ-merge}
\end{figure}

The boundary of $\vec x$ is $2(\mathbf{e}_2 + \mathbf{e}_4 - \mathbf{e}_6)$, and not just $\mathbf{e}_2 + \mathbf{e}_4 - \mathbf{e}_6$. To see this, imagine once again we are standing inside the face $\vec x$, and walk along its boundary, following its orientation. Start at $\mathbf{v}_1$, and stand just above it: the face is oriented clockwise here, so we walk along $-\mathbf{e}_6$, i.e. opposite the direction of this edge. Walking like this, we have the boundary of $\vec x$ on the left. We cross the gluing boundary, and afterwards we have the boundary of $\vec x$ on the right. After reaching $\mathbf{v}_1$ again, we have walked along $-\mathbf{e}_6 + \mathbf{e}_4 + \mathbf{e}_2$. However, we are still not back at the start, because we are on the wrong side of the boundary. We keep going, make another loop, and when we reach $\mathbf{v}_1$ again, the total path walked is $2(\mathbf{e}_2 + \mathbf{e}_4 - \mathbf{e}_6)$. In this sense, the face touches the boundary twice, but without cancellation. This is the essence of torsion.

It remains to compute the first homology using the kernel and image in \cref{eq:RP2-code-ker-im}. We wrote them in such form that the cancellation in the quotient below is obvious:
\[
  H_1(C_\bullet) = \faktor{\ker P_X}{\im P_Z^\top}
  = \frac{\genR\Z{ \mathbf{e}_1 + \mathbf{e}_2,\ \hphantom{2(}\mathbf{e}_2 + \mathbf{e}_4 - \mathbf{e}_6,\hphantom{)}\ \mathbf{e}_3 + \mathbf{e}_4,\ \mathbf{e}_5 + \mathbf{e}_6 }}{\genR\Z{ \mathbf{e}_1 + \mathbf{e}_2,\ 2(\mathbf{e}_2 + \mathbf{e}_4 - \mathbf{e}_6),\ \mathbf{e}_3 + \mathbf{e}_4,\ \mathbf{e}_5 + \mathbf{e}_6 }}
  \cong \faktor{\Z}{2\Z}.
\]
The torsion of $\R\mathcal{P}^2$ is known to be $\Z_2$,\cite{HatcherAllen2002At} so this is as expected. The single non-trivial homology class of logical  operators we get from this is represented by $\mathbf{e}_2 + \mathbf{e}_4 - \mathbf{e}_6$. Recall that this is a $Z$-type complex, so the corresponding operator is:
\[ \bm 1 \otimes Z \otimes \bm 1 \otimes Z \otimes \bm 1 \otimes Z^{-1}. \]

\section{Torsion of higher order}
\label{sec:larger-torsion}

As we have seen, the origin of torsion in $H_1(C_\bullet)$ is a face whose boundary traverses some circle of edges multiple times. Above, we saw the case of $\Z_2$ torsion. Here, we briefly show how to construct a space with torsion $\Z_q$, where $q>2$. Specifically, we define an abstract cell complex engineered to have torsion $\Z_3$, by gluing a hexagon (in general a $2q$-gon) shown in~\Cref{fig:3-torsion-complex:hex}.\cite{HatcherAllen2002At}
\begin{figure}[h]
  \centering
  \subfloat[\label{fig:3-torsion-complex:hex}]{\begin{tikzpicture}
  \foreach \i [evaluate={\v=int(mod(\i,2)+1)}] in{0,...,6}{
    \coordinate (x\i) at (-60*\i:2);
    \node[vertex, label={-60*\i:$\mathbf{v}_\v$}] at (x\i) {};
  }
  \foreach \i [evaluate={\n=int(\i+1); \j=mod(\n+1,6)}] in{0,2,4}{
    \draw[-g>-] (x\i) -- (x\n);
    \node at (-60*\i-30:2.3) {$\mathbf e_1$};
    \draw[-g>>-] (x\n) -- (x\j);
    \node at (-60*\n-30:2.3) {$\mathbf e_2$};
  }

  \node (f1) at (0,0) {$\mathbf f_1$};
  \node[scale=4, rotate=-60] at (f1.center) {$\circlearrowright$};
  \begin{pgfonlayer}{background}
    \fill[\colora, fill opacity=0.7] (x0) -- (x1) -- (x2) -- (x3) -- (x4) -- (x5) -- cycle;
  \end{pgfonlayer}
\end{tikzpicture}}
  \qquad
  \subfloat[\label{fig:3-torsion-complex:e1}]{\begin{tikzpicture}[x={(1,0)}, y={(0.3, 0.5)}, z={(0,1.5)}]
  \newcommand*{\ylen}{3}
  \newcommand*{\crad}{2.5}
  \newcommand*{\aoff}{-50}
  \foreach \y in {0,1}{
    \coordinate (e1-\y) at (0, \y*\ylen, 0);
    \foreach \a in {0,1,2}{
      \coordinate (x-\a-\y)
      at ({cos(120*\a+\aoff)*\crad}, \y*\ylen, {sin(120*\a+\aoff)*\crad});
    }
  }
  \node[vertex, label={-120:$\cvert_2$}] at (e1-0) {};
  \node[vertex, label={90:$\cvert_1$}] at (e1-1) {};
  \draw[-g>-] (e1-1) -- node[midway, left] {$\cedge_1$} (e1-0);

  \foreach \i [evaluate={\a=int(mod(\i+1, 3))}] in {0,1,2}{
    \draw[-g>>-] (x-\a-1) -- (e1-1);
    \draw[-g>>-] (e1-0) -- (x-\a-0);
    \begin{pgfonlayer}{background}
      \fill[\colora, fill opacity=0.7] (x-\a-1) -- (e1-1) -- (e1-0) -- (x-\a-0) -- cycle;
    \end{pgfonlayer}
  }
\end{tikzpicture}}
  \caption[A complex with torsion $\Z_3$]{A complex with torsion $\Z_3$. \textbf{(a)}~Three edges of a hexagon are glued together to become the edge $\cedge_1$, and analogously for $\cedge_2$. \textbf{(b)}~Picturing the realization of the whole space is not easy. We show what the gluing of $\cedge_1$ looks like.}
  \label{fig:3-torsion-complex}
\end{figure}
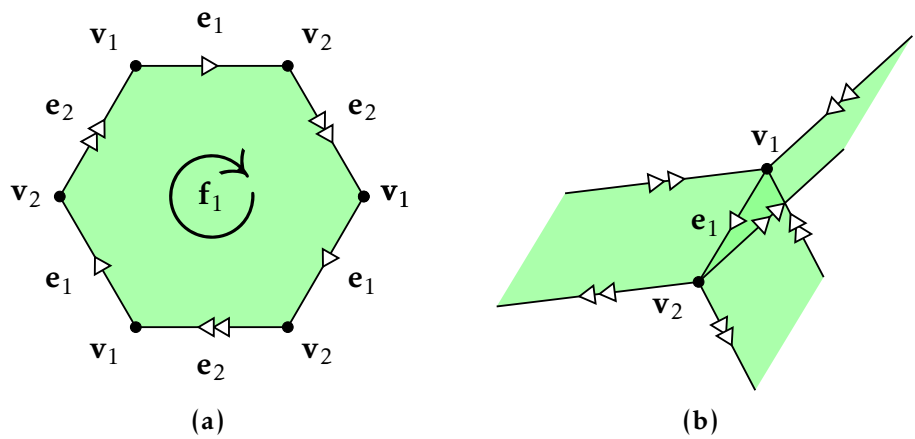

In that complex, the face $\mathbf{f}_1$ is bounded by the circle $\lParen \mathbf{e}_1, \mathbf{e}_2, \mathbf{e}_1, \mathbf{e}_2, \mathbf{e}_1, \mathbf{e}_2 \rParen$ that contains each of the two edges three times, in the same orientation. The differential is $\partial_2(\mathbf{f}_1) = 3(\mathbf{e}_1 + \mathbf{e}_2)$. Observe that $\ker\partial_1$ is generated by the simple cycle $\lParen \mathbf{e}_1, \mathbf{e}_2 \rParen$. Then we have
\[ H_1(C_\bullet) = \faktor{\ker \partial_1}{\im \partial_2} = \faktor{\genR\Z{\mathbf{e}_1 + \mathbf{e}_2}}{\genR\Z{3(\mathbf{e}_1 + \mathbf{e}_2)}} \cong \Z_3. \]

This method generalizes to any $q \ge 2$. However, only for $q=2$ can such a space be realized as a $2$-manifold. In the example, the neighbourhood of (the topological realization~of) any point on the edge $\cedge_1$ cannot possibly be homeomorphic to $\R^2$, as shown in \Cref{fig:3-torsion-complex:e1}.

\chapter{Conclusion and related work}
\label{chap:conclusion}

In this work, we have explored homological quantum error correction, and we derived error correcting codes from cellulations of topological spaces. We explained this in worked examples and provided commentary on the results.
In terms of contributions, we defined our own abstracted and restricted version of a cell complex to match our needs. As a main result, we proved \Cref{thm:structure-theorem-for-qudit-logical-space}, and results leading up to it.
Now, we finish off by mentioning other work in the literature, especially things that we would like to explore next. The following would be the natural directions for this project to continue.

\section{Regular cellulations}

There is an ongoing line of research into \emph{regular} cellulations of various spaces, also called \emph{uniform tilings}; see for example \cite{Breuckmann_2017,7456305}. Regularity means that each face is a regular $r$-gon, and each vertex has $s$ of these around it. This is represented by the Schl\"afli symbol~$\lBrace r, s \rBrace$. For example, we display the real projective plane with a regular cellulation with symbol $\lBrace 3, 5 \rBrace$, i.e. each vertex surrounded by five triangles, in \Cref{fig:3-5-RP2}.

\begin{figure}[h]
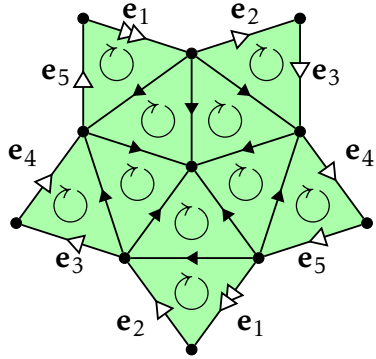

  \centering
  \drawRPHIH[1.5cm]{alpha beta gamma delta epsilon etta theta iota kappa lambda}
  \caption[A regular cellulation of the real projective plane with Schl\"afli symbol~$\lBrace 3, 5 \rBrace$]{A regular cellulation of the real projective plane with Schl\"afli symbol~$\lBrace 3, 5 \rBrace$. We abuse notation slightly and draw a single white triangle to indicate gluing for all of the edges except $\cedge_1$, which is indicated by two white triangles. By this, we do not mean that there is an edge that is glued eight times, this is simply a notational convenience. Furthermore, we also label the edges to make this clear. Observe that this gluing makes the complex a real projective plane, and it is a cellulation of the half-plane model, similar to \protect\Cref{fig:RP2-half-sphere}.}
  \label{fig:3-5-RP2}
\end{figure}
The advantage of regularity is that these codes are by design LDPC (see \Cref{sec:LDPC}). Recall this means that the Hamming weight of each stabilizer, as well as the number of stabilizer measurements that each physical qudit is involved in, are both bounded from above by a constant. For an $\lBrace r, s \rBrace$ regular cellulation, each $X$-type stabilizer (vertex) involves $s$ qudits (edges). Similarly, each $Z$-type stabilizer (face) touches $r$ qudits.

A particular class of regular cellulations are those that tile the hyperbolic plane. Note that this is no longer just a topological notion: we are imposing a metric on some space. A metric does not change the homology itself, however, it allows us to cellulate the space in a regular and controlled way. We then cut holes out of the plane, or perform gluings, to create non-trivial first homology. More can be found in ref. \cite{Breuckmann_2017,7456305}.

\section{Fault-tolerant computation and transversals}

The encoding of logical operators is itself a problem that needs solving. Of course, we get operators that correspond to $1$-chains representing the equivalence classes in the first homology of the chain complex. However, these are, essentially, just $X$ and $Z$ acting on logical qudits. For fault-tolerant computation, we need more operators. In particular, we need an entangling operation, say an encoded analogue of CNOT, and we need a change of basis, that is an encoded Hadamard.

There exists a class of fault-tolerant operators that are easy to implement. Let $A$ be a physical operator acting on an un-encoded system in a certain way. If it is possible to encode $A$ as a logical operator $\overline A$ acting on our logical qudits in essentially the same way as $A$ acts on un-encoded qudits, then it is called a \emph{transversal operator}. To make this concrete, suppose our codespace contains two logical qudits of the same kind encoded in disjoint blocks of physical qudits. This essentially means that we have two copies of the same smaller code with a single logical qudit, and these run in parallel. If we are able to perform logical CNOT between these two logical qudits by applying the physical CNOT between the corresponding qudits of the two blocks, then CNOT is a transversal of this code. It is not always clear how to find transversals, or how to construct codes given some desired transversal operation. More on this in refs. \cite{gottesman1997stabilizer,10.1063/1.1499754,breuckmann2022foldtransversal}

\section{Code surgery}

Throughout the work, we have mostly dealt with a single code at a time. In \Cref{chap:joining-codes}, we give some thoughts on joining two codes, either using the tensor product of chain complexes, or a connected sum of the underlying topological spaces (which corresponds to a quotient of the direct sum of the topological complexes). More generally, one can use the tools of topological \emph{surgery theory} which studies how a space can be split up, and the parts recombined in a possibly novel way. In the context of quantum error correction, this is another way of constructing new codes with favourable properties. In particular, surgery of LDPC codes yields new codes which are also LDPC. More can be read in ref. \cite{cowtan2023css}.


\appendix
\part{Appendix}
\chapter[Joining Codes]{Joining Codes \hspace*{\fill}\begin{tikzpicture}
  \tikzstyle{vertex}=[fill=black,draw=black,circle,minimum size=2pt,inner sep=0]
  \tikzstyle{compedge}=[
  postaction={decorate},
  line width=0.7pt,
  draw]
  \tikzstyle{->-}=[compedge,
  decoration={
    markings,
    mark=at position 0.7 with {\arrow{Triangle[scale=0.5]}}}]

  \newcommand*{\side}{3mm}
  \node[vertex] (v1) at (0,0) {};
  \node[vertex] (v2) at (0,-\side) {};
  \node[vertex] (v3) at (\side/2,\side/2) {};
  \node[vertex] (v4) at (3*\side/2,\side/2) {};
  \node[vertex] (v13) at (\side/2,0) {};
  \node[vertex] (v14) at (3*\side/2,0) {};
  \node[vertex] (v23) at (\side/2,-\side) {};
  \node[vertex] (v24) at (3*\side/2,-\side) {};

  \draw[->-] (v1.center) -- (v2.center);
  \draw[->-] (v3.center) -- (v4.center);
  \draw[->-] (v13.center) -- (v23.center);
  \draw[->-] (v13.center) -- (v14.center);
  \draw[->-] (v14.center) -- (v24.center);
  \draw[->-] (v23.center) -- (v24.center);

  \begin{pgfonlayer}{background}
    \fill[\colora, fill opacity=1] (v13.center) -- (v14.center) -- (v24.center) -- (v23.center) -- cycle;
  \end{pgfonlayer}
\end{tikzpicture}}
\label{chap:joining-codes}

In this chapter, we briefly explore what can be done when we have two or more codes obtained from chain complexes corresponding abstract cell complexes (or in general from any chain complexes), and we wish to merge them. We only show the two simple options, the \emph{tensor product}, and the \emph{connected sum}. This is an extension of the main body of the work, and not all details have been worked out yet.

\section{Tensor product}
\label{sec:tensor-product-codes}

In this section, we define a tensor product of two codes based on cellulations. We focus mainly on the product of two classical codes, which will give us a quantum code; but at the end, we mention more general products. We describe the tensor product using examples.

\subsection{The $[3,1,3]$ classical code as a cell complex}
\label{sec:3-1-3-returns}

We come back to the $[3,1,3]$ classical code from \Cref{sec:3-1-3-code}, and we interpret is as an abstract cell complex. Recall that the parity matrix is
\[ P =
  \begin{pmatrix*}
    1 & 1 & 0 \\
    1 & 0 & 1
  \end{pmatrix*} : \Z_2^3 \to \Z_2^2
\]
We wish to interpret this as a differential $\partial_1$ of a length 1 chain complex $C_\bullet$, corresponding to a $1$-dimensional abstract cell complex $X = (X_\bullet, \varphi_\bullet)$. We change the ring to $\Z$, and we add minus signs to some of the entries. The choice we make is:
\[ \partial_1 =
  \begin{pmatrix*}[r]
    -1 & -1 & 0 \\
    1 & 0 & -1
  \end{pmatrix*} : C_1 \to C_0
\]
We assume the ordered bases are $(\cedge_1, \cedge_2, \cedge_3)$ for $C_1$, and $(\cvert_1, \cvert_2)$ for $C_0$. With this labeling, the first column simply says that $\varphi_1(\cedge_1) = (\cvert_1, \cvert_2)$. However, the other two edges are tricky. The matrix gives their starting vertices, but does say anything about their targets.

One way to deal with this is to interpret those edges as shooting off to infinity. This breaks the rules, as it is now not a valid abstract cell complex. But we might generalize our notion of a topological complex to include this scenario, which now corresponds to a bi-infinite line, as shown in \Cref{fig:3-1-3-as-cell-complexes-infty}.

\begin{figure}[h]
  \centering
  \newcommand*{\edgewidth}{0.75pt}
  \subfloat[\label{fig:3-1-3-as-cell-complexes-infty}]{%
    \begin{tikzpicture}
  \newcommand*{\dottedlen}{0.35}
  \tikzstyle{vertex}=[fill=black,draw=black,circle,minimum size=4pt,inner sep=0]
  \tikzstyle{->-}=[
  decoration={
    markings,
    mark=at position 0.55 with {\arrow{Triangle[scale=1.2]}}},
  postaction={decorate},
  draw,
  line width=\edgewidth]

  \node[vertex, label={-90:$\cvert_1$}] (v1) at (0,0) {};
  \node[vertex, label={-90:$\cvert_2$}] (v2) at (1.5,0) {};
  \coordinate (l) at (-1.3,0);
  \coordinate (ll) at ($(l)-(\dottedlen,0)$);
  \coordinate (r) at (2.5,0);
  \coordinate (rr) at ($(r)+(\dottedlen,0)$);

  \draw[->-] (v1.center) -- node[midway,above] {$\cedge_1$} (v2.center);
  \draw[->-] (v1.center) -- node[midway,above] {$\cedge_2$} (l);
  \draw[->-] (v2.center) -- node[midway,above] {$\cedge_3$} (r);
  \draw[dotted, line width=\edgewidth] (r) -- (rr);
  \draw[dotted, line width=\edgewidth] (l) -- (ll);
\end{tikzpicture}
  }
  \qquad
  \subfloat[\label{fig:3-1-3-as-cell-complexes-circle}]{
    \begin{tikzpicture}
  \newcommand*{\radius}{1}
  \tikzstyle{vertex}=[fill=black,draw=black,circle,minimum size=4pt,inner sep=0]
  \tikzstyle{->-}=[
  decoration={
    markings,
    mark=at position 0.55 with {\arrow{Triangle[scale=1.2]}}},
  postaction={decorate},
  draw,
  line width=\edgewidth]

  \node[vertex, label={150:$\cvert_1$}] (v1) at (150:\radius) {};
  \node[vertex, label={30:$\cvert_2$}] (v2) at (30:\radius) {};
  \node[vertex, label={-90:$\cvert_\infty$}] (vinf) at (-90:\radius) {};

  \draw[->-] (v1.center)
  to [out=30, in=150]
  node[midway,above] {$\cedge_1$}
  (v2.center);

  \draw[->-] (v1.center)
  to [out=-90, in=150]
  node[midway,below left] {$\cedge_2$}
  (vinf.center);

  \draw[->-] (v2.center)
  to [out=-90, in=30]
  node[midway,below right] {$\cedge_3$}
  (vinf.center);

\end{tikzpicture}
  }
  \caption[Interpretations of the $\lbrack 3,1,3 \rbrack$ repetition code as abstract cell complexes]{Interpretations of the $[3,1,3]$ repetition code as abstract cell complexes.}
  \label{fig:3-1-3-as-cell-complexes}
\end{figure}
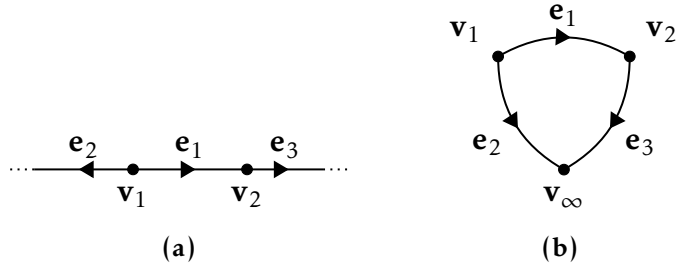

However, there is another way to capture this construction. This one does not require us to break the rules, and it will be much more useful when defining the product of codes in \Cref{sec:3-1-3-tensor}. We adjoin a \emph{point at infinity}, labeled $\cvert_\infty$, and connect all single-endpoint edges to it. This means changing the chain module $C_0$ to $\Z_2^{\oplus 3}$ by adding a new basis vector. The new ordered basis is $(\cvert_1, \cvert_2, \cvert_\infty)$. Correspondingly, we change the differential:
  \[
    \partial_1 \deq
    \begin{tikzpicture}[baseline=(current bounding box.center)]
      \matrix (mat)
      [matrix of math nodes,
      left delimiter={\lparen},
      right delimiter={\rparen},
      column sep=0.9em,
      row sep=1ex,
      nodes={inner sep=0pt, outer sep=0pt}]
      {
        -1 & -1 & \hphantom-0 \\
        \hphantom-1 & \hphantom-0 & -1 \\[1ex]
        \hphantom-0 & \hphantom-1 & \hphantom-1 \\
      };
      \draw[line width=0.3mm] ($0.5*(mat-2-1.south west)+0.5*(mat-3-1.north west)-(0.05,0)$)
      -- ($0.5*(mat-2-3.south east)+0.5*(mat-3-3.north east)+(0.05,0)$);
    \end{tikzpicture}
  \]
  This corresponds to setting $\varphi_2(\cedge_2) = (\cvert_1, \cvert_\infty)$ and $\varphi_2(\cedge_3) = (\cvert_2, \cvert_\infty)$. The new abstract cell complex forms a circle, and we display it in \Cref{fig:3-1-3-as-cell-complexes-circle}.

  Note that the parity matrix $P = \partial_1$ is different than before, and this is formally a different code under \Cref{con:different-codes}. However, it is essentially the same, as we have not changed the codespace.

\subsection{Tensor product of complexes}
\label{sec:3-1-3-tensor}

We now define the tensor product of an abstract cell complexes, and of chain complexes. For more details, see \cite{HatcherAllen2002At}. These two constructions correspond, and allow us to construct new codes from old.

\begin{definition}[tensor product cell complex]
  Let $X = (X_\bullet, \varphi_\bullet)$ and $Y = (Y_\bullet, \psi_\bullet)$ be abstract cell complexes of dimension at most $1$, i.e. we require $X_2 = Y_2 = \varnothing$. Define its \emph{tensor product},\footnote{This is usually called just the \emph{product}.} denoted $X \otimes Y$, as the abstract cell complex $(P_\bullet, \pi_\bullet)$, where $P_n \deq \{ e^{i}_\alpha \otimes e^j_\beta : i+j = n, e^i_\alpha \in X_i, e^j_\beta \in Y_j \}$. The elements $e^i_\alpha \otimes e^j_\beta$ are just pairs of cells from each complex, and $\otimes$ is just a chosen notation. Observe that a product of an $n$- and an $m$-cell is an $(n+m)$-cell; we restrict the dimension of $X$ and $Y$, so that the product $P = X \otimes Y$ is at most $2$-dimensional. The attaching map for $1$-cells acts as $\pi_1(e^1_\alpha \otimes e^0_\beta) = (s(e^1_\alpha) \otimes e^0_\beta, t(e^1_\alpha) \otimes e^0_\beta)$, and analogously on $e^0_\alpha \otimes e^1_\beta$. The attaching of product $2$-cells is defined analogously. We show an example in \Cref{fig:tensor-product-cell-complex}.
\end{definition}

\begin{figure}[h]
  \centering
  \begin{tikzpicture}
    \newcommand*{\side}{3}
    \node[vertex, label={180:$e^0_1$}] (v1) at (-\side/3,0) {};
    \node[vertex, label={180:$e^0_2$}] (v2) at (-\side/3,-\side) {};
    \node[vertex, label={90:$e^0_3$}] (v3) at (\side/2,5*\side/6) {};
    \node[vertex, label={90:$e^0_4$}] (v4) at (3*\side/2,5*\side/6) {};
    \node[vertex, label={90+45:$e^0_1 \otimes e^0_3$}] (v13) at (\side/2,0) {};
    \node[vertex, label={90-45:$e^0_1 \otimes e^0_4$}] (v14) at (3*\side/2,0) {};
    \node[vertex, label={-90-45:$e^0_2 \otimes e^0_3$}] (v23) at (\side/2,-\side) {};
    \node[vertex, label={-90+45:$e^0_2 \otimes e^0_4$}] (v24) at (3*\side/2,-\side) {};

    \draw[->-] (v1.center) -- node [midway, left] {$e^1_1$} (v2.center);
    \draw[->-] (v3.center) -- node [midway, above] {$e^1_2$} (v4.center);
    \draw[->-] (v13.center) -- node [midway, left] {$e^1_1 \otimes e^0_3$} (v23.center);
    \draw[->-] (v13.center) -- node [midway, above] {$e^0_1 \otimes e^1_2$} (v14.center);
    \draw[->-] (v14.center) -- node [midway, right] {$e^1_1 \otimes e^0_4$} (v24.center);
    \draw[->-] (v23.center) -- node [midway, below] {$e^0_2 \otimes e^1_2$} (v24.center);

    \node (f) at (\side, -\side/2) {$e^1_1 \otimes e^1_2$};
    \node[scale=5] at (f.center) {$\circlearrowleft$};

    \begin{pgfonlayer}{background}
      \fill[\colora, fill opacity=1] (v13.center) -- (v14.center) -- (v24.center) -- (v23.center) -- cycle;
    \end{pgfonlayer}
\end{tikzpicture}
  \caption[{Example of the tensor product of cell complexes}]{Example of the tensor product of cell complexes. The two factor complexes are the lines on the left and top, and they are positioned in a way to make the structure of the tensor product clear.}
  \label{fig:tensor-product-cell-complex}
\end{figure}
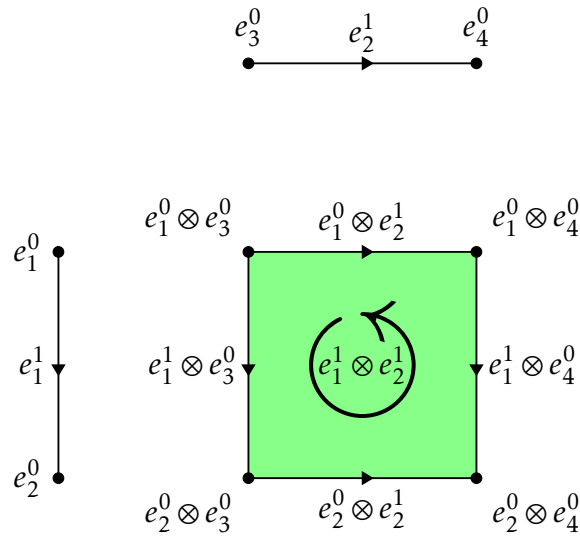

Correspondingly, the chain complex describing this is the following:
\begin{definition}[tensor product of chain complexes\cite{HatcherAllen2002At}]
  \label{def:tensor-product-chain-complex}
  Let $C_\bullet$ and $D_\bullet$ be chain complexes of $R$-modules. Define their \emph{tensor product} $C_\bullet \otimes D_\bullet$ as the chain complex $P_\bullet$ with components
  \[ P_n \deq \bigoplus_{i + j = n} C_i \otimes_R D_j, \]
  and differentials are linearly extended from
  \[
    \partial_n^{P_\bullet}\left( \bigoplus_{i + j = n} x_i \otimes y_j \right)
    \deq \sum_{i + j = n} \partial_i^{C_\bullet}(x_i) \otimes y_j + (-1)^i x_i \otimes \partial_j^{D_\bullet}(y_j),
  \]
  where $n \in \N$, and each $x_i \in C_i$, and $y_j \in D_j$.
\end{definition}

Note that conventionally $C_{-1} = D_{-1} = \bm 0$. By convention, $P_{-1}$ should also be the zero module. We check that this is the case following \Cref{def:tensor-product-chain-complex}:
\[
  P_{-1} = (C_{-1} \otimes_R D_0) \oplus (C_0 \otimes_R D_{-1}) = (\bm 0 \otimes_R D_0) \oplus (C_0 \otimes_R \bm 0) = \bm 0 \oplus \bm 0 = \bm 0,
\]
exactly as expected. Above, we used the property from \cref{eq:tensor-zero}.

With the above definitions, we can define the tensor product of error correcting codes. If we bound the factors to dimension one, i.e. classical codes, we immediately get a quantum code out of this. For the interpretation of the $[3,1,3]$ code from \Cref{fig:3-1-3-as-cell-complexes-infty} which is a line, the product is a plane. This is not an interesting code, because it has trivial first homology.
On the other hand, this is exactly why we constructed also \Cref{fig:3-1-3-as-cell-complexes-circle}. The product of two circles is a torus, and now this is a viable quantum code with logical space (first homology) of rank 2.

This may be further generalized to products of quantum codes. Then, the chain complex has length four, and our abstract cell complexes can no longer interpret it (though, general cell complexes can). Then we have to choose how to assign parity check matrices, so that we obtain generators of the stabilizer group. One way is to truncate the complex to length two by deleting one end. Another option is to see the extra differentials as defining \emph{meta-checks}, which detect measurement errors. More on these can be found in ref. \cite{bravyi2013homological}.

\section{Connected sum}
\label{sec:connected-sum-of-codes}

Finally, we briefly mention the connected sum of codes. In \Cref{def:direct-sum-of-cell-complexes}, we have defined the direct sum of cell complexes, which corresponds to the direct sum of chain complexes. We also sketched how to glue them. This clearly also corresponds to an operation that takes two codes and merges them.

This can be formalized as an actual sum of abstract cell complexes, but we can also use a more general connected sum (see \cite{HatcherAllen2002At}) of topological spaces and then cellulate it. This has different effects on the logical space. This concept may be generalized to code surgery, see \cite{cowtan2023css,cowtan2022qudit}.

\chapter{Jupyter + Sage notebook}
\label{cha:jupyter-sage}
We display in \Cref{fig:sage-setup,fig:sage-ntb} a portion of our Jupyter\cite{Kluyver:2016aa} notebook where we have used Sage\cite{sagemath} to analyze the complex from \Cref{fig:RP2-half-sphere}. We use the Sage standard library to find $\ker \partial_1$ and $\im \partial_2$ as matrices over $\Z$, and we further use Sage to change the basis, so that the generating sets of these spaces are in row echelon form. The function \texttt{on\_basis\_from\_dict} is our own utility, written to more easily translate differentials computed by hand into Sage. It is displayed in \Cref{fig:sage-setup}.

\begin{figure}[h]
  \centering
  \includegraphics[page={1},scale=0.7]{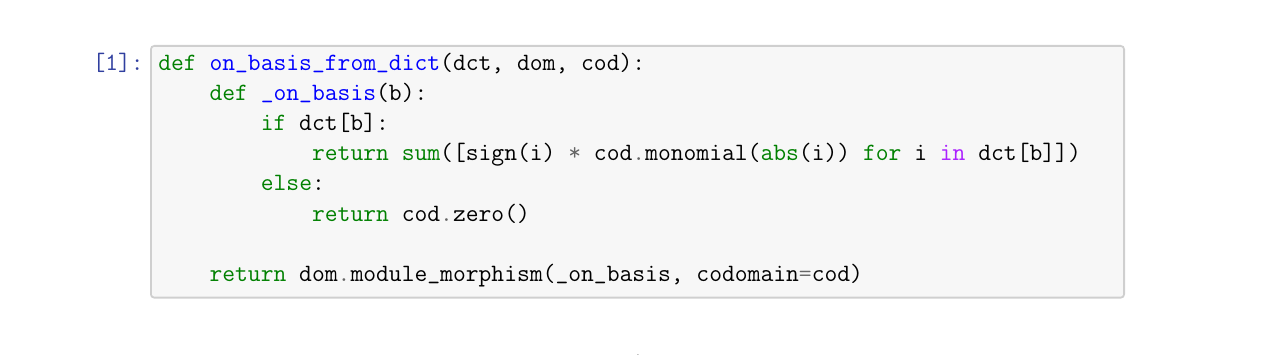}
  \caption[The custom utility function on\_basis\_from\_dict]{The custom utility function \texttt{on\_basis\_from\_dict}. It allows to conveniently input a linear transformation $M$, defined by its action on the standard basis $\vec e_i \mapsto \sum_j M_{j,i} \vec e_j$, as a dictionary. It works in the specific case that the entries of $M$ as a matrix are all $0$ or $\pm 1$. The dictionary has keys corresponding to input basis vectors, and the values are lists of output basis vectors with nonzero coefficients. These are written using their integer indices, and they are given the sign corresponding to the matrix value. For example, if we have $\vec e_i \mapsto \vec e_1 - \vec e_3$ , the key-value pair is $\{ i : [+1, -3] \}$, omitting $\vec e_2$ because it does not occur.}
  \label{fig:sage-setup}
\end{figure}

\begin{figure}[h]
  \centering
  \includegraphics[page={1},scale=0.7]{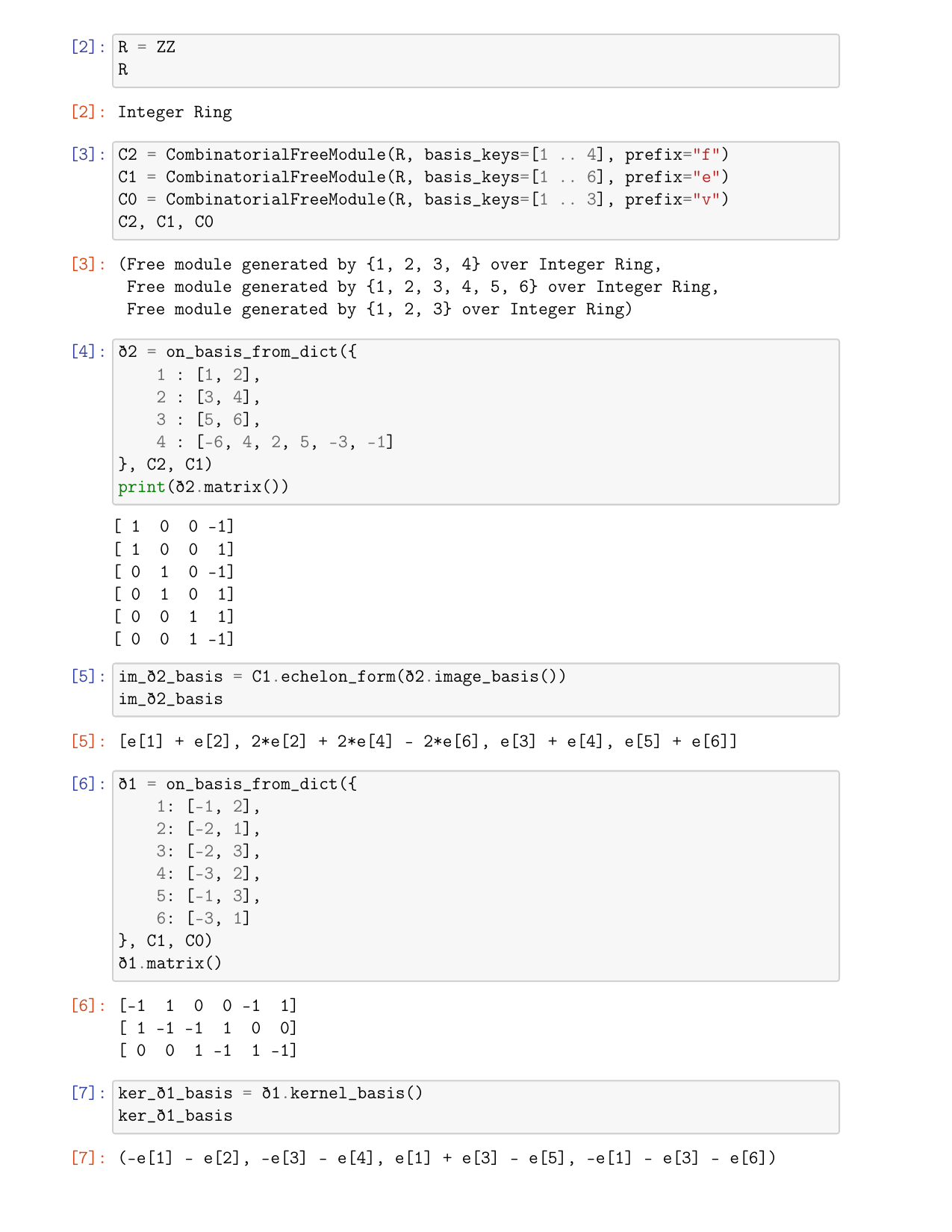}
  \caption[Computation of $\ker \partial_1$ and $\im \partial_2$ of \protect\Cref*{fig:RP2-half-sphere}]{Computation of $\ker \partial_1$ and $\im \partial_2$ of \protect\Cref{fig:RP2-half-sphere}. We define the ring of scalars to be $\Z$, and define three free and finitely generated modules $C_2, C_1, C_0$ by listing their bases. Then we define the differentials $\partial_2$ and $\partial_1$ using the function from \protect\Cref{fig:sage-setup}. Finally, we ask Sage to compute the bases of, respectively, their image and kernel, and write them in row echelon form, so that may easily compute the quotient $\ker \partial_1 / \im \partial_2$.}
  \label{fig:sage-ntb}
\end{figure}

\addcontentsline{toc}{chapter}{Bibliography}
\bibliography{bibliography}

\newcommand{\etalchar}[1]{$^{#1}$}
\begin{thebibliography}{KRKP{\etalchar{+}}16}

\bibitem[Aar16]{aaronson_intro_to_qis1}
Scott Aaronson.
\newblock Introduction to quantum information science {I}, lecture notes, 2016.
\newblock \url{https://www.scottaaronson.com/qclec.pdf}.

\bibitem[Aar22]{aaronson_intro_to_qis2}
Scott Aaronson.
\newblock Introduction to quantum information science {II}, lecture notes,
  2022.
\newblock \url{https://www.scottaaronson.com/qisii.pdf}.

\bibitem[Axl96]{MR1391966}
Sheldon Axler.
\newblock {\em Linear algebra done right}.
\newblock Undergraduate Texts in Mathematics. Springer-Verlag, New York, 1996.

\bibitem[Bac06]{digitization}
Dave Bacon.
\newblock {The Quantum Error Correcting Criteria}, lecture 17.
\newblock Lecture notes in {CSE 599d - Quantum Computing}, {University of
  Washington}, 2006.
\newblock
  \url{https://courses.cs.washington.edu/courses/cse599d/06wi/lecturenotes17.pdf}.

\bibitem[BB22]{breuckmann2022foldtransversal}
Nikolas~P. Breuckmann and Simon Burton.
\newblock Fold-transversal clifford gates for quantum codes, 2022.

\bibitem[BDD21]{berera_del-debbio_2021}
Arjun Berera and Luigi Del~Debbio.
\newblock {\em Quantum Mechanics}.
\newblock Cambridge University Press, 2021.

\bibitem[BH13]{bravyi2013homological}
Sergey Bravyi and Matthew~B. Hastings.
\newblock Homological {P}roduct {C}odes, 2013.

\bibitem[BMD07]{10.1063/1.2731356}
H.~Bombin and M.~A. Martin-Delgado.
\newblock Homological error correction: {C}lassical and quantum codes.
\newblock {\em Journal of Mathematical Physics}, 48(5):052105, 05 2007.

\bibitem[BN96]{Buchmann1996AlgorithmsFL}
Johannes~A. Buchmann and Stefan Neis.
\newblock Algorithms for linear algebra problems over principal ideal rings.
\newblock Darmstadt, 1996. Techn. Hochschule, FB 20, FG Theoretische
  Informatik.

\bibitem[BT16]{7456305}
Nikolas~P. Breuckmann and Barbara~M. Terhal.
\newblock Constructions and noise threshold of hyperbolic surface codes.
\newblock {\em IEEE Transactions on Information Theory}, 62(6):3731--3744,
  2016.

\bibitem[BVC{\etalchar{+}}17]{Breuckmann_2017}
Nikolas~P Breuckmann, Christophe Vuillot, Earl Campbell, Anirudh Krishna, and
  Barbara~M Terhal.
\newblock Hyperbolic and semi-hyperbolic surface codes for quantum storage.
\newblock {\em Quantum Science and Technology}, 2(3):035007, aug 2017.

\bibitem[CB23]{cowtan2023css}
Alexander Cowtan and Simon Burton.
\newblock {CSS} code surgery as a universal construction, 2023.

\bibitem[CCBT18]{MR3832800}
J.~Conrad, C.~Chamberland, N.~P. Breuckmann, and B.~M. Terhal.
\newblock The small stellated dodecahedron code and friends.
\newblock {\em Philos. Trans. Roy. Soc. A}, 376(2123):20170323, 19, 2018.

\bibitem[Cow22]{cowtan2022qudit}
Alexander Cowtan.
\newblock Qudit lattice surgery, 2022.

\bibitem[DF04]{MR2286236}
David~S. Dummit and Richard~M. Foote.
\newblock {\em Abstract algebra}.
\newblock John Wiley \& Sons, Inc., Hoboken, NJ, third edition, 2004.

\bibitem[DH21]{Delfosse2021unionfinddecoders}
Nicolas Delfosse and Matthew~B. Hastings.
\newblock Union-{F}ind {D}ecoders {F}or {H}omological {P}roduct {C}odes.
\newblock {\em {Quantum}}, 5:406, March 2021.

\bibitem[DKLP02]{10.1063/1.1499754}
Eric Dennis, Alexei Kitaev, Andrew Landahl, and John Preskill.
\newblock {Topological quantum memory}.
\newblock {\em Journal of Mathematical Physics}, 43(9):4452--4505, 08 2002.

\bibitem[DMN13]{Devitt_2013}
Simon~J Devitt, William~J Munro, and Kae Nemoto.
\newblock Quantum error correction for beginners.
\newblock {\em Reports on Progress in Physics}, 76(7):076001, June 2013.

\bibitem[Got97]{gottesman1997stabilizer}
Daniel Gottesman.
\newblock Stabilizer codes and quantum error correction, {PhD thesis}, 1997.
\newblock \url{https://doi.org/10.48550/arXiv.quant-ph/9705052}.

\bibitem[Hat02]{HatcherAllen2002At}
Allen Hatcher.
\newblock {\em Algebraic topology}.
\newblock Cambridge University Press, Cambridge, 2002.

\bibitem[Jac09a]{JacobsonNathan2009Ba}
Nathan Jacobson.
\newblock {\em Basic algebra}, volume~{I} of {\em Dover books on mathematics}.
\newblock W. H. Freeman and Company, Mineola, New York, unabridged second
  edition, 2009.

\bibitem[Jac09b]{JacobsonNathan2009Bb}
Nathan Jacobson.
\newblock {\em Basic algebra}, volume~{II} of {\em Dover books on mathematics}.
\newblock W. H. Freeman and Company, Mineola, New York, unabridged second
  edition, 2009.

\bibitem[Kit03]{MR1951039}
A.~Yu. Kitaev.
\newblock Fault-tolerant quantum computation by anyons.
\newblock {\em Ann. Physics}, 303(1):2--30, 2003.

\bibitem[Kov21]{Kovalevsky2021}
Vladimir Kovalevsky.
\newblock {\em Boundary Presentation Using Abstract Cell Complexes}, pages
  7--16.
\newblock Springer Singapore, Singapore, 2021.

\bibitem[KRKP{\etalchar{+}}16]{Kluyver:2016aa}
Thomas Kluyver, Benjamin Ragan-Kelley, Fernando P{\'e}rez, Brian Granger,
  Matthias Bussonnier, Jonathan Frederic, Kyle Kelley, Jessica Hamrick, Jason
  Grout, Sylvain Corlay, Paul Ivanov, Dami{\'a}n Avila, Safia Abdalla, and
  Carol Willing.
\newblock Jupyter notebooks -- a publishing format for reproducible
  computational workflows.
\newblock In F.~Loizides and B.~Schmidt, editors, {\em Positioning and Power in
  Academic Publishing: Players, Agents and Agendas}, pages 87 -- 90. IOS Press,
  2016.

\bibitem[Lam06]{LamT.Y2006Eima}
T.~Y Lam.
\newblock {\em Exercises in modules and rings}.
\newblock Problem books in mathematics. Springer, New York, NY, 2006.

\bibitem[{\L}MGH15]{lodyga2015}
Justyna {\L}odyga, Pawe{\l} Mazurek, Andrzej Grudka, and Micha{\l} Horodecki.
\newblock Simple scheme for encoding and decoding a qubit in unknown state for
  various topological codes.
\newblock {\em Scientific Reports}, 5(1):8975, Mar 2015.

\bibitem[MS15]{MR3287221}
Diane Maclagan and Bernd Sturmfels.
\newblock {\em Introduction to tropical geometry}, volume 161 of {\em Graduate
  Studies in Mathematics}.
\newblock American Mathematical Society, Providence, RI, 2015.

\bibitem[NC00]{MR1796805}
Michael~A. Nielsen and Isaac~L. Chuang.
\newblock {\em Quantum computation and quantum information}.
\newblock Cambridge University Press, Cambridge, 2000.

\bibitem[RL09]{ryan_lin_2009}
William Ryan and Shu Lin.
\newblock {\em Channel Codes: Classical and Modern}.
\newblock Cambridge University Press, 2009.

\bibitem[{Sag}23]{sagemath}
{Sage Developers, The}.
\newblock {\em {S}age{M}ath, the {S}age {M}athematics {S}oftware {S}ystem}, May
  2023.
\newblock Version 10.0. \url{https://www.sagemath.org}.

\bibitem[SY23]{sarkar2023qudit}
Rahul Sarkar and Theodore~J. Yoder.
\newblock The qudit {P}auli group: non-commuting pairs, non-commuting sets, and
  structure theorems.
\newblock \url{https://doi.org/10.48550/arXiv.2302.07966}, 2023.

\bibitem[VCT23]{vuillot2023homological}
Christophe Vuillot, Alessandro Ciani, and Barbara~M. Terhal.
\newblock {Homological Quantum Rotor Codes: Logical Qubits from Torsion}.
\newblock \url{https://doi.org/10.48550/arXiv.2303.13723}, April 2023.

\bibitem[WAB15]{Watson_2015}
Fern H.~E. Watson, Hussain Anwar, and Dan~E. Browne.
\newblock Fast fault-tolerant decoder for qubit and qudit surface codes.
\newblock {\em Physical Review A}, 92(3), sep 2015.

\bibitem[Wei94]{weibel_1994}
Charles~A. Weibel.
\newblock {\em An Introduction to Homological Algebra}.
\newblock Cambridge Studies in Advanced Mathematics. Cambridge University
  Press, 1994.

\end{thebibliography}
\bibliographystyle{alpha}

\end{document}